\documentclass[11pt]{article}
\usepackage[margin=1in]{geometry}
\usepackage{amsmath,amssymb,amsthm,bm,mathtools}
\usepackage{booktabs}
\usepackage{graphicx}
\usepackage{enumitem}
\usepackage{natbib}
\usepackage{url}
\usepackage{etoolbox}
\usepackage[colorlinks=true,citecolor=blue,linkcolor=blue,urlcolor=blue]{hyperref}
\allowdisplaybreaks
\numberwithin{equation}{section}

\newtheorem{theorem}{Theorem}
\newtheorem{proposition}{Proposition}
\newtheorem{corollary}{Corollary}
\theoremstyle{definition}
\newtheorem{assumption}{Assumption}
\newtheorem{remark}{Remark}
\newtheorem{definition}{Definition}
\theoremstyle{plain}
\newtheorem{lemma}{Lemma}

\makeatletter
\patchcmd{\l@section}{1.5em}{2.7em}{}{}
\patchcmd{\l@section}{1.0em \@plus\p@}{0.5em \@plus\p@}{}{}
\renewcommand*\l@subsection{\@dottedtocline{2}{2.7em}{3.4em}}
\patchcmd{\l@part}{2.25em}{1.2em}{}{}
\makeatother
\newcommand{\supppart}[1]{%
  \clearpage\phantomsection\addcontentsline{toc}{part}{#1}%
  \begin{center}\Large\bfseries #1\end{center}\medskip}

\newcommand{\R}{\mathbb{R}}
\newcommand{\E}{\mathbb{E}}
\newcommand{\Pp}{\mathbb{P}}
\newcommand{\ind}{\mathbf{1}}
\newcommand{\calT}{\mathcal{T}}
\newcommand{\bmu}{\bm{\mu}}
\newcommand{\btheta}{\bm{\theta}}

\newcommand{\bone}{\bm{1}}
\newcommand{\rights}{\rightsquigarrow}
\DeclareMathOperator{\Cov}{Cov}
\DeclareMathOperator{\Var}{Var}

\begin{document}

\def\spacingset#1{\renewcommand{\baselinestretch}%
{#1}\small\normalsize} \spacingset{1}
% Floats keep single spacing. The text uses the template spacing below.
\AtBeginEnvironment{table}{\spacingset{1}}
\AtBeginEnvironment{figure}{\spacingset{1}}

%%%%%%%%%%%%%%%%%%%%%%%%%%%%%%%%%%%%%%%%%%%%%%%%%%%%%%%%%%%%%%%%%%%%%%%%%%%%%%

\title{\bfseries Strong Simultaneous Inference for Restricted Mean Survival
  Time Processes in Multi-Arm Trials under Covariate-Adaptive Randomization}
\author{Guannan Zhai\qquad Feifang Hu\\[0.5em]
  {\normalsize Department of Statistics, The George Washington University}}
\date{}
\maketitle
\addtocontents{toc}{\protect\setcounter{tocdepth}{-1}}

\begin{abstract}
Confirmatory trials increasingly test several treatments against a common
control, and many balance prognostic factors through covariate-adaptive
randomization (CAR). For time-to-event endpoints, they usually rely
on the stratified log-rank test and the hazard ratio, but the former may
lose power and the latter its clear interpretation when hazards are not
proportional. The restricted mean survival time (RMST) provides an
interpretable alternative, yet inference for multiple RMST comparisons
under CAR remains largely unexplored. We first derive the joint asymptotic
distribution of the standard Kaplan--Meier RMST estimators in multi-arm
trials. Under Pocock--Simon minimization, their limiting covariance
contains a design term that arises only when the randomization factors
interact in their effect on survival. A stratum-standardized
estimator removes this term and remains valid when censoring depends on
the strata. We then develop test statistics that account for the shared
control and the randomization design. Calibrated by a Gaussian multiplier,
the resulting max-$T$ procedures control the family-wise error rate and
give simultaneous confidence bands over restriction times. In simulations,
the method controls the Type I error rate under various randomization
procedures and outperforms the stratified log-rank test when the benefit
wanes. An application based on the three-arm ICON6 trial
illustrates its practicality.

\medskip\noindent\textit{Keywords:} Family-wise error rate, Kaplan--Meier
estimator, Minimization, Multiplier bootstrap, Nonproportional hazards,
Shared control.
\end{abstract}

\spacingset{1.5}
\section{Introduction}
\label{sec:intro}

Randomized controlled trials are the gold standard for evaluating new
treatments, and a growing share of them compare several experimental
treatments with one shared control, which saves patients, time, and cost
relative to separate two-arm trials \citep{parmar2014more}. Examples include
seamless phase II/III trials \citep{bretz2009adaptive}, dose-optimization
trials \citep{fda2024optimizing}, and platform trials
\citep{woodcock2017master}. Most trials balance prognostic factors by
covariate-adaptive randomization (CAR)
\citep{pocock1975sequential,kahan2012reporting},
especially in oncology, where survival endpoints are standard and 85\% of
multi-arm phase III trials stratify the randomization
\citep{pond2010trends}.

For time-to-event endpoints, the primary analysis is usually a log-rank
test stratified by the randomization factors, with the effect reported as a
hazard ratio from a Cox model \citep{cox1972regression}. Among the phase III
oncology trials reviewed by \citet{batson2016review}, 88\% used Cox
regression and most did not report any check of proportional hazards.
However, modern therapies increasingly violate this assumption.
Immunotherapy curves often separate only after several months, and a
time-limited benefit can fade once treatment stops
\citep{alexander2018hazards,oza2015icon7}. If trials keep relying on these
traditional methods, the log-rank test can lose much of its power, and the
hazard ratio averages a time-varying effect with weights that depend on
follow-up and censoring \citep{hernan2010hazards,uno2014moving}.

The restricted mean survival time (RMST) is a natural alternative. It is
estimated from Kaplan--Meier curves without any model for the hazard. Its
treatment contrast is an interpretable and collapsible difference in mean
event-free time, which defines a model-free population-level estimand in
the sense of ICH E9(R1)
\citep{royston2013restricted,uno2014moving,ich2019estimands}. RMST has
become a prominent option for the primary analysis when nonproportional
hazards are anticipated \citep{freidlin2019methods}, and multi-arm trials
such as STAMPEDE, ICON6, and SORCE reported RMST contrasts after observing
nonproportional hazards
\citep{clarke2019stampede,ledermann2021icon6,eisen2020sorce}. The FDA
guidance on covariate adjustment notes that the hazard ratio can be
non-collapsible and that results from nonlinear models ``can be difficult
to interpret if the model is misspecified'' \citep{fda2023covariates}.

Design-aware inference under CAR has been studied extensively for means and
regression estimators \citep{ma2015testing,bugni2018inference,bugni2019inference,ma2020statistical,ye2022inference,zhai2027valid},
building on the asymptotic theory of adaptive randomization procedures
\citep{hu2012asymptotic,hu2020theory,hu2023multi,zhang2026multitreatment,zhai2024group,alkhnefr2025efficient}.
However, for survival endpoints under CAR, the available results mostly
concern two arms
\citep{ye2020robust,wang2023model,ye2024covariate,johnson2024validity,zhao2024consistent},
and most of them focus on the log-rank test and the Cox model. RMST methods
have also been developed, but they compare only two arms, do not account for
CAR, or consider a single restriction time \citep{munko2024rmst,krajewski2026randomization}. To our knowledge, model-free, design-aware joint inference for time-to-event outcomes in multi-arm CAR trials that provides both an interpretable treatment effect and valid p-values is still missing.

In this article, we investigate inference for right-censored time-to-event outcomes under multi-arm covariate-adaptive randomization (CAR). Specifically, we study two Kaplan--Meier RMST estimators, the unstratified estimator used in current practice and the stratum-standardized estimator from the direct-adjusted survival curve \citep{chang1982corrected,cupples1995age}, which we recommend as the primary analysis. Our contributions are twofold. 
\begin{itemize}
    \item[] First, we derive the joint asymptotic distribution of the RMST processes of all arms under CAR, for arbitrary treatment effects. One of our main theoretical findings concerns Pocock--Simon minimization. Under minimization, the limiting covariance of the unstratified estimator
      contains a design term, which is absent under procedures with bounded
      within-stratum imbalance, such as stratified permuted-block
      randomization and the Hu--Hu procedure. We show that this term is
      driven entirely by the interaction between the randomization factors
      in their effect on survival (Proposition~\ref{prop:additivity}).
      Ignoring it is harmless when the factors act additively, but it
      understates the variance and inflates the family-wise error rate when
      they interact. The stratum-standardized estimator removes the term
      altogether. Its limiting distribution is therefore the same under
      stratified permuted-block randomization and Pocock--Simon
      minimization, and it remains valid when censoring depends on the
      randomization factors. 
      \item[] Second, based on these limits, we propose two families of CAR-adjusted test
      statistics, one built on the unstratified contrasts with a
      design-corrected covariance and one built on the standardized
      contrasts. Their covariance accounts for both the shared control and
      the randomization design. Calibrated by a common Gaussian multiplier
      procedure, they yield max-$T$ tests with strong family-wise error
      control and simultaneous confidence bands, at a fixed restriction time or over a range of restriction times.
\end{itemize}

The remainder of the paper is organized as follows.
Section~\ref{sec:framework} introduces the framework, and Section~\ref{sec:main-results} gives the limit theorems and characterizes
when the minimization design term matters.
Section~\ref{sec:constructed-statistics} defines the test statistics, the multiplier calibration, the simultaneous bands, and the family-wise
error control. Sections~\ref{sec:simulation} and \ref{sec:application} present the simulation study and the ICON6-based application, and Section~\ref{sec:discussion} gives practical recommendations.

\section{Framework}
\label{sec:framework}

\subsection{Multi-arm trial and potential event times}

Suppose that $N$ subjects are assigned to one of $K+1$ study arms. Arm
$a=0$ is the shared control and arms $a=1,\ldots,K$ are the experimental
treatments. Let
\[
A_i\in\{0,1,\ldots,K\},
\qquad
A_{ia}=\ind(A_i=a),
\]
denote the treatment assignment and its corresponding indicators.

Let $Z_i\in\mathcal S=\{1,\ldots,J\}$ be the discrete stratum label formed
from the baseline covariates used by the randomization procedure, where $J$
is fixed. Write
\[
p_s=\Pp(Z_i=s)>0,
\qquad s\in\mathcal S.
\]

For subject $i$ and arm $a$, let $T_i(a)$ be the potential event time and
$C_i(a)$ the potential censoring time. Define
\[
X_i(a)=\min\{T_i(a),C_i(a)\},
\qquad
\Delta_i(a)=\ind\{T_i(a)\leq C_i(a)\}.
\]
Under consistency, the observed data are
\[
X_i=\sum_{a=0}^K A_{ia}X_i(a),
\qquad
\Delta_i=\sum_{a=0}^K A_{ia}\Delta_i(a),
\]
together with $(A_i,Z_i)$.

The marginal survival function under arm $a$ is
\[
S_a(t)=\Pp\{T_i(a)>t\},
\qquad t\geq 0.
\]
For a restriction time $\tau>0$, the arm-specific RMST is
\begin{equation}
\label{eq:rmst}
\mu_a(\tau)
=
\E\{\min(T_i(a),\tau)\}
=
\int_0^\tau S_a(t)\,dt.
\end{equation}
Thus, $\mu_a(\tau)$ is the expected amount of event-free time accumulated
during the first $\tau$ time units under arm $a$.

The treatment effect for experimental arm $k$ relative to the shared control
is
\begin{equation}
\label{eq:rmst-contrast}
\theta_k(\tau)=\mu_k(\tau)-\mu_0(\tau),
\qquad k=1,\ldots,K.
\end{equation}
A positive value indicates a gain in restricted mean survival under treatment
$k$. Let
\[
\bmu(\tau)
=
\{\mu_0(\tau),\mu_1(\tau),\ldots,\mu_K(\tau)\}^{T},
\qquad
\btheta(\tau)
=
\{\theta_1(\tau),\ldots,\theta_K(\tau)\}^{T}.
\]
With $\bm C=[-\bone_K,\ \bm I_K]$, we have $\btheta(\tau)=\bm C\bmu(\tau)$.

We study the entire RMST process on the prespecified compact interval
\[
\calT=[\tau_L,\tau_U],
\qquad
0<\tau_L<\tau_U<\infty.
\]
A fixed restriction time is recovered as a special case. We write
$\ell^\infty(\calT)$ for the space of bounded real functions on $\calT$
with the supremum norm, $\ell^\infty(\calT)^m$ for the product of $m$
copies, and $\rights$ for weak convergence in these spaces
\citep{vaart1996weak}.

\subsection{Covariate-adaptive randomization}

Let
\[
n(s)=\sum_{i=1}^N\ind(Z_i=s),
\qquad
n_a(s)=\sum_{i=1}^N A_{ia}\ind(Z_i=s)
\]
be the total and arm-specific sample sizes in stratum $s$. The target
allocation proportions are denoted by
\[
\pi_a\in(0,1),
\qquad
\sum_{a=0}^K\pi_a=1.
\]
The within-stratum imbalance for arm $a$ is
\begin{equation}
\label{eq:imbalance}
D_{N,a}(s)=n_a(s)-\pi_a n(s).
\end{equation}
Theorem~\ref{thm:bounded-balance-fclt} first treats strongly balancing CAR
procedures satisfying $D_{N,a}(s)=O_p(1)$. This class includes stratified
permuted-block randomization and the multi-arm Hu--Hu procedure with
positive within-stratum weight \citep{hu2012asymptotic,hu2023multi}. A later
extension allows $D_{N,a}(s)$ to have a nondegenerate $\sqrt N$-scale limit,
as under simple randomization and Pocock--Simon minimization
\citep{pocock1975sequential}.

\subsection{Assumptions}

Let
\[
\mathcal O_i^\star
=
\{Z_i,T_i(a),C_i(a):a=0,\ldots,K\}
\]
denote the full potential data for subject $i$.

\begin{assumption}[Sampling]
\label{asu:sampling}
The full-data vectors $\mathcal O_1^\star,\ldots,\mathcal O_N^\star$ are
independent and identically distributed. The number of CAR strata $J$ is
fixed and $\min_{s\in\mathcal S}p_s>0$.
\end{assumption}

\begin{assumption}[Randomization]
\label{asu:randomization}
Conditional on $(Z_1,\ldots,Z_N)$, the treatment allocation sequence is
independent of the potential event and censoring times. Equivalently, for
every measurable allocation event $\mathcal A$,
\[
\Pp\{(A_1,\ldots,A_N)\in\mathcal A
\mid\mathcal O_1^\star,\ldots,\mathcal O_N^\star\}
=
\Pp\{(A_1,\ldots,A_N)\in\mathcal A
\mid Z_1,\ldots,Z_N\}.
\]
The allocation rule targets the fixed proportions
$(\pi_0,\ldots,\pi_K)$.
\end{assumption}

\begin{assumption}[Strong within-stratum balance]
\label{asu:bounded-balance}
For every $a=0,\ldots,K$ and $s\in\mathcal S$,
\[
D_{N,a}(s)=O_p(1).
\]
\end{assumption}

\begin{assumption}[Independent censoring and positivity]
\label{asu:censoring}
For every arm $a$, $T_i(a)$ and $C_i(a)$ are independent and
$\Pp\{T_i(a)=C_i(a)\}=0$. In addition, for some constant $\epsilon>0$,
\[
\inf_{a=0,\ldots,K}\Pp\{X_i(a)\geq\tau_U\}\geq\epsilon.
\]
If the event-time distribution has jumps, its cumulative hazard satisfies
\[
\inf_{a=0,\ldots,K}\inf_{0<t\leq\tau_U}
\{1-\Delta\Lambda_a(t)\}\geq\epsilon.
\]
\end{assumption}

Assumption~\ref{asu:censoring} is the standard condition for the unadjusted
arm-specific Kaplan--Meier estimator. Subsection~\ref{sec:conditional-censoring}
weakens it to independent censoring within the randomization strata. No
further regularity condition is needed: under Assumption~\ref{asu:censoring}
the influence functions defined below are uniformly bounded and Lipschitz in
the restriction time (Lemma~\ref{lem:if-bounds} in the Supplementary Material).

\subsection{Estimators and influence functions}
\label{subsec:influence-function}

Let $\widehat S_a$ be the Kaplan--Meier estimator computed from the subjects
assigned to arm $a$. The estimated RMST and treatment-control contrast are
\begin{equation}
\label{eq:contrast-estimator}
\widehat\mu_a(\tau)=\int_0^\tau\widehat S_a(t)\,dt,
\qquad
\widehat\btheta(\tau)=\bm C\widehat\bmu(\tau),
\qquad
\widehat\bmu(\tau)=\{\widehat\mu_0(\tau),\ldots,\widehat\mu_K(\tau)\}^{T}.
\end{equation}
Write $N_i^{(a)}(t)=\ind\{X_i(a)\leq t,\Delta_i(a)=1\}$ and
$Y_i^{(a)}(t)=\ind\{X_i(a)\geq t\}$ for the potential counting and at-risk
processes, $y_a(t)=\Pp\{X_i(a)\geq t\}$, and $\Lambda_a$ for the cumulative
hazard of $T_i(a)$, and let
$M_i^{(a)}(t)=N_i^{(a)}(t)-\int_{(0,t]}Y_i^{(a)}(u)\,d\Lambda_a(u)$. The
influence function of $\widehat\mu_a(\tau)$ is
\begin{equation}
\label{eq:rmst-if}
\phi_{ia}(\tau)
=
-\int_0^\tau
S_a(t)
\int_{(0,t]}
\frac{dM_i^{(a)}(u)}
{y_a(u)\{1-\Delta\Lambda_a(u)\}}
\,dt,
\end{equation}
where the jump correction equals one for continuous event times. Its
derivation from the Nelson--Aalen and product-integral maps is reviewed in
Section~\ref{sec:supp-if}. The influence function has $\E\{\phi_{ia}(\tau)\}=0$. Its
conditional mean and covariance kernel given the CAR stratum are
\begin{equation}
\label{eq:conditional-mean}
m_a(s,\tau)
=
\E\{\phi_{ia}(\tau)\mid Z_i=s\},
\qquad
c_a(s;\tau,\tau')
=
\Cov\{\phi_{ia}(\tau),\phi_{ia}(\tau')\mid Z_i=s\}.
\end{equation}
Although $\phi_{ia}$ is marginally centered, $m_a(s,\tau)$ is nonzero when the
CAR strata are prognostic, and this is the source of all design effects below.

\section{Limit Theory for the Two RMST Estimators}
\label{sec:main-results}

This section gives the limit theory for the two estimators on which the
inference procedures of Section~\ref{sec:constructed-statistics} are built.
Subsections~\ref{sec:first-result}--\ref{sec:additivity} treat the
unstratified Kaplan--Meier RMST analysis, which is the analysis most often
reported in practice: bounded within-stratum imbalance, general
$\sqrt N$-order imbalance, which includes simple randomization and
Pocock--Simon minimization, and the
conditions under which the resulting design term is negligible.
Subsection~\ref{sec:conditional-censoring} treats the stratum-standardized
estimator, whose limit contains no design term.

\subsection{Joint RMST Process under Strong Balance}
\label{sec:first-result}

\begin{theorem}[Joint functional limit under bounded within-stratum imbalance]
\label{thm:bounded-balance-fclt}
Suppose Assumptions~\ref{asu:sampling}--\ref{asu:censoring} hold. Then, in
$\ell^\infty(\calT)^{K+1}$,
\begin{equation}
\label{eq:mu-fclt}
\sqrt N\{\widehat\bmu(\cdot)-\bmu(\cdot)\}
\rights
\mathbb G_\mu(\cdot),
\end{equation}
where $\mathbb G_\mu=(\mathbb G_{\mu,0},\ldots,\mathbb
G_{\mu,K})^T$ is a mean-zero tight Gaussian process with covariance kernel
\begin{equation}
\label{eq:mu-covariance}
\Gamma_{\mu,ab}(\tau,\tau')
=
\E\{m_a(Z_i,\tau)m_b(Z_i,\tau')\}
+
\ind(a=b)\frac{1}{\pi_a}
\E\{c_a(Z_i;\tau,\tau')\}
\end{equation}
for $a,b=0,\ldots,K$ and $\tau,\tau'\in\calT$.

Consequently, in $\ell^\infty(\calT)^K$,
\begin{equation}
\label{eq:theta-fclt}
\sqrt N\{\widehat\btheta(\cdot)-\btheta(\cdot)\}
\rights
\mathbb G_\theta(\cdot)
=
\bm C\mathbb G_\mu(\cdot).
\end{equation}
For $k,\ell=1,\ldots,K$, the covariance kernel of the
treatment-control process is
\begin{align}
\label{eq:theta-covariance}
\Gamma_{\theta,k\ell}(\tau,\tau')
={}&
\E\!\left[
\{m_k(Z_i,\tau)-m_0(Z_i,\tau)\}
\{m_\ell(Z_i,\tau')-m_0(Z_i,\tau')\}
\right] \notag\\
&+
\ind(k=\ell)\frac{1}{\pi_k}
\E\{c_k(Z_i;\tau,\tau')\}
+
\frac{1}{\pi_0}
\E\{c_0(Z_i;\tau,\tau')\}.
\end{align}
In particular, $\sqrt N\{\widehat\btheta(\tau)-\btheta(\tau)\}$ is
asymptotically normal with covariance matrix
$\{\Gamma_{\theta,k\ell}(\tau,\tau)\}_{k,\ell=1}^K$ at each fixed $\tau\in\calT$.
\end{theorem}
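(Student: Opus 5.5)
The proof proceeds in three steps: an asymptotically linear representation of each arm's Kaplan--Meier RMST process, a decomposition of the resulting linear term that isolates the CAR design, and a joint functional limit for the decomposed pieces.

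\emph{Step 1: linearization.} For each arm $a$, I will show that, uniformly in $\tau\in\calT$,
\[
\widehat\mu_a(\tau)-\mu_a(\tau)
=\frac{1}{N_a}\sum_{i=1}^N A_{ia}\phi_{ia}(\tau)+o_p(N^{-1/2}),
\qquad N_a=\sum_i A_{ia}.
\]
The route is through the Nelson--Aalen estimator and the Hadamard differentiability of the product-integral and integration maps (Section~\ref{sec:supp-if}), applied to the empirical counting and at-risk processes of arm $a$. The complication is that under CAR the subsample $\{i:A_i=a\}$ is not i.i.d. To handle this, I condition on $(Z_1,\ldots,Z_N)$ and the assignments. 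By Assumption~\ref{asu:randomization}, within each stratum $s$ the arm-$a$ observations are then i.i.d.\ draws from the conditional law given $Z=s$, with a sample size $n_a(s)$ that is fixed conditionally. Hence the arm-$a$ empirical processes of $(N^{(a)},Y^{(a)})$ are finite mixtures over strata of i.i.d.\ empirical processes, with weights $n_a(s)/N_a\to p_s$ in probability. Uniform consistency and $\sqrt N$-tightness follow stratum by stratum from Donsker classes of indicators, and the functional delta method then gives the remainder bound. Lemma~\ref{lem:if-bounds} (boundedness and Lipschitz continuity in $\tau$) controls the remainder uniformly over $\calT$.

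\emph{Step 2: decomposition.} I write $\phi_{ia}=m_a(Z_i,\cdot)+\{\phi_{ia}-m_a(Z_i,\cdot)\}$, which gives
\[
\sqrt N\{\widehat\mu_a-\mu_a\}
=\frac{\sqrt N}{N_a}\sum_s n_a(s)\,m_a(s,\cdot)
+\frac{\sqrt N}{N_a}\sum_i A_{ia}\{\phi_{ia}-m_a(Z_i,\cdot)\}+o_p(1).
\]
By Assumption~\ref{asu:bounded-balance}, $n_a(s)=\pi_a n(s)+O_p(1)$ and $N_a=\pi_aN+O_p(1)$. Since $\E\,m_a(Z_i,\tau)=0$, the first term therefore equals $N^{-1/2}\sum_i m_a(Z_i,\cdot)+o_p(1)$ uniformly in $\tau$; $m_a$ is bounded and $J$ is finite, so the $O_p(1)$ imbalance contributes $O_p(N^{-1/2})$. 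The second term is $\pi_a^{-1}N^{-1/2}\sum_i A_{ia}\epsilon_{ia}(\cdot)$, where $\epsilon_{ia}=\phi_{ia}-m_a(Z_i,\cdot)$.

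\emph{Step 3: joint weak convergence.} Conditional on $\mathcal F_N=\sigma(Z_1,\ldots,Z_N,A_1,\ldots,A_N)$, the vectors $\{A_{ia}\epsilon_{ia}\}_a$ are independent across $i$ with conditional mean zero. Within a subject only one $A_{ia}$ is nonzero, so the cross-arm conditional covariances vanish. For the variance of arm $a$, I will show that the conditional covariance
\[
N^{-1}\sum_s n_a(s)\,c_a(s;\tau,\tau')\to\pi_a\E\{c_a(Z_i;\tau,\tau')\}.
\]
The vector $N^{-1/2}\sum_i m(Z_i,\cdot)$ is $\mathcal F_N$-measurable and converges by the multivariate CLT, the index set being finite-dimensional in the $J$ stratum means. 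For the residual piece I will prove a conditional functional CLT for the independent but non-identically distributed residual sums. I will try a Lindeberg condition plus bracketing entropy in $\tau$, using the Lipschitz bound of Lemma~\ref{lem:if-bounds} so that each coordinate is a Lipschitz-in-$\tau$ class with a bounded envelope. Combining the two pieces by a characteristic-functional (conditional-then-unconditional) argument yields independence of the two Gaussian limits. This gives $\Gamma_{\mu,ab}=\E\{m_am_b\}+\ind(a=b)\pi_a^2\cdot\pi_a^{-1}\E c_a/\pi_a^2\cdot\pi_a$, which simplifies to \eqref{eq:mu-covariance}. The result \eqref{eq:theta-fclt} follows by the continuous mapping theorem with the linear map $\bm C$. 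Expanding $\bm C\Gamma_\mu\bm C^T$ gives \eqref{eq:theta-covariance}, and evaluating at $\tau=\tau'$ gives the fixed-$\tau$ statement.

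The main obstacle is Step 1, namely obtaining a uniform $o_p(N^{-1/2})$ remainder for the Kaplan--Meier functional when the arm-specific sample is generated by a dependent CAR allocation. The stratum-wise conditional i.i.d.\ reduction is the key device there.
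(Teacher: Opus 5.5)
Your proposal is correct and follows the paper's proof essentially step for step. The common steps are the conditional i.i.d.\ treatment--stratum cell reduction, Hadamard differentiability of the Nelson--Aalen, product-integral and integration maps, the split into within-stratum residual, stratum-composition and imbalance terms, and a conditional Lindeberg/Lipschitz-bracketing CLT combined with the multinomial CLT by conditioning. The paper normalizes by $\pi_aN$ rather than $N_a$, which is equivalent because $N_a=\pi_aN+O_p(1)$.

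Two small fixes are needed. First, in Step 1 the $\sqrt N$-tightness of the mixed empirical processes needs $\sqrt N\{n_a(s)/N_a-p_s\}=O_p(1)$, which follows from bounded imbalance and the multinomial CLT; consistency of the weights alone does not give it. Second, the covariance bookkeeping at the end of Step 3 is garbled and should read $\pi_a^{-2}\cdot\pi_a\,\E\{c_a(Z_i;\tau,\tau')\}=\pi_a^{-1}\E\{c_a(Z_i;\tau,\tau')\}$.
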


\begin{remark}[No global-null restriction]
\label{rmk:no-global-null}
Theorem~\ref{thm:bounded-balance-fclt} holds for arbitrary
$\{S_0,\ldots,S_K\}$ and therefore for arbitrary values of
$\{\theta_1,\ldots,\theta_K\}$. It is not a result only under
$H_0:\theta_1=\cdots=\theta_K=0$. This is essential for proving strong
family-wise error control under partial-null configurations.
\end{remark}

\begin{remark}[Three sources of covariance]
\label{rmk:covariance-interpretation}
Equation~\eqref{eq:theta-covariance} separates three components. The first
captures random variation in the sampled stratum composition. The second is
the treatment-specific outcome contribution and appears only on the diagonal.
The third is the control-arm contribution and appears in every pair of
treatment-control comparisons because all comparisons use the same control.
Thus, the shared-control dependence is retained rather than removed by
whitening the contrast vector.
\end{remark}

\begin{remark}[Relation to \citet{wang2023model}]
\label{rmk:wang}
For a single arm, the linear representation underlying
Theorem~\ref{thm:bounded-balance-fclt} agrees with the Kaplan--Meier result
of \citet{wang2023model} under stratified and biased-coin randomization, and
the RMST statement follows from theirs by integration over $[0,\tau]$. The
new content of Theorem~\ref{thm:bounded-balance-fclt} is the joint
distribution across all arms under CAR. Joint RMST limits across several
independently sampled groups were derived by \citet{munko2024rmst}, but
without randomization-induced dependence. The covariance in
Theorem~\ref{thm:bounded-balance-fclt} contains the cross-arm
composition term $\E\{m_a(Z_i,\tau)m_b(Z_i,\tau')\}$ and the shared-control
term, which together determine the dependence used by the max-$T$
procedures. Theorem~\ref{thm:general-car-fclt} then goes beyond
bounded-imbalance designs.
\end{remark}

\subsection{Joint RMST Process under General \texorpdfstring{$\sqrt N$}{root-N}-Order Imbalance}
\label{sec:general-car}

Many CAR procedures do not satisfy the bounded-balance condition of
Theorem~\ref{thm:bounded-balance-fclt}. Under simple randomization, every
within-stratum imbalance fluctuates on the $\sqrt N$ scale. Pocock--Simon
minimization strongly controls the overall and marginal imbalances, but its
joint-stratum imbalances may still fluctuate on the $\sqrt N$ scale, with a
limiting covariance that is design-specific and need not be dominated by its
counterpart under simple randomization
\citep{hu2020theory,zhao2024consistent,zhang2026multitreatment}. We therefore
allow a general $\sqrt N$-order imbalance.

Let $L=(K+1)J$ and stack all arm-by-stratum imbalances in the vector
\begin{equation}
\label{eq:scaled-imbalance-vector}
\bm d_N
=
\frac{1}{\sqrt N}
\operatorname{vec}
\left[
\{D_{N,a}(s):s=1,\ldots,J\}:a=0,\ldots,K
\right]
\in\R^L.
\end{equation}
Also define the centered stratum-count vector
\begin{equation}
\label{eq:stratum-count-vector}
\bm W_N
=
\left\{
\frac{n(s)-Np_s}{\sqrt N}:s=1,\ldots,J
\right\}^{T}.
\end{equation}

\begin{assumption}[Gaussian imbalance and asymptotic orthogonality]
\label{asu:general-imbalance}
The within-stratum allocations satisfy
\[
\max_{a,s}|D_{N,a}(s)|=O_p(\sqrt N)
\quad\text{and}\quad
\frac{n_a(s)}{n(s)}\xrightarrow{P}\pi_a.
\]
Moreover,
\begin{equation}
\label{eq:joint-design-clt}
\begin{pmatrix}
\bm W_N\\
\bm d_N
\end{pmatrix}
\xrightarrow{D}
\mathcal N_{J+L}
\left[
\bm 0,
\begin{pmatrix}
\bm\Sigma_Z&\bm 0\\
\bm 0&\bm\Omega_D
\end{pmatrix}
\right],
\qquad
\bm\Sigma_Z
=
\operatorname{diag}(\bm p)-\bm p\bm p^T,
\end{equation}
where $\bm p=(p_1,\ldots,p_J)^T$ and $\bm\Omega_D$ is a finite positive
semidefinite matrix. Thus the stratum-count fluctuation and the
randomization imbalance are asymptotically orthogonal. The matrix
$\bm\Omega_D$ is allowed to be singular because the imbalance vector obeys
exact within-stratum arm-sum constraints and may obey additional asymptotic
constraints induced by the design.
\end{assumption}

For $\tau\in\calT$, define the $(K+1)\times L$ matrix $\bm Q(\tau)$ by
\begin{equation}
\label{eq:Q-matrix}
Q_{a,(b,s)}(\tau)
=
\frac{\ind(a=b)}{\pi_a}m_a(s,\tau),
\qquad
a,b=0,\ldots,K,\quad s=1,\ldots,J,
\end{equation}
where $(b,s)$ indexes the columns, and let $\bm\Gamma_\mu(\tau,\tau')$ denote
the $(K+1)\times(K+1)$ matrix with entries \eqref{eq:mu-covariance}.

\begin{theorem}[Joint RMST process under $\sqrt N$-order CAR imbalance]
\label{thm:general-car-fclt}
Suppose Assumptions~\ref{asu:sampling}, \ref{asu:randomization},
\ref{asu:censoring}, and \ref{asu:general-imbalance} hold. Then
\[
\sqrt N\{\widehat\bmu(\cdot)-\bmu(\cdot)\}
\rights
\mathbb G_\mu^{D}(\cdot)
\quad\text{in }\ell^\infty(\calT)^{K+1},
\]
where $\mathbb G_\mu^{D}$ is a mean-zero tight Gaussian process with matrix
covariance kernel
\begin{equation}
\label{eq:general-mu-kernel}
\bm\Gamma_\mu^{D}(\tau,\tau')
=
\bm\Gamma_\mu(\tau,\tau')
+
\bm Q(\tau)\bm\Omega_D\bm Q(\tau')^T.
\end{equation}
Consequently,
\begin{equation}
\label{eq:general-theta-fclt}
\sqrt N\{\widehat\btheta(\cdot)-\btheta(\cdot)\}
\rights
\mathbb G_\theta^{D}(\cdot)
=
\bm C\mathbb G_\mu^{D}(\cdot)
\quad\text{in }\ell^\infty(\calT)^K,
\end{equation}
with covariance kernel
\begin{equation}
\label{eq:general-theta-kernel}
\bm\Gamma_\theta^{D}(\tau,\tau')
=
\bm C\bm\Gamma_\mu(\tau,\tau')\bm C^T
+
\bm R(\tau)\bm\Omega_D\bm R(\tau')^T,
\qquad
\bm R(\tau)=\bm C\bm Q(\tau).
\end{equation}
The $k$th row of $\bm R(\tau)$ has entries
\begin{equation}
\label{eq:R-row}
R_{k,(a,s)}(\tau)
=
\frac{\ind(a=k)}{\pi_k}m_k(s,\tau)
-
\frac{\ind(a=0)}{\pi_0}m_0(s,\tau).
\end{equation}
The theorem holds under arbitrary treatment effects.
\end{theorem}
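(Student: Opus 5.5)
The plan is to reuse the linearization behind Theorem~\ref{thm:bounded-balance-fclt}. Only the treatment of the stratum-level centering terms changes, so that the imbalance $\bm d_N$ is kept rather than discarded.

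\textbf{Step 1 (linearization).} For each arm $a$, the Kaplan--Meier functional expansion gives, uniformly in $\tau\in\calT$,
\[
\sqrt N\{\widehat\mu_a(\tau)-\mu_a(\tau)\}
=\frac{\sqrt N}{N_a}\sum_{i=1}^N A_{ia}\phi_{ia}(\tau)+o_p(1),
\qquad N_a=\sum_s n_a(s).
\]
This requires only $N_a/N\to\pi_a$, which follows from $n_a(s)/n(s)\to\pi_a$ in Assumption~\ref{asu:general-imbalance}. It also requires uniform convergence of the arm-specific at-risk and counting processes. Here I would argue conditionally on $(Z_1,\ldots,Z_N)$ and the allocation: by Assumption~\ref{asu:randomization}, within each stratum-by-arm cell the outcomes are i.i.d.\ from the conditional law given $Z=s$, and $n_a(s)\to\infty$. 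Glivenko--Cantelli then holds cellwise, and the empirical processes in arm $a$ converge to $y_a$ and its counterpart. The remainder is controlled by Hadamard differentiability of the product-integral map and boundedness from Assumption~\ref{asu:censoring}.

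\textbf{Step 2 (decomposition).} Write $\phi_{ia}=m_a(Z_i,\cdot)+\{\phi_{ia}-m_a(Z_i,\cdot)\}$. Then
\[
\frac{1}{\sqrt N}\sum_i A_{ia}\phi_{ia}(\tau)
=\frac{1}{\sqrt N}\sum_s n_a(s)m_a(s,\tau)+\frac{1}{\sqrt N}\sum_i A_{ia}\epsilon_{ia}(\tau),
\]
where $\epsilon_{ia}=\phi_{ia}-m_a(Z_i,\cdot)$. Substitute $n_a(s)=\pi_a n(s)+D_{N,a}(s)$ and use $\sum_s p_s m_a(s,\tau)=0$. The first term becomes
\[
\pi_a\sum_s W_{N,s}m_a(s,\tau)+\sum_s d_{N,(a,s)}m_a(s,\tau).
\]
After multiplying by $N/N_a\to1/\pi_a$ this contributes $\sum_s W_{N,s}m_a(s,\tau)+\{\bm Q(\tau)\bm d_N\}_a$. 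The $\bm W_N$ part yields the composition covariance $\E\{m_am_b\}$ through $\bm\Sigma_Z$. The $\bm d_N$ part yields $\bm Q\bm\Omega_D\bm Q^T$, and there is no cross term because the two blocks are orthogonal in \eqref{eq:joint-design-clt}. Since $m_a(s,\cdot)$ is deterministic and Lipschitz by Lemma~\ref{lem:if-bounds}, this finite-dimensional part converges as a process trivially.

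\textbf{Step 3 (martingale-type part and joint convergence).} Conditionally on $\mathcal F_N=\sigma(Z_1,\ldots,Z_N,A_1,\ldots,A_N)$, the vectors $\epsilon_{ia}$ are independent with mean zero and conditional kernel $c_a(Z_i;\cdot,\cdot)$. Across arms, each subject contributes only to its own arm, so there is no cross-arm covariance. Two facts give a conditional functional CLT for $\{N^{-1/2}\sum_i A_{ia}\epsilon_{ia}\}_a$ with kernel $\ind(a=b)\pi_a\E c_a$, holding in probability:
\begin{itemize}
\item the conditional covariance $N^{-1}\sum_s n_a(s)c_a(s;\tau,\tau')\to\pi_a\E c_a$, using $n_a(s)/N\to\pi_ap_s$;
\item tightness, from uniform boundedness and the Lipschitz property in $\tau$ (Lemma~\ref{lem:if-bounds}), which make the class a VC-type/Lipschitz class with a bracketing bound, via a conditional Lindeberg plus bounded-Lipschitz argument.
\end{itemize}
Dividing by $\pi_a$ gives the $\pi_a^{-1}\E c_a$ diagonal term. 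The limit law does not depend on $\mathcal F_N$, and the Step~2 term is $\mathcal F_N$-measurable with its own limit. Hence a characteristic-function (or bounded-Lipschitz) conditioning argument gives joint convergence with independent limits, and the kernel \eqref{eq:general-mu-kernel} follows.

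The $\bm\theta$ statements follow by the continuous mapping theorem with $\bm C$, which gives $\bm R=\bm C\bm Q$ and the row formula \eqref{eq:R-row} by direct multiplication. No null is used, so arbitrary effects are allowed.

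\textbf{Main obstacle.} I expect the hardest part to be Step~1 together with the conditional uniformity in Step~3 under dependent allocations. The allocation sequence is not i.i.d., so standard Kaplan--Meier theory does not apply directly. My first attempt would be the cellwise conditional i.i.d.\ argument, with tightness obtained uniformly over cell sizes.
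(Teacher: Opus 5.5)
Your plan is correct and follows the paper's proof essentially step for step. It uses the same uniform linearization and the same exact residual/composition/imbalance split via $n_a(s)=\pi_a n(s)+D_{N,a}(s)$ and $\sum_s p_s m_a(s,\tau)=0$, a stable conditional CLT for the residual given $(\mathcal Z_N,\mathcal A_N)$, and the block-diagonal limit of $(\bm W_N,\bm d_N)$ to remove cross terms. Your $\sqrt N/N_a$ normalization differs from the paper's $1/\pi_a$ only by $o_p(1)$.

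One point should be made explicit. Step~1 needs $\sqrt N$-scale weak convergence of the arm-specific at-risk and counting processes for the functional delta method, not just Glivenko--Cantelli consistency. The paper obtains this by applying the same three-part split to the indicator classes, and the cellwise conditional argument you name handles the cell term.
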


\begin{remark}[Application to minimization]
\label{rmk:ps-application}
For equal-allocation, treatment-label-symmetric Pocock--Simon procedures,
the joint-stratum imbalance has a Gaussian limit of the required form
\citep{hu2020theory,hu2023multi,zhang2026multitreatment}. The orthogonality in
\eqref{eq:joint-design-clt} then holds exactly: label symmetry gives
$\E\{D_{N,a}(s)\mid Z_1,\ldots,Z_N\}=0$, so $\Cov(\bm W_N,\bm d_N)=\bm 0$ for
every $N$. Under bounded imbalance, $\bm\Omega_D=\bm 0$ and
Theorem~\ref{thm:general-car-fclt} reduces to
Theorem~\ref{thm:bounded-balance-fclt}.
\end{remark}

\begin{remark}[Simple randomization and other special cases]
\label{rmk:sr-special-case}
Under simple randomization, the assignments are independent of each other
and of the strata, so Assumption~\ref{asu:general-imbalance} holds with
\[
\bm\Omega_D^{\mathrm{SR}}
=
\{\operatorname{diag}(\bm\pi)-\bm\pi\bm\pi^T\}\otimes\operatorname{diag}(\bm p),
\qquad
\bm\pi=(\pi_0,\ldots,\pi_K)^T.
\]
Substituting it into \eqref{eq:general-mu-kernel} gives
$\Gamma_{\mu,ab}^{D}(\tau,\tau')
=\ind(a=b)\Cov\{\phi_{ia}(\tau),\phi_{ia}(\tau')\}/\pi_a$, the familiar
covariance of independent Kaplan--Meier curves.
Theorem~\ref{thm:general-car-fclt} therefore covers simple randomization,
the bounded-balance designs of Theorem~\ref{thm:bounded-balance-fclt}, and
minimization within one framework. It also relaxes the condition of
\citet{bugni2019inference}, which requires the imbalances to be independent
across strata with covariance
$\tau(s)\{\operatorname{diag}(\bm\pi)-\bm\pi\bm\pi^T\}$ for some
$\tau(s)\in[0,1]$. That condition holds for simple and stratified block
randomization but not for minimization.
\end{remark}

\subsection{When Does the Minimization Design Term Matter?}
\label{sec:additivity}

Pocock--Simon minimization balances the treatment counts within each level
of each randomization factor, but not within the joint strata. The next
result shows that, as a consequence, only the non-additive part of the
stratum-specific means $m_a(s,\tau)$ enters the design term.

Suppose the joint stratum is formed from $R$ randomization factors, so that
each $s\in\mathcal S$ corresponds to a level profile
$(s_1,\ldots,s_R)$. For factor $r$ and level $l$, define the marginal
imbalance
\begin{equation}
\label{eq:marginal-imbalance}
D_{N,a}^{(r)}(l)
=
\sum_{s\in\mathcal S:\,s_r=l}D_{N,a}(s).
\end{equation}
A function $f$ on $\mathcal S$ is called additive if
$f(s)=\sum_{r=1}^R f_r(s_r)$ for some functions $f_1,\ldots,f_R$ of the
individual factor levels, and constants are additive.

\begin{proposition}[Design term under marginal balance]
\label{prop:additivity}
Suppose the conditions of Theorem~\ref{thm:general-car-fclt} hold and that
the randomization procedure controls the marginal imbalances, in the sense
that $D_{N,a}^{(r)}(l)=O_p(1)$ for every arm $a$, factor $r$, and level $l$.
For each $a$ and $\tau$, write
$m_a(s,\tau)=m_a^{\mathrm{add}}(s,\tau)+\eta_a(s,\tau)$, where
$m_a^{\mathrm{add}}(\cdot,\tau)$ is any additive function, and let
$\bm Q^{\eta}(\tau)$ denote $\bm Q(\tau)$ in \eqref{eq:Q-matrix} with
$m_a$ replaced by $\eta_a$. Then, for all $\tau,\tau'\in\calT$,
\begin{equation}
\label{eq:interaction-only}
\bm Q(\tau)\bm\Omega_D\bm Q(\tau')^T
=
\bm Q^{\eta}(\tau)\bm\Omega_D\bm Q^{\eta}(\tau')^T.
\end{equation}
In particular, if $m_a(\cdot,\tau)$ is additive for every arm and every
$\tau\in\calT$, the design term vanishes and the covariance kernel in
Theorem~\ref{thm:general-car-fclt} equals the kernel
$\bm\Gamma_\mu$ of Theorem~\ref{thm:bounded-balance-fclt}.
\end{proposition}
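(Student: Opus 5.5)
The argument shows that any additive component of $m_a$ lies in a direction that the limiting imbalance vector never explores. We identify $\bm\Omega_D$ with the covariance of the Gaussian limit $\bm d$ of $\bm d_N$ in Assumption~\ref{asu:general-imbalance}. Then, for each $\tau$, the vector $\bm Q(\tau)\bm d$ has $a$th entry
\[
\{\bm Q(\tau)\bm d\}_a=\frac{1}{\pi_a}\sum_{s\in\mathcal S}m_a(s,\tau)\,d_{a}(s),
\]
where $d_a(s)$ is the $(a,s)$ coordinate of $\bm d$. The design term is $\bm Q(\tau)\bm\Omega_D\bm Q(\tau')^T=\Cov\{\bm Q(\tau)\bm d,\bm Q(\tau')\bm d\}$. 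It therefore suffices to show
\[
\bm Q(\tau)\bm d=\bm Q^{\eta}(\tau)\bm d \quad\text{almost surely},
\]
or equivalently that $\bm Q^{\mathrm{add}}(\tau)\bm\Omega_D=\bm 0$, where $\bm Q^{\mathrm{add}}$ denotes $\bm Q$ with $m_a$ replaced by $m_a^{\mathrm{add}}$. Bilinearity then gives \eqref{eq:interaction-only}:
\[
(\bm Q^{\mathrm{add}}+\bm Q^\eta)\bm\Omega_D(\bm Q^{\mathrm{add}}+\bm Q^\eta)^T=\bm Q^\eta\bm\Omega_D\bm Q^{\eta T}.
\]

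The key step is the marginal constraint. Fix $a$, a factor $r$ and a level $l$, and let $\bm e_{a,r,l}\in\R^L$ be the indicator vector of the coordinates $\{(a,s):s_r=l\}$. By \eqref{eq:marginal-imbalance},
\[
\bm e_{a,r,l}^T\bm d_N=N^{-1/2}D^{(r)}_{N,a}(l),
\]
and this is $O_p(N^{-1/2})=o_p(1)$ by the marginal-balance hypothesis. Since $\bm d_N\xrightarrow{D}\mathcal N(\bm 0,\bm\Omega_D)$, the continuous mapping theorem gives $\bm e_{a,r,l}^T\bm d_N\xrightarrow{D}\mathcal N(0,\bm e_{a,r,l}^T\bm\Omega_D\bm e_{a,r,l})$. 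Because the limit is degenerate at zero, $\bm e_{a,r,l}^T\bm\Omega_D\bm e_{a,r,l}=0$. As $\bm\Omega_D$ is positive semidefinite, this forces $\bm\Omega_D\bm e_{a,r,l}=\bm 0$.

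Next we write the additive rows in terms of these vectors. If $m_a^{\mathrm{add}}(s,\tau)=\sum_r f_{a,r}(s_r,\tau)$, then the $a$th row of $\bm Q^{\mathrm{add}}(\tau)$ is
\[
\pi_a^{-1}\sum_{r=1}^R\sum_{l}f_{a,r}(l,\tau)\,\bm e_{a,r,l}^T .
\]
Each term is annihilated by $\bm\Omega_D$, so $\bm Q^{\mathrm{add}}(\tau)\bm\Omega_D=\bm 0$ for every $\tau$, which yields \eqref{eq:interaction-only}.

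A constant additive part needs no separate treatment. For a fixed factor $r$, $\sum_l\bm e_{a,r,l}$ equals the indicator of all coordinates $(a,\cdot)$, so a constant $c$ is the linear combination $c\sum_l\bm e_{a,r,l}$ of these vectors. The additive class is closed under addition, so the decomposition can be arbitrary, as the statement allows.

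For the final claim, take $m_a^{\mathrm{add}}=m_a$, so that $\eta_a\equiv 0$ and $\bm Q^\eta\equiv\bm 0$. The design term in \eqref{eq:general-mu-kernel} then vanishes and $\bm\Gamma_\mu^D=\bm\Gamma_\mu$.

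I expect the main obstacle to be the step passing from $O_p(1)$ marginal imbalances to the exact null-space statement $\bm\Omega_D\bm e_{a,r,l}=\bm 0$. It requires that the covariance of the weak limit equals $\bm\Omega_D$ as specified, and that a linear functional converging in probability to zero has zero limiting variance. Both follow from the CLT in Assumption~\ref{asu:general-imbalance} together with uniqueness of weak limits, but this point should be stated explicitly.
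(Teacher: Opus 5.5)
Your proposal is correct and follows essentially the same route as the paper: both use that the scaled marginal imbalances are $O_p(N^{-1/2})$, invoke uniqueness of the weak limit in Assumption~\ref{asu:general-imbalance} to get zero limiting variance, use positive semidefiniteness to obtain $\bm Q^{\mathrm{add}}(\tau)\bm\Omega_D=\bm 0$, and then expand $\bm Q=\bm Q^{\mathrm{add}}+\bm Q^{\eta}$ bilinearly. The only difference is in how the argument is organized. You first prove the $\tau$-free fact $\bm\Omega_D\bm e_{a,r,l}=\bm 0$ for each marginal indicator and then use linearity, whereas the paper applies the limit argument directly to $\bm Q^{\mathrm{add}}(\tau)\bm d_N$ at each $\tau$.
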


Bounded marginal imbalance is the design goal of minimization and has been
established for Pocock--Simon-type procedures under conditions given by
\citet{hu2020theory}, \citet{hu2023multi}, and
\citet{zhang2026multitreatment}, and it can also be checked from the design
simulation. \citet{johnson2024validity} described the asymptotic correlation
matrix of the within-stratum imbalances for equal stratum prevalences. For the four-arm
rule of Section~\ref{sec:simulation}, the
variance of each unscaled marginal imbalance was about $0.45$ at both $N=800$
and $N=1200$, whereas the variance of the scaled interaction contrast of
$\bm d_N$ was about $0.19$, close to its simple-randomization value
$\pi_a(1-\pi_a)=0.1875$. Minimization thus removes the additive directions of
the imbalance but leaves the interaction direction essentially as variable
as under simple randomization.

Proposition~\ref{prop:nondegenerate} shows that, for marginal procedures
that keep every arm possible at every step and return to balance quickly
enough, minimization removes none of the interaction directions. Each of them keeps a nondegenerate variance on
the $\sqrt N$ scale, so that Proposition~\ref{prop:additivity} becomes a
characterization of exactly when the design term vanishes. For $n\leq N$,
let $D_{n,a}(s)$ and $D_{n,a}^{(r)}(l)$ denote the imbalances among the
first $n$ subjects, and let $\bm M_n$ collect the marginal imbalances
$\{D_{n,a}^{(r)}(l)\}$. Let $\mathcal H\subset\R^J$ be the space of additive
functions on $\mathcal S$, let $\bm P_\perp$ be the Euclidean orthogonal
projection of $\R^J$ onto $\mathcal H^\perp$, and let
$\eta_a^\perp(\cdot,\tau)=\bm P_\perp m_a(\cdot,\tau)$ be the residual of
the unweighted least-squares fit of $m_a(\cdot,\tau)$ by an additive
function. The changes of the within-stratum imbalances that leave every
marginal imbalance unchanged form the subspace
\[
\mathcal V
=
\Bigl\{\bm v\in\R^L:\
\sum_{a=0}^Kv_{(a,s)}=0\ \text{for all }s,\ \
\sum_{s:\,s_r=l}v_{(a,s)}=0\ \text{for all }a,r,l\Bigr\},
\]
which is trivial only when every function on $\mathcal S$ is additive, as
with a single factor.

\begin{proposition}[Nondegenerate design term under minimization]
\label{prop:nondegenerate}
Suppose the conditions of Theorem~\ref{thm:general-car-fclt} hold and
$\mathcal V\neq\{\bm 0\}$. Suppose also that the randomization procedure
satisfies the following conditions.
\begin{enumerate}[label=(\roman*)]
\item There are functions $\chi_a$ such that
$\Pp(A_i=a\mid Z_1,\ldots,Z_N,A_1,\ldots,A_{i-1})=\chi_a(Z_i,\bm M_{i-1})$
for every $i$ and $a$, and $\chi_a\geq\delta$ for some $\delta>0$.
\item The return time $\varrho=\min\{n\geq1:\bm M_n=\bm 0\}$ to the
balanced state satisfies $\E(\varrho^2)<\infty$.
\end{enumerate}
Let $\bm\zeta\in\R^L$ stack the imbalances $\{D_{\varrho,a}(s)\}$. Then
$\bm\Omega_D=\E(\bm\zeta\bm\zeta^T)/\E(\varrho)$, and the range of
$\bm\Omega_D$ is exactly $\mathcal V$. Consequently, if $\lambda_-$ and
$\lambda_+$ denote the smallest and largest positive eigenvalues of
$\bm\Omega_D$, then for every arm $a$ and every $\tau\in\calT$,
\begin{equation}
\label{eq:design-term-bounds}
\lambda_-\frac{K\|\eta_a^\perp(\cdot,\tau)\|^2}{(K+1)\pi_a^2}
\leq
\{\bm Q(\tau)\bm\Omega_D\bm Q(\tau)^T\}_{aa}
\leq
\lambda_+\frac{K\|\eta_a^\perp(\cdot,\tau)\|^2}{(K+1)\pi_a^2},
\end{equation}
where $\|\cdot\|$ is the Euclidean norm on $\R^J$. In particular, the
design term in the variance of $\widehat\mu_a(\tau)$ vanishes if and only
if $m_a(\cdot,\tau)$ is additive. The design term in the variance of
$\widehat\theta_k(\tau)$ vanishes if and only if $m_k(\cdot,\tau)$ and
$m_0(\cdot,\tau)$ are both additive when $K\geq2$, and if and only if
$m_1(\cdot,\tau)/\pi_1+m_0(\cdot,\tau)/\pi_0$ is additive when $K=1$.
\end{proposition}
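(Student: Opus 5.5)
The plan has three parts: view the imbalance process as a Markov chain, obtain $\bm\Omega_D$ by renewal–reward, and identify its range through a support argument.

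\textbf{Renewal representation.} By condition (i), conditional on the i.i.d.\ strata, the pair $(Z_i,\bm M_{i-1})$ drives the assignment. Hence $\bm M_n$ is a time-homogeneous Markov chain, and $\varrho$ is its return time to $\bm 0$. The successive returns $0=\varrho_0<\varrho_1<\varrho_2<\cdots$ split the sequence into i.i.d.\ cycles. The increment of the full within-stratum imbalance over a cycle, $\bm\zeta_j$, is distributed as $\bm\zeta$. Each coordinate satisfies $|\zeta_{(a,s)}|\le\varrho$, so $\E\|\bm\zeta\|^2<\infty$ by (ii). By label symmetry, or more directly because $\bm M_\varrho=\bm 0$ forces each marginal sum of $\bm\zeta$ to vanish, we get $\bm\zeta\in\mathcal V$ almost surely. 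The within-stratum arm-sum constraint holds exactly, since $\sum_a D_{n,a}(s)=0$. Mean zero follows from Wald's identity, because $D_{n,a}(s)-\sum_{i\le n}\{A_{ia}-\chi_a(Z_i,\bm M_{i-1})\}\ind(Z_i=s)$ is a function of the chain that vanishes at returns. Alternatively, $\E\bm\zeta=\bm 0$ follows because otherwise $\bm d_N$ would drift at rate $\sqrt N$, contradicting Assumption~\ref{asu:general-imbalance}.

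\textbf{Computing $\bm\Omega_D$.} Apply the renewal-reward CLT (Anscombe's theorem with random indices). With $\nu_N=\max\{j:\varrho_j\le N\}$, we have $\nu_N/N\to1/\E\varrho$. The remainder between $\varrho_{\nu_N}$ and $N$ is bounded by a single cycle length, and this is $o_p(\sqrt N)$ since $\E\varrho^2<\infty$. Therefore $\bm d_N\to\mathcal N(\bm 0,\E(\bm\zeta\bm\zeta^T)/\E\varrho)$, and uniqueness of the limit in Assumption~\ref{asu:general-imbalance} identifies $\bm\Omega_D$.

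\textbf{Range of $\bm\Omega_D$ equals $\mathcal V$.} This is the main obstacle. The range of $\E(\bm\zeta\bm\zeta^T)$ is the linear span of the support of $\bm\zeta$. The inclusion $\subseteq\mathcal V$ was shown above. For the reverse inclusion, it suffices to show that for every nonzero $\bm u\in\mathcal V$ one has $\Pp(\bm u^T\bm\zeta\ne0)>0$. I would construct explicit cycles with positive probability using $\chi_a\ge\delta$ and $p_s>0$. First, from $\bm 0$, take a cycle of $K+1$ subjects all in stratum $s$, each assigned a different arm. This returns the marginals to zero and gives $\bm\zeta=\bm 0$; it is not useful by itself, so we combine such paths. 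Instead, take a "swap" cycle. Arm $a$ goes to stratum $s$ and arm $b$ to stratum $s'$, then arm $b$ to stratum $t$ and arm $a$ to stratum $t'$, where the profiles are chosen so that the marginals cancel (e.g.\ $s=(l_1,l_2)$, $t=(l_1,l_2')$, $s'=(l_1',l_2')$, $t'=(l_1',l_2)$ in two factors), padded with filler subjects to restore the arm totals. Intermediate states may revisit $\bm 0$, but we then concatenate two cycles. The resulting $\bm\zeta$ values are the elementary "rectangle" vectors $(e_a-e_b)\otimes(e_s-e_t-e_{t'}+e_{s'})$ and their multi-factor analogues. These span $\mathcal V$, because $\mathcal H^\perp$ is spanned by interaction contrasts and $\mathcal V=\{\text{arm contrasts}\}\otimes\mathcal H^\perp$. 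Care is needed so that the constructed path does not hit $\bm 0$ prematurely, and I expect that bookkeeping to be the delicate part. If that fails, I would argue via the first-step decomposition conditional on the first subject's stratum.

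\textbf{Bounds \eqref{eq:design-term-bounds} and the vanishing criteria.} The $a$th row of $\bm Q(\tau)$ is $\pi_a^{-1}e_a\otimes m_a(\cdot,\tau)$. Project this row onto $\mathcal V=(\bm I-\bone\bone^T/(K+1))\otimes\bm P_\perp$. Its projection is $\pi_a^{-1}(e_a-\bone/(K+1))\otimes\eta_a^\perp$, with squared norm $K\|\eta_a^\perp\|^2/\{(K+1)\pi_a^2\}$. Because $\bm\Omega_D$ is supported on $\mathcal V$, with eigenvalues there between $\lambda_-$ and $\lambda_+$, the Rayleigh bounds give \eqref{eq:design-term-bounds}. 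The variance statement for $\widehat\mu_a$ follows immediately. For $\widehat\theta_k$, the relevant row of $\bm R$ projects to $\pi_k^{-1}(e_k-\bone/(K+1))\otimes\eta_k^\perp-\pi_0^{-1}(e_0-\bone/(K+1))\otimes\eta_0^\perp$. When $K\ge2$, the vectors $e_k-\bone/(K+1)$ and $e_0-\bone/(K+1)$ are linearly independent, so the projection vanishes iff $\eta_k^\perp=\eta_0^\perp=0$. When $K=1$, $e_1-\bone/2=-(e_0-\bone/2)$, so the projection vanishes iff $\eta_1^\perp/\pi_1+\eta_0^\perp/\pi_0=0$, that is, iff $m_1/\pi_1+m_0/\pi_0$ is additive.
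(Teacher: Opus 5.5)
Your outline is the paper's route: regeneration at the visits of $\bm M_n$ to $\bm 0$, the renewal-reward CLT with Anscombe's theorem plus uniqueness of the limit in Assumption~\ref{asu:general-imbalance} to get $\bm\Omega_D=\E(\bm\zeta\bm\zeta^T)/\E(\varrho)$, a positivity argument for the range, and the projection $\{\bm I_{K+1}-\bone\bone^T/(K+1)\}\otimes\bm P_\perp$ for the bounds. Your linear-independence argument for $K\geq2$ is equivalent to the paper's norm identity. (For the remainder, bound the cycle straddling $N$ by $\max_{k\le N+1}\varrho_k$, as the paper does, because that cycle is size-biased.) Two steps, however, are not right as written.

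First, drop the Wald argument for $\E(\bm\zeta)=\bm 0$. The difference you display equals $\sum_{i\le n}\{\chi_a(Z_i,\bm M_{i-1})-\pi_a\}\ind(Z_i=s)$. This is a path functional, not a function of the chain, and nothing makes it vanish at returns. Conditions (i)--(ii) alone allow a rule that balances the margins but tilts every cycle along a fixed interaction contrast. Mean zero is therefore a genuine input from Assumption~\ref{asu:general-imbalance}. Your alternative argument supplies it and is the paper's: $N^{-1}\bm D_N\to\E(\bm\zeta)/\E(\varrho)$ almost surely, while $\bm d_N=O_p(1)$.

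Second, the inclusion $\mathcal V\subseteq\operatorname{range}(\bm\Omega_D)$ is not closed.
\begin{itemize}
\item Your concrete swap is mislabeled. With arm $a$ assigned to $s=(l_1,l_2)$ and $t'=(l_1',l_2)$, arm $a$ receives two subjects at level $l_2$ of factor 2, so those margins do not cancel. The rectangle $(\bm e_a-\bm e_b)\otimes(\bm e_s-\bm e_t-\bm e_{t'}+\bm e_{s'})$ needs $a$ on $\{s,s'\}$ and $b$ on $\{t,t'\}$. It also needs filler of $N_1\pi_c$ subjects of every arm $c$ in each stratum, which requires rational $\pi_c$; (ii) forces this, but you have to say so.
\item The claim that rectangles span $\mathcal V$ is asserted, not proved. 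It holds when every level combination is a stratum, but for a general $\mathcal S$ the space $\mathcal H^\perp$ can be nonzero without containing any rectangle.
\end{itemize}
The paper avoids both problems. $\mathcal V$ is the null space of an integer matrix, so it has a rational basis. For $\bm v\in\mathcal V\setminus\{\bm 0\}$, pick a basis vector $\bm w$ with $\bm v^T\bm w\neq0$. Then use any ordering of $\nu_{a,s}=N_1\pi_a+N_2w_{(a,s)}$ subjects; this has positive probability and total increment $N_2\bm w$.

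The premature-return worry is not delicate, and no first-step fallback is needed. Split the path at its visits to $\bm 0$. Because transitions depend only on the current $\bm M$, each piece has positive probability of being the first cycle, so each piece's increment is an atom of $\bm\zeta$. These increments sum to $N_2\bm w$ and $\bm v^T(N_2\bm w)\neq0$, so some atom has $\bm v^T\bm\zeta\neq0$. Combined with your correct reduction of the range to the span of the support of $\bm\zeta$, this completes the proof.
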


Conditions~(i) and~(ii) imply that the marginal imbalances are bounded in
probability, because each subject changes them by a bounded amount and the
time since the last visit to the balanced state is tight when
$\E(\varrho)<\infty$. Proposition~\ref{prop:additivity} therefore applies.
Condition~(i) holds for Pocock--Simon rules whose imbalance measure is
computed from the deviations of the arm counts from their targets at each
level, such as the range, the variance, or the sum of absolute deviations
under equal allocation, provided every arm keeps a positive assignment
probability and the rule does not change with the subject index. Under
unequal allocation, these conditions do not by themselves imply
Assumption~\ref{asu:general-imbalance}, which must be checked separately.
Under equal allocation, and when every combination of factor levels has
positive probability, condition~(ii) holds for Pocock--Simon rules that use
a weighted variance with positive weights as imbalance measure and whose
assignment probabilities do not increase with the score and separate a
minimizing arm from a maximizing arm by a fixed margin. Examples are the
rule in which the minimizing arms share a probability $p\in(K/(K+1),1)$ and
the ranked rule of \citet{hu2020theory}, and for these rules $\varrho$ even
has an exponential moment (Lemma~\ref{lem:return-time-variance}). For the
sum of absolute deviations used in Section~\ref{sec:simulation}, we
checked condition~(ii) numerically: the simulated return times had mean
590 and a tail consistent with geometric decay
(Section~\ref{sec:supp-return-variance}).
For the variance measure, \citet[Theorem~4.1]{hu2020theory} showed that
every within-stratum imbalance has a positive asymptotic variance.
Proposition~\ref{prop:nondegenerate} extends this to every direction of
$\mathcal V$ and to any procedure satisfying (i) and (ii), and it gives a
renewal formula for $\bm\Omega_D$. For two arms, \citet{johnson2024validity}
showed under equal stratum prevalences that the imbalance covariance
vanishes outside the interaction directions $\mathcal V$, and
\citet{kuznetsova2027explicit} conjectured from simulations that it is
proportional to the projection onto $\mathcal V$.
Proposition~\ref{prop:nondegenerate} proves that its range is exactly
$\mathcal V$ for any number of arms and factors. Because $\bm P_\perp\eta_a=\eta_a^\perp$
for every remainder $\eta_a$ in Proposition~\ref{prop:additivity}, the
upper bound in \eqref{eq:design-term-bounds} also holds with $\eta_a$ in
place of $\eta_a^\perp$. With two binary factors,
$\|\eta_a^\perp(\cdot,\tau)\|=|m_a(1,\tau)-m_a(2,\tau)-m_a(3,\tau)+m_a(4,\tau)|/2$.
The bounds show that the variance added by minimization is of the same
order as the squared interaction component of $m_a(\cdot,\tau)$, so it can
be ignored only when that component is small.

\begin{remark}[Label-symmetric procedures]
\label{rmk:label-symmetric}
Suppose the procedure in Proposition~\ref{prop:nondegenerate} also uses
equal allocation and is label-symmetric, that is,
$\chi_{\sigma(a)}(s,\sigma\bm m)=\chi_a(s,\bm m)$ for every permutation
$\sigma$ of the arms, where $\sigma\bm m$ permutes the arm index of
$\bm m$. Then the distribution of $\bm d_N$ is invariant under permutations
of the arms, and Schur's lemma gives
$\bm\Omega_D=\{\bm I_{K+1}-\bone\bone^T/(K+1)\}\otimes\bm G$ for a
symmetric matrix $\bm G$ that vanishes on $\mathcal H$ and is positive
definite on $\mathcal H^\perp$. The design term for arm $a$ is then exactly
$K(K+1)\,\eta_a^\perp(\cdot,\tau)^T\bm G\,\eta_a^\perp(\cdot,\tau)$. With
two binary factors, $\mathcal H^\perp$ is spanned by the interaction
contrast and $\bm G$ reduces to a single constant $\omega>0$, so the design
term equals $K(K+1)\,\omega\,\|\eta_a^\perp(\cdot,\tau)\|^2$. For the
four-arm rule of Section~\ref{sec:simulation}, the design simulation gave
$\widehat\omega=0.0649$ at $N=800$ and $0.0653$ at $N=1200$, and
$5.4\times10^5$ simulated return cycles gave $0.0646$ (standard error
$0.0001$). The last value, which has the smallest Monte Carlo error, is
about 3\% above the value $0.0625$ that simple randomization places on the
same directions.
\end{remark}

\begin{remark}[Practical interpretation]
\label{rmk:additivity-interpretation}
Whether the design term matters is therefore a question about prognosis.
With two binary factors and strata ordered by the levels $(0,0)$, $(0,1)$,
$(1,0)$, $(1,1)$, the remainder $\eta_a$ is proportional to the interaction
contrast $m_a(1,\tau)-m_a(2,\tau)-m_a(3,\tau)+m_a(4,\tau)$. Under
approximately additive prognosis the bounded-balance kernel is adequate even
under minimization. Under strong interaction, for example a biomarker that
is prognostic in only one disease stage, omitting $\bm\Omega_D$ understates
the variance (Section~\ref{sec:simulation}). The standardized estimator of
Subsection~\ref{sec:conditional-censoring} avoids the issue.
\end{remark}

\subsection{The Stratum-Standardized Estimator}
\label{sec:conditional-censoring}

The unadjusted estimator identifies the marginal survival curve under
Assumption~\ref{asu:censoring}. In practice, censoring may depend on baseline
severity, region, or other factors used by the randomization procedure, so
that independence holds within each randomization stratum but not
marginally. We therefore also study a stratum-standardized estimator. Besides
weakening the censoring assumption, standardization has an important design
consequence: the stratum-specific influence functions are conditionally
centered, so the design term of Theorem~\ref{thm:general-car-fclt} does not
arise.

For arm $a$ and stratum $s$, define
\begin{align}
\label{eq:stratum-survival-rmst}
S_a(t\mid s)
&=
\Pp\{T_i(a)>t\mid Z_i=s\},&
\mu_a(s,\tau)
&=
\int_0^\tau S_a(t\mid s)\,dt.
\end{align}
The marginal RMST remains the target estimand and satisfies
\begin{equation}
\label{eq:standardization-identity}
\mu_a(\tau)
=
\sum_{s=1}^Jp_s\mu_a(s,\tau).
\end{equation}
Let
\begin{equation}
\label{eq:stratum-rmst-deviation}
g_a(s,\tau)
=
\mu_a(s,\tau)-\mu_a(\tau).
\end{equation}
Thus $\sum_sp_sg_a(s,\tau)=0$.

\begin{assumption}[Conditional independent censoring]
\label{asu:conditional-censoring}
For every arm $a$ and stratum $s$,
\[
T_i(a)\perp C_i(a)\mid Z_i=s
\quad\text{and}\quad
\Pp\{T_i(a)=C_i(a)\mid Z_i=s\}=0.
\]
For some $\epsilon>0$,
\[
\min_{a,s}
\Pp\{X_i(a)\geq\tau_U\mid Z_i=s\}
\geq\epsilon.
\]
If a stratum-specific event-time distribution has jumps, its cumulative
hazard $\Lambda_{a,s}$ satisfies
\[
\min_{a,s}\inf_{0<t\leq\tau_U}
\{1-\Delta\Lambda_{a,s}(t)\}
\geq\epsilon.
\]
\end{assumption}

For a subject in stratum $s$, let $\phi_{ia}^{s}(\tau)$ denote the RMST
influence function \eqref{eq:rmst-if} computed with
$(S_a,\Lambda_a,y_a)$ replaced by the stratum-specific quantities
$S_a(\cdot\mid s)$, its cumulative hazard $\Lambda_{a,s}$, and
$y_{a,s}(t)=\Pp\{X_i(a)\geq t\mid Z_i=s\}$ (Section~\ref{sec:supp-standardized-details}). Under
Assumption~\ref{asu:conditional-censoring}, it is conditionally centered,
$\E\{\phi_{ia}^{s}(\tau)\mid Z_i=s\}=0$, and we write
\begin{equation}
\label{eq:stratum-if-covariance}
v_a(s;\tau,\tau')
=
\E\{\phi_{ia}^{s}(\tau)\phi_{ia}^{s}(\tau')\mid Z_i=s\}.
\end{equation}

Let $\widehat S_a(\cdot\mid s)$ be the Kaplan--Meier estimator computed from
the subjects with $A_i=a$ and $Z_i=s$, and let
$\widehat\mu_a(s,\tau)=\int_0^\tau\widehat S_a(t\mid s)\,dt$.
With $\widehat p_s=n(s)/N$, define
\begin{align}
\label{eq:standardized-rmst-estimator}
\widehat\mu_a^{\mathrm{st}}(\tau)
&=
\sum_{s=1}^J
\widehat p_s\widehat\mu_a(s,\tau),\\
\label{eq:standardized-contrast-estimator}
\widehat\btheta^{\mathrm{st}}(\tau)
&=
\bm C\widehat\bmu^{\mathrm{st}}(\tau).
\end{align}
The same empirical stratum weights are used for every arm. Consequently,
the treatment comparisons refer to one common target population rather than
to different arm-specific mixtures of the strata. The survival curve
$\sum_s\widehat p_s\widehat S_a(\cdot\mid s)$ is the direct-adjusted
Kaplan--Meier estimator of \citet{chang1982corrected} and
\citet{cupples1995age}, and because the RMST is linear in the survival
function, $\widehat\mu_a^{\mathrm{st}}$ is its RMST. Related estimators,
reviewed by \citet{denz2023comparison}, include the at-risk-weighted direct
adjustment of \citet{gregory1988adjusting} and \citet{nieto1996adjusting},
the reweighted risk-set estimator of \citet{amato1988generalized}, and
inverse-probability-weighted versions
\citep{cole2004adjusted,xie2005adjusted}. These works treat subjects as independently sampled and do
not address covariate-adaptive randomization. Our contribution is the joint
process-level distribution of the standardized contrasts across several
arms under CAR, including minimization, when censoring is independent only
within the randomization strata.

\begin{theorem}[Standardized RMST process under conditional censoring]
\label{thm:standardized-fclt}
Suppose Assumptions~\ref{asu:sampling},
\ref{asu:randomization}, and \ref{asu:conditional-censoring} hold. Suppose also that
\begin{equation}
\label{eq:allocation-consistency-standardized}
\frac{n_a(s)}{n(s)}
\xrightarrow{P}\pi_a
\qquad
\text{for every }a=0,\ldots,K\text{ and }s=1,\ldots,J.
\end{equation}
Condition \eqref{eq:allocation-consistency-standardized} follows under
either Assumption~\ref{asu:bounded-balance} or
Assumption~\ref{asu:general-imbalance}.

Then, uniformly in $\tau\in\calT$ and jointly over arms,
\begin{equation}
\label{eq:standardized-linear-representation}
\sqrt N
\{\widehat\mu_a^{\mathrm{st}}(\tau)-\mu_a(\tau)\}
=
\frac{1}{\sqrt N}
\sum_{i=1}^N
\left[
g_a(Z_i,\tau)
+
\frac{A_{ia}}{\pi_a}
\phi_{ia}^{Z_i}(\tau)
\right]
+o_p(1).
\end{equation}
Consequently,
\[
\sqrt N
\{\widehat\bmu^{\mathrm{st}}(\cdot)-\bmu(\cdot)\}
\rights
\mathbb G_\mu^{\mathrm{st}}(\cdot)
\quad\text{in }\ell^\infty(\calT)^{K+1},
\]
where the limiting mean-zero Gaussian process has covariance kernel
\begin{equation}
\label{eq:standardized-mu-kernel}
\Gamma_{\mu,ab}^{\mathrm{st}}(\tau,\tau')
=
\E\{g_a(Z_i,\tau)g_b(Z_i,\tau')\}
+
\ind(a=b)\frac{1}{\pi_a}
\E\{v_a(Z_i;\tau,\tau')\}.
\end{equation}
Furthermore,
\[
\sqrt N
\{\widehat\btheta^{\mathrm{st}}(\cdot)-\btheta(\cdot)\}
\rights
\mathbb G_\theta^{\mathrm{st}}(\cdot)
=
\bm C\mathbb G_\mu^{\mathrm{st}}(\cdot)
\quad\text{in }\ell^\infty(\calT)^K.
\]
For $k,\ell=1,\ldots,K$,
\begin{align}
\label{eq:standardized-theta-kernel}
\Gamma_{\theta,k\ell}^{\mathrm{st}}(\tau,\tau')
={}&
\E\!\left[
\{g_k(Z_i,\tau)-g_0(Z_i,\tau)\}
\{g_\ell(Z_i,\tau')-g_0(Z_i,\tau')\}
\right]\notag\\
&+
\ind(k=\ell)\frac{1}{\pi_k}
\E\{v_k(Z_i;\tau,\tau')\}
+
\frac{1}{\pi_0}
\E\{v_0(Z_i;\tau,\tau')\}.
\end{align}
These results hold under arbitrary treatment effects.
\end{theorem}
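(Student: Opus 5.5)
The plan is to reduce everything to a stratum-by-stratum version of the single-arm Kaplan--Meier linearization already used for Theorem~\ref{thm:bounded-balance-fclt}, and then to show that the design enters only through the ratios $n_a(s)/n(s)$. Write
\[
\widehat\mu_a^{\mathrm{st}}(\tau)-\mu_a(\tau)
=\sum_s(\widehat p_s-p_s)\mu_a(s,\tau)
+\sum_s\widehat p_s\{\widehat\mu_a(s,\tau)-\mu_a(s,\tau)\}.
\]
The first term equals $N^{-1}\sum_i g_a(Z_i,\tau)$ exactly, because $\sum_s p_s g_a(s,\tau)=0$.

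For the second term, condition on $\mathcal Z=(Z_1,\ldots,Z_N)$ and on the allocation sequence. By Assumption~\ref{asu:randomization}, the outcomes of the subjects with $A_i=a$ and $Z_i=s$ are then i.i.d.\ draws from the stratum-$s$ law of $(X_i(a),\Delta_i(a))$. Their number $n_a(s)$ is $\mathcal Z$-and-allocation measurable and tends to infinity in probability by \eqref{eq:allocation-consistency-standardized}. Under Assumption~\ref{asu:conditional-censoring}, the stratum-specific Kaplan--Meier RMST therefore has the uniform linearization
\[
\widehat\mu_a(s,\tau)-\mu_a(s,\tau)=\frac{1}{n_a(s)}\sum_{i:A_i=a,Z_i=s}\phi^{s}_{ia}(\tau)+o_p\{n_a(s)^{-1/2}\}
\]
uniformly on $\calT$. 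This follows from Hadamard differentiability of the Nelson--Aalen, product-integral and integration maps (Section~\ref{sec:supp-if}), together with the bounds of Lemma~\ref{lem:if-bounds} applied to the stratum quantities. I would state this remainder bound conditionally and then uncondition via dominated convergence, since the remainder is $o_p$ given any sequence of sample sizes tending to infinity.

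Multiplying by $\widehat p_s$ and using $\widehat p_s/n_a(s)=1/\{N\,n_a(s)/n(s)\}=\{1+o_p(1)\}/(N\pi_a)$ gives
\[
\sqrt N\,\widehat p_s\{\widehat\mu_a(s,\tau)-\mu_a(s,\tau)\}=\frac{1}{\sqrt N}\sum_{i:A_i=a,Z_i=s}\frac{\phi^{s}_{ia}(\tau)}{\pi_a}+o_p(1).
\]
Replacing $\pi_a\,n(s)/n_a(s)$ by $1$ is justified because the stratum sum $n_a(s)^{-1/2}\sum\phi^s_{ia}$ is $O_p(1)$ uniformly in $\tau$, by a conditional Donsker argument. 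Summing over the finitely many $s$ and adding the first term yields \eqref{eq:standardized-linear-representation}. This step needs no rate on $D_{N,a}(s)$, which is why the theorem holds under both Assumption~\ref{asu:bounded-balance} and Assumption~\ref{asu:general-imbalance}.

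For the weak limit, I will show that the two pieces $\mathbb U_N=N^{-1/2}\sum_i g(Z_i,\cdot)$ and $\mathbb V_N=N^{-1/2}\sum_i A_{ia}\phi^{Z_i}_{ia}(\cdot)/\pi_a$ are asymptotically independent Gaussian processes. The key observation is conditional centering: given $\mathcal Z$ and the allocations, the summands of $\mathbb V_N$ are independent and mean zero. Their conditional covariance is
\[
\sum_s\frac{n_a(s)}{N\pi_a^2}\,v_a(s;\tau,\tau')\ind(a=b)\xrightarrow{P}\ind(a=b)\frac{1}{\pi_a}\sum_s p_s v_a(s;\tau,\tau'),
\]
where the $\ind(a=b)$ arises because each subject is in exactly one arm. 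A conditional multivariate Lindeberg CLT, with uniformly bounded summands, gives convergence of the conditional finite-dimensional laws to a limit not depending on $\mathcal Z$. Conditional tightness follows from the Lipschitz-in-$\tau$ bound of Lemma~\ref{lem:if-bounds}, which places the functions in a VC-type class and gives conditional asymptotic equicontinuity through a bracketing or chaining bound for independent non-identically distributed summands. Since $\mathbb U_N$ is $\mathcal Z$-measurable and converges by the ordinary CLT, the joint characteristic functional factorizes in the limit, giving \eqref{eq:standardized-mu-kernel}. The $\theta$ result is then the continuous map $\bm C$, and \eqref{eq:standardized-theta-kernel} follows by expanding $\bm C\bm\Gamma^{\mathrm{st}}_\mu\bm C^T$. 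Arbitrary treatment effects are allowed throughout, since no null is used.

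I expect the main obstacle to be the conditional linearization with random, design-dependent stratum sample sizes, made uniform in $\tau$. The difficulty is that under minimization the allocations are dependent across subjects, so one must argue carefully that conditioning on the full allocation vector still leaves i.i.d.\ outcomes within each cell. My first attempt would be to construct the cell samples as the first $n_a(s)$ elements of independent i.i.d.\ sequences indexed by $(a,s)$, a representation that is equal in law by Assumption~\ref{asu:randomization}. Then almost-sure or in-probability Kaplan--Meier linearization for each deterministic sequence transfers directly.
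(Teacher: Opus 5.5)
Your proposal is correct and follows essentially the same route as the paper. The shared steps are the exact split into the composition term $N^{-1/2}\sum_i g_a(Z_i,\tau)$ and the $\widehat p_s$-weighted cell terms, and a cellwise Kaplan--Meier linearization that is conditional on strata and assignments and unconditioned by dominated convergence over the random cell sizes. They also include replacing $\widehat\pi_a(s)$ by $\pi_a$ via tightness of the cell sums, and a conditional Lindeberg/Lipschitz-bracketing CLT whose stability makes the residual limit independent of the composition limit; your cell-indexed i.i.d.\ coupling is an equivalent rephrasing of the paper's conditioning step.
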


\begin{remark}[Why the Pocock--Simon term disappears]
\label{rmk:standardized-no-imbalance}
The expansion \eqref{eq:standardized-linear-representation} contains no
term of the form $\bm Q(\tau)\bm d_N$. Within each stratum, the KM influence
function $\phi_{ia}^{s}$ has conditional mean zero, while the empirical
stratum weights account directly for random variation in the composition of
the enrolled population. Therefore, a $\sqrt N$-order fluctuation in
$n_a(s)-\pi_an(s)$ changes only the number of conditionally centered
observations contributing from that stratum and does not create an
additional first-order mean component. In particular,
$\bm\Omega_D$ need not be estimated for the standardized analysis, even
when Pocock--Simon minimization was used. This parallels the design-robust
behavior of stratum-standardized average-treatment-effect estimators for
noncensored outcomes under minimization \citep{ye2022inference} and the
universal applicability of covariate-adjusted log-rank tests across CAR
designs \citep{ye2024covariate}.
\end{remark}

\section{Inference Procedures}
\label{sec:constructed-statistics}

This section proposes two CAR-adjusted RMST testing procedures, one for each
estimator of Section~\ref{sec:main-results}. The first, the CAR-adjusted
marginal procedure, studentizes the unadjusted Kaplan--Meier contrast that is
reported in practice by a standard error that is valid under CAR. The second,
which we recommend as the primary analysis, does the same for the
stratum-standardized contrast. Subsection~\ref{sec:car-adjusted-statistics}
defines both statistics together with their computable standard errors.
Subsections~\ref{sec:multiplier} and \ref{sec:fwer} give the multiplier
calibration, simultaneous bands, and strong family-wise error control, which
are common to both procedures, and Subsection~\ref{sec:recommendation}
discusses the choice between them.

\subsection{The Proposed Test Statistics}
\label{sec:car-adjusted-statistics}
\label{sec:covariance-estimation}

Both procedures studentize a vector of treatment-control contrasts by a
plug-in estimate of its limiting covariance kernel. Throughout,
$\widehat p_s=n(s)/N$.

\paragraph{CAR-adjusted marginal statistics.}
Let $\widehat\phi_{ia}(\tau)$ be obtained from \eqref{eq:rmst-if} by replacing
$(S_a,\Lambda_a,y_a)$ with the arm-specific Kaplan--Meier, Nelson--Aalen,
and empirical at-risk estimates. For each treatment--stratum cell, let
$\widehat m_a(s,\tau)$ and $\widehat c_a(s;\tau,\tau')$ be the sample mean and
covariance of $\widehat\phi_{ia}$ over subjects with $A_i=a$ and $Z_i=s$, and
let $\widetilde m_a(s,\tau)=\widehat m_a(s,\tau)-\sum_{s'}\widehat p_{s'}\widehat
m_a(s',\tau)$. The design covariance $\bm\Omega_D$ depends only on the stratum
distribution and the prespecified randomization algorithm, so we estimate it
by reproducing the design: in each of $B_{\mathrm{MC}}$ replications, $N$
stratum profiles are drawn from $\widehat{\bm p}$, the trial's exact
randomization algorithm is applied, and the scaled imbalance vector $\bm d_N$
is recorded. The Monte Carlo covariance $\widehat{\bm\Omega}_D$ is the multi-arm
analogue of the estimator of \citet{zhao2024consistent}, and
$\widehat{\bm\Omega}_D=\bm 0$ for bounded-balance designs. Finally,
$\widehat{\bm Q}(\tau)$ replaces $m_a$ in \eqref{eq:Q-matrix} by
$\widetilde m_a$.

\begin{definition}[CAR-adjusted marginal RMST statistics]
\label{def:car-adjusted-statistics}
Estimate the covariance kernel of Theorem~\ref{thm:general-car-fclt} by
\begin{align}
\label{eq:base-kernel-estimator}
\widehat\Gamma_{\mu,ab}(\tau,\tau')
&=
\sum_{s=1}^J\widehat p_s
\widetilde m_a(s,\tau)\widetilde m_b(s,\tau')
+
\ind(a=b)\frac{1}{\pi_a}
\sum_{s=1}^J\widehat p_s
\widehat c_a(s;\tau,\tau'),\\
\label{eq:general-theta-kernel-estimator}
\widehat{\bm\Gamma}_\theta^{D}(\tau,\tau')
&=
\bm C\left\{
\widehat{\bm\Gamma}_\mu(\tau,\tau')
+
\widehat{\bm Q}(\tau)\widehat{\bm\Omega}_D\widehat{\bm Q}(\tau')^T
\right\}\bm C^T,
\end{align}
and let
\begin{equation}
\label{eq:car-adjusted-se}
\widehat\sigma_{k,\mathrm{CAR}}^2(\tau)
=
\widehat\Gamma_{\theta,kk}^{D}(\tau,\tau),
\qquad k=1,\ldots,K.
\end{equation}
For a prespecified restriction time $\tau^\star$, the fixed-time and
process-wise statistics are
\begin{equation}
\label{eq:fixed-statistic}
V_{k,\mathrm{CAR}}^{\mathrm{fix}}
=
\frac{\sqrt N\,\widehat\theta_k(\tau^\star)}
{\widehat\sigma_{k,\mathrm{CAR}}(\tau^\star)},
\qquad
V_{k,\mathrm{CAR}}^{\mathrm{proc}}
=
\sup_{\tau\in\calT}
\frac{\sqrt N\,\widehat\theta_k(\tau)}
{\widehat\sigma_{k,\mathrm{CAR}}(\tau)},
\end{equation}
and the family-level statistics are
\begin{equation}
\label{eq:car-family-statistics}
\mathcal V_{\mathrm{CAR}}^{h}
=
\max_{1\leq k\leq K}V_{k,\mathrm{CAR}}^{h},
\qquad
h\in\{\mathrm{fix},\mathrm{proc}\}.
\end{equation}
These definitions are for the one-sided alternative that a larger RMST is
beneficial. Absolute values give the corresponding two-sided statistics.
\end{definition}

The standard error \eqref{eq:car-adjusted-se} includes variation from the
sampled stratum composition, the treatment-specific outcomes, the shared
control, and, under $\sqrt N$-order imbalance, the design term. Section~\ref{sec:supp-prelim}
gives the explicit cell-level formulas, the finite-sample
degrees-of-freedom correction used by default, and the design-simulation
algorithm.

\begin{proposition}[Uniform consistency of the covariance estimator]
\label{prop:covariance-consistency}
Suppose the assumptions of Theorem~\ref{thm:general-car-fclt} hold. Assume
also that the design covariance is locally stable: for every deterministic
sequence of stratum probability vectors $\bm p_N$ satisfying
$\sqrt N(\bm p_N-\bm p)=O(1)$, the covariance of $\bm d_N$ generated under
$\bm p_N$ converges to $\bm\Omega_D$, with uniformly bounded fourth moments.
If $B_{\mathrm{MC}}\to\infty$, then
\[
\widehat{\bm\Omega}_D\xrightarrow{P}\bm\Omega_D
\]
and
\begin{equation}
\label{eq:uniform-kernel-consistency}
\sup_{\tau,\tau'\in\calT}
\left\|
\widehat{\bm\Gamma}_\theta^{D}(\tau,\tau')
-
\bm\Gamma_\theta^{D}(\tau,\tau')
\right\|_{\max}
\xrightarrow{P}0.
\end{equation}
The same conclusion holds for the arm-specific kernel
$\widehat{\bm\Gamma}_\mu^{D}$.
\end{proposition}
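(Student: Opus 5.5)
The proof separates into two parts: consistency of the Monte Carlo design covariance $\widehat{\bm\Omega}_D$, and uniform consistency of the plug-in pieces $\widetilde m_a$, $\widehat c_a$ and $\widehat p_s$. The kernel $\widehat{\bm\Gamma}_\theta^{D}$ in \eqref{eq:general-theta-kernel-estimator} is a finite sum of products of these pieces, premultiplied by $\bm C$ and postmultiplied by $\bm C^T$. So once each piece is uniformly consistent over $\calT$, or over $\calT\times\calT$, the continuous mapping theorem for finitely many bounded products gives \eqref{eq:uniform-kernel-consistency}. The same argument without $\bm C$ gives the statement for $\widehat{\bm\Gamma}_\mu^{D}$.

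\textbf{Step 1: the design covariance.} Condition on the observed data, so that $\widehat{\bm p}$ is fixed. The $B_{\mathrm{MC}}$ replications are i.i.d.\ draws of $\bm d_N$ generated under $\widehat{\bm p}$. Write $\bm\Omega_N(\bm q)$ for the covariance of $\bm d_N$ under stratum probabilities $\bm q$.
\begin{itemize}
\item The conditional variance of each entry of the Monte Carlo covariance is bounded by the fourth moments divided by $B_{\mathrm{MC}}$. Because $\sqrt N(\widehat{\bm p}-\bm p)=O_p(1)$, with probability tending to one $\widehat{\bm p}$ lies in a $\sqrt N$-neighbourhood of $\bm p$, where the fourth moments are uniformly bounded. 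Chebyshev's inequality then gives $\widehat{\bm\Omega}_D-\bm\Omega_N(\widehat{\bm p})\to0$ in probability.
\item Local stability is stated only for deterministic sequences, so I transfer it to the random $\widehat{\bm p}$ by a subsequence argument. Suppose $\bm\Omega_N(\widehat{\bm p})\not\to\bm\Omega_D$ in probability. Then there are $\epsilon>0$, a subsequence, and a compact ball of radius $M/\sqrt N$ around $\bm p$ containing $\widehat{\bm p}$ with high probability, such that $\|\bm\Omega_N(\bm q)-\bm\Omega_D\|>\epsilon$ for some $\bm q$ in that ball along the subsequence. Choosing such a $\bm q=\bm p_N$ produces a deterministic sequence violating local stability.
\item Equivalently, $\sup_{\|\bm q-\bm p\|\le M/\sqrt N}\|\bm\Omega_N(\bm q)-\bm\Omega_D\|\to0$ for every $M$, which yields $\widehat{\bm\Omega}_D\xrightarrow{P}\bm\Omega_D$.
\end{itemize}

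\textbf{Step 2: uniform consistency of the plug-in influence functions.} Under Assumption~\ref{asu:censoring}, the arm-specific Kaplan--Meier, Nelson--Aalen and at-risk estimators converge uniformly on $[0,\tau_U]$. This follows from Theorem~\ref{thm:general-car-fclt} at the $\sqrt N$ rate; the Glivenko--Cantelli step uses $n_a(s)/n(s)\to\pi_a$ and Assumption~\ref{asu:randomization}, which makes outcomes within a treatment-stratum cell behave like an i.i.d.\ sample conditionally on the allocation. The positivity bounds keep $y_a$ and $1-\Delta\Lambda_a$ away from zero. Hence
\[
\max_i\sup_{\tau\in\calT}|\widehat\phi_{ia}(\tau)-\phi_{ia}(\tau)|\xrightarrow{P}0,
\]
after writing the stochastic integral in \eqref{eq:rmst-if} through integration by parts so that it is a continuous functional of these estimated curves and of the individual's own counting and at-risk processes.

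Next, cell averages of $\phi_{ia}$ and of $\phi_{ia}(\tau)\phi_{ia}(\tau')$ converge uniformly in $(\tau,\tau')$. By Lemma~\ref{lem:if-bounds} these functions are uniformly bounded and Lipschitz in $\tau$, so the indexing classes are Glivenko--Cantelli. The cell samples are again handled conditionally on $(Z_i,A_i)_{i}$ via Assumption~\ref{asu:randomization}, with cell sizes tending to infinity. This gives $\widehat m_a\to m_a$ and $\widehat c_a\to c_a$ uniformly. Since $\widehat p_s\to p_s$ and $\sum_sp_sm_a(s,\tau)=0$, the centred version $\widetilde m_a$ also converges to $m_a$. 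Combining these limits with Step 1 finishes the proof.

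\textbf{Main obstacle.} I expect the hardest part to be Step 2's conditioning argument under dependent CAR allocations. Treating cell members as i.i.d.\ given the allocation requires care, because the allocation depends on the whole vector $(Z_1,\ldots,Z_N)$. I would handle it by noting that, given $(Z_i,A_i)_{i\le N}$, the potential outcomes are independent with stratum-specific laws, so within-cell sums are sums of independent terms with random but diverging counts.
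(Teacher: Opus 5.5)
Your proposal is correct and follows the paper's proof essentially step for step. The design covariance is handled by a Monte Carlo law of large numbers conditional on $\widehat{\bm p}$ together with local design stability, and $\widehat m_a$ and $\widehat c_a$ by plug-in consistency of $\widehat\phi_{ia}$ followed by a conditional uniform law of large numbers within treatment--stratum cells, then centering for $\widetilde m_a$ and continuous mapping through the finitely many products and $\bm C$. Your subsequence argument, which transfers deterministic local stability to the random $\widehat{\bm p}$, spells out a step the paper states in one line. Your uniform bound on $\widehat\phi_{ia}-\phi_{ia}$ must be taken only over subjects with $A_i=a$, since $\widehat\phi_{ia}\equiv0$ otherwise; it is stronger than the paper's cell-averaged squared-error bound, and either suffices.
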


For two-arm Pocock--Simon minimization, \citet{zhao2024consistent} proved
this local-stability property by Le Cam's third lemma. We assume its
multi-arm analogue. The property holds trivially for strongly balancing
designs. It can also be checked numerically by recomputing
$\widehat{\bm\Omega}_D$ at perturbed stratum frequencies. The design
simulation is performed once per trial and reused for every restriction
time.

\paragraph{Stratum-standardized statistics.}
Let $\widehat\pi_a(s)=n_a(s)/n(s)$, let $\widehat\phi_{ia}^{s}$ be the plug-in
version of $\phi_{ia}^{s}$ computed within cell $(a,s)$, let
$\widehat v_a(s;\tau,\tau')$ be its within-cell sample covariance, and let
$\widehat g_a(s,\tau)=\widehat\mu_a(s,\tau)-\widehat\mu_a^{\mathrm{st}}(\tau)$.

\begin{definition}[Stratum-standardized RMST statistics]
\label{def:standardized-statistics}
Estimate the covariance kernel of Theorem~\ref{thm:standardized-fclt} by
\begin{equation}
\label{eq:standardized-theta-kernel-estimator}
\widehat\Gamma_{\mu,ab}^{\mathrm{st}}(\tau,\tau')
=
\sum_{s=1}^J\widehat p_s\widehat g_a(s,\tau)\widehat g_b(s,\tau')
+
\ind(a=b)\sum_{s=1}^J\frac{\widehat p_s}{\widehat\pi_a(s)}
\widehat v_a(s;\tau,\tau'),
\qquad
\widehat{\bm\Gamma}_\theta^{\mathrm{st}}
=
\bm C\widehat{\bm\Gamma}_\mu^{\mathrm{st}}\bm C^T,
\end{equation}
and let
\begin{equation}
\label{eq:standardized-se}
\widehat\sigma_{k,\mathrm{st}}^2(\tau)
=
\widehat\Gamma_{\theta,kk}^{\mathrm{st}}(\tau,\tau).
\end{equation}
The standardized fixed-time and process-wise statistics are
\begin{equation}
\label{eq:standardized-fixed-statistic}
V_{k,\mathrm{st}}^{\mathrm{fix}}
=
\frac{\sqrt N\,\widehat\theta_k^{\mathrm{st}}(\tau^\star)}
{\widehat\sigma_{k,\mathrm{st}}(\tau^\star)},
\qquad
V_{k,\mathrm{st}}^{\mathrm{proc}}
=
\sup_{\tau\in\calT}
\frac{\sqrt N\,\widehat\theta_k^{\mathrm{st}}(\tau)}
{\widehat\sigma_{k,\mathrm{st}}(\tau)},
\end{equation}
and the family-level statistics are
\begin{equation}
\label{eq:standardized-family-statistic}
\mathcal V_{\mathrm{st}}^{h}
=
\max_{1\leq k\leq K}V_{k,\mathrm{st}}^{h},
\qquad
h\in\{\mathrm{fix},\mathrm{proc}\}.
\end{equation}
\end{definition}

Unlike the marginal statistics, these statistics change both the numerator,
which standardizes the treatment--stratum RMST estimates to one common target
population, and its covariance, which needs no design term and no design
simulation. Their consistency is part of
Proposition~\ref{prop:standardized-inference} below.

Each procedure has two inseparable components. First, each contrast is
studentized by its CAR-valid standard error. Second, the joint critical
value is obtained from a multiplier process for the complete vector of
contrasts, so the off-diagonal dependence created by the shared control and
the randomization is retained. We deliberately do not whiten the contrast
vector by an inverse covariance square root: whitening would mix
scientifically distinct treatment-control estimands, whereas max-$T$
calibration preserves their interpretation and exploits their dependence.

\subsection{CAR-Aware Gaussian Multiplier Calibration}
\label{sec:multiplier}

Critical values are obtained from a multiplier process that mirrors the
decomposition in the proofs of Theorems~\ref{thm:bounded-balance-fclt} and
\ref{thm:general-car-fclt}. For subject $i$ in stratum $Z_i$, let
\begin{equation}
\label{eq:bootstrap-summand}
\widehat u_{ia}(\tau)
=
\frac{A_{ia}}{\pi_a}
\{\widehat\phi_{ia}(\tau)-\widehat m_a(Z_i,\tau)\}
+
\widetilde m_a(Z_i,\tau).
\end{equation}
Let $\xi_1,\ldots,\xi_N$ be independent multipliers, independent of the data
$\mathcal D_N$, with mean zero, variance one, and a finite moment of order
$2+\eta$ for some $\eta>0$, and independently draw
$\bm D^*\mid\mathcal D_N\sim\mathcal N_L(\bm 0,\widehat{\bm\Omega}_D)$. Define
\begin{equation}
\label{eq:bootstrap-mu-process}
\mathbb G_{\mu,a}^*(\tau)
=
\frac{1}{\sqrt N}
\sum_{i=1}^N\xi_i\widehat u_{ia}(\tau)
+
\widehat{\bm Q}_a(\tau)\bm D^*,
\qquad
\mathbb G_\theta^*(\tau)=\bm C\mathbb G_\mu^*(\tau),
\end{equation}
where $\widehat{\bm Q}_a(\tau)$ is row $a$ of $\widehat{\bm Q}(\tau)$. With
the finite-sample-corrected summand of Section~\ref{sec:supp-multiplier-details} and standard-normal
multipliers, $\mathbb G_\theta^*$ is, on a finite time grid, exactly
conditionally $\mathcal N(\bm 0,\widehat{\bm\Gamma}_\theta^D)$, so one may
sample directly from that distribution. Section~\ref{sec:supp-multiplier-details} also gives a
step-by-step algorithm.

The standardized multiplier process $\mathbb G_\theta^{\mathrm{st},*}$ is
defined as in \eqref{eq:bootstrap-mu-process} with summand
\begin{equation}
\label{eq:standardized-bootstrap-process}
\widehat u_{ia}^{\mathrm{st}}(\tau)
=
\widehat g_a(Z_i,\tau)
+
\frac{A_{ia}}{\widehat\pi_a(Z_i)}
\Bigl\{\widehat\phi_{ia}^{Z_i}(\tau)
-\frac{1}{n_a(Z_i)}\sum_{j:A_j=a,Z_j=Z_i}\widehat\phi_{ja}^{Z_i}(\tau)\Bigr\}
\end{equation}
and without a design draw $\bm D^*$. With the finite-sample correction
described in Section~\ref{sec:supp-standardized-details}, its empirical covariance reproduces
\eqref{eq:standardized-theta-kernel-estimator} exactly.

The design draw $\bm D^*$ is essential. Resampling subjects as if they
were independent reproduces the outcome and stratum-composition parts of the
limit but not the dependence created by the randomization. Under
minimization it would therefore omit the design term of
Theorem~\ref{thm:general-car-fclt}. The next theorem shows that adding
$\bm D^*$ restores the correct limit.

\begin{theorem}[Conditional validity of the CAR-aware multiplier process]
\label{thm:multiplier-validity}
Suppose the assumptions of
Theorem~\ref{thm:general-car-fclt} and
Proposition~\ref{prop:covariance-consistency} hold. Then, conditionally on
the data and in probability,
\begin{equation}
\label{eq:conditional-bootstrap-convergence}
\mathbb G_\theta^*(\cdot)
\rights
\mathbb G_\theta^D(\cdot)
\quad\text{in }\ell^\infty(\calT)^K.
\end{equation}
More precisely, for the bounded-Lipschitz class
$\mathrm{BL}_1\{\ell^\infty(\calT)^K\}$,
\[
\sup_{f\in\mathrm{BL}_1}
\left|
\E^*\{f(\mathbb G_\theta^*)\}
-
\E\{f(\mathbb G_\theta^D)\}
\right|
\xrightarrow{P}0,
\]
where $\E^*(\cdot)$ denotes expectation over the multipliers and
$\bm D^*$ conditional on the data.

Let the population variance function be
\begin{equation}
\label{eq:sigma-def}
\sigma_k^2(\tau)
=
\Gamma_{\theta,kk}^D(\tau,\tau).
\end{equation}
If
\[
\inf_{1\leq k\leq K}\inf_{\tau\in\calT}\sigma_k(\tau)>0,
\]
then the studentized multiplier process is also conditionally valid. In
particular, if the distribution of
\[
\mathcal M
=
\max_{1\leq k\leq K}\sup_{\tau\in\calT}
\left|
\frac{\mathbb G_{\theta,k}^D(\tau)}{\sigma_k(\tau)}
\right|
\]
is continuous at its $(1-\alpha)$ quantile, then the conditional
$(1-\alpha)$ quantile $c_{1-\alpha}^*$ of
\begin{equation}
\label{eq:bootstrap-max-abs}
\mathcal M^*
=
\max_{1\leq k\leq K}\sup_{\tau\in\calT}
\left|
\frac{\mathbb G_{\theta,k}^*(\tau)}
{\widehat\sigma_{k,\mathrm{CAR}}(\tau)}
\right|
\end{equation}
converges in probability to the corresponding quantile of $\mathcal M$.
\end{theorem}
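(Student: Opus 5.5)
The plan is to split the multiplier process into two conditionally independent pieces and treat each separately: the subject-level multiplier sum
\[
\mathbb U^*_a(\tau)=N^{-1/2}\sum_{i=1}^N\xi_i\widehat u_{ia}(\tau)
\]
and the design draw $\widehat{\bm Q}_a(\tau)\bm D^*$. Conditionally on $\mathcal D_N$, these are independent. So it suffices to show three things: $\mathbb U^*$ converges conditionally to a Gaussian process with kernel $\bm\Gamma_\mu$; $\widehat{\bm Q}(\cdot)\bm D^*$ converges conditionally to a Gaussian process with kernel $\bm Q(\tau)\bm\Omega_D\bm Q(\tau')^T$; and the sum of independent limits then has kernel $\bm\Gamma_\mu^D$ of \eqref{eq:general-mu-kernel}. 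The contrast map $\bm C$ is linear and continuous, so the statement for $\mathbb G_\theta^*$ follows. Convergence in $\mathrm{BL}_1$ is metrizable, so I will argue along subsequences on which the relevant estimates converge almost surely. This reduces conditional weak convergence "in probability" to almost sure statements.

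For the design part, Proposition~\ref{prop:covariance-consistency} gives $\widehat{\bm\Omega}_D\to\bm\Omega_D$. The uniform convergence $\sup_\tau\|\widehat{\bm Q}(\tau)-\bm Q(\tau)\|\to0$ comes from the cell means $\widehat m_a(s,\cdot)$, which is established inside the proof of that proposition. Since $\bm D^*$ is a finite-dimensional Gaussian vector, $\widehat{\bm Q}(\cdot)\bm D^*$ is a finite linear combination of fixed functions with Gaussian coefficients, and its convergence in $\ell^\infty(\calT)^{K+1}$ is immediate.

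For $\mathbb U^*$, I first replace $\widehat u_{ia}$ by its population analogue
\[
u_{ia}(\tau)=\frac{A_{ia}}{\pi_a}\{\phi_{ia}(\tau)-m_a(Z_i,\tau)\}+m_a(Z_i,\tau),
\]
using the uniform consistency of $\widehat\phi_{ia}$ and $\widehat m_a$. The error $N^{-1/2}\sum_i\xi_i(\widehat u_{ia}-u_{ia})$ has conditional second moment at most $N^{-1}\sum_i\sup_\tau|\widehat u_{ia}-u_{ia}|^2$. Uniform smallness of this sup over $\tau$ needs a chaining bound. I will get it from the bracketing and Lipschitz-in-$\tau$ properties in Lemma~\ref{lem:if-bounds}, together with the $(2+\eta)$ multiplier moment, via the multiplier inequality. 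The main step is then a conditional multiplier CLT for $N^{-1/2}\sum_i\xi_i u_{ia}(\tau)$. The summands are not i.i.d., because the $A_i$ are dependent under CAR, so the standard multiplier CLT of \citet{vaart1996weak} does not apply directly. However, conditionally on $\mathcal D_N$ they are fixed functions multiplied by independent $\xi_i$. I therefore plan to use the conditional CLT for independent, non-identically distributed processes, namely the Lindeberg condition together with a uniform entropy or bracketing condition. Two ingredients are needed:
\begin{itemize}
\item[(a)] Convergence of the empirical covariance $N^{-1}\sum_iu_{ia}(\tau)u_{ib}(\tau')$ to $\Gamma_{\mu,ab}(\tau,\tau')$, uniformly in $(\tau,\tau')$.
\item[(b)] A Lindeberg condition, which holds because $\phi_{ia}$ is uniformly bounded by Lemma~\ref{lem:if-bounds}.
\end{itemize}

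For (a), I expand the product into the three pieces $A_{ia}\{\phi_{ia}-m_a\}^2/\pi_a^2$, the cross term, and $m_am_b$. The cross terms vanish when $a\neq b$ because $A_{ia}A_{ib}=0$. The cell averages converge by a within-stratum LLN: given $Z$, the allocation is independent of the potential outcomes (Assumption~\ref{asu:randomization}), and $n_a(s)/n(s)\to\pi_a$, so averages of $\phi_{ia}$-functionals over the cell $(a,s)$ behave like i.i.d. averages from the conditional law. This yields $\sum_sp_s\{\pi_a c_a(s)/\pi_a^2\}+\E\{m_am_b\}$, which matches \eqref{eq:mu-covariance}. The centering $\widetilde m_a$ enters only through $O_p(N^{-1/2})$ corrections. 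I expect (a) to be the main obstacle, because uniformity in $(\tau,\tau')$ must hold under dependent allocation. To handle it I will use the Lipschitz property in $\tau$ to reduce to a finite grid plus a modulus bound.

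Finally, for the studentized statement, $\widehat\sigma_{k,\mathrm{CAR}}\to\sigma_k$ uniformly by \eqref{eq:uniform-kernel-consistency}, and $\sigma_k$ is bounded away from zero. Hence the map $x\mapsto\max_k\sup_\tau|x_k/\sigma_k|$ and the perturbation from dividing by $\widehat\sigma$ are controlled by the continuous mapping theorem and Slutsky. Conditional convergence of $\mathcal M^*$ to $\mathcal M$ in distribution, together with continuity of the law of $\mathcal M$ at its $(1-\alpha)$ quantile, then gives convergence of $c^*_{1-\alpha}$ by the standard quantile lemma (van der Vaart, Lemma~21.2), applied along subsequences.
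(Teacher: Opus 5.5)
Your proposal is correct and has the same architecture as the paper's proof: split $\mathbb G_\mu^*$ into the conditionally independent subject-level multiplier sum and the design draw $\widehat{\bm Q}(\cdot)\bm D^*$. The former then follows from a conditional CLT for independent, non-identically distributed processes (covariance convergence, Lindeberg from uniform boundedness, Lipschitz bracketing), the latter from $\widehat{\bm Q}\to\bm Q$ and $\widehat{\bm\Omega}_D\to\bm\Omega_D$, and the proof finishes by applying $\bm C$, studentizing, and using the quantile lemma. The one difference is your detour through the population summands $u_{ia}$, which the paper avoids: by within-cell centering, the empirical covariance of the estimated summands $\widehat u_{ia}$ equals the plug-in kernel $\widehat{\bm\Gamma}_\mu$ exactly (for your $u_{ia}$ the residual--composition cross products vanish only asymptotically by the within-cell LLN, not because $A_{ia}A_{ib}=0$), and on a high-probability event the $\widehat u_{ia}$ are themselves uniformly bounded and Lipschitz, so the conditional CLT applies to them directly and no chaining bound for the replacement error is needed.
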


\begin{proposition}[Validity of standardized covariance and multiplier inference]
\label{prop:standardized-inference}
Under the conditions of
Theorem~\ref{thm:standardized-fclt},
\begin{equation}
\label{eq:standardized-kernel-consistency}
\sup_{\tau,\tau'\in\calT}
\left\|
\widehat{\bm\Gamma}_\theta^{\mathrm{st}}(\tau,\tau')
-
\bm\Gamma_\theta^{\mathrm{st}}(\tau,\tau')
\right\|_{\max}
\xrightarrow{P}0.
\end{equation}
Conditionally on the data and in probability,
\[
\mathbb G_\theta^{\mathrm{st},*}(\cdot)
\rights
\mathbb G_\theta^{\mathrm{st}}(\cdot)
\quad\text{in }\ell^\infty(\calT)^K.
\]
The same conclusions hold for the arm-specific covariance kernel and
multiplier process.
\end{proposition}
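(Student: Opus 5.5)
The proof has two parts: uniform consistency of the plug-in kernel \eqref{eq:standardized-theta-kernel-estimator}, and conditional weak convergence of the standardized multiplier process. Both rest on the linear representation \eqref{eq:standardized-linear-representation} of Theorem~\ref{thm:standardized-fclt}, applied cell by cell.

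For consistency, I would treat each treatment--stratum cell $(a,s)$ separately. Conditionally on $(Z_1,\ldots,Z_N)$ and the allocation, Assumption~\ref{asu:randomization} implies that the subjects in cell $(a,s)$ form an i.i.d.\ sample from the conditional law given $Z_i=s$. The cell size satisfies $n_a(s)/N\to p_s\pi_a>0$ in probability by \eqref{eq:allocation-consistency-standardized}. The stratum-specific Kaplan--Meier, Nelson--Aalen and at-risk estimators are then uniformly consistent on $[0,\tau_U]$, using the positivity bounds of Assumption~\ref{asu:conditional-censoring}. The map sending $(S,\Lambda,y)$ to the influence function \eqref{eq:rmst-if} is Lipschitz in sup norm on the region where $y$ and $1-\Delta\Lambda$ stay above $\epsilon/2$. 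Hence $\max_{i\in(a,s)}\sup_\tau|\widehat\phi_{ia}^{s}(\tau)-\phi_{ia}^{s}(\tau)|\to0$ in probability.

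The functions $\phi_{ia}^s$ are uniformly bounded and Lipschitz in $\tau$ (Lemma~\ref{lem:if-bounds}, applied stratum-wise). The class $\{\phi^s(\tau)\phi^s(\tau')\}$ is therefore Glivenko--Cantelli. It follows that $\widehat v_a(s;\cdot,\cdot)\to v_a(s;\cdot,\cdot)$ uniformly on $\calT^2$. We also have $\widehat\mu_a(s,\cdot)\to\mu_a(s,\cdot)$ uniformly, $\widehat p_s\to p_s$, and $\widehat\pi_a(s)\to\pi_a$. Together these give $\widehat g_a\to g_a$ uniformly. Because $J$ and $K$ are finite, each entry of $\widehat{\bm\Gamma}_\mu^{\mathrm{st}}$ converges uniformly to \eqref{eq:standardized-mu-kernel}, and premultiplying by $\bm C$ yields \eqref{eq:standardized-kernel-consistency}.

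For the multiplier part, I would write $\mathbb G_{\mu,a}^{\mathrm{st},*}=N^{-1/2}\sum_i\xi_i\widehat u^{\mathrm{st}}_{ia}$ and compare it with the oracle process $N^{-1/2}\sum_i\xi_i u^{\mathrm{st}}_{ia}$, where $u^{\mathrm{st}}_{ia}=g_a(Z_i,\cdot)+A_{ia}\phi^{Z_i}_{ia}/\pi_a$.

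\emph{The difference is negligible.} The difference is a multiplier sum of terms that are uniformly small. Its conditional second moment is bounded by $N^{-1}\sum_i\sup_\tau|\widehat u-u|^2$. The within-cell centering term is $O_p(n_a(s)^{-1/2})$, since the true $\phi^s$ is conditionally centered. The change from $\widehat\pi_a(s)$ to $\pi_a$ is $o_p(1)$. Tightness of the difference follows from the Lipschitz property in $\tau$, using a chaining or bracketing argument with multipliers as in \citet[Ch.~2.9]{vaart1996weak}.

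\emph{The oracle process converges.} Conditionally on the data, the oracle process is Gaussian when the $\xi_i$ are normal. In general I would use the conditional multiplier CLT, which is where the $2+\eta$ moment enters. Its finite-dimensional conditional covariances are $N^{-1}\sum_i u_{ia}u_{ib}$. These must be shown to converge to $\Gamma^{\mathrm{st}}_{\mu,ab}$.

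This covariance step is the main obstacle, because the summands are not i.i.d.\ under CAR. The empirical average must be split into $N^{-1}\sum g_ag_b$, the cross terms, and $N^{-1}\sum_i A_{ia}\{\phi^{Z_i}_{ia}\}^2/\pi_a^2$. I would handle them by conditioning on strata and allocations and applying a within-cell law of large numbers, which uses the ratio $n_a(s)/N\to p_s\pi_a$.

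\begin{itemize}
\item The square term converges to $\E v_a/\pi_a$.
\item The cross terms vanish, because $A_{ia}A_{ib}=0$ for $a\ne b$ and $\phi^s$ is conditionally centered.
\end{itemize}

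Uniformity in $(\tau,\tau')$ follows from the Lipschitz bounds. Asymptotic equicontinuity of the oracle multiplier process follows from the same bounded-Lipschitz class, giving conditional tightness. Combining the two steps yields bounded-Lipschitz convergence in probability, and the continuous mapping theorem with $\bm C$ transfers it to $\mathbb G_\theta^{\mathrm{st},*}$. The arm-specific statements are exactly the intermediate results obtained before applying $\bm C$.
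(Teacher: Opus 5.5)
Your proof is correct, but the multiplier step takes a different route from the paper's.

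\textbf{Your route.} You compare $\mathbb G_\mu^{\mathrm{st},*}$ with an oracle multiplier process built from the true summands $g_a(Z_i,\cdot)+A_{ia}\phi_{ia}^{Z_i}/\pi_a$ of the linear representation. This costs you three extra ingredients:
\begin{enumerate}
\item an individual-level approximation $\max_i\sup_\tau|\widehat\phi_{ia}^{s}(\tau)-\phi_{ia}^{s}(\tau)|\to0$;
\item a separate conditional-on-design law of large numbers for the oracle covariance;
\item a chaining argument showing the difference process is negligible.
\end{enumerate}

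\textbf{The paper's route.} The paper avoids all three by using how the summand is built. The estimated residuals are centered within each treatment--stratum cell, and no subject belongs to two arms. Hence, for the finite-sample-corrected summand, the empirical covariance $N^{-1}\sum_i\widehat u_{ia}^{\mathrm{st}}(\tau)\widehat u_{ib}^{\mathrm{st}}(\tau')$ equals $\widehat\Gamma_{\mu,ab}^{\mathrm{st}}(\tau,\tau')$ exactly. The covariance convergence needed by the conditional multiplier CLT is therefore exactly the kernel consistency from your first part. What remains is to show that the estimated summands are uniformly bounded and Lipschitz in $\tau$, with a nonrandom constant, on an event whose probability tends to one. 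That gives the Lindeberg condition and conditional equicontinuity.

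\textbf{Comparison.} The step you flag as the main obstacle, the covariance of non-i.i.d.\ summands under CAR, is already handled by the cell-wise argument in your consistency part. Your route is longer. In exchange, it does not rely on exact centering or denominator matching in the summand. It also displays the bootstrap as a multiplier version of the linearization in Theorem~\ref{thm:standardized-fclt}, so it would survive modifications of the summand that break the exact identity.

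\textbf{A minor imprecision.} The centering term $\overline\phi_a(s,\tau)$ averages the estimated $\widehat\phi_{ia}^{s}$, not the true ones. Conditional centering of $\phi^s$ plus uniform convergence therefore gives only $o_p(1)$, not $O_p\{n_a(s)^{-1/2}\}$. That is enough for your argument. In fact, the Kaplan--Meier plug-in residuals sum to zero within a cell whose risk set stays nonempty, so $\overline\phi_a(s,\tau)=0$ there.
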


\subsection{Simultaneous Bands and Strong Family-Wise Error Control}
\label{sec:fwer}

\begin{corollary}[Simultaneous RMST confidence bands]
\label{cor:simultaneous-bands}
Under the conditions of Theorem~\ref{thm:multiplier-validity}, including
the variance lower bound and the continuity condition for $\mathcal M$, the
bands
\begin{equation}
\label{eq:simultaneous-bands}
\mathcal B_k(\tau)
=
\left[
\widehat\theta_k(\tau)
-
\frac{c_{1-\alpha}^*\widehat\sigma_{k,\mathrm{CAR}}(\tau)}{\sqrt N},
\quad
\widehat\theta_k(\tau)
+
\frac{c_{1-\alpha}^*\widehat\sigma_{k,\mathrm{CAR}}(\tau)}{\sqrt N}
\right],
\quad
k=1,\ldots,K,\quad \tau\in\calT,
\end{equation}
have joint asymptotic coverage $1-\alpha$. That is,
\[
\Pp\left\{
\theta_k(\tau)\in\mathcal B_k(\tau)
\text{ for every }k=1,\ldots,K\text{ and }\tau\in\calT
\right\}
\longrightarrow 1-\alpha.
\]
\end{corollary}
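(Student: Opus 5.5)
The plan is to reduce the coverage event to a single event about the studentized sup-statistic and then combine Theorems~\ref{thm:general-car-fclt} and \ref{thm:multiplier-validity}. By the definition of $\mathcal B_k(\tau)$, the event that $\theta_k(\tau)\in\mathcal B_k(\tau)$ for every $k$ and $\tau$ is exactly the event $\widehat{\mathcal M}\le c_{1-\alpha}^*$, where
\[
\widehat{\mathcal M}
=
\max_{1\leq k\leq K}\sup_{\tau\in\calT}
\left|
\frac{\sqrt N\{\widehat\theta_k(\tau)-\theta_k(\tau)\}}{\widehat\sigma_{k,\mathrm{CAR}}(\tau)}
\right|.
\]
So it suffices to show that $\widehat{\mathcal M}\rights\mathcal M$ and that $c_{1-\alpha}^*\xrightarrow{P}c_{1-\alpha}$, the $(1-\alpha)$ quantile of $\mathcal M$.

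\textbf{Step 1: $\widehat{\mathcal M}\rights\mathcal M$.} Theorem~\ref{thm:general-car-fclt} gives $\sqrt N\{\widehat\btheta(\cdot)-\btheta(\cdot)\}\rights\mathbb G_\theta^D$ in $\ell^\infty(\calT)^K$, for arbitrary treatment effects. Proposition~\ref{prop:covariance-consistency} gives $\sup_\tau|\widehat\sigma_{k,\mathrm{CAR}}^2(\tau)-\sigma_k^2(\tau)|\xrightarrow{P}0$. Together with $\inf_{k,\tau}\sigma_k(\tau)>0$, this yields $\sup_\tau|\widehat\sigma_{k,\mathrm{CAR}}(\tau)^{-1}-\sigma_k(\tau)^{-1}|\xrightarrow{P}0$, with the limits bounded.

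Consider the map $(x,w)\mapsto \max_k\sup_\tau|x_k(\tau)w_k(\tau)|$ on $\ell^\infty(\calT)^K\times\ell^\infty(\calT)^K$. It is continuous, being Lipschitz on bounded sets. By Slutsky's lemma in $\ell^\infty$, the pair (process, $1/\widehat\sigma$) converges jointly to $(\mathbb G_\theta^D,1/\sigma)$, since the second component has a deterministic limit. The continuous mapping theorem then gives $\widehat{\mathcal M}\rights\mathcal M$.

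\textbf{Step 2: convergence of the critical value.} This is supplied by Theorem~\ref{thm:multiplier-validity}. Under the variance lower bound and continuity of the law of $\mathcal M$ at $c_{1-\alpha}$, that theorem gives $c_{1-\alpha}^*\xrightarrow{P}c_{1-\alpha}$.

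\textbf{Step 3: combining.} Slutsky's lemma gives $\widehat{\mathcal M}-c_{1-\alpha}^*\rights\mathcal M-c_{1-\alpha}$. Since $c_{1-\alpha}$ is a continuity point of the distribution of $\mathcal M$, the Portmanteau theorem gives
\[
\Pp(\widehat{\mathcal M}\le c_{1-\alpha}^*)\to\Pp(\mathcal M\le c_{1-\alpha}).
\]
Continuity of the law of $\mathcal M$ at $c_{1-\alpha}$ also means this limit equals $1-\alpha$ exactly.

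\textbf{Main technical points.} The substantive work has already been done in the preceding results, so the main care needed is in two places.
\begin{itemize}
\item Uniform consistency of $\widehat\sigma_{k,\mathrm{CAR}}$ must hold on the whole of $\calT$, not just pointwise. This comes from the diagonal of the sup-norm statement \eqref{eq:uniform-kernel-consistency}.
\item The in-probability conditional quantile convergence must be converted into an unconditional coverage limit. I would handle this as above, by treating $c_{1-\alpha}^*$ as a random variable converging in probability to a constant.
\end{itemize}

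If Theorem~\ref{thm:multiplier-validity} were only available in its bounded-Lipschitz form, I would first deduce $\mathcal M^*\rights\mathcal M$ conditionally. This follows from the continuous mapping theorem applied to the studentized multiplier process. Quantile convergence would then follow at the continuity point, using the standard lemma that weak convergence of distribution functions implies convergence of their quantiles there.
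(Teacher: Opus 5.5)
Your proposal is correct and follows essentially the same route as the paper: rewrite joint coverage as the event that the studentized statistic $\max_k\sup_\tau|\sqrt N\{\widehat\theta_k(\tau)-\theta_k(\tau)\}/\widehat\sigma_{k,\mathrm{CAR}}(\tau)|$ is at most $c_{1-\alpha}^*$. Then combine Theorem~\ref{thm:general-car-fclt}, uniform variance consistency with the variance lower bound, and the quantile consistency of Theorem~\ref{thm:multiplier-validity}. Your Slutsky and Portmanteau step at the continuity point spells out what the paper compresses into one sentence.
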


In computation, the supremum over $\calT$ is replaced by the maximum over a
prespecified grid whose mesh tends to zero, which does not change the
limiting critical value. A clinically chosen finite set of restriction
times gives an ordinary finite-dimensional simultaneous region.

We distinguish a confirmatory analysis at a prespecified restriction time
from simultaneous inference over a range of restriction times.

\paragraph{Families of null hypotheses.}

For a prespecified $\tau^\star\in\calT$, consider the one-sided hypotheses
\begin{equation}
\label{eq:fixed-hypotheses}
H_{0k}^{\mathrm{fix}}:
\theta_k(\tau^\star)\leq0
\quad\text{versus}\quad
H_{1k}^{\mathrm{fix}}:
\theta_k(\tau^\star)>0,
\qquad k=1,\ldots,K.
\end{equation}
For inference on the complete RMST process, consider
\begin{equation}
\label{eq:process-hypotheses}
H_{0k}^{\mathrm{proc}}:
\theta_k(\tau)\leq0\text{ for every }\tau\in\calT
\quad\text{versus}\quad
H_{1k}^{\mathrm{proc}}:
\theta_k(\tau)>0\text{ for at least one }\tau\in\calT.
\end{equation}

The corresponding observed statistics are
$V_{k,\mathrm{CAR}}^{\mathrm{fix}}$ and
$V_{k,\mathrm{CAR}}^{\mathrm{proc}}$ from
Definition~\ref{def:car-adjusted-statistics}. For notational economy in the
multiple-testing algorithm, write
$V_k^{\mathrm{fix}}=V_{k,\mathrm{CAR}}^{\mathrm{fix}}$ and
$V_k^{\mathrm{proc}}=V_{k,\mathrm{CAR}}^{\mathrm{proc}}$.
Their centered bootstrap analogues are
\begin{align}
\label{eq:fixed-bootstrap-statistic}
V_k^{*,\mathrm{fix}}
&=
\frac{\mathbb G_{\theta,k}^*(\tau^\star)}
{\widehat\sigma_{k,\mathrm{CAR}}(\tau^\star)},\\
\label{eq:process-bootstrap-statistic}
V_k^{*,\mathrm{proc}}
&=
\sup_{\tau\in\calT}
\frac{\mathbb G_{\theta,k}^*(\tau)}
{\widehat\sigma_{k,\mathrm{CAR}}(\tau)}.
\end{align}
For either analysis type $h\in\{\mathrm{fix},\mathrm{proc}\}$ and any
nonempty subset $I\subseteq\{1,\ldots,K\}$, let
\begin{equation}
\label{eq:subset-critical-value}
c_{1-\alpha}^*(I,h)
=
\inf\left[
x:
\Pp^*\left\{
\max_{k\in I}V_k^{*,h}\leq x
\right\}
\geq1-\alpha
\right].
\end{equation}

\paragraph{Single-step and step-down procedures.}

The single-step procedure rejects $H_{0k}^h$ whenever
\begin{equation}
\label{eq:single-step-rule}
V_k^h>c_{1-\alpha}^*(\{1,\ldots,K\},h).
\end{equation}

For greater power, we also use the following step-down max-$T$ procedure.
Set $I_1=\{1,\ldots,K\}$. At step $j$:
\begin{enumerate}[label=(\roman*),leftmargin=12mm]
\item calculate $c_{1-\alpha}^*(I_j,h)$,
\item reject every $H_{0k}^h$ with
      $k\in I_j$ and $V_k^h>c_{1-\alpha}^*(I_j,h)$,
\item stop if no hypothesis is rejected, and otherwise remove the rejected
      indices to form $I_{j+1}$ and continue.
\end{enumerate}
This is a resampling-based step-down construction in the spirit of
\citet{romano2005stepdown}.

The next result applies the step-down principle of
\citet{romano2005stepdown}. Their argument requires the critical value for
every subset of true null hypotheses to be approximated consistently under
the true data-generating distribution. Because
Theorems~\ref{thm:bounded-balance-fclt}--\ref{thm:multiplier-validity} hold
under arbitrary treatment effects, this requirement is met.

\begin{corollary}[Asymptotic strong family-wise error control]
\label{thm:strong-fwer}
Suppose the conditions of Theorem~\ref{thm:multiplier-validity} hold and the
limiting subset-max distributions are continuous at their
$(1-\alpha)$ quantiles. For either
$h=\mathrm{fix}$ or $h=\mathrm{proc}$, let
\[
I_0^h
=
\{k:H_{0k}^h\text{ is true}\}.
\]
Then the single-step rule \eqref{eq:single-step-rule} and the step-down
procedure both satisfy
\begin{equation}
\label{eq:strong-fwer}
\limsup_{N\to\infty}
\Pp\{\text{at least one }H_{0k}^h,\ k\in I_0^h,
\text{ is rejected}\}
\leq\alpha
\end{equation}
under every configuration of true and false null hypotheses.
\end{corollary}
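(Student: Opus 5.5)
The plan is to follow the Romano--Wolf step-down argument. It reduces strong FWER control to one inequality for the subset of true nulls $I_0=I_0^h$. If $I_0=\emptyset$ there is nothing to prove, so assume $I_0\neq\emptyset$. Two deterministic facts carry the reduction. First, the critical values are monotone: $I\subseteq I'$ implies $c^*_{1-\alpha}(I,h)\le c^*_{1-\alpha}(I',h)$, since the maximum over a larger set dominates. Second, consider the first step at which some true null is rejected. At that step $I_j\supseteq I_0$, because nothing in $I_0$ has yet been removed. Hence some $k\in I_0$ has $V_k^h>c^*_{1-\alpha}(I_j,h)\ge c^*_{1-\alpha}(I_0,h)$. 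Therefore, for both the step-down and the single-step rule,
\[
\{\text{some } H_{0k}^h,\ k\in I_0, \text{ rejected}\}\subseteq\Bigl\{\max_{k\in I_0}V_k^h>c^*_{1-\alpha}(I_0,h)\Bigr\}.
\]
It remains to show that the limsup of the probability of the right-hand event is at most $\alpha$.

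Next, on each true null I replace the statistic by its centered version. For $h=\mathrm{fix}$ and $k\in I_0$, we have $\theta_k(\tau^\star)\le 0$, so
\[
V_k^{\mathrm{fix}}\le \sqrt N\{\widehat\theta_k(\tau^\star)-\theta_k(\tau^\star)\}/\widehat\sigma_{k,\mathrm{CAR}}(\tau^\star)=:\widetilde V_k.
\]
For $h=\mathrm{proc}$, $\theta_k(\tau)\le0$ for all $\tau$ gives
\[
V_k^{\mathrm{proc}}\le\sup_\tau \sqrt N\{\widehat\theta_k(\tau)-\theta_k(\tau)\}/\widehat\sigma_{k,\mathrm{CAR}}(\tau).
\]
Theorem~\ref{thm:general-car-fclt} holds under arbitrary treatment effects, and Proposition~\ref{prop:covariance-consistency} gives uniform consistency of $\widehat\sigma_{k,\mathrm{CAR}}$. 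Together with the variance lower bound, these let the continuous mapping theorem apply to the map $x\mapsto(\sup_\tau x_k(\tau)/\sigma_k(\tau))_{k\in I_0}$, or to evaluation at $\tau^\star$. This yields $\max_{k\in I_0}\widetilde V_k\rightsquigarrow \mathcal M_{I_0}:=\max_{k\in I_0}\sup_\tau \mathbb G^D_{\theta,k}(\tau)/\sigma_k(\tau)$, or its fixed-time analogue. The passage from $\widehat\sigma$ to $\sigma$ inside the supremum costs $o_p(1)$ because the numerator process is tight.

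I then identify the limit of the critical value. By Theorem~\ref{thm:multiplier-validity}, $\mathbb G^*_\theta\rightsquigarrow\mathbb G^D_\theta$ conditionally in probability in the bounded-Lipschitz sense. The same Lipschitz functional, applied with $\widehat\sigma$ in place of $\sigma$, is handled by the argument given for the studentized statistic there. It follows that $\max_{k\in I_0}V_k^{*,h}$ converges conditionally in distribution to $\mathcal M_{I_0}$. By the assumed continuity of this law at its $(1-\alpha)$ quantile $q$, we get $c^*_{1-\alpha}(I_0,h)\xrightarrow{P}q$. Slutsky's lemma then gives $\limsup\Pp\{\max_{I_0}\widetilde V_k>c^*_{1-\alpha}(I_0,h)\}\le \Pp(\mathcal M_{I_0}\ge q)=\alpha$, using continuity at $q$ again, and the corollary follows.

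The main obstacle is the conditional-to-quantile step. One has to check that the one-sided, studentized, subset-restricted supremum functional preserves conditional weak convergence in probability when $\widehat\sigma$ is random. To handle this I would verify that the functional is Lipschitz on the set where $\inf\sigma_k$ is bounded away from zero. I would then bound $\sup_\tau|\mathbb G^*_{\theta,k}(\tau)|\,|1/\widehat\sigma_k-1/\sigma_k|$, which is conditionally $o_{P}(1)$ in probability by conditional tightness of $\mathbb G^*$. After that, the standard lemma that conditional weak convergence plus continuity of the limit law at $q$ implies convergence of quantiles (van der Vaart, Lemma 21.2, applied along subsequences converging almost surely) completes the step.
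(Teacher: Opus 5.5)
Your proposal is correct and follows the paper's proof essentially step for step. You center on the true nulls to get $V_k^h\le\widetilde V_{Nk}^h$, use subset-monotonicity of the critical values together with the first-false-rejection argument to reduce both rules to the event $\{\max_{k\in I_0^h}\widetilde V_{Nk}^h>c^*_{1-\alpha}(I_0^h,h)\}$, and control that event through the conditional multiplier limit of Theorem~\ref{thm:multiplier-validity}, which the paper packages as its subset-bootstrap lemma. One minor point, which the paper shares: conditional weak convergence plus continuity of the limiting CDF at $q$ gives only $c^*_{1-\alpha}(I_0^h,h)\ge q-o_p(1)$, since full quantile convergence via van der Vaart's Lemma~21.2 needs the quantile function to be continuous at $1-\alpha$; however, this one-sided bound together with $\Pp(\mathcal M_{I_0}\ge q)=\alpha$ is all the FWER inequality requires.
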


For two-sided hypotheses, the statistics and their bootstrap analogues are
replaced by absolute values, and the proof of Corollary~\ref{thm:strong-fwer}
is unchanged.

\begin{corollary}[Simultaneous inference with conditional censoring]
\label{cor:standardized-simultaneous-inference}
If the variances in \eqref{eq:standardized-se} are uniformly bounded away
from zero and the relevant
limiting maximum distributions are continuous at their target quantiles,
then replacing
$\widehat\btheta$, $\widehat\sigma_{k,\mathrm{CAR}}$, and
$\mathbb G_\theta^*$ in
Subsections~\ref{sec:multiplier}--\ref{sec:fwer} by
$\widehat\btheta^{\mathrm{st}}$,
$\widehat\sigma_{k,\mathrm{st}}$, and
$\mathbb G_\theta^{\mathrm{st},*}$ gives:
\begin{enumerate}[label=(\roman*)]
\item simultaneous confidence bands with joint asymptotic coverage
      $1-\alpha$, and
\item single-step and step-down max-$T$ tests with asymptotic strong
      family-wise error control at level $\alpha$.
\end{enumerate}
Both conclusions hold under every configuration of true and false null
hypotheses.
\end{corollary}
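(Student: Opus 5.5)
The plan is to reduce Corollary~\ref{cor:standardized-simultaneous-inference} to exactly the argument behind Corollaries~\ref{cor:simultaneous-bands} and~\ref{thm:strong-fwer}. That argument uses only four ingredients: weak convergence of the centered, scaled contrast process; uniform consistency of the studentizing variances; conditional weak convergence of the multiplier process to the same limit; and a positive lower bound on the limiting variances. For the standardized analysis these are supplied as follows. Theorem~\ref{thm:standardized-fclt} gives $\sqrt N\{\widehat\btheta^{\mathrm{st}}-\btheta\}\rights\mathbb G_\theta^{\mathrm{st}}$ under arbitrary treatment effects. Proposition~\ref{prop:standardized-inference} gives uniform kernel consistency, hence $\sup_\tau|\widehat\sigma_{k,\mathrm{st}}^2(\tau)-\sigma_{k,\mathrm{st}}^2(\tau)|\to0$ in probability with $\sigma_{k,\mathrm{st}}^2=\Gamma^{\mathrm{st}}_{\theta,kk}$. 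It also gives $\mathbb G_\theta^{\mathrm{st},*}\rights\mathbb G_\theta^{\mathrm{st}}$ conditionally in probability. The variance lower bound is assumed. I will read it as a bound on the population variances, which are the limits of \eqref{eq:standardized-se}, so the estimated variances are bounded away from zero with probability tending to one.

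The first step is studentization. Since $\inf\sigma_{k,\mathrm{st}}>0$ and $\widehat\sigma_{k,\mathrm{st}}\to\sigma_{k,\mathrm{st}}$ uniformly, the map $x\mapsto x/\sigma$ is continuous at the relevant points in $\ell^\infty(\calT)^K$. By the continuous mapping theorem and Slutsky, $\sqrt N(\widehat\theta^{\mathrm{st}}_k-\theta_k)/\widehat\sigma_{k,\mathrm{st}}$ then converges jointly to $\mathbb G^{\mathrm{st}}_{\theta,k}/\sigma_{k,\mathrm{st}}$.

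For the bootstrap analogue I will argue in the bounded-Lipschitz metric. Uniformly over $f\in\mathrm{BL}_1$, the error $|\E^*f(\mathbb G^*/\widehat\sigma)-\E^*f(\mathbb G^*/\sigma)|$ is at most $\E^*\min\{2,\|\mathbb G^*\|_\infty\}\cdot\|1/\widehat\sigma-1/\sigma\|_\infty$. This tends to zero in probability because $\|\mathbb G^*\|_\infty$ is conditionally tight, and that tightness follows from the conditional weak convergence.

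The second step handles the maxima. The functionals $\max_k\sup_\tau|\cdot|$, $\max_{k\in I}\sup_\tau(\cdot)$, and the fixed-time evaluation at $\tau^\star$ are Lipschitz on $\ell^\infty(\calT)^K$. Hence the conditional laws of $\mathcal M^{\mathrm{st},*}$ and of every subset maximum converge in probability to the laws of the corresponding limiting functionals. Under the assumed continuity at the target quantiles, the standard quantile-convergence lemma gives $c^*_{1-\alpha}(I,h)\to c_{1-\alpha}(I,h)$ in probability for every nonempty $I$.

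Part (i) then follows as in Corollary~\ref{cor:simultaneous-bands}. The coverage event equals $\{\max_k\sup_\tau|\sqrt N(\widehat\theta^{\mathrm{st}}_k-\theta_k)/\widehat\sigma_{k,\mathrm{st}}|\le c^*_{1-\alpha}\}$. Its probability tends to $\Pp(\mathcal M^{\mathrm{st}}\le c_{1-\alpha})=1-\alpha$ by the first step, Slutsky, and continuity of the limit law at the quantile.

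For part (ii) I follow Corollary~\ref{thm:strong-fwer} and the step-down principle of \citet{romano2005stepdown}. For $k\in I_0^h$, $\theta_k(\tau)\le0$ on $\calT$ (or at $\tau^\star$). Therefore $V_{k,\mathrm{st}}^h\le\sup_\tau\sqrt N(\widehat\theta^{\mathrm{st}}_k-\theta_k)/\widehat\sigma_{k,\mathrm{st}}$, or the analogous quantity at $\tau^\star$. The right-hand side converges jointly over $k\in I_0^h$ to $\max_{k\in I_0^h}\sup_\tau\mathbb G^{\mathrm{st}}_{\theta,k}/\sigma_{k,\mathrm{st}}$.

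The two procedures are then handled in turn. For the single-step rule, $c^*(\{1,\ldots,K\},h)\ge c^*(I_0^h,h)$, and $\Pp\{\max_{k\in I_0}V_k>c^*(I_0)\}$ has limsup at most $\alpha$. For the step-down rule, the monotonicity $I\supseteq I'\Rightarrow c^*(I)\ge c^*(I')$ means a first false rejection can only occur when $\max_{k\in I_0}V_k>c^*(I_0)$. The same bound therefore applies.

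The main obstacle is mild. It is making sure the conditional convergence of Proposition~\ref{prop:standardized-inference} combines with studentization by the estimated variance $\widehat\sigma_{k,\mathrm{st}}$, which depends on the data but not on the multipliers. The bounded-Lipschitz estimate above addresses exactly this point. A second, subtler point is that the result must hold under every configuration of hypotheses, not just the global null. This is why it matters that Theorem~\ref{thm:standardized-fclt} and Proposition~\ref{prop:standardized-inference} hold for arbitrary $\{S_0,\ldots,S_K\}$: the critical values for subsets of true nulls are then consistently approximated under the true data-generating law.
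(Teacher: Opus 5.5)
Your proposal is correct and follows the paper's proof. The paper likewise reduces the corollary to the continuous-mapping and quantile argument of Corollary~\ref{cor:simultaneous-bands} and the null-dominance and first-erroneous-rejection argument of Corollary~\ref{thm:strong-fwer}, using Theorem~\ref{thm:standardized-fclt} (valid under arbitrary treatment effects) and Proposition~\ref{prop:standardized-inference}; your explicit studentization step fills in what the paper leaves to the proof of Theorem~\ref{thm:multiplier-validity}. One small fix: the bounded-Lipschitz bound should be $\E^*\min\{2,\|\mathbb G^*\|_\infty\,\|1/\widehat\sigma-1/\sigma\|_\infty\}$, with the product inside the minimum (as written it can fail once $\|\mathbb G^*\|_\infty>2$), and splitting this as $M\|1/\widehat\sigma-1/\sigma\|_\infty+2\Pp^*(\|\mathbb G^*\|_\infty>M)$ is exactly where the conditional tightness you invoke is needed.
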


\subsection{Choice Between the Two Procedures}
\label{sec:recommendation}

\begin{remark}[Recommended analysis]
\label{rmk:recommended}
Theorem~\ref{thm:standardized-fclt} motivates the standardized estimator as
the default primary analysis for CAR trials with a moderate number of
prespecified strata. It is valid under a weaker censoring condition than
the marginal Kaplan--Meier analysis. Its limiting distribution is the same
under stratified blocks and minimization, and it requires neither the
design simulation of Subsection~\ref{sec:covariance-estimation} nor the
additivity check of Proposition~\ref{prop:additivity}. Its efficiency is
comparable to that of the marginal analysis under balancing designs and
better under simple randomization, where post-stratification removes the
between-stratum variance that the unstratified estimator inherits from
chance imbalance, as Section~\ref{sec:simulation} shows. The CAR-adjusted marginal procedure remains useful when an
unadjusted analysis has been prespecified, for example to match a
historical reporting convention, and as a sensitivity analysis whose
validity under minimization is governed by
Proposition~\ref{prop:additivity}.
\end{remark}

\begin{remark}[Sparse treatment--stratum cells]
\label{rmk:sparse-standardization}
The direct standardized estimator is most appropriate when the number of
randomization strata is fixed and every treatment--stratum cell contains
enough subjects and subjects at risk through $\tau_U$. If cells are very
sparse, $\tau_U$ should be shortened, clinically defensible strata may be
combined according to a prespecified rule, or an IPCW or
regression-augmented RMST estimator with a lower-dimensional censoring or
outcome model should be used \citep{tian2014predicting}. Empty cells cannot
be repaired by the asymptotic theory and should not be handled by
post-randomization deletion of inconvenient strata.
\end{remark}

\section{Simulation Studies}
\label{sec:simulation}

This section examines the finite-sample performance of the proposed
procedures. We first check their error control under four randomization
procedures. We then show the two ways in which simpler analyses go wrong:
the unstratified estimator targets the wrong quantity when censoring
depends on the strata, and omitting the minimization design term inflates
error when the factors interact. Finally, we compare the power of the standardized procedure with that of
the stratified log-rank test and the stratified Cox model.

\subsection{Design}

\paragraph{Trials and event times.}
Each trial has one control and $K=3$ experimental arms with equal
allocation $\pi_a=1/4$, and the total sample size is $N=800$ or $N=1200$.
Two independent binary factors, $Z_{i1}\sim\operatorname{Bernoulli}(0.5)$ and
$Z_{i2}\sim\operatorname{Bernoulli}(0.4)$, define four joint randomization
strata. In the control arm the hazard is
\begin{equation}
\label{eq:simulation-control-hazard}
\lambda_0(t\mid\bm Z_i)
=
0.04\exp(0.45Z_{i1}+0.65Z_{i2}),
\end{equation}
which gives a median survival between 5.8 and 17.3 months across strata.
Arm $a$ has hazard $\lambda_0(t\mid\bm Z_i)r_a(t)$, where $r_0\equiv1$ and
the multipliers of the experimental arms are listed in
Table~\ref{tab:simulation-scenarios}. Model
\eqref{eq:simulation-control-hazard} is additive in the two factors on the
log-hazard scale, which makes it the favorable case of
Proposition~\ref{prop:additivity}, although the stratum means
$m_a(\cdot,\tau)$ are then only nearly additive. To
examine the unfavorable case, scenario G0-I uses instead
\begin{equation}
\label{eq:simulation-interaction-hazard}
\lambda_0(t\mid\bm Z_i)
=
0.04\exp(1.2Z_{i1}+1.2Z_{i2}-2.4Z_{i1}Z_{i2}),
\end{equation}
under which the hazard is 0.04 in the two concordant strata and about 0.133
in the two discordant strata. Each factor is then prognostic only in the
absence of the other, a qualitative interaction that balancing the margins
cannot remove. To trace the transition between the two models, a further
null experiment uses the log-hazard
$(1-c)(0.45Z_{i1}+0.65Z_{i2})+c(1.2Z_{i1}+1.2Z_{i2}-2.4Z_{i1}Z_{i2})$ with
interaction weight $c\in\{0,0.25,0.5,0.75,1\}$, so that $c=0$ gives the
additive model and $c=1$ gives G0-I. Proposition~\ref{prop:additivity}
concerns the stratum means $m_a(s,\tau)$ rather than the log-hazard, so we
also computed the share of $m_a(\cdot,12)$ that is not additive, measured
by the $\bm p$-weighted root mean square of its non-additive remainder
relative to that of $m_a(\cdot,12)$. This share rises with $c$ from 0.09 to
0.41, 0.81, 0.95, and 0.98. Interpolating keeps every stratum hazard
moderate, whereas raising the interaction coefficient alone with the main
effects held at 1.2 would leave the stratum with both factors present
almost without patients at risk at 12 months.

Scenarios WAN and EBL keep arms 2 and 3 of P-PH and change only how the
benefit of arm 1 unfolds over time. In WAN the benefit is confined to the
first six months, after which the hazards are equal, a pattern seen when
treatment is given for a fixed period or when resistance develops. In EBL
the early benefit is followed by a modest excess hazard, as can happen
when an intensive therapy carries delayed toxicity. The early multiplier
of arm 1 is calibrated in each case so that its 12-month RMST gain equals
that under P-PH, namely 1.075 months. The three scenarios therefore share
the confirmatory effect size and differ only in its shape, which makes
their power directly comparable.

\begin{table}[t!]
\centering
\small
\caption{Event-time and censoring scenarios. The third column gives the
hazard multipliers $(r_1,r_2,r_3)$ of the three experimental arms. The
arm-1 multipliers of WAN and EBL are rounded here, and their exact values
are given in Section~\ref{sec:supp-sim-design}.}
\label{tab:simulation-scenarios}
\begin{tabular}{p{1.4cm}p{3.4cm}p{4.7cm}p{3.4cm}}
\toprule
Scenario & Treatment configuration & Hazard multipliers & Censoring model\\
\midrule
G0-M &
Global null &
$(1,1,1)$ &
Marginally independent\\
G0-C &
Global null &
$(1,1,1)$ &
Stratum dependent\\
P-PH &
Partial null under PH &
$(0.65,1,0.80)$ &
Stratum dependent\\
A-PH &
All treatments active under PH &
$(0.65,0.75,0.85)$ &
Stratum dependent\\
DEL &
Delayed effect in arm 1; arm 2 null &
$r_1(t)=1$ for $t\leq6$ and $0.50$ afterward;
$r_2(t)=1$; $r_3(t)=0.80$ &
Stratum dependent\\
CRS &
Crossing hazards in arm 1; arm 2 null &
$r_1(t)=1.25$ for $t\leq4$ and $0.50$ afterward;
$r_2(t)=1$; $r_3(t)=0.80$ &
Stratum dependent\\
WAN &
Waning benefit in arm 1; arm 2 null &
$r_1(t)=0.55$ for $t\leq6$ and $1$ afterward;
$r_2(t)=1$; $r_3(t)=0.80$ &
Stratum dependent\\
EBL &
Early benefit, late harm in arm 1; arm 2 null &
$r_1(t)=0.50$ for $t\leq6$ and $1.20$ afterward;
$r_2(t)=1$; $r_3(t)=0.80$ &
Stratum dependent\\
G0-I &
Global null; interacting prognostic factors &
$(1,1,1)$; control hazard \eqref{eq:simulation-interaction-hazard} &
Marginally independent\\
\bottomrule
\end{tabular}
\end{table}

\paragraph{Censoring and restriction times.}
All subjects are administratively censored at 24 months and may also be
lost to follow-up at an exponential rate. In the marginally independent
setting this rate is the same for every subject. In the stratum-dependent
setting it equals $\lambda_C\exp(\gamma_{a1}Z_{i1}+\gamma_{a2}Z_{i2})$, so that
loss to follow-up depends on the randomization factors in an arm-specific
way. Censoring is then independent of the event time within each stratum
but not marginally, which is precisely the situation in which the
unstratified Kaplan--Meier estimator is biased. We let the coefficients
differ across arms because otherwise this bias would largely cancel in the
treatment--control contrasts. In both settings $\lambda_C$ is chosen to give
about 30\% censoring under the global null, and Section~\ref{sec:supp-sim-design} lists the
coefficients. The confirmatory restriction time is $\tau^\star=12$ months,
and the process-wise analysis covers $\calT=[4,12]$ on a quarter-month grid.

\paragraph{Randomization procedures.}
We compare four procedures. Simple randomization (SR) serves as the
benchmark, and stratified permuted blocks (SPB) use blocks of size eight
within the joint strata. Pocock--Simon minimization (PS) balances only the
two marginal factors, assigning probability 0.80 to the arms that minimize
the marginal imbalance. The multi-arm Hu--Hu procedure (HH) of
\citet{hu2012asymptotic} and \citet{hu2023multi} combines overall,
joint-stratum, and marginal imbalances with weights
$(0.30,0.50,0.10,0.10)$ and gives probability 0.80 to the best-ranked arm.
Because SPB and HH both control the joint-stratum imbalances, their design
term is zero. Under PS the design term is estimated by simulating the same
minimization rule. Section~\ref{sec:supp-sim-design} gives both allocation rules in full.

\paragraph{Methods.}
We compare six analyses.
\begin{enumerate}[label=\textbf{M\arabic*:},leftmargin=13mm]
\item \textit{Naive marginal KM}: unstratified Kaplan--Meier RMST with the
      usual Greenwood-type covariance, which accounts for the shared control
      but not for CAR.
\item \textit{CAR-adjusted marginal RMST}: the procedure of
      Definition~\ref{def:car-adjusted-statistics}.
\item \textit{Standardized RMST}: the recommended procedure of
      Definition~\ref{def:standardized-statistics}.
\item \textit{M3 with Holm}: the fixed-time tests of M3 with Holm's
      adjustment in place of max-$T$.
\item \textit{Bounded-balance kernel}: M2 with $\widehat{\bm\Omega}_D=\bm 0$,
      used only in G0-I and the graded-interaction experiment.
\item \textit{Stratified log-rank}: pairwise treatment--control log-rank
      tests stratified by the four joint strata, with shared-control
      max-$T$ calibration and follow-up censored at $\tau^\star$.
\end{enumerate}
M2 and M3 are the proposed procedures, M1 represents current practice, and
M4 and M5 are diagnostic variants. M6 is the standard time-to-event
benchmark. Because it tests equality of survival curves rather than an RMST
contrast, its comparison with M2 and M3 is a comparison of testing
strategies, and Section~\ref{sec:supp-logrank} gives its score, variance, and correlation
estimator. In the proportional-hazards scenarios we also fit a Cox model
stratified by the joint strata, with Wald max-$T$ tests and follow-up
censored at $\tau^\star$.

\paragraph{Equal-RMST experiments.}
Apart from P-PH, WAN, and EBL, the scenarios in
Table~\ref{tab:simulation-scenarios} differ in their RMST gains, so they
cannot separate the effect of the hazard shape from the size of the effect.
Two further experiments therefore fix the gain over a wider range. Only arm 1
is active, and its 12-month RMST gain is 0.50, 0.75, 1.00, or 1.25 months.
For each gain we calibrate four hazard profiles to the same gain:
proportional hazards, a benefit that starts after month 3, a benefit that
lasts only through month 6, and an immunotherapy-type profile with a mild
early disadvantage ($r_1=1.10$ through month 3) followed by a sustained
benefit. The two experiments differ only in the censoring mechanism, which
is marginally independent in one and stratum dependent in the other. A
reverse-crossing profile with zero expected log-rank score serves only as a
stress test and is reported in Section~\ref{sec:supp-equal-rmst}, together with the calibrated
multipliers.

\paragraph{Replications and reporting.}
G0-M and G0-C use 20,000 trials and 10,000 multiplier draws per trial,
whereas G0-I, the graded-interaction experiment, the treatment-effect
scenarios, and both equal-RMST experiments use 10,000 trials and 5,000
draws. Within each configuration
all methods are applied to the same trials. The Pocock--Simon design
covariance is based on 100,000 design replicates for the global-null
calibration and on 20,000 elsewhere. All tests use one-sided family-wise
level $\alpha=0.025$, and we report the family-wise error under the global
null, the strong FWER for the true-null arms, and disjunctive power. Trials
in which some treatment--stratum risk set is empty at $\tau_U$ are flagged.
In the global-null scenarios, the graded-interaction experiment, and the
treatment-effect scenarios, error rates and power are computed among the
remaining trials, with optimistic and worst-case bounds kept in the
reproducible output. In the equal-RMST experiments, flagged M3 trials count
as nonrejections.

\subsection{Type I error control}

Table~\ref{tab:simulation-error-by-method} reports the Type I error rates.
M1 and M2 are omitted under G0-C and the treatment-effect scenarios
because, with stratum-dependent censoring, they do not identify the target
RMST.

\begin{table}[tp]
\centering
\scriptsize
\caption{Family-wise Type I error under the global-null scenarios and strong family-wise error under partial-null scenarios. Each cell reports the fixed-time/process-wise error rate.}
\label{tab:simulation-error-by-method}
\setlength{\tabcolsep}{2.8pt}
\begin{tabular}{llcccccccc}
\toprule
&& \multicolumn{4}{c}{$N=800$} & \multicolumn{4}{c}{$N=1200$}\\
\cmidrule(lr){3-6}\cmidrule(lr){7-10}
Scenario & Method & SR & SPB & PS & HH & SR & SPB & PS & HH\\
\midrule
G0-M & M1 & 0.026/0.027 & 0.022/0.024 & 0.019/0.024 & 0.021/0.024 & 0.026/0.027 & 0.020/0.023 & 0.023/0.024 & 0.022/0.024\\
 & M2 & 0.025/0.027 & 0.027/0.028 & 0.024/0.026 & 0.025/0.027 & 0.026/0.025 & 0.026/0.027 & 0.026/0.026 & 0.026/0.029\\
 & M3 & 0.025/0.028 & 0.027/0.028 & 0.023/0.026 & 0.025/0.026 & 0.025/0.026 & 0.026/0.027 & 0.026/0.026 & 0.027/0.029\\
 & M4 & 0.022/-- & 0.024/-- & 0.021/-- & 0.023/-- & 0.022/-- & 0.022/-- & 0.023/-- & 0.024/--\\
 & M6 & 0.025/-- & 0.027/-- & 0.025/-- & 0.025/-- & 0.024/-- & 0.027/-- & 0.027/-- & 0.027/--\\
\addlinespace
G0-C & M3 & 0.026/0.027 & 0.024/0.025 & 0.026/0.027 & 0.026/0.027 & 0.025/0.025 & 0.028/0.028 & 0.024/0.025 & 0.025/0.027\\
 & M4 & 0.023/-- & 0.021/-- & 0.023/-- & 0.024/-- & 0.022/-- & 0.025/-- & 0.021/-- & 0.022/--\\
 & M6 & 0.025/-- & 0.023/-- & 0.027/-- & 0.027/-- & 0.026/-- & 0.028/-- & 0.023/-- & 0.026/--\\
\addlinespace
P-PH & M3 & 0.022/0.021 & 0.022/0.020 & 0.022/0.021 & 0.020/0.020 & 0.023/0.020 & 0.021/0.020 & 0.022/0.022 & 0.021/0.021\\
 & M4 & 0.021/-- & 0.021/-- & 0.021/-- & 0.020/-- & 0.022/-- & 0.021/-- & 0.022/-- & 0.020/--\\
 & M6 & 0.022/-- & 0.022/-- & 0.023/-- & 0.021/-- & 0.022/-- & 0.023/-- & 0.021/-- & 0.022/--\\
\addlinespace
DEL & M3 & 0.014/0.013 & 0.016/0.014 & 0.017/0.015 & 0.015/0.012 & 0.015/0.014 & 0.018/0.015 & 0.018/0.016 & 0.016/0.015\\
 & M4 & 0.013/-- & 0.015/-- & 0.016/-- & 0.013/-- & 0.014/-- & 0.017/-- & 0.017/-- & 0.015/--\\
 & M6 & 0.017/-- & 0.019/-- & 0.019/-- & 0.016/-- & 0.021/-- & 0.021/-- & 0.022/-- & 0.021/--\\
\addlinespace
CRS & M3 & 0.015/0.013 & 0.014/0.013 & 0.014/0.012 & 0.011/0.012 & 0.014/0.014 & 0.011/0.012 & 0.014/0.015 & 0.013/0.013\\
 & M4 & 0.013/-- & 0.013/-- & 0.013/-- & 0.011/-- & 0.013/-- & 0.011/-- & 0.013/-- & 0.012/--\\
 & M6 & 0.017/-- & 0.016/-- & 0.017/-- & 0.014/-- & 0.020/-- & 0.018/-- & 0.017/-- & 0.018/--\\
\addlinespace
WAN & M3 & 0.020/0.020 & 0.019/0.020 & 0.020/0.018 & 0.018/0.018 & 0.021/0.021 & 0.025/0.023 & 0.022/0.021 & 0.023/0.023\\
 & M4 & 0.019/-- & 0.019/-- & 0.020/-- & 0.018/-- & 0.021/-- & 0.024/-- & 0.021/-- & 0.022/--\\
 & M6 & 0.020/-- & 0.023/-- & 0.021/-- & 0.019/-- & 0.024/-- & 0.024/-- & 0.021/-- & 0.024/--\\
\addlinespace
EBL & M3 & 0.021/0.020 & 0.019/0.017 & 0.021/0.020 & 0.020/0.019 & 0.021/0.022 & 0.022/0.023 & 0.022/0.020 & 0.022/0.024\\
 & M4 & 0.020/-- & 0.018/-- & 0.021/-- & 0.020/-- & 0.021/-- & 0.022/-- & 0.021/-- & 0.022/--\\
 & M6 & 0.019/-- & 0.017/-- & 0.021/-- & 0.021/-- & 0.021/-- & 0.020/-- & 0.023/-- & 0.021/--\\
\addlinespace
G0-I & M1 & 0.028/0.025 & 0.016/0.021 & 0.024/0.025 & 0.015/0.019 & 0.025/0.027 & 0.014/0.018 & 0.028/0.030 & 0.017/0.019\\
 & M2 & 0.026/0.024 & 0.026/0.029 & 0.023/0.023 & 0.026/0.028 & 0.026/0.026 & 0.026/0.026 & 0.027/0.030 & 0.027/0.027\\
 & M3 & 0.025/0.027 & 0.027/0.029 & 0.026/0.026 & 0.025/0.026 & 0.025/0.026 & 0.025/0.025 & 0.026/0.030 & 0.027/0.027\\
 & M4 & 0.023/-- & 0.025/-- & 0.023/-- & 0.022/-- & 0.021/-- & 0.022/-- & 0.022/-- & 0.024/--\\
 & M5 & 0.045/0.038 & 0.026/0.030 & 0.039/0.035 & 0.026/0.028 & 0.039/0.037 & 0.026/0.026 & 0.045/0.042 & 0.027/0.026\\
 & M6 & 0.028/-- & 0.027/-- & 0.026/-- & 0.026/-- & 0.024/-- & 0.025/-- & 0.026/-- & 0.027/--\\
\addlinespace
\bottomrule
\end{tabular}
\begin{minipage}{0.98\textwidth}
\footnotesize
A dash indicates that no process-wise procedure was considered for M4 or M6, and A-PH is omitted because it has no true null. M1 and M2 are not reported under G0-C, P-PH, DEL, CRS, WAN, or EBL because they do not identify the target marginal RMST under stratum-dependent censoring. G0-M and G0-C use 20,000 trials and 10,000 Gaussian draws per trial, and all other rows use 10,000 trials and 5,000 draws. The PS design covariance uses 100,000 design-only replicates for G0-M and G0-C and 20,000 otherwise. M6 tests equality of survival curves through month 12, and its true-null hypotheses coincide with the RMST nulls in the reported error-rate comparisons.
\end{minipage}
\end{table}

Both proposed procedures held the nominal level throughout. Their
family-wise error rates stayed close to 0.025 under all four designs, at
both sample sizes, and in both the fixed-time and the process-wise
analyses, and they showed no systematic dependence on the design, as
Theorems~\ref{thm:general-car-fclt} and~\ref{thm:standardized-fclt} predict.
Under the partial nulls the strong family-wise error of M3 stayed at or
below the nominal level. It was lowest under DEL and CRS, whose low power
means that the step-down procedure rarely removes the false nulls, so the
true null is usually tested against the critical value of the full family.
The simultaneous 95\% bands of both procedures covered the true contrast
processes in close to 95\% of trials in every scenario
(Section~\ref{sec:supp-simulation}).

The conventional analysis M1 behaved as the limit theory predicts. It was
close to nominal under simple randomization, where the Greenwood
covariance is correct, and conservative under the balancing designs, with
the exception of minimization under interacting factors discussed below.
It therefore never inflated the error rate under marginally independent
censoring, but it overstated the standard error and gave away part of the
precision gained by balancing. The overstatement was larger when the
factors were strongly prognostic, reaching about 9\% under SPB and HH in
G0-I, which at a nominal power of 80\% costs about 6 percentage points. M4
was slightly more conservative than M3 throughout, because Holm's
adjustment ignores the dependence induced by the shared control.

The G0-I rows isolate the minimization design term, which
Propositions~\ref{prop:additivity} and~\ref{prop:nondegenerate} tie to the
interaction between the factors. Under PS, the estimated term barely
changed the contrast variances in the additive model
\eqref{eq:simulation-control-hazard} but added 7\% to 20\% to them in the
interaction model \eqref{eq:simulation-interaction-hazard}, more at later
restriction times, and omitting it mattered. M5 then had an error rate of
about 0.04 at both sample sizes, whereas M2 restored the nominal level.
Under SPB and HH the two coincide because $\bm\Omega_D=\bm 0$. M5 was also
inflated under SR, but for a different reason. Simple randomization does
not balance the margins, so the bounded-balance kernel misses the whole
imbalance variance whatever the prognostic structure. M3 was valid under
every design without any design simulation, and it was also more efficient
than the marginal estimator under SR and PS.

\begin{figure}[t!]
\centering
\includegraphics[width=\textwidth]{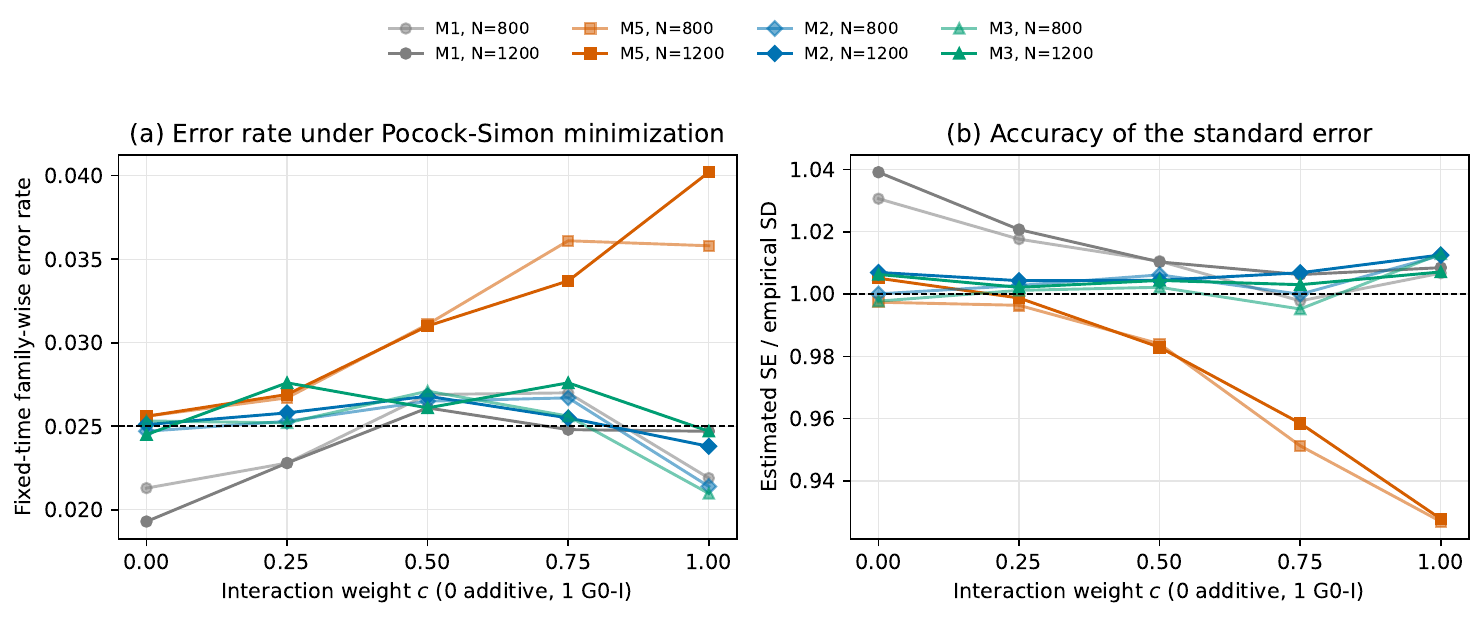}
\caption{Graded-interaction null experiment under Pocock--Simon
minimization. The interaction weight $c$ interpolates between the additive
model ($c=0$) and G0-I ($c=1$). Panel (a) gives the fixed-time family-wise
error rate, with the dashed line at the nominal level 0.025. Panel (b)
gives the ratio of the average estimated standard error of the 12-month
contrasts to their empirical standard deviation. Each point uses 10,000
trials, so the Monte Carlo standard error of each error rate is about
0.0016. The points at $c=1$ are an independent replication of G0-I.}
\label{fig:interaction-gradient}
\end{figure}

Figure~\ref{fig:interaction-gradient} shows that the error caused by
omitting the design term grows smoothly with the strength of the
interaction. As $c$ increases, the standard error used by M5 falls further
below the empirical standard deviation, and its error rate rises steadily
above the nominal level. The curves for the two sample sizes nearly
coincide, so the distortion is of first order and does not vanish as $N$
grows. M2 and M3 stay at the nominal level, with accurate standard errors,
over the whole range. M1 moves in the opposite direction, from an inflated
standard error in the additive model to a nearly exact one at $c=1$,
because minimization removes less of the prognostic variation as the
interaction grows. Under SR, M5 is inflated even at $c=0$, which again
shows that omitting the design term is safe only under designs that
balance the margins (Section~\ref{sec:supp-sim-gradient}).

Stratum-dependent censoring affects the unstratified estimator through its
target rather than its variance. Under G0-C, the unstratified contrasts
had a bias that did not shrink with $N$, a failure of identification
rather than a small-sample effect, and the error rates of M1 and M2 were
above the nominal level under every design. M2 was affected more than M1,
because its standard error is correct, whereas the conservative standard
error of M1 partly masks the bias. In an ICON6-based setting the inflation
even grows with the sample size (Section~\ref{sec:supp-icon6-censoring}).
Standardization removed essentially all of the bias, with accurate standard
errors and nominal coverage. Its positivity requirement was rarely binding,
since fewer than 1\% of the global-null trials had an empty
treatment--stratum risk set at $\tau_U$, almost all of them under G0-C
(Section~\ref{sec:supp-sim-estimation}).

The stratified log-rank benchmark M6 was equally well calibrated under
every design. Its stability has the same source as that of M3. Both
analyses compare arms within each joint stratum, so a within-stratum
imbalance changes only how much information a stratum contributes and adds
no systematic term to the statistic.

\subsection{Power}

Table~\ref{tab:simulation-power-by-method} reports disjunctive power. The M1
and M2 entries are marked because, in these scenarios, the two methods
target censoring-distorted quantities, so their apparent power is not a
valid efficiency comparison.

\begin{table}[tp]
\centering
\scriptsize
\caption{Disjunctive power across treatment-effect scenarios. Each cell reports fixed-time/process-wise power.}
\label{tab:simulation-power-by-method}
\setlength{\tabcolsep}{2.8pt}
\begin{tabular}{llcccccccc}
\toprule
&& \multicolumn{4}{c}{$N=800$} & \multicolumn{4}{c}{$N=1200$}\\
\cmidrule(lr){3-6}\cmidrule(lr){7-10}
Scenario & Method & SR & SPB & PS & HH & SR & SPB & PS & HH\\
\midrule
P-PH & M1$^\dagger$ & 0.681/0.599 & 0.691/0.613 & 0.692/0.608 & 0.681/0.600 & 0.859/0.799 & 0.859/0.799 & 0.864/0.809 & 0.854/0.798\\
 & M2$^\dagger$ & 0.675/0.593 & 0.712/0.636 & 0.712/0.629 & 0.706/0.625 & 0.855/0.796 & 0.877/0.819 & 0.875/0.825 & 0.870/0.816\\
 & M3 & 0.633/0.552 & 0.645/0.570 & 0.643/0.561 & 0.634/0.552 & 0.820/0.757 & 0.818/0.755 & 0.824/0.761 & 0.814/0.752\\
 & M4 & 0.617/-- & 0.630/-- & 0.625/-- & 0.619/-- & 0.808/-- & 0.808/-- & 0.814/-- & 0.802/--\\
 & M6 & 0.703/-- & 0.714/-- & 0.712/-- & 0.699/-- & 0.879/-- & 0.876/-- & 0.876/-- & 0.874/--\\
\addlinespace
A-PH & M1$^\dagger$ & 0.709/0.626 & 0.709/0.629 & 0.708/0.627 & 0.713/0.637 & 0.876/0.825 & 0.886/0.832 & 0.883/0.827 & 0.876/0.821\\
 & M2$^\dagger$ & 0.700/0.619 & 0.733/0.650 & 0.730/0.651 & 0.736/0.661 & 0.875/0.822 & 0.900/0.850 & 0.895/0.844 & 0.889/0.839\\
 & M3 & 0.664/0.586 & 0.671/0.592 & 0.670/0.591 & 0.675/0.604 & 0.848/0.788 & 0.855/0.793 & 0.847/0.786 & 0.845/0.784\\
 & M4 & 0.645/-- & 0.654/-- & 0.650/-- & 0.659/-- & 0.836/-- & 0.842/-- & 0.834/-- & 0.834/--\\
 & M6 & 0.735/-- & 0.744/-- & 0.739/-- & 0.742/-- & 0.900/-- & 0.904/-- & 0.899/-- & 0.897/--\\
\addlinespace
DEL & M1$^\dagger$ & 0.239/0.186 & 0.228/0.180 & 0.226/0.174 & 0.225/0.171 & 0.342/0.271 & 0.341/0.274 & 0.345/0.270 & 0.336/0.266\\
 & M2$^\dagger$ & 0.234/0.182 & 0.252/0.197 & 0.248/0.191 & 0.246/0.189 & 0.338/0.267 & 0.369/0.295 & 0.370/0.293 & 0.363/0.290\\
 & M3 & 0.209/0.166 & 0.207/0.169 & 0.204/0.160 & 0.204/0.158 & 0.299/0.240 & 0.309/0.253 & 0.311/0.247 & 0.304/0.244\\
 & M4 & 0.197/-- & 0.197/-- & 0.194/-- & 0.191/-- & 0.283/-- & 0.296/-- & 0.297/-- & 0.288/--\\
 & M6 & 0.312/-- & 0.309/-- & 0.309/-- & 0.307/-- & 0.442/-- & 0.451/-- & 0.456/-- & 0.452/--\\
\addlinespace
CRS & M1$^\dagger$ & 0.221/0.176 & 0.211/0.160 & 0.209/0.164 & 0.204/0.162 & 0.316/0.252 & 0.309/0.241 & 0.311/0.249 & 0.303/0.242\\
 & M2$^\dagger$ & 0.217/0.169 & 0.232/0.179 & 0.231/0.181 & 0.227/0.179 & 0.311/0.248 & 0.337/0.266 & 0.334/0.270 & 0.330/0.260\\
 & M3 & 0.197/0.154 & 0.194/0.152 & 0.193/0.154 & 0.192/0.154 & 0.280/0.223 & 0.283/0.223 & 0.286/0.231 & 0.278/0.223\\
 & M4 & 0.183/-- & 0.181/-- & 0.181/-- & 0.179/-- & 0.263/-- & 0.268/-- & 0.272/-- & 0.263/--\\
 & M6 & 0.283/-- & 0.276/-- & 0.282/-- & 0.276/-- & 0.403/-- & 0.401/-- & 0.407/-- & 0.402/--\\
\addlinespace
WAN & M1$^\dagger$ & 0.709/0.700 & 0.710/0.696 & 0.704/0.696 & 0.716/0.710 & 0.878/0.870 & 0.882/0.881 & 0.882/0.875 & 0.881/0.879\\
 & M2$^\dagger$ & 0.703/0.694 & 0.731/0.715 & 0.726/0.710 & 0.740/0.725 & 0.878/0.869 & 0.897/0.890 & 0.894/0.885 & 0.894/0.888\\
 & M3 & 0.653/0.665 & 0.658/0.668 & 0.651/0.661 & 0.660/0.676 & 0.841/0.851 & 0.840/0.853 & 0.840/0.851 & 0.841/0.853\\
 & M4 & 0.636/-- & 0.643/-- & 0.631/-- & 0.644/-- & 0.829/-- & 0.828/-- & 0.827/-- & 0.830/--\\
 & M6 & 0.503/-- & 0.512/-- & 0.494/-- & 0.502/-- & 0.693/-- & 0.686/-- & 0.687/-- & 0.695/--\\
\addlinespace
EBL & M1$^\dagger$ & 0.721/0.787 & 0.728/0.789 & 0.721/0.785 & 0.733/0.794 & 0.889/0.930 & 0.894/0.938 & 0.894/0.939 & 0.899/0.940\\
 & M2$^\dagger$ & 0.714/0.782 & 0.750/0.804 & 0.741/0.797 & 0.755/0.809 & 0.887/0.928 & 0.908/0.944 & 0.905/0.945 & 0.912/0.945\\
 & M3 & 0.660/0.762 & 0.675/0.769 & 0.667/0.764 & 0.678/0.770 & 0.851/0.919 & 0.851/0.924 & 0.851/0.925 & 0.856/0.926\\
 & M4 & 0.640/-- & 0.657/-- & 0.649/-- & 0.660/-- & 0.840/-- & 0.841/-- & 0.839/-- & 0.846/--\\
 & M6 & 0.402/-- & 0.403/-- & 0.404/-- & 0.413/-- & 0.573/-- & 0.582/-- & 0.573/-- & 0.577/--\\
\addlinespace
\bottomrule
\end{tabular}
\begin{minipage}{0.98\textwidth}
\footnotesize
A dash appears in the process-wise position for M4 and M6. Conjunctive power and simultaneous-band coverage are reported in Table~\ref{tab:supp-alternative-ranges}. $^\dagger$Because censoring is independent only within strata, the M1 and M2 entries are apparent rejection probabilities for censoring-distorted KM quantities, not power for the target marginal RMST. M6 tests equality of survival curves through month 12, so its power comparison with RMST procedures is at the procedure level. Every configuration uses 10,000 trials, 5,000 Gaussian draws, and, for PS, 20,000 design-only replicates.
\end{minipage}
\end{table}

The power of the recommended procedure M3 was nearly the same under the
four designs, in agreement with Theorem~\ref{thm:standardized-fclt}, under
which the standardized contrasts have the same limit under every design.
Searching over $[4,12]$ cost some power when $\tau^\star=12$ already
captured the effect well, as under the proportional-hazards, delayed, and
crossing scenarios, but it paid off under EBL. There the late excess hazard
erodes the RMST gain between months 6 and 12, so the contrast is larger
relative to its standard error at earlier restriction times. Exploiting
the shared-control dependence through max-$T$ also gave a small but
consistent gain over Holm's adjustment.

\begin{figure}[t!]
\centering
\includegraphics[width=\textwidth]
{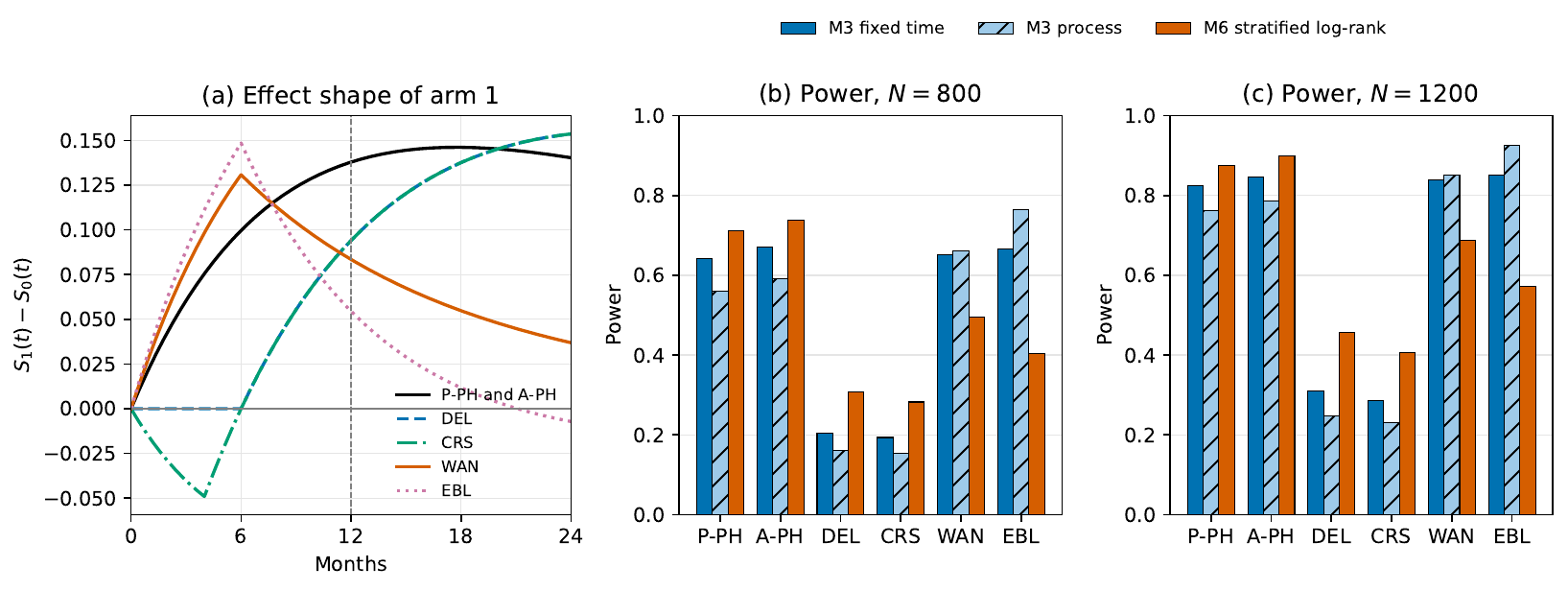}
\caption{Effect shapes and power under Pocock--Simon minimization. Panel
(a) shows the true difference between the marginal survival curves of arm
1 and the control arm, with the dashed vertical line at $\tau^\star=12$.
The DEL and CRS curves coincide after month 6 because the two arms then
have the same cumulative hazard. Panels (b) and (c) give, for each scenario,
the disjunctive power of M3 at $\tau^\star=12$, of M3 over $[4,12]$, and of
M6 with follow-up through month 12. Each bar is based on 10,000 trials, so
its Monte Carlo standard error is at most 0.005.}
\label{fig:simulation-shape-power}
\end{figure}

Figure~\ref{fig:simulation-shape-power} shows that the shape of the effect,
not its size, decides whether M3 or M6 is more powerful. M6 was more
powerful under P-PH and A-PH and under the delayed and crossing effects,
where a hazard difference persists after month 6. The ordering reversed
under WAN and EBL, although these two scenarios have the same 12-month
RMST gain as P-PH. M6 then lost a large part of its power while the power
of M3 stayed about the same, so that M3 was ahead by about 0.15 under WAN
and about 0.27 under EBL. Panel (a) explains the reversal. M3 responds to
the area between the survival curves up to month 12, which is the same in
the three scenarios, whereas the log-rank score receives no signal once the
hazards coincide and is pulled back once they cross. Extending follow-up
to 24 months helped M6 under the first four shapes but hurt it under WAN
and EBL (Section~\ref{sec:supp-logrank}). Under proportional hazards the
stratified Cox model behaved like M6 and was more powerful than M3
(Section~\ref{sec:supp-cox}). These comparisons involve different
hypotheses, since M6 tests equality of survival curves, the Cox model
targets a common hazard ratio, and M3 targets the gain in event-free time
through month 12.

The equal-RMST experiments in Tables~\ref{tab:equal-rmst-marginal-main}
and~\ref{tab:equal-rmst-conditional-main} confirm this over a range of
effect sizes. Within each hazard profile, the power of every method rises
with the RMST gain, but the ordering of M3 and M6 never changes. M3 was
more powerful for a benefit confined to the first six months, by a margin
that generally grew with the gain, whereas M6 was slightly more powerful
under proportional hazards and much more powerful under delayed benefit
and the immunotherapy-type profile. The pattern was the same under both
censoring mechanisms. The treatment-limited profile shows why. Once the
hazards become equal, the log-rank score receives no further signal, but
the survival gap built up before month 6 persists and keeps adding area to
the RMST contrast. The immunotherapy-type profile shows the converse, since
crossing survival curves do not favor RMST when the crossing is followed
by a sustained reduction in hazard. M2 and M3 had nearly identical power
under the balancing designs, M3 was more powerful under simple
randomization, where standardization acts as post-stratification, and the
strong family-wise error of all methods stayed close to the nominal level
(Section~\ref{sec:supp-equal-rmst}).

\begin{table}[p!]
\centering
\scriptsize
\caption{\textbf{Marginally independent censoring.} Power trend in the equal-RMST experiment under Pocock--Simon minimization. The columns give the arm-1 RMST gain at 12 months; arms 2 and 3 are null. Bold marks the largest displayed power within each effect shape, sample size, and RMST gain.}
\label{tab:equal-rmst-marginal-main}
\resizebox{\textwidth}{!}{%
\begin{tabular}{llrrrrrrrr}
\toprule
&& \multicolumn{4}{c}{$N=800$} & \multicolumn{4}{c}{$N=1200$}\\
\cmidrule(lr){3-6}\cmidrule(lr){7-10}
Effect shape & Method & $0.50^\dagger$ & $0.75$ & $1.00$ & $1.25$ & $0.50^\dagger$ & $0.75$ & $1.00$ & $1.25$\\
\midrule
PH & M2 & 0.122 & 0.294 & 0.546 & 0.767 & 0.191 & 0.462 & 0.739 & 0.922\\
 & M3 & 0.123 & 0.296 & 0.546 & 0.767 & 0.194 & 0.466 & 0.741 & 0.924\\
 & M6 & \textbf{0.143} & \textbf{0.349} & \textbf{0.624} & \textbf{0.841} & \textbf{0.224} & \textbf{0.538} & \textbf{0.813} & \textbf{0.961}\\
\addlinespace
Delayed benefit & M2 & 0.128 & 0.287 & 0.503 & 0.726 & 0.192 & 0.433 & 0.708 & 0.895\\
 & M3 & 0.128 & 0.289 & 0.505 & 0.727 & 0.190 & 0.434 & 0.710 & 0.894\\
 & M6 & \textbf{0.247} & \textbf{0.559} & \textbf{0.844} & \textbf{0.973} & \textbf{0.373} & \textbf{0.757} & \textbf{0.962} & \textbf{0.998}\\
\addlinespace
Treatment-limited benefit & M2 & 0.130 & 0.300 & 0.566 & 0.803 & 0.194 & 0.471 & \textbf{0.765} & 0.940\\
 & M3 & \textbf{0.133} & \textbf{0.302} & \textbf{0.568} & \textbf{0.804} & \textbf{0.195} & \textbf{0.474} & 0.764 & \textbf{0.942}\\
 & M6 & 0.096 & 0.205 & 0.394 & 0.599 & 0.142 & 0.327 & 0.570 & 0.791\\
\addlinespace
Immunotherapy-type crossing & M2 & 0.121 & 0.276 & 0.499 & 0.712 & 0.184 & 0.426 & 0.698 & 0.888\\
 & M3 & 0.120 & 0.277 & 0.503 & 0.713 & 0.187 & 0.425 & 0.703 & 0.888\\
 & M6 & \textbf{0.297} & \textbf{0.630} & \textbf{0.893} & \textbf{0.985} & \textbf{0.454} & \textbf{0.830} & \textbf{0.979} & \textbf{0.999}\\
\addlinespace
\bottomrule
\end{tabular}%
}
\begin{minipage}{0.98\textwidth}
\footnotesize
M2 and M3 target the 12-month RMST contrast; M6 tests survival-curve equality through month 12. $^\dagger$The 0.50-month gain is the prespecified moderate/local alternative. Four-design ranges and error rates are reported in the Supplementary Material.
\end{minipage}
\end{table}

\begin{table}[p!]
\centering
\scriptsize
\caption{\textbf{Conditionally independent censoring given strata.} Power trend in the equal-RMST experiment under Pocock--Simon minimization. The columns give the arm-1 RMST gain at 12 months; arms 2 and 3 are null. Bold marks the largest displayed power within each effect shape, sample size, and RMST gain.}
\label{tab:equal-rmst-conditional-main}
\resizebox{\textwidth}{!}{%
\begin{tabular}{llrrrrrrrr}
\toprule
&& \multicolumn{4}{c}{$N=800$} & \multicolumn{4}{c}{$N=1200$}\\
\cmidrule(lr){3-6}\cmidrule(lr){7-10}
Effect shape & Method & $0.50^\dagger$ & $0.75$ & $1.00$ & $1.25$ & $0.50^\dagger$ & $0.75$ & $1.00$ & $1.25$\\
\midrule
PH & M3 & 0.121 & 0.291 & 0.541 & 0.767 & 0.191 & 0.449 & 0.744 & 0.925\\
 & M6 & \textbf{0.142} & \textbf{0.342} & \textbf{0.610} & \textbf{0.830} & \textbf{0.218} & \textbf{0.513} & \textbf{0.812} & \textbf{0.959}\\
\addlinespace
Delayed benefit & M3 & 0.123 & 0.284 & 0.502 & 0.722 & 0.181 & 0.432 & 0.704 & 0.895\\
 & M6 & \textbf{0.226} & \textbf{0.527} & \textbf{0.825} & \textbf{0.969} & \textbf{0.344} & \textbf{0.735} & \textbf{0.952} & \textbf{0.997}\\
\addlinespace
Treatment-limited benefit & M3 & \textbf{0.133} & \textbf{0.304} & \textbf{0.556} & \textbf{0.792} & \textbf{0.190} & \textbf{0.471} & \textbf{0.768} & \textbf{0.937}\\
 & M6 & 0.097 & 0.213 & 0.381 & 0.596 & 0.136 & 0.331 & 0.576 & 0.789\\
\addlinespace
Immunotherapy-type crossing & M3 & 0.119 & 0.286 & 0.486 & 0.720 & 0.181 & 0.406 & 0.691 & 0.885\\
 & M6 & \textbf{0.286} & \textbf{0.615} & \textbf{0.869} & \textbf{0.979} & \textbf{0.441} & \textbf{0.799} & \textbf{0.976} & \textbf{0.999}\\
\addlinespace
\bottomrule
\end{tabular}%
}
\begin{minipage}{0.98\textwidth}
\footnotesize
M3 targets the 12-month RMST contrast; M6 tests survival-curve equality through month 12. M2 is omitted because its marginal estimator does not identify that target under stratum-dependent censoring. $^\dagger$The 0.50-month gain is the prespecified moderate/local alternative. Four-design ranges and error rates are reported in the Supplementary Material.
\end{minipage}
\end{table}

\subsection{Summary}

The simulations support five conclusions. First, M3 controlled the
family-wise error rate under all four designs and both censoring
mechanisms, and M2 did so whenever censoring was marginally independent.
Second, the conventional analysis M1 never inflated the error rate under
marginally independent censoring, but it was conservative under the
balancing designs and so gave away part of the precision gained by
balancing. Third, as Propositions~\ref{prop:additivity}
and~\ref{prop:nondegenerate} predict, ignoring the design term under
minimization was harmless under additive prognosis, and the resulting
inflation grew steadily with the strength of the interaction. Fourth,
under stratum-dependent censoring the unstratified analysis targeted the
wrong quantity and exceeded the nominal error rate, whereas M3 remained
valid. Fifth, neither test was uniformly more powerful. The stratified
log-rank test was more powerful when a hazard difference persisted through
follow-up, and the standardized RMST procedure was more powerful when a
benefit of the same 12-month size waned or turned into late harm.

\section{Application: An ICON6-Based Study}
\label{sec:application}

\subsection{The ICON6 trial}

ICON6 was a double-blind phase III trial in women with relapsed
platinum-sensitive ovarian cancer that compared two cediranib schedules
with a shared placebo control \citep{ledermann2016cediranib}. All patients
received platinum-based chemotherapy and were allocated in the ratio 2:3:3
to placebo throughout (arm A), cediranib during chemotherapy only (arm B),
or cediranib during chemotherapy followed by cediranib maintenance (arm C).
Randomization used permuted blocks within strata defined by five baseline
factors, and the primary analysis compared progression-free survival (PFS)
between arms A and C with the log-rank test. The trial combines the three
features that motivate this paper: a shared control, randomization
stratified by prognostic factors, and treatment effects of different
shapes. The benefit of arm B was similar to that of arm C during
chemotherapy but faded after the switch to placebo, whereas maintenance
sustained the benefit of arm C.

Individual patient data from ICON6 are not publicly available, so we
reconstructed them from the published PFS curves and numbers at risk with
the algorithm of \citet{guyot2012enhanced}, using the exact curve
coordinates stored in the published figure, as in \citet{zhao2025synthipd}. The reconstructed data
reproduce the reported median PFS in every arm and every published number
at risk to within one patient (Section~\ref{sec:supp-icon6-reconstruction}). Because they contain no
stratification variables, we use them in two ways. We first analyze the
reconstructed trial directly, and we then use it as the template for a
trial emulation in which treatments are reassigned under CAR.

\subsection{Analysis of the reconstructed trial}
\label{sec:application-data}

We applied the unstratified RMST procedure with the shared-control max-$T$
calibration at $\tau^\star=12$ months and over $\calT=[4,12]$, which is
valid but conservative under stratified permuted blocks by
Theorem~\ref{thm:bounded-balance-fclt}. For comparison, we applied the
log-rank test with the same calibration, once with follow-up censored at 12
months and once with all follow-up.

\begin{table}[t!]
\centering
\scriptsize
\caption{Analysis of the reconstructed ICON6 progression-free survival data.
RMST differences are in months relative to arm A, whose 12-month RMST was
8.27 months. Confidence intervals are two-sided and simultaneous across the
two comparisons. All $p$-values are one-sided and adjusted by the step-down
max-$T$ procedure.}
\label{tab:icon6-reconstructed}
\resizebox{\textwidth}{!}{%
\begin{tabular}{lcccccc}
\toprule
& \multicolumn{4}{c}{RMST at 12 months} & \multicolumn{2}{c}{Log-rank $z$ (adjusted $p$)}\\
\cmidrule(lr){2-5}\cmidrule(lr){6-7}
Comparison with arm A & Difference & 95\% simultaneous CI & Fixed-time $p$ & Process $p$ & 12 months & 24 months\\
\midrule
Concurrent (B) & 1.10 & (0.29, 1.91) & 0.002 & 0.003 & 2.39 (0.009) & 2.83 (0.002)\\
Maintenance (C) & 1.68 & (0.87, 2.48) & <0.001 & <0.001 & 4.29 (<0.001) & 4.54 (<0.001)\\
\bottomrule
\end{tabular}
}
\end{table}

\begin{figure}[t!]
\centering
\includegraphics[width=\textwidth]{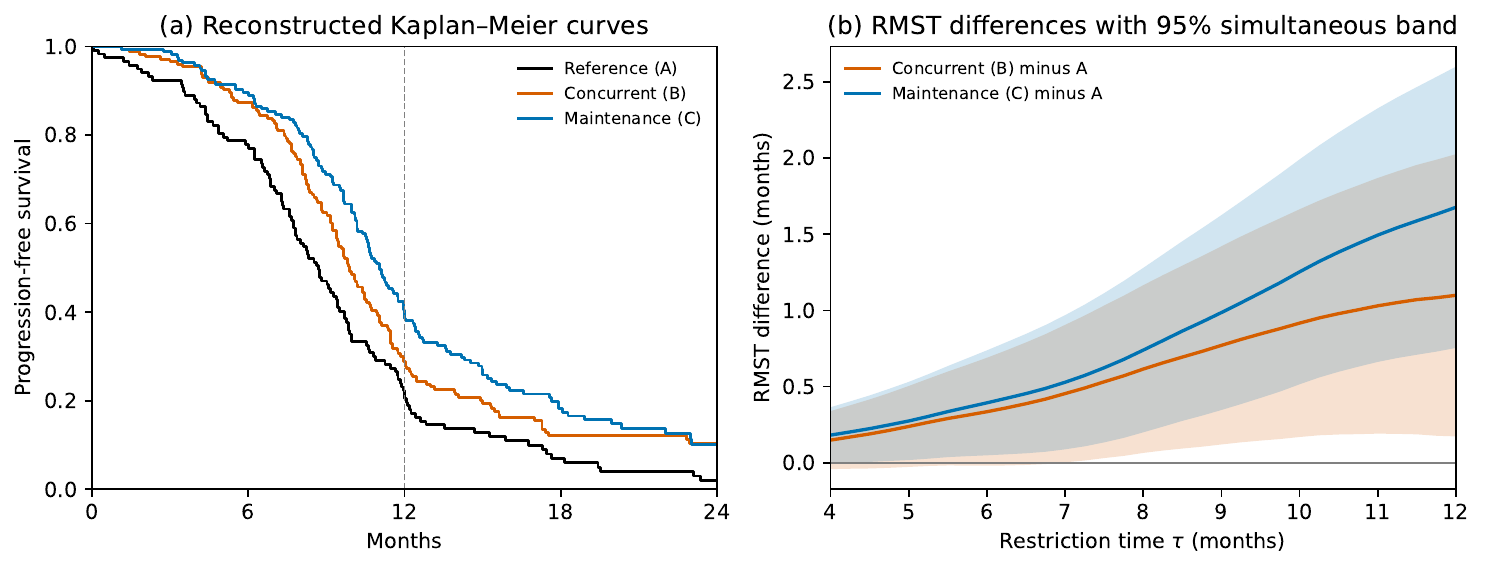}
\caption{Reconstructed ICON6 progression-free survival. Panel (a) shows the
Kaplan--Meier curves, with the dashed line at $\tau^\star=12$ months. Panel
(b) shows the estimated RMST differences relative to arm A, with two-sided
95\% confidence bands simultaneous over both comparisons and all
restriction times in $[4,12]$.}
\label{fig:icon6-application}
\end{figure}

Both cediranib schedules increased the 12-month RMST relative to placebo,
by 1.10 months for arm B and by 1.68 months for arm C, and both
simultaneous confidence intervals excluded zero
(Table~\ref{tab:icon6-reconstructed} and
Figure~\ref{fig:icon6-application}). The two tests agreed for arm C. For
arm B, the RMST statistic was 2.98, compared with 2.39 for the log-rank test
over the same 12 months and 2.83 with all follow-up. The difference follows
from the shape of the effect. The hazard ratio of arm B relative to arm A
was about 0.5 during the first six months and close to one between months 9
and 12, so the log-rank score gained little after chemotherapy ended,
whereas the survival gap built up during chemotherapy kept adding area to
the RMST contrast.

\subsection{Trial emulation}
\label{sec:application-emulation}

A single trial cannot establish the operating characteristics of a
procedure, so we used ICON6 as the template for a trial emulation. Each
emulated trial draws two of the ICON6 stratification factors, the
relapse-free interval and the planned chemotherapy regimen, from their
reported distribution. It then assigns treatments in the ratio 2:3:3 by
simple randomization, by stratified permuted blocks as in ICON6, or by
Pocock--Simon minimization, and generates PFS times so that each arm
reproduces its reconstructed ICON6 curve. Within this constraint the two
factors are prognostic, with hazard ratios of 1.6 and 1.15 that we assumed
because stratum-specific curves were not published. Censoring follows the
reconstructed censoring distribution, and under the global null all three
arms follow the curve of arm A. We used the realized sample size $N=456$
and a smaller $N=300$, with 10,000 trials per configuration, and
Section~\ref{sec:supp-icon6-model} gives the generating model.

\begin{table}[t!]
\centering
\scriptsize
\caption{Family-wise error rate and power in the ICON6-based trial emulation.}
\label{tab:icon6-emulation}
\begin{tabular}{llccc}
\toprule
Quantity & Method & SR & SPB & PS\\
\midrule
FWER, global null & M3, fixed time & \shortstack{0.022\\0.025} & \shortstack{0.025\\0.025} & \shortstack{0.025\\0.024}\\
 & M3, process & \shortstack{0.019\\0.021} & \shortstack{0.019\\0.021} & \shortstack{0.021\\0.019}\\
 & M6, 12 months & \shortstack{0.026\\0.030} & \shortstack{0.030\\0.026} & \shortstack{0.027\\0.030}\\
 & M6, 24 months & \shortstack{0.029\\0.029} & \shortstack{0.031\\0.027} & \shortstack{0.028\\0.026}\\
\addlinespace
Power, arm B (waning) & M3, fixed time & \shortstack{0.661\\0.848} & \shortstack{0.677\\0.854} & \shortstack{0.680\\0.860}\\
 & M3, process & \shortstack{0.609\\0.822} & \shortstack{0.635\\0.832} & \shortstack{0.635\\0.837}\\
 & M6, 12 months & \shortstack{0.501\\0.679} & \shortstack{0.512\\0.683} & \shortstack{0.503\\0.690}\\
 & M6, 24 months & \shortstack{0.605\\0.791} & \shortstack{0.617\\0.802} & \shortstack{0.615\\0.811}\\
\addlinespace
Power, arm C (sustained) & M3, fixed time & \shortstack{0.937\\0.992} & \shortstack{0.949\\0.994} & \shortstack{0.944\\0.994}\\
 & M3, process & \shortstack{0.896\\0.985} & \shortstack{0.914\\0.988} & \shortstack{0.907\\0.988}\\
 & M6, 12 months & \shortstack{0.905\\0.985} & \shortstack{0.915\\0.984} & \shortstack{0.909\\0.986}\\
 & M6, 24 months & \shortstack{0.931\\0.991} & \shortstack{0.937\\0.992} & \shortstack{0.936\\0.992}\\
\addlinespace
\bottomrule
\end{tabular}
\begin{minipage}{0.92\textwidth}
\footnotesize
Each cell gives the rate for $N=300$ (first line) and for the realized ICON6 sample size $N=456$ (second line), based on 10,000 emulated trials with 2:3:3 allocation. Power is the probability that the step-down procedure rejects the null hypothesis for the indicated arm. M6 with 24 months uses all available follow-up.
\end{minipage}
\end{table}

Table~\ref{tab:icon6-emulation} shows that the standardized procedure M3
controlled the family-wise error rate under all three designs, with rates
of 0.022 to 0.025, whereas the stratified log-rank test M6 was slightly
liberal, with rates of 0.026 to 0.031. For the sustained benefit of arm C,
the two procedures had similar power. For the waning benefit of arm B, M3
was markedly more powerful. At the realized sample size it rejected in 85\%
to 86\% of trials, compared with 68\% to 69\% for M6 over the same 12 months
and 79\% to 81\% for M6 with all 24 months of follow-up, and the ordering
was the same at $N=300$. M3 therefore identified the schedule whose benefit
faded more reliably than the conventional analysis, even when the latter
used twice the follow-up. Its power varied by at most 0.02 across the three
designs, as Theorem~\ref{thm:standardized-fclt} predicts. For arm B, the
fixed-time test of the marginal RMST procedure M2 remained more powerful
than the log-rank test with the same follow-up for restriction times from 9
to 24 months, whereas the standardized estimator was increasingly often
undefined beyond 12 months because some treatment--stratum risk sets became
empty (Section~\ref{sec:supp-icon6-tau}). When the two
factors were made to interact (Section~\ref{sec:supp-icon6-interaction}), omitting the design term under
minimization raised the error rate of the marginal analysis to about 0.03,
whereas M2 and M3 remained at the nominal level.

Because almost no patient in ICON6 was censored before 12 months, the
unstratified and standardized analyses performed alike in this emulation.
Section~\ref{sec:supp-icon6-censoring} reports a hypothetical variant in which early dropout depends on
the relapse-free interval and differs by arm. There the unstratified RMST
estimator was biased and the family-wise error rate of M1 and M2 rose to
about 0.05 at $N=912$, while M3 remained unbiased and kept the error rate
near the nominal level.

\section{Discussion}
\label{sec:discussion}

This article develops joint, design-aware inference for several RMST
contrasts in multi-arm trials under CAR, separating three issues that are
often conflated: dependence created by a shared control, covariance induced
by the randomization algorithm, and identification under censoring.
Theorem~\ref{thm:general-car-fclt}
gives the joint limit of the unadjusted Kaplan--Meier RMST processes under
designs with $\sqrt N$-order imbalance, including Pocock--Simon minimization,
and shows that the randomization adds a design term to the usual
covariance. This term is driven by the interaction between the
randomization factors. It vanishes when they act additively
(Proposition~\ref{prop:additivity}) and is otherwise of the order of the
squared interaction (Proposition~\ref{prop:nondegenerate}).
Theorem~\ref{thm:standardized-fclt} shows that the stratum-standardized
estimator has the same limit under all these designs, with no design term,
and remains valid when censoring depends on the strata. Because these
limits hold under arbitrary treatment effects, the multiplier procedures
built on them retain the dependence across contrasts created by the shared
control and give simultaneous bands and strong family-wise error control.

These results lead to a simple recommendation. We recommend prespecifying
the stratum-standardized procedure as the primary analysis whenever every
treatment-by-stratum cell retains subjects at risk through the upper
restriction time. It identifies the marginal RMST when censoring is
independent within strata, has the same limit under stratified blocks and
minimization, and needs no design simulation. A fixed $\tau^\star$ chosen
before unblinding is the natural confirmatory target
\citep{ich2019estimands,fda2022multiple}, and the process-wise bands are
best viewed as supportive, complementing data-adaptive choices of a single
restriction time \citep{sun2026beyond}. Because neither the RMST procedures
nor the stratified log-rank test is uniformly more powerful
(Section~\ref{sec:simulation}), the primary method should be chosen before
unblinding from the clinically relevant estimand and the plausible effect
shape.

When the unadjusted Kaplan--Meier analysis is retained, the CAR-adjusted
covariance removes the conservativeness of the usual standard error under
stratified blocks. Under minimization, the degree of interaction between
the factors is rarely known in advance, so such an analysis should include
the estimated design term. Randomization-based tests are exact under the
strong null of no effect for any subject, but they do not deliver RMST
estimates, simultaneous bands, or strong family-wise error control, so we
view them as supporting analyses \citep{bugni2018inference}.

The main limitation is that standardization requires finitely many strata
with adequate follow-up in every cell, whereas many minimization trials
balance on site and several other factors. Such trials may require a
lower-dimensional censoring model, inverse-probability-of-censoring
weighting, regression augmentation, or prespecified pooling of strata, and
extending the present multiplicity theory to those estimators is an
important next step. Extensions to competing risks and recurrent events,
and sample-size procedures for the proposed tests, would also be useful.

\section*{Supplementary material}

\begin{description}
\item[Supplementary Material:] Section~\ref{sec:supp-prelim} reviews the
counting-process background and gives the cell-level covariance formulas,
the design simulation, the multiplier construction, and step-by-step
algorithms for both procedures.
Sections~\ref{sec:supp-linearity}--\ref{sec:supp-standardized-inference}
contain the proofs of all theorems, propositions, and corollaries,
Section~\ref{sec:supp-simulation} gives simulation design details and
additional results, and Section~\ref{sec:supp-application} gives the
details of the ICON6 reconstruction and emulation together with a
reanalysis of the four-arm ACTG 175 trial.
\end{description}

\bibliographystyle{chicago}
\bibliography{references}

\clearpage
%%%%%%%%%%%%%%%%%%%%%%%%%%%% Supplementary Material %%%%%%%%%%%%%%%%%%%%%%%%%%%%
\setcounter{section}{0}
\renewcommand{\thesection}{S\arabic{section}}
\renewcommand{\theHsection}{S\arabic{section}}
\renewcommand{\theequation}{S\arabic{section}.\arabic{equation}}
\setcounter{figure}{0}
\renewcommand{\thefigure}{S\arabic{figure}}
\renewcommand{\theHfigure}{S\arabic{figure}}
\setcounter{table}{0}
\renewcommand{\thetable}{S\arabic{table}}
\renewcommand{\theHtable}{S\arabic{table}}
\setcounter{remark}{0}
\renewcommand{\theremark}{S\arabic{remark}}
\renewcommand{\theHremark}{S\arabic{remark}}
\begin{center}
{\Large\bfseries Supplementary Material}
\end{center}
\medskip

\spacingset{1.5}

\noindent
This supplement contains the implementation details, proofs, and additional
numerical results for the main paper. Sections~\ref{sec:supp-prelim}
and~\ref{sec:supp-remarks} give the counting-process background, the
implementation of both procedures, and additional remarks.
Sections~\ref{sec:supp-linearity} and~\ref{sec:supp-car-clt} establish the
uniform asymptotic linear representation of the Kaplan--Meier RMST
estimators and their joint limit under CAR, which the later proofs use
throughout, and Sections~\ref{sec:supp-main-proof}
to~\ref{sec:supp-standardized-inference} contain the proofs, in the order of
the main paper. Sections~\ref{sec:supp-simulation}
and~\ref{sec:supp-application} give the simulation design details, additional
simulation results, and the details of the ICON6-based application together
with a reanalysis of the ACTG 175 trial. Notation and assumptions are as in
the main paper. Unless prefixed by S, references to theorems, propositions,
definitions, assumptions, and equations are to the main paper.

\addtocontents{toc}{\protect\setcounter{tocdepth}{2}}
{\spacingset{1}\renewcommand{\contentsname}{Contents of the Supplementary Material}\tableofcontents}

\setcounter{section}{0}
\setcounter{equation}{0}

\supppart{Implementation details and remarks}

\section{Background and implementation details}
\label{sec:supp-prelim}

This section collects standard material and implementation details that the
main paper uses but does not need to display.

\subsection{Counting processes and Kaplan--Meier estimators}
\label{sec:supp-km}

For each arm $a$, define the observed counting and at-risk processes
\[
N_{ia}(t)
=
A_{ia}\ind(X_i\leq t,\Delta_i=1),
\qquad
Y_{ia}(t)
=
A_{ia}\ind(X_i\geq t),
\]
and their aggregates
\[
N_a(t)=\sum_{i=1}^N N_{ia}(t),
\qquad
Y_a(t)=\sum_{i=1}^N Y_{ia}(t).
\]
The arm-specific Nelson--Aalen and Kaplan--Meier estimators are
\begin{align}
\label{eq:na-estimator}
\widehat\Lambda_a(t)
&=
\int_{(0,t]}
\frac{\ind\{Y_a(u)>0\}}{Y_a(u)}\,dN_a(u),\\
\label{eq:km-estimator}
\widehat S_a(t)
&=
\prod_{(0,t]}
\{1-d\widehat\Lambda_a(u)\}.
\end{align}
The estimated RMST and treatment-control contrast are
\begin{align}
\label{eq:rmst-estimator}
\widehat\mu_a(\tau)
&=
\int_0^\tau\widehat S_a(t)\,dt,\\
\label{supp-eq:contrast-estimator}
\widehat\btheta(\tau)
&=
\bm C\widehat\bmu(\tau),
\qquad
\widehat\bmu(\tau)
=
\{\widehat\mu_0(\tau),\ldots,\widehat\mu_K(\tau)\}^{T}.
\end{align}

\subsection{Population quantities and influence functions}
\label{sec:supp-if}

For arm $a\in\{0,\ldots,K\}$, recall the potential counting and at-risk
processes
\[
N_i^{(a)}(t)
=
\ind\{X_i(a)\leq t,\Delta_i(a)=1\},
\qquad
Y_i^{(a)}(t)
=
\ind\{X_i(a)\geq t\}.
\]
Let
\[
y_a(t)=\E\{Y_i^{(a)}(t)\},
\qquad
F_{1a}(t)=\E\{N_i^{(a)}(t)\},
\]
and
\[
\Lambda_a(t)
=
\int_{(0,t]}\frac{dF_{1a}(u)}{y_a(u)},
\qquad
S_a(t)
=
\prod_{(0,t]}\{1-d\Lambda_a(u)\}.
\]
The mean-zero residual process is
\[
M_i^{(a)}(t)
=
N_i^{(a)}(t)
-
\int_{(0,t]}Y_i^{(a)}(u)\,d\Lambda_a(u).
\]
For $\tau\in\calT$, define
\begin{align}
\label{eq:s-survival-if}
\psi_{ia}(t)
&=
-S_a(t)
\int_{(0,t]}
\frac{dM_i^{(a)}(u)}
{y_a(u)\{1-\Delta\Lambda_a(u)\}},\\
\label{eq:s-rmst-if}
\phi_{ia}(\tau)
&=
\int_0^\tau\psi_{ia}(t)\,dt.
\end{align}
The conditional mean and covariance kernel are
\begin{align}
\label{eq:s-m}
m_a(s,\tau)
&=
\E\{\phi_{ia}(\tau)\mid Z_i=s\},\\
\label{eq:s-c}
c_a(s;\tau,\tau')
&=
\Cov\{\phi_{ia}(\tau),\phi_{ia}(\tau')\mid Z_i=s\}.
\end{align}
Because $\E\{\phi_{ia}(\tau)\}=0$,
\begin{equation}
\label{eq:s-m-centering}
\sum_{s=1}^Jp_s m_a(s,\tau)=0
\quad\text{for every }a\text{ and }\tau\in\calT.
\end{equation}

The empirical-process arguments below use the following process-regularity
condition: the processes $\phi_{ia}(\cdot)$ have square-integrable envelopes
and are asymptotically equicontinuous on $\calT$, a sufficient condition
being that $|\phi_{ia}(\tau)-\phi_{ia}(\tau')|\leq L_{ia}|\tau-\tau'|$ for
random variables $L_{ia}$ with $\E(L_{ia}^{2+\eta})<\infty$ for some
$\eta>0$. The analogous condition is used for the stratum-specific
processes $\phi_{ia}^{s}(\cdot)$. The first lemma shows that the positivity
conditions of the main paper already imply this condition, so it imposes no
additional restriction.

\begin{lemma}[Bounded Lipschitz influence functions]
\label{lem:if-bounds}
Under Assumption~\ref{asu:censoring} of the main paper, there is a finite nonrandom constant
$B$, depending only on $\epsilon$ and $\tau_U$, such that for every arm $a$,
\[
\sup_{0\leq t\leq\tau_U}|\psi_{ia}(t)|\leq B
\qquad\text{almost surely}.
\]
Consequently,
$\sup_{\tau\in\calT}|\phi_{ia}(\tau)|\leq B\tau_U$ and
\[
|\phi_{ia}(\tau)-\phi_{ia}(\tau')|
\leq B|\tau-\tau'|
\qquad\text{for all }\tau,\tau'\in\calT,
\]
so the process-regularity condition holds with the nonrandom envelope
$L_{ia}=B$. Under Assumption~\ref{asu:conditional-censoring} of the main paper, the same conclusions hold
for the stratum-specific processes $\phi_{ia}^{s}(\cdot)$ with a constant
depending only on $\epsilon$ and $\tau_U$, so the stratum-specific
regularity condition also holds automatically.
\end{lemma}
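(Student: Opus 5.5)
The bound comes down to controlling the stochastic integral $\int_{(0,t]}dM_i^{(a)}(u)/[y_a(u)\{1-\Delta\Lambda_a(u)\}]$ deterministically. I would split it into its counting-process part and its compensator part.

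For $0<u\leq t\leq\tau_U$ we have $y_a(u)\geq y_a(\tau_U)=\Pp\{X_i(a)\geq\tau_U\}\geq\epsilon$ by Assumption~\ref{asu:censoring}. The jump correction also satisfies $1-\Delta\Lambda_a(u)\geq\epsilon$. Hence the integrand weight $w_a(u)=1/[y_a(u)\{1-\Delta\Lambda_a(u)\}]$ is bounded by $\epsilon^{-2}$ on $(0,\tau_U]$.

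The counting-process part is bounded by a single jump. $N_i^{(a)}$ has at most one jump, of size one, so
\[
\Bigl|\int_{(0,t]}w_a\,dN_i^{(a)}\Bigr|\leq\epsilon^{-2}.
\]

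The compensator part needs a bound on the cumulative hazard. It equals $\int_{(0,t]}Y_i^{(a)}w_a\,d\Lambda_a$, and since $0\leq Y_i^{(a)}\leq1$ its absolute value is at most $\epsilon^{-2}\Lambda_a(\tau_U)$. To bound $\Lambda_a(\tau_U)$, I would use $\Lambda_a(t)=\int_{(0,t]}dF_{1a}/y_a$ together with $F_{1a}(\tau_U)\leq1$, which gives $\Lambda_a(\tau_U)\leq\epsilon^{-1}$. Combining this with $0\leq S_a\leq1$,
\[
\sup_{t\leq\tau_U}|\psi_{ia}(t)|\leq B:=\epsilon^{-2}(1+\epsilon^{-1})
\quad\text{almost surely.}
\]
This bound holds pathwise, because only $0\leq N_i^{(a)},Y_i^{(a)}\leq 1$ and the single-jump property are used.

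The remaining claims follow by integrating in $t$. Since $\phi_{ia}(\tau)=\int_0^\tau\psi_{ia}(t)\,dt$, we get $|\phi_{ia}(\tau)|\leq B\tau\leq B\tau_U$. For the Lipschitz property, $|\phi_{ia}(\tau)-\phi_{ia}(\tau')|=|\int_{\tau'}^{\tau}\psi_{ia}|\leq B|\tau-\tau'|$. The nonrandom constant $L_{ia}=B$ trivially has all moments, so the process-regularity condition holds.

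The stratum-specific case repeats the argument conditionally on $Z_i=s$. Replace $(S_a,\Lambda_a,y_a)$ by $(S_a(\cdot\mid s),\Lambda_{a,s},y_{a,s})$ and use the lower bounds of Assumption~\ref{asu:conditional-censoring}, which are uniform in $a$ and $s$. The same constant $B$ results.

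The only delicate step is the hazard bound $\Lambda_a(\tau_U)\leq\epsilon^{-1}$ when $\Lambda_a$ has atoms. The definition $\Lambda_a=\int dF_{1a}/y_a$ handles atoms directly, with no continuity assumption. Because $y_a$ is left-continuous, this gives $\Delta\Lambda_a(u)=\Delta F_{1a}(u)/y_a(u)$ with $y_a(u)\geq\epsilon$ for $u\leq\tau_U$, so jumps need no separate treatment.
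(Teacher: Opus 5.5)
Your proposal is correct and follows the paper's proof essentially line for line. It uses the same split of the martingale integral into its counting-process and compensator parts, the same bounds $y_a\geq\epsilon$, $1-\Delta\Lambda_a\geq\epsilon$ and $\Lambda_a(\tau_U)\leq F_{1a}(\tau_U)/y_a(\tau_U)\leq\epsilon^{-1}$, the same constant $B=\epsilon^{-2}(1+\epsilon^{-1})$, and the same integration step for the envelope and Lipschitz claims and for the stratum-specific case.
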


\begin{proof}
Fix an arm $a$ and $t\in[0,\tau_U]$. By Assumption~\ref{asu:censoring},
$y_a(u)\geq y_a(\tau_U)\geq\epsilon$ and
$1-\Delta\Lambda_a(u)\geq\epsilon$ for all $u\in(0,\tau_U]$, and
$\Lambda_a(\tau_U)\leq F_{1a}(\tau_U)/y_a(\tau_U)\leq1/\epsilon$.
Since $0\leq S_a(t)\leq1$ and
$N_i^{(a)}$ is a $\{0,1\}$-valued nondecreasing process,
\begin{align*}
|\psi_{ia}(t)|
&\leq
\int_{(0,t]}
\frac{dN_i^{(a)}(u)}{y_a(u)\{1-\Delta\Lambda_a(u)\}}
+
\int_{(0,t]}
\frac{Y_i^{(a)}(u)\,d\Lambda_a(u)}
{y_a(u)\{1-\Delta\Lambda_a(u)\}}\\
&\leq
\frac{1}{\epsilon^2}
+
\frac{\Lambda_a(\tau_U)}{\epsilon^2}
\leq
\frac{1}{\epsilon^2}\left(1+\frac{1}{\epsilon}\right)
=:B.
\end{align*}
The bound and the Lipschitz property of
$\phi_{ia}(\tau)=\int_0^\tau\psi_{ia}(t)\,dt$ follow immediately. The
stratum-specific statement is identical with
$(y_{a,s},\Lambda_{a,s},S_a(\cdot\mid s))$ in place of
$(y_a,\Lambda_a,S_a)$, using the stratum-level positivity bounds in
Assumption~\ref{asu:conditional-censoring}.
\end{proof}

\subsection{Cell-level covariance estimators for the marginal analysis}
\label{sec:supp-cov-details}

Let $n_a=\sum_{i=1}^N A_{ia}$ and define
\[
\widehat y_a(u)=\frac{Y_a(u)}{n_a}.
\]
For each subject, define the fitted counting-process residual
\[
\widehat M_{ia}(t)
=
N_{ia}(t)
-
\int_{(0,t]}Y_{ia}(u)\,d\widehat\Lambda_a(u).
\]
The estimated RMST influence function is
\begin{equation}
\label{eq:estimated-rmst-if}
\widehat\phi_{ia}(\tau)
=
-\int_0^\tau
\widehat S_a(t)
\left[
\int_{(0,t]}
\frac{d\widehat M_{ia}(u)}
{\widehat y_a(u)\{1-\Delta\widehat\Lambda_a(u)\}}
\right]dt.
\end{equation}
It is zero for subjects not assigned to arm $a$ because both
$N_{ia}$ and $Y_{ia}$ are then zero.

Let $\widehat p_s=n(s)/N$. For $n_a(s)>0$, estimate the conditional
influence-function mean by
\begin{equation}
\label{eq:mhat}
\widehat m_a(s,\tau)
=
\frac{1}{n_a(s)}
\sum_{i=1}^N
A_{ia}\ind(Z_i=s)\widehat\phi_{ia}(\tau).
\end{equation}
For finite-sample stabilization, center these estimates using
\begin{equation}
\label{eq:mtilde}
\overline m_a(\tau)
=
\sum_{s=1}^J\widehat p_s\widehat m_a(s,\tau),
\qquad
\widetilde m_a(s,\tau)
=
\widehat m_a(s,\tau)-\overline m_a(\tau).
\end{equation}
The conditional covariance kernel is estimated by
\begin{align}
\label{eq:chat}
\widehat c_a(s;\tau,\tau')
={}&
\frac{1}{n_a(s)-1}
\sum_{i=1}^N A_{ia}\ind(Z_i=s)
\{\widehat\phi_{ia}(\tau)-\widehat m_a(s,\tau)\}\notag\\
&\hspace{35mm}\times
\{\widehat\phi_{ia}(\tau')-\widehat m_a(s,\tau')\}.
\end{align}
Because $p_s\pi_a>0$, every $n_a(s)$ diverges in probability. Zero or
singleton cells are flagged as operationally unstable and their current
contribution may be set to zero without affecting the asymptotic result.
The denominator $n_a(s)-1$ is a finite-sample degrees-of-freedom correction.
Replacing it by $n_a(s)$ gives the asymptotically equivalent uncorrected
plug-in estimator, but the corrected version is used by default.

The bounded-balance component of the arm-specific covariance kernel is
\begin{align}
\label{supp-eq:base-kernel-estimator}
\widehat\Gamma_{\mu,ab}(\tau,\tau')
={}&
\sum_{s=1}^J\widehat p_s
\widetilde m_a(s,\tau)\widetilde m_b(s,\tau')\notag\\
&+
\ind(a=b)\frac{1}{\pi_a}
\sum_{s=1}^J\widehat p_s
\widehat c_a(s;\tau,\tau').
\end{align}

\subsection{Design simulation for the imbalance covariance}
\label{sec:supp-design-sim}

The matrix $\bm\Omega_D$ depends only on the stratum distribution and the
prespecified randomization algorithm, not on the survival outcomes. We
estimate it by reproducing the design. Set
\[
\widehat{\bm p}=(\widehat p_1,\ldots,\widehat p_J)^T
\]
and independently repeat the following procedure
$B_{\mathrm{MC}}$ times.

\begin{enumerate}[label=\textbf{Step \arabic*:},leftmargin=17mm]
\item Generate $N$ stratum profiles independently from
      $\widehat{\bm p}$. When Pocock--Simon minimization is based on several
      factors, generate the complete joint profile so that the same marginal
      imbalance scores can be reconstructed.
\item Apply exactly the randomization algorithm used by the trial, including
      its factor weights, allocation probabilities, target ratios, and
      tie-breaking rule.
\item Calculate all simulated within-stratum imbalances and stack
      \[
      \bm d_N^{(b)}
      =
      N^{-1/2}\operatorname{vec}
      \{D_{N,a}^{(b)}(s):a=0,\ldots,K;\ s=1,\ldots,J\}.
      \]
\end{enumerate}

Let
\[
\overline{\bm d}_N
=
\frac{1}{B_{\mathrm{MC}}}
\sum_{b=1}^{B_{\mathrm{MC}}}\bm d_N^{(b)}.
\]
The design covariance estimator is
\begin{equation}
\label{eq:omega-hat}
\widehat{\bm\Omega}_D
=
\frac{1}{B_{\mathrm{MC}}-1}
\sum_{b=1}^{B_{\mathrm{MC}}}
\{\bm d_N^{(b)}-\overline{\bm d}_N\}
\{\bm d_N^{(b)}-\overline{\bm d}_N\}^{T}.
\end{equation}
This is the multi-arm analogue of the minimization covariance estimator of
\citet{zhao2024consistent}.

Define $\widehat{\bm Q}(\tau)$ by replacing
$m_a(s,\tau)$ in the matrix $\bm Q(\tau)$ of Theorem~\ref{thm:general-car-fclt} with
$\widetilde m_a(s,\tau)$. The proposed covariance estimators are
\begin{align}
\label{eq:general-mu-kernel-estimator}
\widehat{\bm\Gamma}_\mu^{D}(\tau,\tau')
&=
\widehat{\bm\Gamma}_\mu(\tau,\tau')
+
\widehat{\bm Q}(\tau)
\widehat{\bm\Omega}_D
\widehat{\bm Q}(\tau')^T,\\
\label{supp-eq:general-theta-kernel-estimator}
\widehat{\bm\Gamma}_\theta^{D}(\tau,\tau')
&=
\bm C
\widehat{\bm\Gamma}_\mu^{D}(\tau,\tau')
\bm C^T.
\end{align}

\subsection{Multiplier construction and implementation}
\label{sec:supp-multiplier-details}

The covariance estimator in Section~\ref{sec:supp-cov-details} permits
direct Gaussian simulation on a time grid. For theoretical development and
subject-level implementation, it is useful to construct an equivalent
multiplier process that mirrors the decomposition in the proofs of
Theorems~\ref{thm:bounded-balance-fclt} and
\ref{thm:general-car-fclt}.

For a subject in stratum $s$, define
\begin{equation}
\label{supp-eq:bootstrap-summand}
\widehat u_{ia}(\tau)
=
\frac{A_{ia}}{\pi_a}
\{\widehat\phi_{ia}(\tau)-\widehat m_a(s,\tau)\}
+
\widetilde m_a(s,\tau),
\qquad s=Z_i.
\end{equation}
To reproduce the finite-sample-corrected covariance in
\eqref{eq:chat}, the implemented multiplier summand replaces the centered
residual in \eqref{supp-eq:bootstrap-summand} by
\[
\widehat\kappa_a(s)
\{\widehat\phi_{ia}(\tau)-\widehat m_a(s,\tau)\},
\qquad
\widehat\kappa_a(s)
=
\left\{
\frac{\pi_a n(s)}{n_a(s)-1}
\right\}^{1/2}.
\]
Because $n_a(s)/n(s)\to\pi_a$, this multiplier converges to one and does
not change the limiting process.
Let $\xi_1,\ldots,\xi_N$ be independent random multipliers, independent of
the data, with
\[
\E(\xi_i)=0,\qquad
\E(\xi_i^2)=1,\qquad
\E|\xi_i|^{2+\eta}<\infty
\]
for some $\eta>0$. Standard normal or Rademacher multipliers may be used.
Independently generate
\[
\bm D^*\mid\mathcal D_N
\sim
\mathcal N_L(\bm 0,\widehat{\bm\Omega}_D),
\]
where $\mathcal D_N$ denotes the observed data. The singularity of
$\widehat{\bm\Omega}_D$ causes no problem because $\bm D^*$ can be generated
from its nonnegative eigendecomposition.

The arm-specific and treatment-control bootstrap processes are
\begin{align}
\label{supp-eq:bootstrap-mu-process}
\mathbb G_{\mu,a}^*(\tau)
&=
\frac{1}{\sqrt N}
\sum_{i=1}^N\xi_i\widehat u_{ia}(\tau)
+
\widehat{\bm Q}_a(\tau)\bm D^*,\\
\label{eq:bootstrap-theta-process}
\mathbb G_\theta^*(\tau)
&=
\bm C\mathbb G_\mu^*(\tau),
\end{align}
where $\widehat{\bm Q}_a(\tau)$ is row $a$ of
$\widehat{\bm Q}(\tau)$. Under bounded-balance randomization,
$\widehat{\bm\Omega}_D=\bm 0$ and the second term in
\eqref{supp-eq:bootstrap-mu-process} is omitted.

On a fixed time grid, standard-normal subject multipliers make the first
term in \eqref{supp-eq:bootstrap-mu-process} conditionally Gaussian with its
empirical influence-function covariance. Since the independent design draw
is also Gaussian, sampling directly from
$\mathcal N\{\bm 0,\widehat{\bm\Gamma}_\theta^D\}$ is conditionally
distributionally identical and is the computational default. The explicit
subject-level normal implementation is retained as a verification option,
and Rademacher multipliers provide an asymptotically equivalent sensitivity
analysis.

\begin{remark}[Grid implementation]
\label{rmk:grid}
In computation, evaluate the process on a deterministic grid
$\calT_N\subset\calT$ containing $\tau_L$ and $\tau_U$. If the mesh width of
$\calT_N$ converges to zero, asymptotic equicontinuity implies that replacing
the continuum supremum by the grid maximum does not change the limiting
critical value. A clinically chosen finite set of restriction times may also
be used, in which case the result is an ordinary finite-dimensional
simultaneous confidence region.
\end{remark}

\paragraph{Algorithm for the CAR-adjusted marginal analysis.}

For a selected time grid $\calT_N$ and
$B_{\mathrm{boot}}$ bootstrap repetitions, the procedure is:
\begin{enumerate}[label=\textbf{Step \arabic*:},leftmargin=17mm]
\item Prespecify whether the confirmatory target is a fixed restriction time
      $\tau^\star$, the process over $\calT$, or both, and select the
      evaluation grid $\calT_N$ before examining treatment effects.
\item Fit the arm-specific Kaplan--Meier estimators and calculate
      $\widehat\theta_k(\tau)$,
      $\widehat\phi_{ia}(\tau)$,
      $\widehat m_a(s,\tau)$, and
      $\widehat c_a(s;\tau,\tau')$ on $\calT_N$.
\item For Pocock--Simon minimization, estimate
      $\widehat{\bm\Omega}_D$ using the design simulation in
      Section~\ref{sec:supp-design-sim}. For bounded-balance designs,
      set it equal to zero.
\item Calculate $\widehat{\bm\Gamma}_\theta^D$,
      $\widehat\sigma_{k,\mathrm{CAR}}$, and the observed statistics
      $V_{k,\mathrm{CAR}}^{\mathrm{fix}}$ or
      $V_{k,\mathrm{CAR}}^{\mathrm{proc}}$ from
      Definition~\ref{def:car-adjusted-statistics}.
\item For bootstrap repetition $b$, generate
      $\xi_1^{(b)},\ldots,\xi_N^{(b)}$ and an independent
      $\bm D^{*(b)}$, then evaluate
      $\mathbb G_\theta^{*(b)}(\tau)$ for every arm and grid point.
\item Studentize the stored multiplier processes, calculate their fixed-time
      or process-wise maxima, and obtain the simultaneous-band and
      single-step critical values.
\item Compare the observed statistics with the joint critical values. For
      step-down inference, recompute subset maxima from the same stored
      multiplier draws. No additional Kaplan--Meier fitting or design
      simulation is required.
\end{enumerate}

\subsection{Stratum-standardized estimator}
\label{sec:supp-standardized-details}

For a subject in stratum $s$, let
\[
y_{a,s}(t)
=
\E\{Y_i^{(a)}(t)\mid Z_i=s\},
\qquad
F_{1a,s}(t)
=
\E\{N_i^{(a)}(t)\mid Z_i=s\},
\]
and define
\[
\Lambda_{a,s}(t)
=
\int_{(0,t]}
\frac{dF_{1a,s}(u)}{y_{a,s}(u)}.
\]
Under Assumption~\ref{asu:conditional-censoring},
\[
S_a(t\mid s)
=
\prod_{(0,t]}\{1-d\Lambda_{a,s}(u)\}.
\]
For a subject with $Z_i=s$, define
\begin{align}
\label{eq:stratum-martingale-residual}
M_{ia}^{s}(t)
&=
N_i^{(a)}(t)
-
\int_{(0,t]}Y_i^{(a)}(u)\,d\Lambda_{a,s}(u),\\
\label{eq:stratum-rmst-if}
\phi_{ia}^{s}(\tau)
&=
-\int_0^\tau S_a(t\mid s)
\left[
\int_{(0,t]}
\frac{dM_{ia}^{s}(u)}
{y_{a,s}(u)\{1-\Delta\Lambda_{a,s}(u)\}}
\right]dt.
\end{align}
The defining property is
\[
\E\{\phi_{ia}^{s}(\tau)\mid Z_i=s\}=0.
\]
Write
\begin{equation}
\label{supp-eq:stratum-if-covariance}
v_a(s;\tau,\tau')
=
\E\{\phi_{ia}^{s}(\tau)\phi_{ia}^{s}(\tau')\mid Z_i=s\}.
\end{equation}

Define the observed stratum-specific counting and at-risk processes
\begin{align*}
N_{a,s}(t)
&=
\sum_{i=1}^N
A_{ia}\ind(Z_i=s)\ind(X_i\leq t,\Delta_i=1),\\
Y_{a,s}(t)
&=
\sum_{i=1}^N
A_{ia}\ind(Z_i=s)\ind(X_i\geq t).
\end{align*}
The corresponding Nelson--Aalen, Kaplan--Meier, and RMST estimators are
\begin{align}
\label{eq:stratum-km-estimators}
\widehat\Lambda_{a,s}(t)
&=
\int_{(0,t]}
\frac{\ind\{Y_{a,s}(u)>0\}}{Y_{a,s}(u)}\,dN_{a,s}(u),\notag\\
\widehat S_a(t\mid s)
&=
\prod_{(0,t]}\{1-d\widehat\Lambda_{a,s}(u)\},\notag\\
\widehat\mu_a(s,\tau)
&=
\int_0^\tau\widehat S_a(t\mid s)\,dt.
\end{align}

Let
\[
\widehat\pi_a(s)=\frac{n_a(s)}{n(s)}
\]
denote the realized allocation fraction in stratum $s$. For a subject with
$A_i=a$ and $Z_i=s$, define
\begin{align}
\label{eq:estimated-stratum-residual}
\widehat M_{ia}^{s}(t)
&=
\ind(X_i\leq t,\Delta_i=1)
-
\int_{(0,t]}\ind(X_i\geq u)\,d\widehat\Lambda_{a,s}(u),\\
\label{eq:estimated-stratum-if}
\widehat\phi_{ia}^{s}(\tau)
&=
-\int_0^\tau\widehat S_a(t\mid s)
\left[
\int_{(0,t]}
\frac{d\widehat M_{ia}^{s}(u)}
{\widehat y_{a,s}(u)
\{1-\Delta\widehat\Lambda_{a,s}(u)\}}
\right]dt,
\qquad
\widehat y_{a,s}(u)=\frac{Y_{a,s}(u)}{n_a(s)}.
\end{align}
Center the estimated influence functions within each treatment--stratum
cell:
\begin{align}
\label{eq:stratum-if-centering}
\overline\phi_a(s,\tau)
&=
\frac{1}{n_a(s)}
\sum_{i=1}^N
A_{ia}\ind(Z_i=s)\widehat\phi_{ia}^{s}(\tau),\\
\label{eq:stratum-vhat}
\widehat v_a(s;\tau,\tau')
&=
\frac{1}{n_a(s)-1}
\sum_{i=1}^N A_{ia}\ind(Z_i=s)
\{\widehat\phi_{ia}^{s}(\tau)-\overline\phi_a(s,\tau)\}\notag\\
&\hspace{35mm}\times
\{\widehat\phi_{ia}^{s}(\tau')-\overline\phi_a(s,\tau')\}.
\end{align}
The $n_a(s)-1$ denominator is the same finite-sample
degrees-of-freedom correction used in \eqref{eq:chat}, and it is
asymptotically immaterial because every treatment--stratum cell diverges.
Also set
\begin{equation}
\label{eq:stratum-ghat}
\widehat g_a(s,\tau)
=
\widehat\mu_a(s,\tau)
-
\widehat\mu_a^{\mathrm{st}}(\tau).
\end{equation}
The arm-specific and treatment-control covariance estimators are
\begin{align}
\label{eq:standardized-mu-kernel-estimator}
\widehat\Gamma_{\mu,ab}^{\mathrm{st}}(\tau,\tau')
={}&
\sum_{s=1}^J
\widehat p_s
\widehat g_a(s,\tau)\widehat g_b(s,\tau')\notag\\
&+
\ind(a=b)
\sum_{s=1}^J
\frac{\widehat p_s}{\widehat\pi_a(s)}
\widehat v_a(s;\tau,\tau'),\\
\label{supp-eq:standardized-theta-kernel-estimator}
\widehat{\bm\Gamma}_\theta^{\mathrm{st}}(\tau,\tau')
&=
\bm C
\widehat{\bm\Gamma}_\mu^{\mathrm{st}}(\tau,\tau')
\bm C^T.
\end{align}

For the multiplier approximation, define
\begin{equation}
\label{eq:standardized-multiplier-summand}
\widehat u_{ia}^{\mathrm{st}}(\tau)
=
\widehat g_a(Z_i,\tau)
+
\frac{A_{ia}}{\widehat\pi_a(Z_i)}
\{\widehat\phi_{ia}^{Z_i}(\tau)
-\overline\phi_a(Z_i,\tau)\},
\end{equation}
where the second term is evaluated only when $A_i=a$. With independent
mean-zero, variance-one multipliers $\xi_1,\ldots,\xi_N$, set
\begin{align}
\label{supp-eq:standardized-bootstrap-process}
\mathbb G_{\mu,a}^{\mathrm{st},*}(\tau)
&=
\frac{1}{\sqrt N}
\sum_{i=1}^N
\xi_i\widehat u_{ia}^{\mathrm{st}}(\tau),&
\mathbb G_\theta^{\mathrm{st},*}(\tau)
&=
\bm C\mathbb G_\mu^{\mathrm{st},*}(\tau).
\end{align}
No independent design draw $\bm D^*$ is included.
In the finite-sample-corrected implementation, the centered residual in
\eqref{eq:standardized-multiplier-summand} is multiplied by
\[
\widehat\kappa_a^{\mathrm{st}}(s)
=
\left\{\frac{n_a(s)}{n_a(s)-1}\right\}^{1/2},
\]
which exactly matches \eqref{eq:stratum-vhat} and converges to one.

\paragraph{Algorithm for the standardized analysis.}

The standardized analysis can be implemented as follows.
\begin{enumerate}[label=\textbf{Step \arabic*:},leftmargin=17mm]
\item Use the joint strata prespecified in the randomization algorithm and
      choose $\tau^\star$ or $\calT$ before examining treatment effects.
\item Fit one Kaplan--Meier curve in each treatment--stratum cell and
      calculate $\widehat\mu_a(s,\tau)$.
\item Average the stratum-specific RMST estimates using the same
      $\widehat p_s=n(s)/N$ for every arm to obtain
      $\widehat\btheta^{\mathrm{st}}(\tau)$.
\item Calculate \eqref{supp-eq:standardized-theta-kernel-estimator},
      $\widehat\sigma_{k,\mathrm{st}}$, and the observed statistics in
      Definition~\ref{def:standardized-statistics}.
\item Generate the multiplier process
      \eqref{supp-eq:standardized-bootstrap-process}, studentize each draw using
      $\widehat\sigma_{k,\mathrm{st}}$, and store the fixed-time or
      process-wise maxima.
\item Apply the simultaneous-band, single-step, or step-down rule using the
      stored multiplier draws. No simulation of the original randomization
      algorithm is required.
\end{enumerate}

\section{Additional remarks}
\label{sec:supp-remarks}

\begin{remark}[Why a matrix rather than one variance parameter]
\label{rmk:full-omega}
The RMST process uses different stratum-specific functions
$m_a(s,\tau)$ for different arms and restriction times. Therefore, a single
scalar imbalance variance cannot generally recover
$\bm R(\tau)\bm\Omega_D\bm R(\tau')^T$. Retaining the full matrix also covers
unequal stratum prevalences and interactions among the randomization factors.
No inverse of $\bm\Omega_D$ is required.
\end{remark}

\begin{remark}[Two-sided hypotheses]
\label{rmk:two-sided-fwer}
For two-sided fixed-time hypotheses
$H_{0k}:\theta_k(\tau^\star)=0$, replace $V_k^{\mathrm{fix}}$ and its
bootstrap analogue by their absolute values. For the functional equality
$H_{0k}:\theta_k(\tau)=0$ for every $\tau\in\calT$, use the supremum of the
absolute studentized process. The proof of
Corollary~\ref{thm:strong-fwer} is unchanged.
\end{remark}

\begin{remark}[Nonprognostic CAR strata]
If
\[
m_a(s,\tau)=0
\quad\text{for every }a,s,\tau,
\]
then random variation in the stratum composition contributes no first-order
variance. The treatment-control covariance reduces to
\[
\Gamma_{\theta,k\ell}(\tau,\tau')
=
\ind(k=\ell)\frac{1}{\pi_k}
\E\{c_k(Z_i;\tau,\tau')\}
+
\frac{1}{\pi_0}
\E\{c_0(Z_i;\tau,\tau')\}.
\]
The off-diagonal covariance is then entirely due to the shared control.
\end{remark}

\begin{remark}[Continuous event times]
When $\Lambda_a$ is continuous,
$1-\Delta\Lambda_a(u)=1$, and the RMST influence function can be written as
\[
\phi_{ia}(\tau)
=
-\int_{(0,\tau]}
\frac{\int_u^\tau S_a(t)\,dt}{y_a(u)}
\,dM_i^{(a)}(u).
\]
This equivalent form is often more convenient for implementation.
\end{remark}

\begin{remark}[Why the global null is unnecessary]
Every step of the proofs of Theorems~\ref{thm:bounded-balance-fclt} to~\ref{thm:standardized-fclt} is arm-specific and centered at
the true survival
function $S_a$. No equality among $S_0,\ldots,S_K$ is used. Hence the same
joint Gaussian approximation remains valid when some treatment-control null
hypotheses are true and others are false. This property is the key input for
the proof of strong family-wise error control in
Section~\ref{sec:supp-fwer}.
\end{remark}

\supppart{Proofs}

\section{Uniform asymptotic linearity}
\label{sec:supp-linearity}

The linearization arguments below repeatedly use empirical processes formed
within treatment--stratum cells, whose sizes are random under CAR. The next
lemma makes the conditioning argument explicit. For an arm $a$, a stratum
$s$, and a class $\mathcal F$ of functions of $\{X_i(a),\Delta_i(a)\}$, define
the cell empirical process
\[
\mathbb H_{N,as}(f)
=
\frac{1}{\sqrt{n_a(s)}}
\sum_{i=1}^N
A_{ia}\ind(Z_i=s)
\left[f\{X_i(a),\Delta_i(a)\}-P_sf\right],
\qquad f\in\mathcal F,
\]
where $P_s$ is the conditional law of $\{X_i(a),\Delta_i(a)\}$ given
$Z_i=s$, and set $\mathbb H_{N,as}=0$ when $n_a(s)=0$.

\begin{lemma}[Cell empirical processes with random cell sizes]
\label{lem:cell-processes}
Suppose Assumptions~\ref{asu:sampling} and~\ref{asu:randomization} of the main paper hold and that
$n_a(s)\to\infty$ in probability for every arm $a$ and stratum $s$. Let
$\mathcal F$ be a $P_s$-Donsker class with a bounded envelope, for every
$s$. Then, jointly over the finitely many cells, the processes
$\mathbb H_{N,as}$ converge in distribution in $\ell^\infty(\mathcal F)$ to
independent tight $P_s$-Brownian bridges $\mathbb H_{as}$. Moreover, the
convergence is stable with respect to $(\mathcal Z_N,\mathcal A_N)$: for
every bounded Lipschitz function $h$ on the product space and every
bounded random variable $\xi_N$ that is measurable with respect to
$(\mathcal Z_N,\mathcal A_N)$ and converges in distribution jointly with
other such variables,
\[
\E^*\left[h\{(\mathbb H_{N,as})_{a,s}\}\,\xi_N\right]
-
\E\left[h\{(\mathbb H_{as})_{a,s}\}\right]\E(\xi_N)
\longrightarrow0.
\]
\end{lemma}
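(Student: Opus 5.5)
The plan is to condition on the design $\sigma$-field $\mathcal G_N=\sigma(\mathcal Z_N,\mathcal A_N)$, where $\mathcal Z_N=(Z_1,\ldots,Z_N)$ and $\mathcal A_N=(A_1,\ldots,A_N)$. Under this conditioning each cell process is an ordinary empirical process built from a \emph{fixed} number of i.i.d.\ observations.

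\emph{Step 1 (conditional i.i.d.\ structure).} By Assumption~\ref{asu:sampling}, conditionally on $\mathcal Z_N$ the full potential-data vectors are independent across subjects, and subject $i$ has the conditional law given $Z_i$. Assumption~\ref{asu:randomization} says the allocation is conditionally independent of the outcomes given $\mathcal Z_N$. Therefore, conditionally on $\mathcal G_N$, the following hold:
\begin{itemize}
\item[(a)] the vectors $\{X_i(a),\Delta_i(a)\}$, over the subjects with $A_i=a$ and $Z_i=s$, are i.i.d.\ with law $P_s$;
\item[(b)] observations in distinct cells are independent, since distinct cells involve distinct subjects.
\end{itemize}
Two cautions apply. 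First, different arms $a$ use different coordinates of the same subject's potential vector, but a subject belongs to only one cell. Second, $X_i(a)$ for $A_i\neq a$ never enters $\mathbb H_{N,as}$. So, given $\mathcal G_N$, the vector $(\mathbb H_{N,as})_{a,s}$ is distributed as a collection of independent standard empirical processes $\mathbb G_{n_a(s)}^{(as)}$ with deterministic sample sizes $n_a(s)$.

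\emph{Step 2 (uniform convergence for deterministic sizes).} Since $\mathcal F$ is $P_s$-Donsker with bounded envelope, $\mathbb G_n^{(as)}\rightsquigarrow\mathbb H_{as}$ as $n\to\infty$. In bounded-Lipschitz form this reads
\[
\beta_{as}(n)=\sup_{h\in\mathrm{BL}_1}\bigl|\E^*h(\mathbb G_n^{(as)})-\E h(\mathbb H_{as})\bigr|\to0,
\]
with asymptotic measurability included. For finitely many independent components the product law also converges. The cleanest route is to take $h$ in a product-BL class and telescope: replace one component at a time, using independence and the fact that a BL function of one argument, with the others fixed, is still $\mathrm{BL}_1$. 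This gives
\[
\Bigl|\E^*\bigl[h\{(\mathbb H_{N,as})\}\mid\mathcal G_N\bigr]-\E h\{(\mathbb H_{as})\}\Bigr|\le\sum_{a,s}\beta_{as}\{n_a(s)\}=:\Delta_N.
\]
\textbf{This is the main obstacle.} The difficulty is measurability and outer expectations under conditioning, because the product-space BL metric is not automatically controlled by the marginals. To handle it I would first establish finite-dimensional convergence together with asymptotic tightness in the product space; tightness is inherited componentwise. I would then invoke the fact that BL convergence characterizes weak convergence for separable tight limits (van der Vaart and Wellner, Ch.~1.12). Outer-expectation issues are handled by noting that, for fixed cell sizes, the process is a fixed measurable map of finitely many i.i.d.\ blocks, so Fubini for outer integrals applies conditionally.

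\emph{Step 3 (unconditioning and stability).} Because $n_a(s)\to\infty$ in probability and $\beta_{as}$ is bounded by $2$ and tends to $0$, we have $\Delta_N\to0$ in probability, and hence $\E\Delta_N\to0$ by bounded convergence. For $\xi_N$ bounded and $\mathcal G_N$-measurable, write
\[
\E^*[h\,\xi_N]-\E h(\mathbb H)\,\E\xi_N=\E\Bigl[\xi_N\bigl\{\E^*(h\mid\mathcal G_N)-\E h(\mathbb H)\bigr\}\Bigr],
\]
which is bounded in absolute value by $\|\xi_N\|_\infty\E\Delta_N\to0$. This gives the stability statement. Taking $\xi_N\equiv1$ yields joint convergence in distribution to independent Brownian bridges. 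The joint convergence of $\xi_N$ with other design variables is needed only so that the product limit makes sense in later applications; the displayed identity itself requires nothing more.
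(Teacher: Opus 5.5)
Your proposal is correct and follows essentially the same route as the paper: condition on $(\mathcal Z_N,\mathcal A_N)$ so that the cells become independent ordinary empirical processes with fixed sizes, bound the bounded-Lipschitz distance of the joint conditional law by the sum $\sum_{a,s}\delta_s\{n_a(s)\}$ of single-cell distances, and remove the conditioning by dominated convergence, with the factor $\|\xi_N\|_\infty$ giving stability. Your component-by-component telescoping argument is exactly what justifies the paper's brief appeal to ``the product structure,'' and your attention to outer expectations under conditioning addresses a measurability point that the paper leaves implicit.
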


\begin{proof}
Fix the complete stratum and assignment sequences. By Assumption~\ref{asu:sampling}, the
full-data vectors are independent and identically distributed. By
Assumption~\ref{asu:randomization}, the allocation sequence is independent of the potential
outcomes given $(Z_1,\ldots,Z_N)$. Hence, conditionally on
$(\mathcal Z_N,\mathcal A_N)$, the outcomes of the subjects in cell $(a,s)$
are independent draws from $P_s$, and outcomes in different cells are
independent. Given $n_a(s)=m_{as}$ for every cell, the joint conditional law
of $(\mathbb H_{N,as})_{a,s}$ is therefore the product of the laws of
ordinary empirical processes $\mathbb H_{m_{as}}$ based on $m_{as}$
independent draws from $P_s$.

For a single cell, let
\[
\delta_s(m)
=
\sup_{h\in\mathrm{BL}_1}
\left|\E^*h(\mathbb H_m)-\E h(\mathbb H_s)\right|.
\]
Because $\mathcal F$ is $P_s$-Donsker, $\delta_s(m)\to0$ as $m\to\infty$
\citep[Section~1.12]{vaart1996weak}, and $0\leq\delta_s(m)\leq2$. The
product structure extends this to all cells: the bounded-Lipschitz distance
between the joint conditional law and the law of independent limits is at
most $\delta(\bm m)=\sum_{a,s}\delta_s(m_{as})$. Therefore
\[
\left|
\E^*\left[h\{(\mathbb H_{N,as})_{a,s}\}\,\xi_N\right]
-
\E\left[h\{(\mathbb H_{as})_{a,s}\}\right]\E(\xi_N)
\right|
\leq
\|\xi_N\|_\infty\,
\E\left[\delta\{(n_a(s))_{a,s}\}\right]
\]
for $h\in\mathrm{BL}_1$. Since every $n_a(s)\to\infty$ in probability and
$\delta$ is bounded, the right-hand side tends to zero by dominated
convergence. Taking $\xi_N\equiv1$ gives the joint convergence in
distribution, and general $\xi_N$ gives stability.
\end{proof}

\begin{lemma}[Arm-specific RMST expansion]
\label{lem:ual}
Under Assumptions~\ref{asu:sampling}--\ref{asu:censoring} of the main paper, for every
$a=0,\ldots,K$,
\begin{equation}
\label{eq:s-ual}
\sup_{\tau\in\calT}
\left|
\sqrt N\{\widehat\mu_a(\tau)-\mu_a(\tau)\}
-
\frac{1}{\sqrt N}
\sum_{i=1}^N\frac{A_{ia}}{\pi_a}\phi_{ia}(\tau)
\right|
=o_p(1).
\end{equation}
The expansions hold jointly over the $K+1$ arms.
\end{lemma}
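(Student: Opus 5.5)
The approach is the functional delta method for the Kaplan--Meier map, applied arm by arm, with the cell empirical processes of Lemma~\ref{lem:cell-processes} supplying the required empirical-process control.

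Fix arm $a$. Write $n_a=\sum_s n_a(s)$. The normalized arm-level processes $Y_a(t)/n_a$ and $N_a(t)/n_a$ decompose over strata as
\[
\sum_s \{n_a(s)/n_a\}\times(\text{cell averages}).
\]
Consider the classes $\{\ind(x\geq t)\}$ and $\{\ind(x\leq t,\delta=1)\}$ for $t\in[0,\tau_U]$. These are bounded VC, hence Donsker, classes. Lemma~\ref{lem:cell-processes} then gives $\sqrt{n_a(s)}$-rate uniform convergence within each cell.

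Centering at the arm-level targets introduces a cross-cell term. The mixture weights $n_a(s)/n_a$ differ from $p_s$. Under Assumption~\ref{asu:bounded-balance}, $n_a(s)=\pi_a n(s)+O_p(1)$, so $n_a(s)/n_a=n(s)/N+O_p(N^{-1})$. The weights therefore fluctuate exactly like the empirical stratum frequencies $\widehat p_s$, which are sums of iid indicators.

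The key identity is then
\[
\sqrt N\{Y_a/n_a-y_a\}=\frac{1}{\sqrt N}\sum_i \frac{A_{ia}}{\pi_a}\{Y_i^{(a)}-y_a\}+o_p(1)
\]
uniformly in $t$. It comes from writing
\[
\frac{1}{\sqrt N}\sum_i\frac{A_{ia}}{\pi_a}\{Y_i^{(a)}-y_a\}=\sum_s\frac{1}{\sqrt N\pi_a}\sum_{i\in(a,s)}\{Y_i^{(a)}-y_{a,s}\}+\sum_s\frac{n_a(s)}{\sqrt N\pi_a}(y_{a,s}-y_a),
\]
and checking that the same two pieces arise from $Y_a/n_a$ up to $O_p(N^{-1/2})$ errors. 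In this step $n_a/N\to\pi_a$ is used together with $\sum_s p_s(y_{a,s}-y_a)=0$. The identity for $N_a/n_a$ is the same.

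Next, apply Hadamard differentiability of the map $(y,F_1)\mapsto\Lambda\mapsto S$, via the Nelson--Aalen and product-integral maps, on $D[0,\tau_U]$. This needs the positivity bounds $y_a\geq\epsilon$ and $1-\Delta\Lambda_a\geq\epsilon$ from Assumption~\ref{asu:censoring}; these also guarantee that $Y_a(\tau_U)>0$ with probability tending to one. The functional delta method then gives
\[
\sqrt N(\widehat S_a-S_a)=N^{-1/2}\sum_i (A_{ia}/\pi_a)\psi_{ia}+o_p(1)
\]
uniformly on $[0,\tau_U]$. Applying the derivative to the linear representation gives exactly $\psi_{ia}$ in \eqref{eq:s-survival-if}, since the derivative is linear and $\psi_{ia}$ is its image of the single-subject perturbation.

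Integration over $[0,\tau]$ is a bounded linear map from $\ell^\infty[0,\tau_U]$ to $\ell^\infty(\calT)$, so \eqref{eq:s-ual} follows. Jointness over arms is automatic: every remainder is $o_p(1)$ in sup norm for each arm, and there are finitely many arms.

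The main obstacle is that the delta method needs weak convergence, or at least tightness, of the input processes, not just a linear representation. I would obtain it directly from Lemma~\ref{lem:cell-processes}: the within-cell parts are tight by the cell Brownian-bridge limits, and the composition part is a finite-dimensional $O_p(1)$ vector times deterministic bounded functions. I would also handle the remainder with the random denominator $\pi_a$ versus $n_a/N$ by the same $O_p(N^{-1})$ bound.

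Alternatively, for Assumption~\ref{asu:general-imbalance} designs the identity acquires an extra $\bm Q\bm d_N$-type term. That is not needed here, but the same route would apply.
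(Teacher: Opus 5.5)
Your proposal is correct and follows essentially the same route as the paper. Both arguments use a treatment--stratum cell decomposition with Lemma~\ref{lem:cell-processes} to get $\sqrt N$-tightness (in fact weak convergence) of the at-risk and counting inputs, then apply the functional delta method through the Nelson--Aalen, product-integral, and integration maps. The only difference is cosmetic: the paper normalizes by $N$ and differentiates at $(\pi_a y_a,\pi_aF_{1a})$, so the factor $1/\pi_a$ comes straight out of the derivative, whereas your normalization by $n_a$ needs the extra, harmless replacement of $N/n_a$ by $1/\pi_a$; the Nelson--Aalen estimator itself is invariant to this choice.
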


\begin{proof}
For a fixed arm $a$, introduce the scaled empirical at-risk and failure
processes
\begin{align*}
\mathbb Y_{N,a}(t)
&=
\frac{1}{N}\sum_{i=1}^N A_{ia}Y_i^{(a)}(t),\\
\mathbb F_{N,1a}(t)
&=
\frac{1}{N}\sum_{i=1}^N A_{ia}N_i^{(a)}(t).
\end{align*}
Their population limits are $\pi_a y_a(t)$ and $\pi_a F_{1a}(t)$,
respectively. The no-ties condition in Assumption~\ref{asu:censoring} ensures that the
product-limit functional identifies the marginal event-time survival
function even when either marginal distribution has jumps.

To see the first assertion, consider a generic bounded process
$f_t(\mathcal O_i^\star)$ and let
\[
\bar f_t(s)=\E\{f_t(\mathcal O_i^\star)\mid Z_i=s\}.
\]
The following decomposition is exact:
\begin{align}
\label{eq:s-generic-decomposition}
\frac{1}{N}\sum_{i=1}^N A_{ia}f_t(\mathcal O_i^\star)
-\pi_a\E\{f_t(\mathcal O_i^\star)\}
={}&
\frac{1}{N}\sum_{i=1}^N
A_{ia}\{f_t(\mathcal O_i^\star)-\bar f_t(Z_i)\}
\notag\\
&+
\pi_a
\left[
\frac{1}{N}\sum_{i=1}^N\bar f_t(Z_i)
-\E\{f_t(\mathcal O_i^\star)\}
\right]\notag\\
&+
\frac{1}{N}
\sum_{s=1}^J D_{N,a}(s)\bar f_t(s).
\end{align}
Conditional on the strata and assignments, the first term on the
right-hand side has mean zero. The second term converges uniformly to zero by
the ordinary uniform law of large numbers. Since $J$ is fixed and
$D_{N,a}(s)=O_p(1)$, the final term is $o_p(1)$ uniformly in $t$.
Applying \eqref{eq:s-generic-decomposition} to the indicator classes defining
$Y_i^{(a)}(t)$ and $N_i^{(a)}(t)$ gives
\begin{align}
\label{eq:s-uniform-risk}
\sup_{0\leq t\leq\tau_U}
|\mathbb Y_{N,a}(t)-\pi_a y_a(t)|
&=o_p(1),\\
\label{eq:s-uniform-event}
\sup_{0\leq t\leq\tau_U}
|\mathbb F_{N,1a}(t)-\pi_a F_{1a}(t)|
&=o_p(1).
\end{align}

Uniform consistency alone does not justify the functional delta method, because the
input processes must also be asymptotically tight at the $\sqrt N$ scale.
This follows from a treatment--stratum cell decomposition. Write
$F_{1a,s}(t)=\E\{N_i^{(a)}(t)\mid Z_i=s\}$ and decompose
\begin{align}
\label{eq:s-cell-decomposition}
\sqrt N\{\mathbb F_{N,1a}(t)-\pi_aF_{1a}(t)\}
={}&
\sum_{s=1}^J
\sqrt{\frac{n_a(s)}{N}}
\left[
\frac{1}{\sqrt{n_a(s)}}
\sum_{i=1}^N
A_{ia}\ind(Z_i=s)
\{N_i^{(a)}(t)-F_{1a,s}(t)\}
\right]\notag\\
&+
\sum_{s=1}^J
F_{1a,s}(t)\,
\frac{n_a(s)-N\pi_ap_s}{\sqrt N}.
\end{align}
By Assumption~\ref{asu:randomization}, conditional on the complete stratum and assignment
sequences, the subjects in cell $(a,s)$ carry independent copies of the
conditional law of $\{X_i(a),\Delta_i(a)\}$ given $Z_i=s$, and
$n_a(s)/N\to\pi_ap_s>0$ in probability. The classes of indicators
$\{\ind(x\leq t,\delta=1):t\in[0,\tau_U]\}$ and
$\{\ind(x\geq t):t\in[0,\tau_U]\}$ are Vapnik--Chervonenkis classes, hence
Donsker for the cell-specific sampling distribution
\citep[Section~2.5]{vaart1996weak}. Each bracketed cell process in
\eqref{eq:s-cell-decomposition} is the process $\mathbb H_{N,as}$ of
Lemma~\ref{lem:cell-processes} indexed by this class. The lemma applies
because $n_a(s)/N\to\pi_ap_s>0$ in probability, so these processes converge
jointly and stably to independent Brownian bridges, and
$\sqrt{n_a(s)/N}\to\sqrt{\pi_ap_s}$ in probability. For the
second term in \eqref{eq:s-cell-decomposition},
\[
\frac{n_a(s)-N\pi_ap_s}{\sqrt N}
=
\frac{D_{N,a}(s)}{\sqrt N}
+
\pi_a\frac{n(s)-Np_s}{\sqrt N}
=O_p(1)
\]
by Assumption~\ref{asu:bounded-balance} and the multinomial central limit theorem, and
$F_{1a,s}(\cdot)$ is a fixed monotone bounded function. Since $J$ is fixed,
$\sqrt N(\mathbb F_{N,1a}-\pi_aF_{1a})$ is asymptotically tight in
$\ell^\infty[0,\tau_U]$, and the identical argument applies to
$\sqrt N(\mathbb Y_{N,a}-\pi_ay_a)$.

The Nelson--Aalen estimator can be written as
\[
\widehat\Lambda_a(t)
=
\int_{(0,t]}
\frac{d\mathbb F_{N,1a}(u)}
{\mathbb Y_{N,a}(u)},
\]
on the event $\{\inf_{t\leq\tau_U}\mathbb Y_{N,a}(t)\geq\pi_a\epsilon/2\}$,
whose probability tends to one by \eqref{eq:s-uniform-risk} and
Assumption~\ref{asu:censoring}. The map
\[
(H,F)\longmapsto
\left[t\longmapsto\int_{(0,t]}\frac{dF(u)}{H(u)}\right]
\]
is Hadamard differentiable at
$(\pi_a y_a,\pi_a F_{1a})$ on the domain of pairs $(H,F)$ with
$\inf_{t\leq\tau_U}H(t)\geq\pi_a\epsilon/2$ and $F$ of total variation at
most one, tangentially to $D[0,\tau_U]^2$, where the integral against a
perturbation of $F$ that is not of bounded variation is defined by
integration by parts \citep[Section~3.9]{vaart1996weak}. Both the
population pair and, on an event with probability tending to one, the
empirical pair $(\mathbb Y_{N,a},\mathbb F_{N,1a})$ lie in this domain by
Assumption~\ref{asu:censoring} and \eqref{eq:s-uniform-risk}. The decomposition
\eqref{eq:s-cell-decomposition} in fact gives weak convergence, not only
tightness, of the $\sqrt N$-scaled input processes. The cell processes
converge stably by Lemma~\ref{lem:cell-processes}. The composition term is
measurable with respect to $\mathcal Z_N$ and converges by the multinomial
central limit theorem, and the imbalance term is $o_p(1)$ under
Assumption~\ref{asu:bounded-balance}. Stability lets the limits be combined: the cell limits are
independent of the composition limit, so the sum converges in
distribution. The functional delta method
\citep[Theorem~3.9.4]{vaart1996weak} therefore yields the uniform
expansion
\begin{equation}
\label{eq:s-na-expansion}
\sqrt N\{\widehat\Lambda_a(t)-\Lambda_a(t)\}
=
\frac{1}{\sqrt N}
\sum_{i=1}^N
\frac{A_{ia}}{\pi_a}
\int_{(0,t]}
\frac{dM_i^{(a)}(u)}{y_a(u)}
+o_p(1),
\end{equation}
where the remainder is uniform in $t\in[0,\tau_U]$.

For completeness, the summand in \eqref{eq:s-na-expansion} follows by
combining the two derivative terms:
\begin{align*}
&\frac{A_{ia}}{\pi_a}
\left\{
\int_{(0,t]}\frac{dN_i^{(a)}(u)}{y_a(u)}
-
\int_{(0,t]}
\frac{Y_i^{(a)}(u)}{y_a(u)}\,d\Lambda_a(u)
\right\}\\
&\hspace{35mm}
=
\frac{A_{ia}}{\pi_a}
\int_{(0,t]}\frac{dM_i^{(a)}(u)}{y_a(u)}.
\end{align*}

Next consider the product-integral map
\[
\Lambda\longmapsto
\left[t\longmapsto\prod_{(0,t]}\{1-d\Lambda(u)\}\right].
\]
At $\Lambda_a$, its derivative in direction $h$ is
\begin{equation}
\label{eq:s-product-derivative}
\dot{\mathcal P}_{\Lambda_a}(h)(t)
=
-S_a(t)
\int_{(0,t]}
\frac{dh(u)}{1-\Delta\Lambda_a(u)}.
\end{equation}
Hadamard differentiability of the product-integral map on cumulative
hazards with uniformly bounded variation and jumps bounded away from one is
established by \citet{gill1990survey}, with related results in
\citet[Section~3.9]{vaart1996weak} and
\citet[Chapter~II]{andersen1993statistical}. The jump-positivity condition
in Assumption~\ref{asu:censoring} places $\Lambda_a$ in this domain, so the derivative is
well-defined and continuous. Combining
\eqref{eq:s-na-expansion} and \eqref{eq:s-product-derivative} gives
\begin{equation}
\label{eq:s-survival-expansion}
\sup_{0\leq t\leq\tau_U}
\left|
\sqrt N\{\widehat S_a(t)-S_a(t)\}
-
\frac{1}{\sqrt N}\sum_{i=1}^N
\frac{A_{ia}}{\pi_a}\psi_{ia}(t)
\right|
=o_p(1).
\end{equation}

Finally, the integration map
\[
f\longmapsto
\left[\tau\longmapsto\int_0^\tau f(t)\,dt\right]
\]
is continuous from $\ell^\infty[0,\tau_U]$ into
$\ell^\infty(\calT)$. Integrating \eqref{eq:s-survival-expansion} and using
\eqref{eq:s-rmst-if} proves \eqref{eq:s-ual}. Since $K$ is fixed, the
expansions hold jointly over all arms.
\end{proof}

\section{CAR decomposition and functional central limit theorem}
\label{sec:supp-car-clt}

\begin{lemma}[Joint limit of the CAR-weighted influence processes]
\label{lem:car-clt}
Define
\[
\mathbb U_{N,a}(\tau)
=
\frac{1}{\sqrt N}
\sum_{i=1}^N\frac{A_{ia}}{\pi_a}\phi_{ia}(\tau),
\qquad
a=0,\ldots,K.
\]
Under Assumptions~\ref{asu:sampling}--\ref{asu:censoring} of the main paper,
\[
\{\mathbb U_{N,0},\ldots,\mathbb U_{N,K}\}
\rights
\{\mathbb G_{\mu,0},\ldots,\mathbb G_{\mu,K}\}
\]
in $\ell^\infty(\calT)^{K+1}$, where the limit is a mean-zero Gaussian
process with covariance kernel
\begin{equation}
\label{eq:s-mu-covariance}
\Cov\{\mathbb G_{\mu,a}(\tau),\mathbb G_{\mu,b}(\tau')\}
=
\E\{m_a(Z_i,\tau)m_b(Z_i,\tau')\}
+
\ind(a=b)\frac{1}{\pi_a}
\E\{c_a(Z_i;\tau,\tau')\}.
\end{equation}
\end{lemma}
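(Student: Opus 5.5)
The plan is to split each $\phi_{ia}$ into its stratum mean plus a conditionally centered residual:
\[
\mathbb U_{N,a}(\tau)=\frac{1}{\sqrt N}\sum_{i=1}^N\frac{A_{ia}}{\pi_a}\{\phi_{ia}(\tau)-m_a(Z_i,\tau)\}
+\frac{1}{\sqrt N}\sum_{i=1}^N m_a(Z_i,\tau)
+\frac{1}{\pi_a\sqrt N}\sum_{s=1}^J D_{N,a}(s)m_a(s,\tau).
\]
This is the exact identity \eqref{eq:s-generic-decomposition} with $f_t=\phi_{ia}(\tau)$ at the $\sqrt N$ scale. The second term uses $\sum_s n_a(s)m_a(s,\tau)=\pi_a\sum_i m_a(Z_i,\tau)+\sum_s D_{N,a}(s)m_a(s,\tau)$. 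The third term is $O_p(N^{-1/2})$ uniformly in $\tau$, by Assumption~\ref{asu:bounded-balance}, because $J$ is fixed and $|m_a|\le B\tau_U$ (Lemma~\ref{lem:if-bounds}). It is therefore negligible.

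The first term is handled through the cell processes of Lemma~\ref{lem:cell-processes}. Write it as
\[
\sum_s \frac{1}{\pi_a}\sqrt{\frac{n_a(s)}{N}}\,\mathbb H_{N,as}(f_\tau),
\]
indexed by the class $\mathcal F_a=\{f_\tau=\phi_{\cdot a}(\tau):\tau\in\calT\}$ of functions of $\{X_i(a),\Delta_i(a)\}$. By Lemma~\ref{lem:if-bounds} this class is uniformly bounded and Lipschitz in $\tau$ with a nonrandom constant. It is therefore $P_s$-Donsker, for instance by the bracketing bound for Lipschitz-in-parameter classes on a compact interval \citep[Theorem~2.7.11]{vaart1996weak}. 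Note that $P_sf_\tau=m_a(s,\tau)$, so the centering matches. Since $n_a(s)/N\to\pi_ap_s>0$ in probability, Lemma~\ref{lem:cell-processes} gives joint convergence over $(a,s)$ to independent Brownian bridges $\mathbb H_{as}$. The weights then converge to $\sqrt{p_s/\pi_a}$.

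The second term, $N^{-1/2}\sum_i m_a(Z_i,\cdot)$, is $\sum_s m_a(s,\cdot)W_{N,s}$ with $\bm W_N$ as in \eqref{eq:stratum-count-vector}; this uses $\sum_sp_sm_a(s,\tau)=0$. It is a finite-dimensional functional of $\mathcal Z_N$ with fixed bounded coefficient functions, so it converges jointly over $a$ to $\sum_s m_a(s,\cdot)W_s$ with $\bm W\sim\mathcal N(\bm 0,\bm\Sigma_Z)$.

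The key step, and the main obstacle, is combining the first and second terms. The first depends on outcomes and the second on $\mathcal Z_N$ only. The stability statement of Lemma~\ref{lem:cell-processes} will be invoked with $\xi_N=g(\bm W_N)$ for bounded Lipschitz $g$. The random weights $\sqrt{n_a(s)/N}$ are replaced by their limits at $o_p(1)$ cost, since the $\mathbb H_{N,as}$ are tight. This gives joint convergence of $\{(\mathbb H_{N,as})_{a,s},\bm W_N\}$ to independent limits, and the continuous mapping theorem then yields the weak convergence of $(\mathbb U_{N,a})_a$.

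Finally, the covariance is computed. Independence across cells means different arms share only the $\bm W$ part. That part gives
\[
\sum_{s,s'}m_a(s,\tau)m_b(s',\tau')(\bm\Sigma_Z)_{ss'}=\E\{m_a(Z_i,\tau)m_b(Z_i,\tau')\},
\]
again by the centering identity. The same-arm cell part gives
\[
\sum_s \frac{p_s}{\pi_a}c_a(s;\tau,\tau')=\frac{1}{\pi_a}\E\{c_a(Z_i;\tau,\tau')\}.
\]
Tightness of the limit follows from that of the Brownian bridges and the continuity of $m_a(s,\cdot)$.
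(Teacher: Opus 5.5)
Your proof is correct. Its skeleton matches the paper's:
\begin{itemize}
\item the same exact split $\mathbb U_{N,a}=R_{N,a}+B_{N,a}+I_{N,a}$;
\item the same $O_p(N^{-1/2})$ bound on the imbalance term;
\item the same multinomial treatment of the composition term via the centering identity;
\item stable convergence used in the same way to make the residual limit independent of the composition limit.
\end{itemize}
You depart from the paper only in the residual term.

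The paper conditions on $(\mathcal Z_N,\mathcal A_N)$ and computes the conditional covariance \eqref{eq:s-conditional-residual-cov}. It gets finite-dimensional limits from Lindeberg--Feller and equicontinuity from a bracketing CLT for independent, non-identically distributed Lipschitz processes. Stability then comes from conditional characteristic functions converging in probability.

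You instead regroup the residual by treatment--stratum cell. Within a cell the summands are exactly i.i.d.\ $P_s$ given the design, so Lemma~\ref{lem:cell-processes} applies to the Lipschitz-indexed class $\{\phi_{\cdot a}(\tau):\tau\in\calT\}$. Weak convergence, independence across cells and stability then come packaged together. The only extra cost is a Slutsky step for the random weights $\sqrt{n_a(s)/N}$.

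Your route is shorter and reuses a lemma the paper already proves for Lemma~\ref{lem:ual}. It also carries over unchanged to Theorem~\ref{thm:general-car-fclt}, since Lemma~\ref{lem:cell-processes} needs only $n_a(s)\to\infty$. The paper's route is more self-contained, and it sets up the non-i.i.d.\ Lipschitz-bracketing argument reused later for the multiplier processes (Lemma~\ref{lem:base-multiplier}). There the summands $\xi_i\widehat u_{ia}$ are independent but not identically distributed given the data, so no cell-wise i.i.d.\ structure is available to exploit.
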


\begin{proof}
The argument uses the decomposition into within-stratum residual, stratum
composition, and imbalance terms that is standard in the CAR literature
\citep{ma2015testing,ye2022inference,wang2023model}. For every arm $a$, add
and subtract the conditional mean $m_a(Z_i,\tau)$. This gives the exact
decomposition
\begin{equation}
\label{eq:s-three-part}
\mathbb U_{N,a}(\tau)
=
R_{N,a}(\tau)+B_{N,a}(\tau)+I_{N,a}(\tau),
\end{equation}
where
\begin{align}
\label{eq:s-residual-part}
R_{N,a}(\tau)
&=
\frac{1}{\sqrt N}
\sum_{i=1}^N
\frac{A_{ia}}{\pi_a}
\{\phi_{ia}(\tau)-m_a(Z_i,\tau)\},\\
\label{eq:s-stratum-part}
B_{N,a}(\tau)
&=
\frac{1}{\sqrt N}
\sum_{i=1}^N m_a(Z_i,\tau),\\
\label{eq:s-imbalance-part}
I_{N,a}(\tau)
&=
\frac{1}{\pi_a\sqrt N}
\sum_{s=1}^J D_{N,a}(s)m_a(s,\tau).
\end{align}
Indeed,
\[
\frac{A_{ia}}{\pi_a}m_a(Z_i,\tau)
=
m_a(Z_i,\tau)
+
\frac{A_{ia}-\pi_a}{\pi_a}m_a(Z_i,\tau),
\]
and summing the second term over subjects within stratum $s$ produces
\eqref{eq:s-imbalance-part}.

We first show that the imbalance component is negligible. Because $J$ is
fixed, the functions $m_a(s,\cdot)$ are bounded on $\calT$, and
$D_{N,a}(s)=O_p(1)$,
\begin{align}
\label{eq:s-imbalance-negligible}
\sup_{\tau\in\calT}|I_{N,a}(\tau)|
&\leq
\frac{1}{\pi_a\sqrt N}
\sum_{s=1}^J
|D_{N,a}(s)|
\sup_{\tau\in\calT}|m_a(s,\tau)|
=o_p(1).
\end{align}

Next consider the residual component. Conditional on
\[
\mathcal Z_N=\sigma(Z_1,\ldots,Z_N)
\quad\text{and}\quad
\mathcal A_N=\sigma(A_1,\ldots,A_N),
\]
the summands in \eqref{eq:s-residual-part} are independent across subjects,
have conditional mean zero, and only one arm-specific summand can be nonzero
for each subject. For any $a,b$ and $\tau,\tau'\in\calT$, their conditional
covariance is zero when $a\neq b$. When $a=b$, it is
\begin{align}
\label{eq:s-conditional-residual-cov}
&\Cov\{R_{N,a}(\tau),R_{N,a}(\tau')
\mid\mathcal Z_N,\mathcal A_N\}\notag\\
&\quad=
\frac{1}{N}
\sum_{i=1}^N
\frac{A_{ia}}{\pi_a^2}
c_a(Z_i;\tau,\tau')\notag\\
&\quad=
\frac{1}{N\pi_a^2}
\sum_{s=1}^J
\{\pi_a n(s)+D_{N,a}(s)\}
c_a(s;\tau,\tau')\notag\\
&\quad\xrightarrow{P}
\frac{1}{\pi_a}
\E\{c_a(Z_i;\tau,\tau')\}.
\end{align}
By Lemma~\ref{lem:if-bounds}, every summand is bounded by
$2B\tau_U/\pi_a$ uniformly in $\tau$, so the conditional Lindeberg condition for
this triangular array holds trivially. The Lindeberg--Feller central limit
theorem therefore gives finite-dimensional convergence of
$(R_{N,0},\ldots,R_{N,K})$, conditionally on
$(\mathcal Z_N,\mathcal A_N)$, to a mean-zero Gaussian vector with the
nonrandom covariance
\[
\ind(a=b)\frac{1}{\pi_a}
\E\{c_a(Z_i;\tau,\tau')\}.
\]
Asymptotic equicontinuity follows from the bracketing central limit
theorem for independent, non-identically distributed processes
\citep[Theorem~2.11.9]{vaart1996weak}, of which the Jain--Marcus theorem
for Lipschitz processes is a special case. By Lemma~\ref{lem:if-bounds},
each summand is Lipschitz on $\calT$ with the nonrandom constant
$L=2B/\pi_a$. For a grid $\tau_1<\cdots<\tau_m$ of mesh $\delta$, the
brackets
$[\,Z_{Ni}(\tau_j)-N^{-1/2}L\delta,\ Z_{Ni}(\tau_j)+N^{-1/2}L\delta\,]$,
with $Z_{Ni}$ the $i$th scaled summand, have
$\sum_i\E(\text{width}^2)=4L^2\delta^2$, so the bracketing number at
level $\varepsilon$ is of order $1/\varepsilon$ and the bracketing entropy
integral is finite. The Lindeberg condition holds because the summands are
uniformly bounded. We refer to this verification below as the
Lipschitz-bracketing argument. Hence the conditional convergence holds in
$\ell^\infty(\calT)^{K+1}$.

The stratum-composition term has a particularly simple form. By
\eqref{eq:s-m-centering},
\begin{align}
\label{eq:s-stratum-count-form}
B_{N,a}(\tau)
&=
\sum_{s=1}^J
\frac{n(s)-Np_s}{\sqrt N}\,m_a(s,\tau).
\end{align}
The vector of centered stratum counts obeys the multinomial central limit
theorem. Because $J$ is finite, \eqref{eq:s-stratum-count-form} converges in
$\ell^\infty(\calT)^{K+1}$ to a mean-zero Gaussian process with covariance
\begin{align}
\label{eq:s-stratum-cov}
\lim_{N\to\infty}
\Cov\{B_{N,a}(\tau),B_{N,b}(\tau')\}
&=
\sum_{s=1}^Jp_s m_a(s,\tau)m_b(s,\tau')\notag\\
&=
\E\{m_a(Z_i,\tau)m_b(Z_i,\tau')\}.
\end{align}
The usual subtraction of the product of the means vanishes because of
\eqref{eq:s-m-centering}.

It remains to combine the two non-negligible processes. The term
$B_N=(B_{N,0},\ldots,B_{N,K})$ is measurable with respect to
$\mathcal Z_N$. Conditional on $(\mathcal Z_N,\mathcal A_N)$, the
characteristic function of every finite-dimensional projection of $R_N$
converges in probability to that of a Gaussian vector whose covariance is
the nonrandom limit in \eqref{eq:s-conditional-residual-cov}. Consequently,
$R_N$ converges stably and is asymptotically independent of $B_N$. Equivalently,
for bounded continuous functions $f$ and $g$ of finite-dimensional
projections,
\begin{align*}
\E\{f(R_N)g(B_N)\}
&=
\E\!\left[
g(B_N)
\E\{f(R_N)\mid\mathcal Z_N,\mathcal A_N\}
\right]\\
&\longrightarrow
\E\{f(\mathbb G_R)\}\E\{g(\mathbb G_B)\}.
\end{align*}
Adding the covariance kernels in
\eqref{eq:s-conditional-residual-cov} and \eqref{eq:s-stratum-cov}, and using
\eqref{eq:s-imbalance-negligible}, proves
\eqref{eq:s-mu-covariance}.
\end{proof}

\section[\texorpdfstring{Proof of Theorem~\ref*{thm:bounded-balance-fclt}}{Proof of Theorem 1}]{Proof of \texorpdfstring{Theorem~\ref{thm:bounded-balance-fclt}}{Theorem 1}}
\label{sec:supp-main-proof}

\begin{proof}[Proof of Theorem~\ref{thm:bounded-balance-fclt} in the main paper]
Let
\[
\widehat\bmu(\tau)
=
\{\widehat\mu_0(\tau),\ldots,\widehat\mu_K(\tau)\}^{T}
\]
and define $\bmu(\tau)$ analogously. By Lemma~\ref{lem:ual},
\begin{equation}
\label{eq:s-vector-ual}
\sqrt N\{\widehat\bmu(\cdot)-\bmu(\cdot)\}
=
\begin{pmatrix}
\mathbb U_{N,0}(\cdot)\\
\vdots\\
\mathbb U_{N,K}(\cdot)
\end{pmatrix}
+o_p(1)
\end{equation}
in $\ell^\infty(\calT)^{K+1}$. Lemma~\ref{lem:car-clt} and Slutsky's
theorem therefore imply
\[
\sqrt N\{\widehat\bmu(\cdot)-\bmu(\cdot)\}
\rights
\mathbb G_\mu(\cdot),
\]
where
\[
\Cov\{\mathbb G_{\mu,a}(\tau),\mathbb G_{\mu,b}(\tau')\}
=
\E\{m_a(Z_i,\tau)m_b(Z_i,\tau')\}
+
\ind(a=b)\frac{1}{\pi_a}
\E\{c_a(Z_i;\tau,\tau')\}.
\]
This proves the first assertion.

For the treatment-control contrasts, let
\[
\bm C=[-\bone_K,\bm I_K].
\]
Because multiplication by the fixed matrix $\bm C$ is a continuous linear
map from $\ell^\infty(\calT)^{K+1}$ to
$\ell^\infty(\calT)^K$, the continuous mapping theorem gives
\[
\sqrt N\{\widehat\btheta(\cdot)-\btheta(\cdot)\}
=
\bm C\sqrt N\{\widehat\bmu(\cdot)-\bmu(\cdot)\}
\rights
\bm C\mathbb G_\mu(\cdot)
=
\mathbb G_\theta(\cdot).
\]

For $k,\ell\in\{1,\ldots,K\}$, direct expansion gives
\begin{align*}
&\Cov\{\mathbb G_{\theta,k}(\tau),
\mathbb G_{\theta,\ell}(\tau')\}\\
&\quad=
\Cov\{\mathbb G_{\mu,k}(\tau)-\mathbb G_{\mu,0}(\tau),
\mathbb G_{\mu,\ell}(\tau')-\mathbb G_{\mu,0}(\tau')\}\\
&\quad=
\E\!\left[
\{m_k(Z_i,\tau)-m_0(Z_i,\tau)\}
\{m_\ell(Z_i,\tau')-m_0(Z_i,\tau')\}
\right]\\
&\qquad+
\ind(k=\ell)\frac{1}{\pi_k}
\E\{c_k(Z_i;\tau,\tau')\}
+
\frac{1}{\pi_0}
\E\{c_0(Z_i;\tau,\tau')\}.
\end{align*}
The final term is present for every $(k,\ell)$ because every contrast
contains the same estimated control-arm RMST.

Evaluation at a fixed $\tau\in\calT$ is a continuous map on
$\ell^\infty(\calT)^K$. A final application of the continuous mapping
theorem yields
\[
\sqrt N\{\widehat\btheta(\tau)-\btheta(\tau)\}
\xrightarrow{D}
\mathcal N_K\{\bm 0,\bm\Sigma_\theta(\tau)\},
\]
where the $(k,\ell)$ entry of $\bm\Sigma_\theta(\tau)$ is the preceding
kernel evaluated at $(\tau,\tau)$. This completes the proof.
\end{proof}

\section[\texorpdfstring{Proof of Theorem~\ref*{thm:general-car-fclt}}{Proof of Theorem 2}]{Proof of \texorpdfstring{Theorem~\ref{thm:general-car-fclt}}{Theorem 2}}
\label{sec:supp-general-car}

Let
\[
\bm d_N
=
\frac{1}{\sqrt N}
\operatorname{vec}
\{D_{N,a}(s):a=0,\ldots,K;\ s=1,\ldots,J\}
\]
and let $\bm Q(\tau)$ be the $(K+1)\times (K+1)J$ matrix whose entries are
\[
Q_{a,(b,s)}(\tau)
=
\frac{\ind(a=b)}{\pi_a}m_a(s,\tau).
\]

\begin{lemma}[RMST expansion under $\sqrt N$-order imbalance]
\label{lem:ual-general}
Suppose Assumptions~\ref{asu:sampling}, \ref{asu:randomization}, \ref{asu:censoring}, and~\ref{asu:general-imbalance} of the main paper hold. Then
\begin{equation}
\label{eq:s-ual-general}
\sup_{\tau\in\calT}
\left|
\sqrt N\{\widehat\mu_a(\tau)-\mu_a(\tau)\}
-
\frac{1}{\sqrt N}
\sum_{i=1}^N\frac{A_{ia}}{\pi_a}\phi_{ia}(\tau)
\right|
=o_p(1)
\end{equation}
for every $a=0,\ldots,K$, jointly over arms.
\end{lemma}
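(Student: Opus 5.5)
The plan is to rerun the proof of Lemma~\ref{lem:ual} with one change: the imbalance terms are now of order $\sqrt N$ rather than $O_p(1)$. The functional delta method needs only two inputs, uniform consistency and weak convergence of the $\sqrt N$-scaled at-risk and event processes. I will check that both survive under Assumption~\ref{asu:general-imbalance}. The Hadamard-differentiation and integration steps then transfer verbatim, because they never used bounded imbalance.

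\textbf{Step 1 (uniform consistency).} Use the exact decomposition \eqref{eq:s-generic-decomposition}. Its last term is $N^{-1}\sum_s D_{N,a}(s)\bar f_t(s)$, which is bounded uniformly in $t$ by $N^{-1}\max_s|D_{N,a}(s)|\cdot J\sup|\bar f|$. This is $O_p(N^{-1/2})=o_p(1)$ since $\max_{a,s}|D_{N,a}(s)|=O_p(\sqrt N)$. The first term has conditional mean zero given $(\mathcal Z_N,\mathcal A_N)$, and a uniform law follows from Lemma~\ref{lem:cell-processes} applied cellwise. For that lemma I need $n_a(s)\to\infty$, which follows from $n_a(s)/n(s)\to\pi_a$ and $n(s)/N\to p_s>0$. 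Hence \eqref{eq:s-uniform-risk}--\eqref{eq:s-uniform-event} hold. In particular the event $\inf_{t\le\tau_U}\mathbb Y_{N,a}(t)\ge\pi_a\epsilon/2$ has probability tending to one, so the Nelson--Aalen representation is valid.

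\textbf{Step 2 (weak convergence of the inputs).} This is the main point. Use the cell decomposition \eqref{eq:s-cell-decomposition}.
\begin{itemize}
\item The bracketed cell processes converge jointly and stably to independent Brownian bridges by Lemma~\ref{lem:cell-processes}, with $\sqrt{n_a(s)/N}\to\sqrt{\pi_ap_s}$ in probability.
\item The second term equals $\sum_s F_{1a,s}(t)\{D_{N,a}(s)/\sqrt N+\pi_a(n(s)-Np_s)/\sqrt N\}$. This is a fixed finite-dimensional linear map applied to $(\bm W_N,\bm d_N)$, with coefficient functions that are deterministic and bounded in $t$.
\item By \eqref{eq:joint-design-clt}, $(\bm W_N,\bm d_N)$ converges in distribution. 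It is $(\mathcal Z_N,\mathcal A_N)$-measurable, so the stability clause of Lemma~\ref{lem:cell-processes} makes it asymptotically independent of the cell limits.
\end{itemize}
The sum therefore converges weakly in $\ell^\infty[0,\tau_U]$, and the identical argument applies to $\mathbb Y_{N,a}$, jointly over arms. Joint convergence over arms holds because all cells are covered simultaneously by Lemma~\ref{lem:cell-processes} and the design vector is shared.

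\textbf{Step 3 (delta method and linear form).} Apply the functional delta method \citep[Theorem~3.9.4]{vaart1996weak} to the Nelson--Aalen map, then the product-integral map \eqref{eq:s-product-derivative}, then integration over $[0,\tau]$. This yields
\[
\sqrt N(\widehat\mu_a-\mu_a)=\dot\Phi_a\{\sqrt N(\mathbb Y_{N,a}-\pi_ay_a),\sqrt N(\mathbb F_{N,1a}-\pi_aF_{1a})\}+o_p(1),
\]
uniformly on $\calT$, where $\dot\Phi_a$ is the continuous linear derivative. Because $\dot\Phi_a$ is linear and each input process is an exact empirical average, I will substitute the empirical sums into $\dot\Phi_a$ directly. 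The derivative of $(H,F)\mapsto\int dF/H$ at $(\pi_ay_a,\pi_aF_{1a})$ applied to the centered sums gives exactly $N^{-1/2}\sum_i(A_{ia}/\pi_a)\int dM_i^{(a)}/y_a$, as computed in the proof of Lemma~\ref{lem:ual}. Composing with the later maps gives $N^{-1/2}\sum_i(A_{ia}/\pi_a)\phi_{ia}(\tau)$.

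The one subtlety here is that the centered sums $N^{-1/2}\sum_iA_{ia}\{Y_i^{(a)}-y_a\}$ no longer have conditional mean zero. This does not matter, because the identity is algebraic: it relies only on linearity of $\dot\Phi_a$ and on $\int\pi_ay_a\,d\Lambda_a=\pi_aF_{1a}$. The imbalance contributions are not discarded but carried into the sum, where they become the term $\bm Q(\tau)\bm d_N$ that Theorem~\ref{thm:general-car-fclt} records.

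I expect Step 2 to be the main obstacle, specifically making the combination of the stable cell limits with the design vector rigorous in the function space rather than only for finite-dimensional distributions. I would handle it by noting that the design term lives in a fixed finite-dimensional subspace of $\ell^\infty[0,\tau_U]$ spanned by $\{F_{1a,s},y_{a,s}\}$. The stability statement of Lemma~\ref{lem:cell-processes}, applied with $\xi_N$ ranging over bounded Lipschitz functions of $(\bm W_N,\bm d_N)$, then gives joint convergence of the pair directly.
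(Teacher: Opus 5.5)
Your proposal is correct and follows essentially the same route as the paper: uniform consistency from the exact decomposition \eqref{eq:s-generic-decomposition} with an $O_p(N^{-1/2})$ imbalance term, then weak convergence of the $\sqrt N$-scaled inputs, then the same Hadamard chain as in Lemma~\ref{lem:ual}. The weak convergence step combines the stable cell limits of Lemma~\ref{lem:cell-processes} with jointly convergent, $(\mathcal Z_N,\mathcal A_N)$-measurable linear functionals of $(\bm W_N,\bm d_N)$, exactly as the paper does; your Step~3 remark that the linear form follows algebraically from $dF_{1a}=y_a\,d\Lambda_a$, and so survives the loss of conditional centering, spells out what the paper covers with ``exactly as in Lemma~\ref{lem:ual}''.
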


\begin{proof}
The proof follows the differentiability argument in
Lemma~\ref{lem:ual}. We only need to verify that replacing bounded
within-stratum imbalance by Assumption~\ref{asu:general-imbalance} does not alter the first-order
linearization.

For a generic bounded process $f_t$ and
$\bar f_t(s)=\E(f_t\mid Z_i=s)$, decomposition
\eqref{eq:s-generic-decomposition} remains exact. Its imbalance component
now satisfies
\[
\sup_t
\left|
\frac{1}{N}
\sum_{s=1}^JD_{N,a}(s)\bar f_t(s)
\right|
=
O_p(N^{-1/2})
=o_p(1),
\]
because $J$ is fixed and
$\max_{a,s}|D_{N,a}(s)|=O_p(\sqrt N)$. Hence the arm-specific empirical
at-risk and failure processes still converge uniformly to
$\pi_a y_a$ and $\pi_a F_{1a}$.

At the $\sqrt N$ scale, the weighted empirical process admits the exact
conditional-mean decomposition
\begin{align*}
&\frac{1}{\sqrt N}
\sum_{i=1}^N
\frac{A_{ia}}{\pi_a}
\{f_t(\mathcal O_i^\star)-\E f_t\}\\
&\quad=
\frac{1}{\sqrt N}
\sum_{i=1}^N
\frac{A_{ia}}{\pi_a}
\{f_t(\mathcal O_i^\star)-\bar f_t(Z_i)\}\\
&\qquad+
\frac{1}{\sqrt N}
\sum_{i=1}^N
\{\bar f_t(Z_i)-\E f_t\}
+
\frac{1}{\pi_a}
\sum_{s=1}^J
\frac{D_{N,a}(s)}{\sqrt N}\{\bar f_t(s)-\E f_t\}.
\end{align*}
The first term converges by the treatment--stratum cell decomposition
\eqref{eq:s-cell-decomposition} and Lemma~\ref{lem:cell-processes}, which
requires only $n_a(s)/n(s)\to\pi_a$ and therefore holds under Assumption~\ref{asu:general-imbalance}.
Because $\sum_sp_s\bar f_t(s)=\E f_t$, the second term equals
$\sum_s\bar f_t(s)W_{N,s}$, a linear function of $\bm W_N$, and the third
is a linear function of $\bm d_N$. By Assumption~\ref{asu:general-imbalance} these two terms
converge jointly, and because they are measurable with respect to
$(\mathcal Z_N,\mathcal A_N)$, the stability in
Lemma~\ref{lem:cell-processes} gives joint weak convergence of all three
terms. Thus the input empirical processes converge weakly at the root-$N$
scale, as the functional delta method requires.
Applying the Hadamard derivatives of the Nelson--Aalen, product-integral,
and integration maps exactly as in Lemma~\ref{lem:ual} yields
\eqref{eq:s-ual-general}.
\end{proof}

\begin{proof}[Proof of Theorem~\ref{thm:general-car-fclt} in the main paper]
By Lemma~\ref{lem:ual-general},
\[
\sqrt N\{\widehat\bmu(\cdot)-\bmu(\cdot)\}
=
\mathbb U_N(\cdot)+o_p(1),
\]
where the $a$th component of $\mathbb U_N$ is
\[
\mathbb U_{N,a}(\tau)
=
\frac{1}{\sqrt N}
\sum_{i=1}^N\frac{A_{ia}}{\pi_a}\phi_{ia}(\tau).
\]
The exact decomposition \eqref{eq:s-three-part} continues to hold. In
matrix form, it is
\begin{equation}
\label{eq:s-general-decomposition}
\mathbb U_N(\tau)
=
\bm R_N(\tau)+\bm B_N(\tau)+\bm Q(\tau)\bm d_N,
\end{equation}
where the $a$th components of $\bm R_N$ and $\bm B_N$ are given by
\eqref{eq:s-residual-part} and \eqref{eq:s-stratum-part}.

Because $n_a(s)/n(s)\to\pi_a$, the conditional covariance calculation
\eqref{eq:s-conditional-residual-cov} remains valid. Hence
$\bm R_N$ converges stably to a mean-zero Gaussian process
$\mathbb G_R$ with covariance
\begin{equation}
\label{eq:s-general-residual-kernel}
\Cov\{\mathbb G_{R,a}(\tau),\mathbb G_{R,b}(\tau')\}
=
\ind(a=b)\frac{1}{\pi_a}
\E\{c_a(Z_i;\tau,\tau')\}.
\end{equation}
Stable convergence implies that $\mathbb G_R$ is independent of every
weak limit measurable with respect to the strata and treatment assignments.

The stratum-composition process is the continuous linear transformation
\[
B_{N,a}(\tau)
=
\sum_{s=1}^JW_{N,s}m_a(s,\tau)
\]
of the vector $\bm W_N$. Assumption~\ref{asu:general-imbalance} therefore gives the joint convergence
\[
\{\bm B_N(\cdot),\bm d_N\}
\rights
\{\mathbb G_B(\cdot),\bm D\},
\]
where
\[
\bm D\sim\mathcal N_L(\bm 0,\bm\Omega_D),
\]
$\mathbb G_B$ and $\bm D$ are independent, and
\begin{equation}
\label{eq:s-general-stratum-kernel}
\Cov\{\mathbb G_{B,a}(\tau),\mathbb G_{B,b}(\tau')\}
=
\E\{m_a(Z_i,\tau)m_b(Z_i,\tau')\}.
\end{equation}
The independence follows from the block-diagonal covariance in Assumption~\ref{asu:general-imbalance}
and joint Gaussianity.

The functions $m_a(s,\cdot)$ are continuous on $\calT$, so
$\bm Q(\cdot)\bm d_N\rights\bm Q(\cdot)\bm D$ in
$\ell^\infty(\calT)^{K+1}$. Combining the preceding limits in
\eqref{eq:s-general-decomposition} gives
\[
\mathbb G_\mu^D(\tau)
=
\mathbb G_R(\tau)+\mathbb G_B(\tau)+\bm Q(\tau)\bm D.
\]
Its covariance kernel is
\[
\bm\Gamma_\mu^D(\tau,\tau')
=
\bm\Gamma_\mu(\tau,\tau')
+
\bm Q(\tau)\bm\Omega_D\bm Q(\tau')^T.
\]

Finally, multiplication by
$\bm C=[-\bone_K,\bm I_K]$ is a continuous linear map. Therefore,
\[
\sqrt N\{\widehat\btheta(\cdot)-\btheta(\cdot)\}
\rights
\bm C\mathbb G_\mu^D(\cdot)
\]
with covariance
\[
\bm C\bm\Gamma_\mu(\tau,\tau')\bm C^T
+
\bm C\bm Q(\tau)\bm\Omega_D
\bm Q(\tau')^T\bm C^T.
\]
Writing $\bm R(\tau)=\bm C\bm Q(\tau)$ gives the stated result. Its
$k$th row is
\[
R_{k,(a,s)}(\tau)
=
\frac{\ind(a=k)}{\pi_k}m_k(s,\tau)
-
\frac{\ind(a=0)}{\pi_0}m_0(s,\tau).
\]
No equality among the treatment-specific survival functions was used, so the
result holds under arbitrary treatment effects.
\end{proof}

\section[\texorpdfstring{Proof of Proposition~\ref*{prop:additivity}}{Proof of Proposition 1}]{Proof of \texorpdfstring{Proposition~\ref{prop:additivity}}{Proposition 1}}
\label{sec:supp-additivity}

\begin{proof}[Proof of Proposition~\ref{prop:additivity} in the main paper]
Fix $\tau\in\calT$ and write
$\bm Q(\tau)=\bm Q^{\mathrm{add}}(\tau)+\bm Q^{\eta}(\tau)$, where
$\bm Q^{\mathrm{add}}(\tau)$ is defined by \eqref{eq:s-Q-def-add} below. For
arm $a$, let
\begin{equation}
\label{eq:s-Q-def-add}
Q^{\mathrm{add}}_{a,(b,s)}(\tau)
=
\frac{\ind(a=b)}{\pi_a}m_a^{\mathrm{add}}(s,\tau),
\qquad
m_a^{\mathrm{add}}(s,\tau)=\sum_{r=1}^R f_{a,r}(s_r,\tau).
\end{equation}
Then the $a$th coordinate of $\bm Q^{\mathrm{add}}(\tau)\bm d_N$ is
\[
\frac{1}{\pi_a\sqrt N}
\sum_{s\in\mathcal S}D_{N,a}(s)\sum_{r=1}^Rf_{a,r}(s_r,\tau)
=
\frac{1}{\pi_a\sqrt N}
\sum_{r=1}^R\sum_{l}f_{a,r}(l,\tau)D_{N,a}^{(r)}(l),
\]
because grouping the strata by the level of factor $r$ turns the sum over
$s$ into a sum of marginal imbalances. The numbers of arms, factors, and
levels are finite and every $D_{N,a}^{(r)}(l)=O_p(1)$, so
$\bm Q^{\mathrm{add}}(\tau)\bm d_N=O_p(N^{-1/2})\to0$ in probability.

By Assumption~\ref{asu:general-imbalance}, $\bm d_N\xrightarrow{D}\bm D\sim\mathcal N_L(\bm0,\bm\Omega_D)$,
so the continuous mapping theorem gives
$\bm Q^{\mathrm{add}}(\tau)\bm d_N\xrightarrow{D}\bm Q^{\mathrm{add}}(\tau)\bm D$.
Limits in distribution are unique, so $\bm Q^{\mathrm{add}}(\tau)\bm D=\bm 0$
almost surely, and therefore
$\bm Q^{\mathrm{add}}(\tau)\bm\Omega_D\bm Q^{\mathrm{add}}(\tau)^T=\bm0$.
Because $\bm\Omega_D$ is positive semidefinite, write
$\bm\Omega_D=\bm\Omega_D^{1/2}\bm\Omega_D^{1/2}$. The previous display
then implies $\bm Q^{\mathrm{add}}(\tau)\bm\Omega_D^{1/2}=\bm0$ and hence
$\bm Q^{\mathrm{add}}(\tau)\bm\Omega_D=\bm0$. As this holds for every
$\tau\in\calT$, expanding
$\bm Q(\tau)\bm\Omega_D\bm Q(\tau')^T$ with
$\bm Q=\bm Q^{\mathrm{add}}+\bm Q^{\eta}$ leaves only
$\bm Q^{\eta}(\tau)\bm\Omega_D\bm Q^{\eta}(\tau')^T$, which is the
identity stated in Proposition~\ref{prop:additivity}. If every $m_a(\cdot,\tau)$ is additive, one may
take $\eta_a\equiv0$, so the design term vanishes and the kernel of
Theorem~\ref{thm:general-car-fclt} reduces to $\bm\Gamma_\mu$.
\end{proof}

\begin{remark}
The decomposition in Proposition~\ref{prop:additivity} is not unique, but the conclusion does
not depend on the choice: any two admissible remainders differ by an
additive function, whose loading annihilates $\bm\Omega_D$. A convenient
choice is the $\bm p$-weighted least-squares projection of
$m_a(\cdot,\tau)$ onto additive functions, so that $\eta_a$ is
$\bm p$-weighted orthogonal to all additive functions. With two binary factors, $\eta_a(\cdot,\tau)$ is then
proportional to the interaction contrast of $m_a(\cdot,\tau)$ over the four
joint strata.
\end{remark}

\section[\texorpdfstring{Proof of Proposition~\ref*{prop:nondegenerate}}{Proof of Proposition 2}]{Proof of \texorpdfstring{Proposition~\ref{prop:nondegenerate}}{Proposition 2}}
\label{sec:supp-nondegenerate}

Throughout, $\bm D_n\in\R^L$ stacks the within-stratum imbalances
$\{D_{n,a}(s)\}$ of the first $n$ subjects in the arm-major order of
$\bm d_N$, so that $\bm D_0=\bm 0$ and $\bm M_0=\bm 0$. A subject in
stratum $s$ assigned to arm $a$ changes $\bm D_n$ by $\bm g(s,a)$, where
$g(s,a)_{(b,s')}=\ind(s'=s)\{\ind(b=a)-\pi_b\}$. The vector $\bm M_n$ is a
fixed linear function of $\bm D_n$, obtained by summing over the strata
that share a level of each factor, and $\bm M_n=\bm 0$ is the balanced
state. At time $\varrho$ every marginal count satisfies
$n^{(r)}_{\varrho,a}(l)=\pi_an^{(r)}_{\varrho}(l)$, so condition~(ii)
forces every $\pi_a$ to be rational.

\begin{proof}[Proof of Proposition~\ref{prop:nondegenerate} in the main paper]
\textit{Step 1: regeneration.} By condition~(i) and
Assumption~\ref{asu:sampling}, $(Z_n,A_n)$ is drawn from a distribution
that depends on the past only through $\bm M_{n-1}$, and $\bm M_n$ is
determined by $\bm M_{n-1}$ and $(Z_n,A_n)$. Hence $\{\bm M_n\}$ is a
time-homogeneous Markov chain on a countable set, and
$\{(\bm M_n,Z_n,A_n)\}$ is a Markov chain whose transitions depend only on
$\bm M_n$. Let $0=\vartheta_0<\vartheta_1<\cdots$ be the successive times
at which $\bm M_n=\bm 0$, so that $\vartheta_1=\varrho$. By condition~(ii)
every $\vartheta_k$ is finite almost surely, and by the strong Markov
property the blocks $\{(Z_n,A_n):\vartheta_{k-1}<n\leq\vartheta_k\}$,
$k\geq1$, are independent and identically distributed. Let
$\varrho_k=\vartheta_k-\vartheta_{k-1}$ and
$\bm\zeta_k=\bm D_{\vartheta_k}-\bm D_{\vartheta_{k-1}}$, so that
$(\varrho_1,\bm\zeta_1)=(\varrho,\bm\zeta)$. Each $\bm\zeta_k$ lies in
$\mathcal V$: its arm-wise sums vanish because those of every $\bm D_n$
do, and its marginal sums vanish because
$\bm M_{\vartheta_k}=\bm M_{\vartheta_{k-1}}=\bm 0$.

\textit{Step 2: the covariance formula.} Fix $\bm v\in\R^L$. Then
$\bm v^T\bm D_N=\sum_{n=1}^N\bm v^T\bm g(Z_n,A_n)$ with
$|\bm v^T\bm g(s,a)|\leq2\|\bm v\|_\infty$, so
$|\bm v^T\bm\zeta_k|\leq2\|\bm v\|_\infty\varrho_k$ and
$\E\{(\bm v^T\bm\zeta)^2\}<\infty$ by condition~(ii). Let
$k_N=\max\{k:\vartheta_k\leq N\}$, which is at most $N$. Then
\[
\bm v^T\bm D_N=\sum_{k=1}^{k_N}\bm v^T\bm\zeta_k+\mathcal R_N,
\qquad
|\mathcal R_N|\leq2\|\bm v\|_\infty\,\varrho_{k_N+1}
\leq2\|\bm v\|_\infty\max_{k\leq N+1}\varrho_k .
\]
By the renewal theorem, $k_N/N\to1/\E(\varrho)$ almost surely. Because the
$\varrho_k$ are independent and integrable, $\varrho_k/k\to0$ almost
surely, so $\mathcal R_N/N\to0$ almost surely, and the strong law of large
numbers gives $N^{-1}\bm v^T\bm D_N\to\E(\bm v^T\bm\zeta)/\E(\varrho)$
almost surely. Assumption~\ref{asu:general-imbalance} implies
$N^{-1}\bm D_N\to\bm 0$ in probability, so $\E(\bm\zeta)=\bm 0$. The
central limit theorem with a random number of summands
\citep{anscombe1952large} then gives
$N^{-1/2}\sum_{k\leq k_N}\bm v^T\bm\zeta_k\xrightarrow{D}
\mathcal N[0,\E\{(\bm v^T\bm\zeta)^2\}/\E(\varrho)]$. For every $x>0$,
\[
\Pp\Bigl(\max_{k\leq N+1}\varrho_k>x\sqrt N\Bigr)
\leq(N+1)\Pp(\varrho>x\sqrt N)
\leq\frac{N+1}{x^2N}\E\{\varrho^2\ind(\varrho>x\sqrt N)\}\to0,
\]
so $N^{-1/2}\mathcal R_N\to0$ in probability, and
$\bm v^T\bm d_N\xrightarrow{D}
\mathcal N[0,\E\{(\bm v^T\bm\zeta)^2\}/\E(\varrho)]$. This is the
regenerative central limit theorem \citep[Chapter~17]{meyn2009markov}.
Assumption~\ref{asu:general-imbalance} gives the limit
$\mathcal N(0,\bm v^T\bm\Omega_D\bm v)$, and limits in distribution are
unique, so
$\bm v^T\bm\Omega_D\bm v=\E\{(\bm v^T\bm\zeta)^2\}/\E(\varrho)$ for every
$\bm v\in\R^L$. Polarization yields
$\bm\Omega_D=\E(\bm\zeta\bm\zeta^T)/\E(\varrho)$.

\textit{Step 3: the range of $\bm\Omega_D$.} By Step 2,
$\bm v^T\bm\Omega_D\bm v=0$ if and only if $\bm v^T\bm\zeta=0$ almost
surely. Since $\bm\zeta\in\mathcal V$, every $\bm v\in\mathcal V^\perp$ lies
in the null space of $\bm\Omega_D$, so the range of $\bm\Omega_D$ is
contained in $\mathcal V$. Now fix $\bm v\in\mathcal V\setminus\{\bm 0\}$.
The space $\mathcal V$ is the null space of a matrix with integer entries,
so it has a basis of vectors with rational entries, and at least one basis
vector $\bm w$ satisfies $\bm v^T\bm w\neq0$, since otherwise $\bm v$ would
be orthogonal to $\mathcal V$. Choose positive integers $N_1$ and $N_2$
such that every $N_1\pi_a$ and every $N_2w_{(a,s)}$ is an integer and
$\nu_{a,s}=N_1\pi_a+N_2w_{(a,s)}\geq0$ for all $(a,s)$. Consider any
ordering of $\sum_{a,s}\nu_{a,s}=JN_1$ subjects in which exactly
$\nu_{a,s}$ subjects belong to stratum $s$ and are assigned to arm $a$.
Started from $\bm M_0=\bm 0$, each subject of the sequence occurs with
conditional probability at least $\delta\min_sp_s>0$ by condition~(i) and
Assumption~\ref{asu:sampling}, so the whole sequence has positive
probability. Because $\sum_a\nu_{a,s}=N_1$ for every $s$, the total change
of the within-stratum imbalances along the sequence is
\[
\sum_{a,s}\nu_{a,s}\,\bm g(s,a)
=\Bigl(\nu_{b,s}-\pi_b\sum_a\nu_{a,s}\Bigr)_{(b,s)}
=N_2\bm w\in\mathcal V,
\]
so every marginal imbalance returns to zero and the sequence ends at the
balanced state. Splitting the sequence at its visits to the balanced state
gives finitely many consecutive return cycles. Each of them has positive
probability of being the first cycle, because the transition probabilities
depend only on the current marginal imbalances, and their imbalance
changes sum to $N_2\bm w$. Because $\bm v^T(N_2\bm w)\neq0$, at least one
of these cycles has $\bm v^T\bm\zeta\neq0$. Hence
$\Pp(\bm v^T\bm\zeta\neq0)>0$ and $\bm v^T\bm\Omega_D\bm v>0$, so the range
of $\bm\Omega_D$ is exactly $\mathcal V$.

\textit{Step 4: bounds and characterization.} Let $\bm P_{\mathcal V}$ be
the orthogonal projection onto $\mathcal V$. Because $\bm\Omega_D$ is
symmetric with range $\mathcal V\neq\{\bm 0\}$,
$\bm\Omega_D=\bm P_{\mathcal V}\bm\Omega_D\bm P_{\mathcal V}$ and
\begin{equation}
\label{eq:s-projection-bounds}
\lambda_-\|\bm P_{\mathcal V}\bm r\|^2
\leq\bm r^T\bm\Omega_D\bm r
\leq\lambda_+\|\bm P_{\mathcal V}\bm r\|^2,
\qquad\bm r\in\R^L.
\end{equation}
The first set of constraints defining $\mathcal V$ says that $\bm v$ lies
in $\mathcal W\otimes\R^J$, where $\mathcal W$ is the zero-sum subspace of
$\R^{K+1}$. The second says that $\bm v$ lies in
$\R^{K+1}\otimes\mathcal H^\perp$, because the functions on $\mathcal S$
whose sums over every level set vanish are exactly those orthogonal to the
level indicators, which span $\mathcal H$. Hence
$\mathcal V=\mathcal W\otimes\mathcal H^\perp$ and
$\bm P_{\mathcal V}=\{\bm I_{K+1}-\bone\bone^T/(K+1)\}\otimes\bm P_\perp$.
The $a$th row of $\bm Q(\tau)$ is
$\bm q_a(\tau)=\bm e_a\otimes m_a(\cdot,\tau)/\pi_a$, so
$\bm P_{\mathcal V}\bm q_a(\tau)
=\{\bm e_a-\bone/(K+1)\}\otimes\eta_a^\perp(\cdot,\tau)/\pi_a$ and
\[
\|\bm P_{\mathcal V}\bm q_a(\tau)\|^2
=\Bigl\|\bm e_a-\frac{\bone}{K+1}\Bigr\|^2
\frac{\|\eta_a^\perp(\cdot,\tau)\|^2}{\pi_a^2}
=\frac{K\,\|\eta_a^\perp(\cdot,\tau)\|^2}{(K+1)\pi_a^2}.
\]
Since
$\{\bm Q(\tau)\bm\Omega_D\bm Q(\tau)^T\}_{aa}=\bm q_a(\tau)^T\bm\Omega_D\bm q_a(\tau)$,
\eqref{eq:s-projection-bounds} gives \eqref{eq:design-term-bounds}.
Because $\lambda_->0$, the design term of $\widehat\mu_a(\tau)$ vanishes if
and only if $\eta_a^\perp(\cdot,\tau)=\bm 0$, that is, if and only if
$m_a(\cdot,\tau)$ is additive. The $k$th row of $\bm R(\tau)$ is
$\bm q_k(\tau)-\bm q_0(\tau)$. With
$\bm x=\eta_k^\perp(\cdot,\tau)/\pi_k$, $\bm y=\eta_0^\perp(\cdot,\tau)/\pi_0$,
and $\{\bm e_k-\bone/(K+1)\}^T\{\bm e_0-\bone/(K+1)\}=-1/(K+1)$,
\[
\|\bm P_{\mathcal V}\{\bm q_k(\tau)-\bm q_0(\tau)\}\|^2
=\frac{K(\|\bm x\|^2+\|\bm y\|^2)+2\bm x^T\bm y}{K+1}
=\frac{(K-1)(\|\bm x\|^2+\|\bm y\|^2)+\|\bm x+\bm y\|^2}{K+1}.
\]
By \eqref{eq:s-projection-bounds},
$\{\bm R(\tau)\bm\Omega_D\bm R(\tau)^T\}_{kk}$ lies between $\lambda_-$ and
$\lambda_+$ times this quantity. For $K\geq2$ it vanishes if and only if
$\bm x=\bm y=\bm 0$. For $K=1$ it vanishes if and only if
$\bm x+\bm y=\bm P_\perp\{m_1(\cdot,\tau)/\pi_1+m_0(\cdot,\tau)/\pi_0\}=\bm 0$.
\end{proof}

\begin{proof}[Proof of Remark~\ref{rmk:label-symmetric} in the main paper]
Under equal allocation and label symmetry, the starting state and the
transition law of the chain are invariant under every permutation of the
arms, so $(\bm\Pi\otimes\bm I_J)\bm d_N$ has the same distribution as
$\bm d_N$ for every $(K+1)\times(K+1)$ permutation matrix $\bm\Pi$, and
$\bm\Omega_D$ commutes with every $\bm\Pi\otimes\bm I_J$. The permutation
action of the symmetric group on $\mathcal W$ is absolutely irreducible,
and $\mathcal V=\mathcal W\otimes\mathcal H^\perp$ is a direct sum of
$\dim\mathcal H^\perp$ copies of it. By Schur's lemma, every linear map on
$\mathcal V$ that commutes with all $\bm\Pi\otimes\bm I_J$ has the form
$\bm I_{\mathcal W}\otimes\bm G_0$ for a linear map $\bm G_0$ on
$\mathcal H^\perp$. Together with
$\bm\Omega_D=\bm P_{\mathcal V}\bm\Omega_D\bm P_{\mathcal V}$, this gives
$\bm\Omega_D=\{\bm I_{K+1}-\bone\bone^T/(K+1)\}\otimes\bm G$ with
$\bm G=\bm P_\perp\bm G_0\bm P_\perp$. The matrix $\bm G$ is symmetric,
vanishes on $\mathcal H$, and is positive definite on $\mathcal H^\perp$
by Proposition~\ref{prop:nondegenerate}. Since $\pi_a=1/(K+1)$,
\[
\bm q_a(\tau)^T\bm\Omega_D\bm q_a(\tau)
=\Bigl\|\bm e_a-\frac{\bone}{K+1}\Bigr\|^2
\frac{\eta_a^\perp(\cdot,\tau)^T\bm G\,\eta_a^\perp(\cdot,\tau)}{\pi_a^2}
=K(K+1)\,\eta_a^\perp(\cdot,\tau)^T\bm G\,\eta_a^\perp(\cdot,\tau).
\]
With two binary factors, $\mathcal H$ has dimension three, so
$\mathcal H^\perp$ is spanned by the unit interaction contrast
$\bm u=(1,-1,-1,1)^T/2$ and $\bm G=\omega\bm u\bm u^T$ with $\omega>0$.
\end{proof}

\subsection{Condition (ii) under the variance imbalance measure}
\label{sec:supp-return-variance}

Consider equal allocation, $\pi_a=1/(K+1)$, and suppose that every
combination of factor levels is a stratum with positive probability. Let
$\bm D^{(r)}_{n}(l)\in\R^{K+1}$ collect $\{D^{(r)}_{n,a}(l)\}_a$. For a
subject with level profile $s=(s_1,\ldots,s_R)$ who arrives after $n-1$
subjects, the variance measure gives arm $a$ the score
\[
\mathrm{Imb}_a(s)=\sum_{r=1}^Rw_r\bigl\|\bm D^{(r)}_{n-1}(s_r)+\bm e_a-\bone/(K+1)\bigr\|^2,
\]
which is proportional to the weighted sum over factors of the variance of
the arm counts at the subject's levels after a hypothetical assignment to
arm $a$, with weights $w_r>0$. The assignment probabilities
$\chi_a(s,\bm m)$ depend on $(s,\bm m)$ only through these scores and have
three properties: arms with equal scores receive equal probabilities, a
smaller score never receives a smaller probability, and, whenever the
scores are not all equal, an arm with the smallest score receives a
probability larger than an arm with the largest score by at least
$\delta_1>0$. The rule in which the $t$ arms with the smallest score share
probability $p<1$, the other arms share $1-p$, and a complete tie is
resolved by equal randomization has these properties when $p>K/(K+1)$,
with $\delta_1=\min_{1\leq t\leq K}\{p(K+1)-t\}/\{t(K+1-t)\}$. So does the
rule of \citet[Section~4]{hu2020theory} with ranked probabilities
$p_1\geq\cdots\geq p_{K+1}$, $p_1>p_{K+1}>0$, and random ordering of ties.

\begin{lemma}[Return time under the variance measure]
\label{lem:return-time-variance}
Under these conditions and condition~(i) of
Proposition~\ref{prop:nondegenerate}, there is $\theta>1$ with
$\E(\theta^{\varrho})<\infty$. In particular, condition~(ii) holds.
\end{lemma}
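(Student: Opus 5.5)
The plan is a Foster--Lyapunov drift argument for the Markov chain $\{\bm M_n\}$ of marginal imbalances, which by condition~(i) is time-homogeneous on a countable lattice. The Lyapunov function is
\[
V(\bm m)=\sum_{r=1}^R w_r\sum_l\|\bm m^{(r)}(l)\|^2 .
\]
With this choice, $V(\bm M_n)-V(\bm M_{n-1})$ is exactly the imbalance score used by the rule, up to an additive constant. More precisely, for a subject with profile $s$ assigned to arm $a$,
\[
V(\bm M_n)-V(\bm M_{n-1})=\sum_r w_r\bigl\{2\,\bm D^{(r)}_{n-1}(s_r)^T(\bm e_a-\bone/(K+1))+K/(K+1)\bigr\}=\mathrm{Imb}_a(s)-\sum_r w_r\|\bm D^{(r)}_{n-1}(s_r)\|^2 .
\]
Here I used $\bm D^{(r)}(l)^T\bone=0$. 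The conditional drift given $\bm M_{n-1}=\bm m$ is therefore $\sum_s p_s\sum_a\chi_a(s,\bm m)\,2\,\bm x_s^T\bm e_a+c_0$, where $\bm x_s=\sum_r w_r\bm m^{(r)}(s_r)$ is a zero-sum vector over arms and $c_0=\sum_r w_rK/(K+1)$.

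The first key step bounds this drift. The scores are $\mathrm{Imb}_a(s)=\text{const}+2x_{s,a}$. By the monotonicity and equal-score properties, $\sum_a\chi_a x_{s,a}\le \bar x_s=0$, using the rearrangement (Chebyshev sum) inequality. The margin property sharpens this to
\[
\sum_a\chi_a x_{s,a}\le-\tfrac{\delta_1}{K+1}\,(\max_a x_{s,a}-\min_a x_{s,a})
\]
or a comparable bound. I would prove the sharpened bound by pairing the minimizing and maximizing arms and moving mass $\delta_1$ between them.

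The second key step shows that $\sum_s p_s(\max_a x_{s,a}-\min_a x_{s,a})\ge c\|\bm m\|$ for some $c>0$. Since the range is a seminorm, it suffices to show that $\bm x_s=0$ for every $s$ forces $\bm m=0$. Suppose $\sum_r w_r\bm m^{(r)}(s_r)=0$ for all profiles $s$; this is where the assumption that all combinations of levels are strata is used. Then each $\bm m^{(r)}(l)$ is a constant vector $\bm c_r$ independent of $l$, with $\sum_r w_r\bm c_r=0$. There is also the identity $\sum_l\bm m^{(r)}(l)=\bm D_n^{\text{overall}}$, which is the same for every $r$. I expect this to be the main obstacle: a nontrivial kernel may exist, because $\bm c_r$ can be nonzero while $\sum_rw_r\bm c_r=0$. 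Combining the equal overall total $L_r\bm c_r=\bm D$ with $\sum_r w_r\bm c_r=0$ gives $\bm D\sum_r w_r/L_r=0$, so $\bm D=0$ and hence all $\bm c_r=0$. The kernel is therefore trivial, and homogeneity plus finite dimensionality yield $c>0$.

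Combining the two steps gives $\E\{V(\bm M_n)-V(\bm M_{n-1})\mid\bm M_{n-1}=\bm m\}\le c_0-c'\|\bm m\|$. Because $V$ is comparable to $\|\bm m\|^2$, this implies geometric drift for $\sqrt{V}$. Indeed, $\sqrt{V}$ changes by a bounded amount per step and has drift $\le -c''<0$ outside a finite set $F$. A standard result then gives a geometric moment for hitting times of $F$; Hajek's theorem for bounded-increment processes with negative drift is one such result. The finite set $F$ consists of lattice points in a bounded region. Condition~(i) makes each point of $F$ lead to $\bm 0$ within a bounded number of steps with probability bounded below. I would use the explicit sequences built in Step~3 of the proof of Proposition~\ref{prop:nondegenerate}, which also guarantee irreducibility toward $\bm 0$. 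A geometric-trials argument then yields $\E(\theta^{\varrho})<\infty$ for some $\theta>1$. Finally, $\E(\varrho^2)<\infty$ follows immediately, which is condition~(ii).
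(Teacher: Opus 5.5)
Your proposal is correct and takes essentially the same route as the paper: the same Lyapunov function $V$, the same drift bound from the monotonicity and margin properties, the same injectivity argument using that level sums agree across factors, bounded-increment negative drift for $\sqrt V$ outside a finite set, and a geometric-trials return to $\bm 0$. Your Chebyshev pairing of a minimizing with a maximizing arm, giving $-\delta_1(\max_a x_{s,a}-\min_a x_{s,a})/(K+1)$, is a clean substitute for the paper's Abel summation.

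One correction is needed. The sequences in Step~3 of the proof of Proposition~\ref{prop:nondegenerate} start at $\bm M_0=\bm 0$, so they do not by themselves show that $\bm 0$ is reachable from every attainable state of $F$. You can fix this in either of two ways. One is to choose the cell counts so that they cancel the current within-stratum imbalance. The other is the paper's argument: under equal allocation, an assignment is undone by assigning $K$ further subjects of the same stratum, one to each other arm, and each such step has positive probability.
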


\begin{proof}
\textit{Step 1: a linear drift.} For a configuration $\bm m$ of marginal
imbalances, let $V(\bm m)=\sum_rw_r\sum_l\|\bm D^{(r)}(l)\|^2$ and
$\bar w=\sum_rw_r$. Because every $\bm D^{(r)}(l)$ sums to zero, assigning a
subject with profile $s$ to arm $a$ increases $V$ by
$2S_a(s;\bm m)+\bar wK/(K+1)$, where
$S_a(s;\bm m)=\sum_rw_rD^{(r)}_a(s_r)$, and $\mathrm{Imb}_a(s)$ differs from
this increase by a term that does not depend on $a$. Fix $\bm m$ and $s$,
and write $\bm S(s)=\{S_a(s;\bm m)\}_a$, which sums to zero. Let
$v_1<\cdots<v_q$ be its distinct values, $n_i$ the number of arms with
value $v_i$, and $\varpi_i$ their common assignment probability. Abel
summation gives
\[
\sum_a\chi_a(s,\bm m)S_a(s;\bm m)
=\sum_{i=1}^{q-1}(\varpi_i-\varpi_{i+1})\sum_{j\leq i}n_jv_j .
\]
The partial sums first decrease and then increase to zero, so each is at
most $\max(v_1,-v_q)$. A zero-sum vector satisfies $|v_1|\geq v_q/K$ and
$v_q\geq|v_1|/K$, so each partial sum is at most $-\|\bm S(s)\|_\infty/K$.
Since $\varpi_i\geq\varpi_{i+1}$ and $\varpi_1-\varpi_q\geq\delta_1$,
$\sum_a\chi_aS_a(s;\bm m)\leq-\delta_1\|\bm S(s)\|_\infty/K$, which also
holds when $\bm S(s)=\bm 0$. Averaging over the profile of the next
subject, for every attainable $\bm m$,
\[
\E\{V(\bm M_n)-V(\bm M_{n-1})\mid\bm M_{n-1}=\bm m\}
\leq-\frac{2\delta_1}{K}\sum_sp_s\max_a|S_a(s;\bm m)|
+\frac{\bar wK}{K+1}.
\]
The map $\bm m\mapsto\{S_a(s;\bm m)\}_{a,s}$ is linear and injective on the
subspace $\mathcal M$ of margin configurations whose level sums agree
across factors, which contains every $\bm M_n$. Indeed, if
$S_a(s;\bm m)=0$ for every arm and profile, comparing two profiles that
differ only in factor $r$ shows that $\bm D^{(r)}(l)=\bm c_r$ does not
depend on $l$. Then $L_r\bm c_r$ equals the common level sum $\bm c$ for
every $r$, where $L_r$ is the number of levels of factor $r$, and
$\bm 0=\sum_rw_r\bm c_r=\bm c\sum_rw_r/L_r$ forces $\bm c=\bm 0$ and
$\bm m=\bm 0$. Hence there is $c_1>0$ such that, for all attainable
$\bm m\in\mathcal M$,
$\E\{V(\bm M_n)-V(\bm M_{n-1})\mid\bm M_{n-1}=\bm m\}
\leq-c_1\|\bm m\|+\bar wK/(K+1)$, where $\|\cdot\|$ is the Euclidean norm.

\textit{Step 2: a Lyapunov function with bounded increments.} The function
$U=V^{1/2}$ is a weighted Euclidean norm, so
$|U(\bm M_n)-U(\bm M_{n-1})|\leq B=\{\bar wK/(K+1)\}^{1/2}$ and
$U(\bm m)\leq w_{\max}^{1/2}\|\bm m\|$, with $w_{\max}=\max_rw_r$. Let
$\mathcal C=\{\bm m\in\mathcal M:\|\bm m\|<2\bar wK/\{(K+1)c_1\}\}$, which
contains finitely many attainable states. For $\bm m\notin\mathcal C$,
Step 1 bounds the expected increase of $V$ by $-c_1\|\bm m\|/2$, and the
inequality $(V+d)^{1/2}\leq V^{1/2}+d/(2V^{1/2})$ gives
\[
\E\{U(\bm M_n)-U(\bm M_{n-1})\mid\bm M_{n-1}=\bm m\}
\leq-\frac{c_1\|\bm m\|}{4U(\bm m)}\leq-\varepsilon_0,
\qquad\varepsilon_0=\frac{c_1}{4w_{\max}^{1/2}}.
\]

\textit{Step 3: hitting the finite set.} Because
$e^y\leq1+y+y^2e^{|y|}/2$, for $\lambda>0$ and $\bm m\notin\mathcal C$,
\[
\E\bigl[e^{\lambda\{U(\bm M_n)-U(\bm M_{n-1})\}}\bigm|\bm M_{n-1}=\bm m\bigr]
\leq1-\lambda\varepsilon_0+\tfrac12\lambda^2B^2e^{\lambda B}=\rho_\lambda ,
\]
and $\rho_\lambda<1$ for small $\lambda$. Fix such a $\lambda$, and let
$\sigma_{\mathcal C}=\min\{n\geq0:\bm M_n\in\mathcal C\}$. For a chain
started at $\bm m\notin\mathcal C$, the process
$\rho_\lambda^{-(n\wedge\sigma_{\mathcal C})}
e^{\lambda U(\bm M_{n\wedge\sigma_{\mathcal C}})}$ is a nonnegative
supermartingale, so Fatou's lemma gives
$\E_{\bm m}(\rho_\lambda^{-\sigma_{\mathcal C}})\leq e^{\lambda U(\bm m)}$,
and in particular $\sigma_{\mathcal C}<\infty$ almost surely.

\textit{Step 4: returning to the balanced state.} Under equal allocation,
$\sum_b\bm g(s,b)=\bm 0$, so $-\bm g(s,a)=\sum_{b\neq a}\bm g(s,b)$ and, by
condition~(i), every step can be undone by $K$ steps of positive
probability. The chain can therefore return to $\bm 0$ from every
attainable state. Since $\bm 0\in\mathcal C$ and $\mathcal C$ contains
finitely many attainable states, there are an integer $m_0\geq1$ and
$\eta>0$ such that, from every attainable state of $\mathcal C$, the chain
visits $\bm 0$ within $m_0$ steps, counting the starting time, with
probability at least $\eta$. Let $\mathcal F_n$ be the $\sigma$-field
generated by $(Z_1,A_1,\ldots,Z_n,A_n)$. Starting from $\bm M_0=\bm 0$, let
$T_0=1$ and, for $j\geq1$, $\tau_j=\min\{n\geq T_{j-1}:\bm M_n\in\mathcal C\}$
and $T_j=\tau_j+m_0$. Since $\tau_j\geq1$, a visit to $\bm 0$ at a time in
$\{\tau_j,\ldots,T_j\}$ is a return, and by the strong Markov property and
the tower property,
$\Pp(\text{no visit to }\bm 0\text{ in round }j\mid\mathcal F_{T_{j-1}})\leq1-\eta$.
The state $\bm M_{T_{j-1}}$ satisfies
$U(\bm M_{T_{j-1}})\leq U_1=\max_{\mathcal C}U+m_0B$, where the maximum is
over attainable states, so by Step 3 the round length
$\ell_j=T_j-T_{j-1}$ satisfies
$\E(\rho_\lambda^{-\ell_j}\mid\mathcal F_{T_{j-1}})
\leq\rho_\lambda^{-m_0}e^{\lambda U_1}$. For $1<\theta\leq\rho_\lambda^{-1/2}$,
Jensen's inequality gives
$\E(\theta^{2\ell_j}\mid\mathcal F_{T_{j-1}})\leq A_\theta
=(\rho_\lambda^{-m_0}e^{\lambda U_1})^{2\log\theta/\log\rho_\lambda^{-1}}$,
which tends to one as $\theta\downarrow1$, and the conditional
Cauchy--Schwarz inequality gives
\[
\E\{\theta^{\ell_j}\ind(\text{no visit to }\bm 0\text{ in round }j)
\mid\mathcal F_{T_{j-1}}\}\leq\{A_\theta(1-\eta)\}^{1/2},
\qquad
\E(\theta^{\ell_j}\mid\mathcal F_{T_{j-1}})\leq A_\theta^{1/2}.
\]
If round $J$ is the first to contain a visit to $\bm 0$, then
$\varrho\leq T_J=1+\sum_{j\leq J}\ell_j$, and iterating the two bounds gives
\[
\E(\theta^{\varrho})\leq\theta A_\theta^{1/2}\sum_{j\geq0}\{A_\theta(1-\eta)\}^{j/2}<\infty
\]
once $\theta$ is close enough to one that $A_\theta(1-\eta)<1$.
\end{proof}

\paragraph{Numerical check for the absolute-deviation rule.} The four-arm
Pocock--Simon rule of Section~\ref{sec:simulation} uses the sum of
absolute deviations, which Lemma~\ref{lem:return-time-variance} does not
cover. It satisfies condition~(i) with $\delta=0.2/3$, since the arms that
do not minimize the score share probability $0.20$. We simulated
$5.4\times10^5$ return cycles of its marginal imbalance process, using an
implementation whose assignment probabilities agree exactly with those of
the simulation code (script \texttt{verify\_ps\_nondegeneracy.py} in the
reproducibility materials). The return time had mean 590 and
$\E(\varrho^2)/\{\E(\varrho)\}^2=2.03$, and $\log\Pp(\varrho>t)$ decreased
linearly in $t$ up to $t=6000$, where $\Pp(\varrho>6000)=4\times10^{-5}$.
This is consistent with a geometric tail and hence with condition~(ii),
although a simulation cannot establish a moment condition. The renewal
estimate $\E(\bm\zeta\bm\zeta^T)/\E(\varrho)$ had three positive
eigenvalues, between $0.064$ and $0.065$, as predicted by
Remark~\ref{rmk:label-symmetric} with $\dim\mathcal V=3$. The design
simulation with $2\times10^4$ trials gave three eigenvalues on
$\mathcal V$ between $0.064$ and $0.066$ at each of $N=800$, $1200$, and
$4800$. Its trace on $\mathcal V^\perp$ was $0.0037$, $0.0025$, and
$0.0006$, decreasing in proportion to $1/N$ as bounded second moments of
the marginal imbalances imply.

\section[\texorpdfstring{Proof of Theorem~\ref*{thm:standardized-fclt}}{Proof of Theorem 3}]{Proof of \texorpdfstring{Theorem~\ref{thm:standardized-fclt}}{Theorem 3}}
\label{sec:supp-conditional-censoring}

For arm $a$ and stratum $s$, let
\[
\mu_a(s,\tau)
=
\int_0^\tau S_a(t\mid s)\,dt,
\qquad
g_a(s,\tau)
=
\mu_a(s,\tau)-\mu_a(\tau),
\]
and let $\phi_{ia}^{s}(\tau)$ be the stratum-specific RMST influence
function defined in the main paper. When censoring is independent
conditionally on strata,
\[
\E\{\phi_{ia}^{s}(\tau)\mid Z_i=s\}=0.
\]

\begin{lemma}[Uniform expansion within a treatment--stratum cell]
\label{lem:stratum-km-expansion}
Under Assumptions~\ref{asu:sampling}, \ref{asu:randomization}, and~\ref{asu:conditional-censoring} of the main paper and the
allocation-consistency condition of Theorem~\ref{thm:standardized-fclt}, for every
$a=0,\ldots,K$ and $s=1,\ldots,J$,
\begin{equation}
\label{eq:s-stratum-km-expansion}
\sup_{\tau\in\calT}
\left|
\widehat\mu_a(s,\tau)-\mu_a(s,\tau)
-
\frac{1}{n_a(s)}
\sum_{i=1}^N
A_{ia}\ind(Z_i=s)\phi_{ia}^{s}(\tau)
\right|
=
o_p\{n_a(s)^{-1/2}\}.
\end{equation}
The expansions hold jointly over the finitely many treatment--stratum
cells.
\end{lemma}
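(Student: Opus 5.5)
The plan is to repeat the functional-delta-method argument of Lemma~\ref{lem:ual}, but inside a single treatment--stratum cell. The cell size $n_a(s)$ takes the place of $N$, and the cell-level conditional law $P_s$ takes the place of the marginal law.

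First, allocation consistency and $n(s)/N\to p_s>0$ give $n_a(s)/N\to\pi_ap_s>0$ in probability. Hence $n_a(s)\to\infty$, and Lemma~\ref{lem:cell-processes} applies to every cell. Define the cell-normalized at-risk and failure processes
\[
\widehat y_{a,s}(t)=Y_{a,s}(t)/n_a(s),\qquad \widehat F_{1a,s}(t)=N_{a,s}(t)/n_a(s).
\]
By Assumptions~\ref{asu:sampling} and~\ref{asu:randomization}, conditionally on $(\mathcal Z_N,\mathcal A_N)$ the subjects in cell $(a,s)$ carry i.i.d.\ copies of $\{X_i(a),\Delta_i(a)\}$ from $P_s$. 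The indicator classes $\{\ind(x\ge t)\}$ and $\{\ind(x\le t,\delta=1)\}$ on $[0,\tau_U]$ are VC classes with bounded envelope. Lemma~\ref{lem:cell-processes} therefore gives, jointly over cells,
\[
\sqrt{n_a(s)}\,(\widehat y_{a,s}-y_{a,s},\ \widehat F_{1a,s}-F_{1a,s})\rights(\mathbb H^Y_{as},\mathbb H^F_{as}).
\]
The limits are independent tight Brownian bridges. A Glivenko--Cantelli step (the same conditioning argument) gives uniform consistency, so $\inf_{t\le\tau_U}\widehat y_{a,s}(t)\ge\epsilon/2$ with probability tending to one, by the stratum-level positivity in Assumption~\ref{asu:conditional-censoring}.

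Next, compose the Hadamard differentiable maps already used in Lemma~\ref{lem:ual}.
\begin{itemize}
\item The Nelson--Aalen map $(H,F)\mapsto\int dF/H$, taken at $(y_{a,s},F_{1a,s})$.
\item The product-integral map, taken at $\Lambda_{a,s}$. It is differentiable because $1-\Delta\Lambda_{a,s}\ge\epsilon$.
\item The continuous integration map $f\mapsto\int_0^\tau f$ into $\ell^\infty(\calT)$.
\end{itemize}
Conditional independent censoring ensures that the product integral of $\Lambda_{a,s}$ equals $S_a(\cdot\mid s)$, so $\widehat\mu_a(s,\tau)$ is the composed functional evaluated at the empirical inputs. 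The derivative applied to the centered empirical pair equals $n_a(s)^{-1}\sum_iA_{ia}\ind(Z_i=s)\phi^s_{ia}(\tau)$, by the same algebra that produced \eqref{eq:s-na-expansion} and \eqref{eq:s-product-derivative}, with stratum-specific quantities in place of marginal ones. The functional delta method (\citet[Theorem~3.9.4]{vaart1996weak}, in its form for random index sequences) then yields \eqref{eq:s-stratum-km-expansion}, with remainder $o_p(1)$ after multiplication by $\sqrt{n_a(s)}$. The expansions hold jointly over cells because the inputs converge jointly over the finitely many cells.

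The main obstacle is that $n_a(s)$ is random and depends on the design, whereas the delta method is stated for deterministic rates. I would handle this through the conditional structure. Given $(\mathcal Z_N,\mathcal A_N)$, the cell is an ordinary i.i.d.\ sample of size $m=n_a(s)$, and the remainder is a deterministic function of that sample. Let $r_s(m,\varepsilon)$ be the probability that the i.i.d.\ remainder exceeds $\varepsilon$ at sample size $m$. The delta method gives $r_s(m,\varepsilon)\to0$ as $m\to\infty$. Then $\Pp(\text{remainder}>\varepsilon)=\E\,r_s\{n_a(s),\varepsilon\}\to0$ by dominated convergence, since $n_a(s)\to\infty$ in probability. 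Lemma~\ref{lem:cell-processes} uses exactly this device, and the same conditioning delivers the joint statement, because the cells are conditionally independent.
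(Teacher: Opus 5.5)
Your proposal is correct and is essentially the paper's proof. The paper likewise uses the Nelson--Aalen, product-integral, and integration Hadamard chain for an i.i.d.\ cell sample of deterministic size $m$, then handles the random $n_a(s)$ by conditioning on $(\mathcal Z_N,\mathcal A_N)$ (given $n_a(s)=m$ the cell remainder has the law of the i.i.d.\ remainder $\rho_m$) and applying dominated convergence. Your preliminary detour through the cell empirical-process lemma and a ``random-index'' delta method is therefore unnecessary but harmless, since $n_a(s)/N\to\pi_ap_s>0$ would also let you run the ordinary delta method at the deterministic rate $\sqrt N$ via Slutsky.
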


\begin{proof}
Conditional on the complete stratum sequence and treatment assignments,
randomization is independent of the potential event and censoring times.
Therefore, the observations selected into cell $(a,s)$ are independent
draws from the conditional distribution of
\[
\{X_i(a),\Delta_i(a)\}\mid Z_i=s.
\]
The allocation-consistency condition gives
\[
\frac{n_a(s)}{N}
=
\frac{n_a(s)}{n(s)}\frac{n(s)}{N}
\xrightarrow{P}\pi_ap_s>0,
\]
so every cell size diverges in probability.

Within a fixed cell, the estimator is an ordinary Kaplan--Meier RMST
estimator. Conditional independent censoring identifies
$S_a(\cdot\mid s)$, and the stratum-specific positivity condition keeps its
at-risk denominator and product-integral jump correction separated from
zero through $\tau_U$.

The random cell size requires one additional step. For a deterministic
sample size $m\to\infty$, the classical iid Kaplan--Meier theory, obtained
from the Hadamard-differentiability chain of
Lemma~\ref{lem:ual} with a single stratum and $\pi_a=1$ (Nelson--Aalen
integration, product integration, and integration over $[0,\tau]$, as in
\citealp{gill1990survey} and
\citealp[Chapter~IV]{andersen1993statistical}), gives a remainder term
$\rho_m$ with $\sqrt m\,\rho_m\to0$ in probability under the
cell-specific sampling distribution. Here Lemma~\ref{lem:if-bounds}
supplies the envelope and Lipschitz bounds required for the cell-level
empirical processes, so no separate moment condition is needed. Now
condition on $(\mathcal Z_N,\mathcal A_N)$: given $n_a(s)=m$, the cell
remainder has exactly the law of $\rho_m$. For any $\delta>0$,
\[
\Pp\{\sqrt{n_a(s)}\,|\rho_{n_a(s)}|>\delta\}
=
\E\left[
\Pp\{\sqrt m\,|\rho_m|>\delta\}\big|_{m=n_a(s)}
\right]
\longrightarrow0,
\]
by dominated convergence, because
$n_a(s)\to\infty$ in probability and the conditional probabilities are
bounded by one. This is precisely
\eqref{eq:s-stratum-km-expansion}. Because the number of arms and strata is
fixed, the cellwise expansions hold jointly.
\end{proof}

\begin{lemma}[Uniform linear representation after standardization]
\label{lem:standardized-linear}
Under the conditions of Theorem~\ref{thm:standardized-fclt} in the main paper, uniformly in
$\tau\in\calT$ and jointly over arms,
\begin{equation}
\label{eq:s-standardized-linear}
\sqrt N
\{\widehat\mu_a^{\mathrm{st}}(\tau)-\mu_a(\tau)\}
=
\frac{1}{\sqrt N}
\sum_{i=1}^N
\left[
g_a(Z_i,\tau)
+
\frac{A_{ia}}{\pi_a}\phi_{ia}^{Z_i}(\tau)
\right]
+o_p(1).
\end{equation}
\end{lemma}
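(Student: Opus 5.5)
The plan is to start from the exact algebraic identity
\[
\widehat\mu_a^{\mathrm{st}}(\tau)-\mu_a(\tau)
=
\sum_{s=1}^J(\widehat p_s-p_s)\mu_a(s,\tau)
+
\sum_{s=1}^J\widehat p_s\{\widehat\mu_a(s,\tau)-\mu_a(s,\tau)\},
\]
which uses the standardization identity \eqref{eq:standardization-identity}, and then treat the two pieces separately.

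\emph{Composition piece.} Since $\sum_s(\widehat p_s-p_s)=0$, we may replace $\mu_a(s,\tau)$ by $g_a(s,\tau)$. Then
\[
\sqrt N\sum_s(\widehat p_s-p_s)g_a(s,\tau)=N^{-1/2}\sum_i g_a(Z_i,\tau),
\]
using $\sum_sp_sg_a(s,\tau)=0$. This holds exactly, uniformly in $\tau$.

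\emph{Cell piece.} Here I invoke Lemma~\ref{lem:stratum-km-expansion}. Multiplying its expansion by $\sqrt N\,\widehat p_s$ gives
\[
\sqrt N\,\widehat p_s\{\widehat\mu_a(s,\tau)-\mu_a(s,\tau)\}
=
\frac{\sqrt N\,\widehat p_s}{n_a(s)}\sum_{i}A_{ia}\ind(Z_i=s)\phi_{ia}^{s}(\tau)
+\sqrt N\,\widehat p_s\,o_p\{n_a(s)^{-1/2}\}.
\]
The remainder is $o_p(1)$ uniformly in $\tau$, because $n_a(s)/N\to\pi_ap_s>0$ by allocation consistency \eqref{eq:allocation-consistency-standardized}. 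For the leading term, write
\[
\frac{\sqrt N\,\widehat p_s}{n_a(s)}=\frac{1}{\sqrt N\,\widehat\pi_a(s)}.
\]
Replacing $\widehat\pi_a(s)$ by $\pi_a$ costs
\[
\Bigl\{\frac{1}{\widehat\pi_a(s)}-\frac{1}{\pi_a}\Bigr\}\,\frac{1}{\sqrt N}\sum_iA_{ia}\ind(Z_i=s)\phi_{ia}^{s}(\cdot).
\]
The first factor is $o_p(1)$, so it suffices to show that the cell sum is $O_p(1)$ in $\ell^\infty(\calT)$.

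\emph{Tightness of the cell sum.} This is the only genuinely probabilistic step. It follows because $\phi_{ia}^s$ is conditionally centered given $Z_i=s$. Conditionally on $(\mathcal Z_N,\mathcal A_N)$, the summands of cell $(a,s)$ are i.i.d.\ draws with mean zero, by Assumption~\ref{asu:randomization} and the argument of Lemma~\ref{lem:cell-processes}. By Lemma~\ref{lem:if-bounds} they are bounded and Lipschitz in $\tau$, and the cell size is at most $N$. The Lipschitz-bracketing argument used in Lemma~\ref{lem:car-clt}, or equivalently Lemma~\ref{lem:cell-processes} applied to the Lipschitz class $\{\phi^s_{\cdot a}(\tau):\tau\in\calT\}$, therefore gives
\[
\sup_\tau\Bigl|n_a(s)^{-1/2}\sum A_{ia}\ind(Z_i=s)\phi_{ia}^s(\tau)\Bigr|=O_p(1).
\]
The point to stress is that the conditional centering is exact. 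As a result, no term proportional to $D_{N,a}(s)/\sqrt N$ survives, whatever the order of the imbalance. This is why only allocation consistency is needed, rather than Assumption~\ref{asu:bounded-balance} or Assumption~\ref{asu:general-imbalance}.

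\emph{Conclusion.} Summing over the finitely many strata yields \eqref{eq:s-standardized-linear}, since
\[
\sum_s\pi_a^{-1}N^{-1/2}\sum_iA_{ia}\ind(Z_i=s)\phi_{ia}^{s}=N^{-1/2}\sum_iA_{ia}\phi_{ia}^{Z_i}/\pi_a.
\]
Joint validity over arms follows from the joint statement in Lemma~\ref{lem:stratum-km-expansion} and the finiteness of $K$ and $J$.
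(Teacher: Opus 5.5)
Your proposal is correct and follows the paper's route: the same split into an exact composition term equal to $N^{-1/2}\sum_i g_a(Z_i,\tau)$ and a cell term handled by Lemma~\ref{lem:stratum-km-expansion}, followed by replacing $\widehat\pi_a(s)^{-1}$ with $\pi_a^{-1}$ using tightness of the conditionally centered cell sums. You spell out what the paper leaves terse, namely why the rescaled remainder is $o_p(1)$ and why each cell sum is $O_p(1)$ in $\ell^\infty(\calT)$.
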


\begin{proof}
Using
$\mu_a(\tau)=\sum_sp_s\mu_a(s,\tau)$, decompose
\begin{align}
\label{eq:s-standardized-decomposition}
\sqrt N
\{\widehat\mu_a^{\mathrm{st}}(\tau)-\mu_a(\tau)\}
={}&
\sqrt N
\sum_{s=1}^J
\{\widehat p_s-p_s\}\mu_a(s,\tau)\notag\\
&+
\sqrt N
\sum_{s=1}^J
\widehat p_s
\{\widehat\mu_a(s,\tau)-\mu_a(s,\tau)\}.
\end{align}
The first term is exactly
\begin{equation}
\label{eq:s-standardized-composition}
\frac{1}{\sqrt N}
\sum_{i=1}^N
\{\mu_a(Z_i,\tau)-\mu_a(\tau)\}
=
\frac{1}{\sqrt N}
\sum_{i=1}^Ng_a(Z_i,\tau).
\end{equation}

By Lemma~\ref{lem:stratum-km-expansion}, the second term in
\eqref{eq:s-standardized-decomposition} equals
\begin{align}
\label{eq:s-standardized-residual-pre}
&\frac{1}{\sqrt N}
\sum_{s=1}^J
\frac{n(s)}{n_a(s)}
\sum_{i=1}^N
A_{ia}\ind(Z_i=s)\phi_{ia}^{s}(\tau)
+o_p(1)\notag\\
&\quad=
\frac{1}{\sqrt N}
\sum_{s=1}^J
\frac{1}{\widehat\pi_a(s)}
\sum_{i=1}^N
A_{ia}\ind(Z_i=s)\phi_{ia}^{s}(\tau)
+o_p(1),
\end{align}
uniformly in $\tau$. For each fixed cell,
\[
\widehat\pi_a(s)^{-1}-\pi_a^{-1}=o_p(1)
\]
and, by conditional centering and the moment assumptions,
\[
\sup_{\tau\in\calT}
\left|
\frac{1}{\sqrt N}
\sum_{i=1}^N
A_{ia}\ind(Z_i=s)\phi_{ia}^{s}(\tau)
\right|
=O_p(1).
\]
Because $J$ is fixed, replacing
$\widehat\pi_a(s)^{-1}$ by $\pi_a^{-1}$ in
\eqref{eq:s-standardized-residual-pre} changes it by $o_p(1)$. Combining
the result with \eqref{eq:s-standardized-composition} proves
\eqref{eq:s-standardized-linear}.
\end{proof}

\begin{proof}[Proof of Theorem~\ref{thm:standardized-fclt} in the main paper]
By Lemma~\ref{lem:standardized-linear}, write
\[
\sqrt N
\{\widehat\mu_a^{\mathrm{st}}(\tau)-\mu_a(\tau)\}
=
B_{N,a}^{\mathrm{st}}(\tau)
+
R_{N,a}^{\mathrm{st}}(\tau)
+o_p(1),
\]
where
\begin{align*}
B_{N,a}^{\mathrm{st}}(\tau)
&=
\frac{1}{\sqrt N}
\sum_{i=1}^Ng_a(Z_i,\tau),\\
R_{N,a}^{\mathrm{st}}(\tau)
&=
\frac{1}{\sqrt N}
\sum_{i=1}^N
\frac{A_{ia}}{\pi_a}
\phi_{ia}^{Z_i}(\tau).
\end{align*}

The vector $\bm B_N^{\mathrm{st}}$ is an ordinary empirical process over a
finite-valued stratum variable. It converges in
$\ell^\infty(\calT)^{K+1}$ to a mean-zero Gaussian process with covariance
\begin{equation}
\label{eq:s-standardized-composition-kernel}
\Cov\{B_a^{\mathrm{st}}(\tau),B_b^{\mathrm{st}}(\tau')\}
=
\E\{g_a(Z_i,\tau)g_b(Z_i,\tau')\}.
\end{equation}

Conditional on the strata and assignments,
$\bm R_N^{\mathrm{st}}$ is mean zero. For $a=b$, its conditional covariance
is
\begin{align}
\label{eq:s-standardized-residual-covariance}
&\Cov\{R_{N,a}^{\mathrm{st}}(\tau),
R_{N,a}^{\mathrm{st}}(\tau')
\mid\mathcal Z_N,\mathcal A_N\}\notag\\
&\quad=
\frac{1}{N\pi_a^2}
\sum_{s=1}^J
n_a(s)v_a(s;\tau,\tau')
\xrightarrow{P}
\frac{1}{\pi_a}
\sum_{s=1}^Jp_sv_a(s;\tau,\tau').
\end{align}
For $a\neq b$, the conditional covariance is zero because
$A_{ia}A_{ib}=0$ and the subject-specific residuals are conditionally
centered. By Lemma~\ref{lem:if-bounds}, the summands are uniformly bounded
and Lipschitz on $\calT$, so the conditional Lindeberg condition holds
trivially and the Lipschitz-bracketing argument of Lemma~\ref{lem:car-clt} gives
conditional asymptotic equicontinuity. This yields stable convergence of
$\bm R_N^{\mathrm{st}}$ to a Gaussian process with covariance
\[
\ind(a=b)\frac{1}{\pi_a}
\E\{v_a(Z_i;\tau,\tau')\}.
\]
Stable convergence also makes this residual limit independent of the
stratum-composition limit in
\eqref{eq:s-standardized-composition-kernel}.

Adding the two independent limits proves that the arm-specific covariance
kernel is
\[
\Gamma_{\mu,ab}^{\mathrm{st}}(\tau,\tau')
=
\E\{g_a(Z_i,\tau)g_b(Z_i,\tau')\}
+
\ind(a=b)\frac{1}{\pi_a}
\E\{v_a(Z_i;\tau,\tau')\}.
\]
Multiplication by
$\bm C=[-\bone_K,\bm I_K]$ gives the treatment-control process. Direct
expansion gives
\begin{align*}
\Gamma_{\theta,k\ell}^{\mathrm{st}}(\tau,\tau')
={}&
\E\!\left[
\{g_k(Z_i,\tau)-g_0(Z_i,\tau)\}
\{g_\ell(Z_i,\tau')-g_0(Z_i,\tau')\}
\right]\\
&+
\ind(k=\ell)\frac{1}{\pi_k}
\E\{v_k(Z_i;\tau,\tau')\}
+
\frac{1}{\pi_0}
\E\{v_0(Z_i;\tau,\tau')\}.
\end{align*}

Most importantly, no function of
$D_{N,a}(s)/\sqrt N$ appears in
\eqref{eq:s-standardized-linear}. The proof uses only
$n_a(s)/n(s)\to\pi_a$, not a central limit theorem for the imbalance.
Thus no $\bm Q\bm\Omega_D\bm Q^T$ term occurs, including under
Pocock--Simon minimization. No equality among the arm-specific survival
functions was used, so the result holds under arbitrary treatment effects.
\end{proof}

\section[\texorpdfstring{Proof of Proposition~\ref*{prop:covariance-consistency}}{Proof of Proposition 3}]{Proof of \texorpdfstring{Proposition~\ref{prop:covariance-consistency}}{Proposition 3}}
\label{sec:supp-covariance}

\begin{lemma}[Consistency of the empirical influence moments]
\label{lem:if-moments}
Under the assumptions of Theorem~\ref{thm:general-car-fclt}, for every $a$ and $s$,
\begin{align}
\label{eq:s-mhat-consistency}
\sup_{\tau\in\calT}
|\widehat m_a(s,\tau)-m_a(s,\tau)|
&\xrightarrow{P}0,\\
\label{eq:s-chat-consistency}
\sup_{\tau,\tau'\in\calT}
|\widehat c_a(s;\tau,\tau')-c_a(s;\tau,\tau')|
&\xrightarrow{P}0.
\end{align}
\end{lemma}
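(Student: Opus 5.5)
The plan is to separate the estimation error in $\widehat\phi_{ia}$ from the sampling error of cell averages of the true $\phi_{ia}$. Write
\[
\widehat m_a(s,\tau)-m_a(s,\tau)
=
\frac{1}{n_a(s)}\sum_{i}A_{ia}\ind(Z_i=s)\{\widehat\phi_{ia}(\tau)-\phi_{ia}(\tau)\}
+
\Bigl[\frac{1}{n_a(s)}\sum_{i}A_{ia}\ind(Z_i=s)\phi_{ia}(\tau)-m_a(s,\tau)\Bigr].
\]
The covariance estimator $\widehat c_a$ is decomposed in the same way, as the cell average of products minus the product of cell means.

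The second (oracle) term is handled by Lemma~\ref{lem:cell-processes}. Conditionally on $(\mathcal Z_N,\mathcal A_N)$, the cell $(a,s)$ observations are i.i.d.\ from $P_s$. Every $n_a(s)\to\infty$ in probability, because $n_a(s)/n(s)\to\pi_a$ under Assumption~\ref{asu:general-imbalance} and $n(s)/N\to p_s$. By Lemma~\ref{lem:if-bounds}, the class $\{\phi_{ia}(\tau):\tau\in\calT\}$ is uniformly bounded by $B\tau_U$ and $B$-Lipschitz in $\tau$, so it is Glivenko--Cantelli, and indeed Donsker. The same holds for the product class $\{\phi(\tau)\phi(\tau'):\tau,\tau'\in\calT\}$, which is bounded and Lipschitz on $\calT^2$.

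Lemma~\ref{lem:cell-processes} then gives $\sup_\tau|\mathbb H_{N,as}(\phi(\tau))|=O_p(1)$. Dividing by $\sqrt{n_a(s)}$, the oracle cell averages converge uniformly to $m_a(s,\cdot)$ and to $\E\{\phi(\tau)\phi(\tau')\mid Z=s\}$. Since $n_a(s)/(n_a(s)-1)\to1$, the oracle covariance converges uniformly to $c_a(s;\tau,\tau')$.

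The main step is the estimation term. I would show
\[
\max_{i:A_i=a}\sup_{\tau\in\calT}|\widehat\phi_{ia}(\tau)-\phi_{ia}(\tau)|\xrightarrow{P}0.
\]
This sup over subjects suffices, since it then bounds both cell averages of differences and, via $|\widehat\phi\widehat\phi'-\phi\phi'|\le|\widehat\phi-\phi|(|\widehat\phi'|+|\phi'|)$ together with the uniform bound $B\tau_U$, the product terms.

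To get the sup, write $\widehat\phi_{ia}-\phi_{ia}$ using the integrand representation \eqref{eq:estimated-rmst-if}. The ingredients are the following.
\begin{itemize}
\item[(a)] From Lemma~\ref{lem:ual-general} and its proof: $\sup_{t\le\tau_U}|\widehat S_a-S_a|$, $|\widehat\Lambda_a-\Lambda_a|$ and $|\widehat y_a-y_a|$ are $o_p(1)$, where $\widehat y_a=Y_a/n_a$ and $n_a/N\to\pi_a$.
\item[(b)] The denominators $\widehat y_a\{1-\Delta\widehat\Lambda_a\}$ are bounded below by $\epsilon^2/2$ with probability tending to one.
\end{itemize}

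Each subject's residual integral splits into a $dN_{ia}$ part and a $Y_{ia}\,d\widehat\Lambda_a$ part. The $dN_{ia}$ part is a single jump, evaluated at a ratio that converges uniformly. The $Y_{ia}\,d\widehat\Lambda_a$ part requires comparing $\int_{(0,t]} h_N\,d\widehat\Lambda_a$ with $\int_{(0,t]} h\,d\Lambda_a$, uniformly over subjects, for integrands $h_N(u)=Y_{ia}(u)/[\widehat y_a(u)\{1-\Delta\widehat\Lambda_a(u)\}]$.

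This comparison is the expected obstacle. The difficulty is the jump correction $\Delta\widehat\Lambda_a$: it converges to $\Delta\Lambda_a$ uniformly only in the sense of sup-norm of the cumulative hazard, not pointwise in the atom sizes. I would handle it in three moves.
\begin{itemize}
\item[(1)] Note that $Y_{ia}(u)=\ind(X_i\ge u)$ is monotone, so the integrand has bounded variation, uniformly in $i$.
\item[(2)] Use $\sup_u|\Delta\widehat\Lambda_a(u)-\Delta\Lambda_a(u)|\le2\sup|\widehat\Lambda_a-\Lambda_a|$ to control the correction factor uniformly.
\item[(3)] Apply integration by parts, in the Helly--Bray form, to pass from $d\widehat\Lambda_a$ to $d\Lambda_a$. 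Because the integrands have variation bounded uniformly in $i$, the resulting bound holds uniformly over subjects.
\end{itemize}

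Finally, integrate over $t\in[0,\tau]$ against $\widehat S_a$ versus $S_a$, using the bound $B$. This gives the uniform-in-$(i,\tau)$ convergence.

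For the covariance estimator, also note that $\widehat m_a\widehat m_a'-m_a m_a'\to0$ uniformly by \eqref{eq:s-mhat-consistency}, which completes \eqref{eq:s-chat-consistency}.
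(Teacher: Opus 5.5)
Your proposal is correct and follows essentially the same route as the paper. Both arguments establish uniform consistency of $\widehat y_a$, $\widehat\Lambda_a$, $\widehat S_a$ with denominators bounded away from zero, bound the plug-in error $\widehat\phi_{ia}-\phi_{ia}$, apply a conditional uniform law of large numbers to the oracle cell averages, and expand the empirical covariance, where the Bessel factor tends to one. The differences are minor: you prove the stronger bound $\max_i\sup_\tau|\widehat\phi_{ia}(\tau)-\phi_{ia}(\tau)|\to0$, spelling out the Stieltjes comparison of $d\widehat\Lambda_a$ with $d\Lambda_a$ that the paper leaves implicit, where the paper states only the cell mean-square version, and you obtain the oracle convergence from Lemma~\ref{lem:cell-processes} rather than from a direct conditional Glivenko--Cantelli argument.
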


\begin{proof}
Assumption~\ref{asu:censoring} implies that the population at-risk functions are uniformly
bounded away from zero on $[0,\tau_U]$. The uniform laws of large numbers
used in Lemma~\ref{lem:ual-general} therefore give
\[
\sup_{t\leq\tau_U}
|\widehat y_a(t)-y_a(t)|
+
\sup_{t\leq\tau_U}
|\widehat S_a(t)-S_a(t)|
+
\sup_{t\leq\tau_U}
|\widehat\Lambda_a(t)-\Lambda_a(t)|
\xrightarrow{P}0.
\]
The product-integral jump denominators are also uniformly separated from
zero. Substituting these convergent nuisance estimators into the
influence-function formula and using the uniform bounds of
Lemma~\ref{lem:if-bounds} gives
\begin{equation}
\label{eq:s-if-plug-in}
\frac{1}{n_a(s)}
\sum_{i=1}^N A_{ia}\ind(Z_i=s)
\sup_{\tau\in\calT}
|\widehat\phi_{ia}(\tau)-\phi_{ia}(\tau)|^2
\xrightarrow{P}0.
\end{equation}

Conditional on the strata and assignments, the selected full-data vectors
are independent, and randomization is independent of the potential outcomes.
Moreover,
\[
\frac{n_a(s)}{n(s)}\xrightarrow{P}\pi_a
\quad\text{and}\quad
\frac{n(s)}{N}\xrightarrow{P}p_s.
\]
The conditional uniform law of large numbers therefore yields
\begin{align*}
\frac{1}{n_a(s)}
\sum_{i=1}^N A_{ia}\ind(Z_i=s)\phi_{ia}(\tau)
&\longrightarrow m_a(s,\tau),\\
\frac{1}{n_a(s)}
\sum_{i=1}^N A_{ia}\ind(Z_i=s)
\phi_{ia}(\tau)\phi_{ia}(\tau')
&\longrightarrow
\E\{\phi_{ia}(\tau)\phi_{ia}(\tau')\mid Z_i=s\},
\end{align*}
uniformly in their time arguments. Combining these results with
\eqref{eq:s-if-plug-in} and expanding the empirical covariance proves
\eqref{eq:s-mhat-consistency}--\eqref{eq:s-chat-consistency}.
The main paper uses the finite-sample denominator $n_a(s)-1$ in this
empirical covariance. Since $n_a(s)\to\infty$ in probability and
$n_a(s)/\{n_a(s)-1\}\to1$, this Bessel correction leaves the uniform
probability limit unchanged.
\end{proof}

\begin{lemma}[Consistency of the design Monte Carlo covariance]
\label{lem:omega-consistency}
Suppose the local design-stability and fourth-moment conditions in
Proposition~\ref{prop:covariance-consistency} of the main paper hold and $B_{\mathrm{MC}}\to\infty$. Then
\[
\widehat{\bm\Omega}_D\xrightarrow{P}\bm\Omega_D.
\]
\end{lemma}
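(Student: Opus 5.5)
The plan is to split the error into a Monte Carlo part and a plug-in part:
\[
\widehat{\bm\Omega}_D-\bm\Omega_D
=
\{\widehat{\bm\Omega}_D-\bm\Omega_N(\widehat{\bm p})\}
+
\{\bm\Omega_N(\widehat{\bm p})-\bm\Omega_D\},
\]
where $\bm\Omega_N(\bm q)$ is the exact covariance of $\bm d_N$ when $N$ stratum profiles are drawn i.i.d.\ from $\bm q$ and the trial's algorithm is applied. The design simulation uses only $\widehat{\bm p}$ and the prespecified rule, not the outcome data. Therefore, conditionally on the data, the replicates $\bm d_N^{(b)}$, $b=1,\ldots,B_{\mathrm{MC}}$, are i.i.d.\ with covariance $\bm\Omega_N(\widehat{\bm p})$. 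I handle the two brackets separately and then combine them.

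\textbf{Monte Carlo term.} Conditionally on $\widehat{\bm p}$, $\widehat{\bm\Omega}_D$ is the unbiased sample covariance of $B_{\mathrm{MC}}$ i.i.d.\ vectors. Its conditional mean is therefore $\bm\Omega_N(\widehat{\bm p})$. Each entry has conditional variance bounded by $C\,\E_{\widehat{\bm p}}\|\bm d_N\|^4/B_{\mathrm{MC}}$, with a universal constant $C$, because the mean-correction term contributes only lower order. By conditional Chebyshev, the first bracket is $o_p(1)$ provided $\E_{\widehat{\bm p}}\|\bm d_N\|^4=O_p(1)$.

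The fourth-moment bound is assumed only along deterministic sequences $\bm p_N$ with $\sqrt N(\bm p_N-\bm p)=O(1)$, while $\widehat{\bm p}$ is random. I would handle this with a subsequence argument. Since $\sqrt N(\widehat{\bm p}-\bm p)=O_p(1)$, for each $\varepsilon$ there is $M$ with $\Pp(\sqrt N\|\widehat{\bm p}-\bm p\|\le M)\ge1-\varepsilon$. On that event the fourth moments are bounded by the uniform bound, which I read as holding over the ball $\|\bm q-\bm p\|\le M/\sqrt N$. Under this reading the first bracket is small with probability at least $1-\varepsilon$.

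\textbf{Plug-in term.} The key step is to show $\sup_{\|\bm q-\bm p\|\le M/\sqrt N}\|\bm\Omega_N(\bm q)-\bm\Omega_D\|\to0$ for every $M$. I prove this by contradiction. If it failed, there would be deterministic $\bm q_N$ in the ball with $\|\bm\Omega_N(\bm q_N)-\bm\Omega_D\|\ge\eta>0$ along a subsequence. Such a sequence satisfies $\sqrt N(\bm q_N-\bm p)=O(1)$, so the local stability hypothesis gives $\bm\Omega_N(\bm q_N)\to\bm\Omega_D$, a contradiction. (Any subsequence can be extended to a full admissible sequence by setting $\bm q_N=\bm p$ off it.) On the event $\sqrt N\|\widehat{\bm p}-\bm p\|\le M$, the second bracket is then bounded by this vanishing supremum. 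It follows that $\Pp(\|\bm\Omega_N(\widehat{\bm p})-\bm\Omega_D\|>\eta)\le\varepsilon+o(1)$.

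\textbf{Combination and main obstacle.} Letting $\varepsilon\downarrow0$ and combining the two brackets gives $\widehat{\bm\Omega}_D\xrightarrow{P}\bm\Omega_D$. The main obstacle is exactly this passage from a deterministic-sequence hypothesis to the random plug-in $\widehat{\bm p}$, together with the uniformity of the fourth-moment bound near $\bm p$. If the stated condition is read strictly as pointwise along sequences, the contradiction argument above still yields the needed uniformity on shrinking balls for the covariance. For the moments, the same argument applies provided the fourth-moment bound holds uniformly over such sequences, which is how I would state the hypothesis.
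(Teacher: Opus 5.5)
Your proof is correct and follows the paper's route: the same split into a Monte Carlo error conditional on $\widehat{\bm p}$ (controlled by the fourth-moment bound) and a plug-in error $\bm\Omega_N(\widehat{\bm p})-\bm\Omega_D$ (controlled by local stability together with $\sqrt N(\widehat{\bm p}-\bm p)=O_p(1)$), combined by the triangle inequality. Your subsequence/contradiction step spells out what the paper compresses into ``local design stability then gives,'' and the same argument also upgrades a sequence-wise fourth-moment bound to one that is uniform over the balls $\|\bm q-\bm p\|\le M/\sqrt N$, so your hedge on that hypothesis is unnecessary.
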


\begin{proof}
Let
\[
\bm\Omega_{D,N}(\bm q)
=
\Var_{\bm q}(\bm d_N)
\]
denote the finite-$N$ covariance of the scaled imbalance vector when the
stratum profiles are independently generated from probability vector
$\bm q$ and the prespecified CAR algorithm is applied.

Conditional on the observed stratum frequencies
$\widehat{\bm p}$, the Monte Carlo vectors
$\bm d_N^{(1)},\ldots,\bm d_N^{(B_{\mathrm{MC}})}$ are independent and
identically distributed with covariance
$\bm\Omega_{D,N}(\widehat{\bm p})$. The uniform fourth-moment bound and the
conditional law of large numbers for sample covariance matrices give
\[
\left\|
\widehat{\bm\Omega}_D
-
\bm\Omega_{D,N}(\widehat{\bm p})
\right\|_{\max}
=o_p(1).
\]
Because the number of strata is fixed, the multinomial central limit theorem
implies
\[
\sqrt N(\widehat{\bm p}-\bm p)=O_p(1).
\]
Local design stability then gives
\[
\bm\Omega_{D,N}(\widehat{\bm p})
\xrightarrow{P}\bm\Omega_D.
\]
The triangle inequality proves the result.
\end{proof}

\begin{proof}[Proof of Proposition~\ref{prop:covariance-consistency} in the main paper]
By Lemma~\ref{lem:if-moments},
$\widehat m_a(s,\cdot)$ and
$\widehat c_a(s;\cdot,\cdot)$ converge uniformly to their population
counterparts. Also,
\[
\widehat p_s\xrightarrow{P}p_s
\]
for every stratum. Since
$\sum_s p_s m_a(s,\tau)=0$, centering the empirical conditional means does
not change their limits:
\[
\sup_{\tau\in\calT}
|\widetilde m_a(s,\tau)-m_a(s,\tau)|
\xrightarrow{P}0.
\]
The number of arms and strata is fixed. Substitution into the formula for
the bounded-balance kernel therefore gives
\begin{equation}
\label{eq:s-base-kernel-consistency}
\sup_{\tau,\tau'\in\calT}
\left\|
\widehat{\bm\Gamma}_\mu(\tau,\tau')
-
\bm\Gamma_\mu(\tau,\tau')
\right\|_{\max}
\xrightarrow{P}0.
\end{equation}
The same argument gives
\begin{equation}
\label{eq:s-Q-consistency}
\sup_{\tau\in\calT}
\|\widehat{\bm Q}(\tau)-\bm Q(\tau)\|_{\max}
\xrightarrow{P}0.
\end{equation}

By Lemma~\ref{lem:omega-consistency},
\[
\widehat{\bm\Omega}_D\xrightarrow{P}\bm\Omega_D.
\]
Because all matrix dimensions are fixed, continuous matrix multiplication
together with \eqref{eq:s-Q-consistency} yields
\[
\sup_{\tau,\tau'\in\calT}
\left\|
\widehat{\bm Q}(\tau)\widehat{\bm\Omega}_D
\widehat{\bm Q}(\tau')^T
-
\bm Q(\tau)\bm\Omega_D\bm Q(\tau')^T
\right\|_{\max}
\xrightarrow{P}0.
\]
Combining this display with
\eqref{eq:s-base-kernel-consistency} proves uniform consistency of
$\widehat{\bm\Gamma}_\mu^D$. Finally, multiplication on the left and right
by the fixed contrast matrices $\bm C$ and $\bm C^T$ proves
\[
\sup_{\tau,\tau'\in\calT}
\left\|
\widehat{\bm\Gamma}_\theta^D(\tau,\tau')
-
\bm\Gamma_\theta^D(\tau,\tau')
\right\|_{\max}
\xrightarrow{P}0.
\]
\end{proof}

\section[\texorpdfstring{Proof of Theorem~\ref*{thm:multiplier-validity} and Corollary~\ref*{cor:simultaneous-bands}}{Proof of Theorem 4 and Corollary 1}]{Proof of \texorpdfstring{Theorem~\ref{thm:multiplier-validity} and Corollary~\ref{cor:simultaneous-bands}}{Theorem 4 and Corollary 1}}
\label{sec:supp-multiplier}

For each arm, write the estimated multiplier summand as
\[
\widehat u_{ia}(\tau)
=
\widehat r_{ia}(\tau)
+
\widetilde m_a(Z_i,\tau),
\]
where
\[
\widehat r_{ia}(\tau)
=
\frac{A_{ia}}{\pi_a}
\widehat\kappa_a(Z_i)
\{\widehat\phi_{ia}(\tau)-\widehat m_a(Z_i,\tau)\},
\qquad
\widehat\kappa_a(s)
=
\left\{\frac{\pi_a n(s)}{n_a(s)-1}\right\}^{1/2}.
\]
This is the finite-sample-corrected summand in the main paper. Since
$\widehat\kappa_a(s)\to1$ in probability for every arm and stratum, it has
the same limiting process as the uncorrected summand.
Let
\[
\mathbb G_{\mu,a}^{*,0}(\tau)
=
\frac{1}{\sqrt N}
\sum_{i=1}^N\xi_i\widehat u_{ia}(\tau)
\]
denote the outcome and stratum-composition part of the bootstrap process.

\begin{lemma}[Conditional multiplier limit for the base process]
\label{lem:base-multiplier}
Under the conditions of Theorem~\ref{thm:multiplier-validity} in the main paper, conditionally on the
data and in probability,
\[
\mathbb G_\mu^{*,0}(\cdot)
\rights
\mathbb G_\mu(\cdot)
\quad\text{in }\ell^\infty(\calT)^{K+1},
\]
where $\mathbb G_\mu$ is a mean-zero Gaussian process with covariance
kernel $\bm\Gamma_\mu$.
\end{lemma}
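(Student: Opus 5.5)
The plan is the standard two-part route to conditional weak convergence in $\ell^\infty(\calT)^{K+1}$. First we show that the conditional finite-dimensional laws converge, in probability, to those of $\mathbb G_\mu$. Then we show conditional asymptotic equicontinuity, and combine the two through the bounded-Lipschitz characterization \citep[Section~1.12 and Chapter~2.9]{vaart1996weak}.

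Conditionally on $\mathcal D_N$, the process $\mathbb G_\mu^{*,0}$ is a sum of independent terms $\xi_i\widehat u_{ia}(\tau)$ with mean zero. Its conditional covariance is
\[
N^{-1}\sum_i\widehat u_{ia}(\tau)\widehat u_{ib}(\tau').
\]
We expand $\widehat u_{ia}=\widehat r_{ia}+\widetilde m_a(Z_i,\cdot)$ and compute this covariance term by term.

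\emph{Residual--residual terms.} Since $A_{ia}A_{ib}=0$ for $a\neq b$, these vanish off the diagonal. For $a=b$, grouping subjects by stratum gives exactly
\[
\sum_s\frac{n(s)}{N}\,\frac{n_a(s)-1}{n_a(s)}\cdot\frac{1}{\pi_a}\,\widehat c_a(s;\tau,\tau'),
\]
using the definition of $\widehat\kappa_a(s)$. By Lemma~\ref{lem:if-moments} this converges uniformly to $\pi_a^{-1}\E\{c_a(Z_i;\tau,\tau')\}$.

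\emph{Cross terms.} The cross terms between $\widehat r_{ia}$ and $\widetilde m_b(Z_i,\cdot)$ vanish exactly, because $\widehat r_{ia}$ sums to zero within each cell $(a,s)$ while $\widetilde m_b(s,\cdot)$ is constant on stratum $s$.

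\emph{Composition terms.} The composition term equals $\sum_s\widehat p_s\widetilde m_a(s,\tau)\widetilde m_b(s,\tau')$. By the uniform convergence $\widetilde m_a\to m_a$ established in the proof of Proposition~\ref{prop:covariance-consistency}, it converges to $\E\{m_a(Z_i,\tau)m_b(Z_i,\tau')\}$.

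Hence the conditional covariance converges to $\bm\Gamma_\mu$ uniformly in $(\tau,\tau')$, in probability. For a finite set of times, a conditional Lindeberg--Feller argument then gives Gaussian limits in probability. The Lindeberg condition reduces to
\[
\max_i\sup_\tau|\widehat u_{ia}(\tau)|/\sqrt N\to0
\]
together with the $2+\eta$ moment of $\xi_i$.

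The main obstacle is conditional equicontinuity, because the summands are estimated rather than the fixed bounded Lipschitz processes of Lemma~\ref{lem:if-bounds}. I would first show that, with probability tending to one, $\widehat\phi_{ia}$ is uniformly bounded and Lipschitz in $\tau$ with a nonrandom constant $\widehat B\le 2B$. This holds because $\widehat\phi_{ia}(\tau)$ is the integral over $[0,\tau]$ of an integrand bounded by the empirical analogue of $B$, built from $\widehat y_a\ge\epsilon\pi_a/2$-type lower bounds, $1-\Delta\widehat\Lambda_a$ bounded away from zero, and $\widehat\Lambda_a(\tau_U)$ bounded. All three of these hold with probability tending to one by the uniform consistency of the nuisance estimators. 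The cell means $\widehat m_a$ and $\widetilde m_a$ inherit the same Lipschitz bound. On this event, each scaled summand $N^{-1/2}\xi_i\widehat u_{ia}(\cdot)$ is Lipschitz with random constant $C|\xi_i|/\sqrt N$. We can therefore rerun the Lipschitz-bracketing argument of Lemma~\ref{lem:car-clt}, conditionally on the data, with brackets of width $2C|\xi_i|\delta/\sqrt N$. This yields $\sum_i\E^*(\text{width}^2)=4C^2\delta^2$ and a finite bracketing entropy integral. By \citet[Theorem~2.11.9]{vaart1996weak} applied conditionally, the conditional process is asymptotically tight.

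To make the "in probability" statement precise, I would argue along subsequences. Each subsequence has a further subsequence on which the covariance convergence, the Lindeberg condition, and the Lipschitz event hold almost surely. On that further subsequence the conditional convergence holds almost surely, and this gives the bounded-Lipschitz statement in probability.
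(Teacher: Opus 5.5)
Your proposal is correct and follows the paper's proof essentially step for step. It uses the same exact vanishing of the cross terms within cells, the same stratum-wise covariance identity fed into Lemma~\ref{lem:if-moments}, and the same high-probability event on which the estimated summands are bounded and Lipschitz with a nonrandom constant, followed by the Lipschitz-bracketing argument with multipliers; your subsequence device only makes the ``in probability'' step explicit.

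Two small points. First, with $\widehat\kappa_a(s)^2=\pi_a n(s)/\{n_a(s)-1\}$ the residual term equals $\sum_s\widehat p_s\widehat c_a(s;\tau,\tau')/\pi_a$ exactly, without your extra factor $\{n_a(s)-1\}/n_a(s)$; this does not affect the limit. Second, as the paper notes, that event should also include $\max_{a,s}\widehat\kappa_a(s)$ being bounded, so that your Lipschitz constant $C$ is nonrandom.
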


\begin{proof}
We first calculate the empirical covariance of the multiplier summands.
For any arms $a,b$ and stratum $s$,
\[
\sum_{i=1}^N
A_{ia}\ind(Z_i=s)
\{\widehat\phi_{ia}(\tau)-\widehat m_a(s,\tau)\}
=0.
\]
Consequently, the empirical cross-products between
$\widehat r_{ia}(\tau)$ and
$\widetilde m_b(Z_i,\tau')$ vanish exactly:
\begin{equation}
\label{eq:s-bootstrap-cross-zero}
\frac{1}{N}
\sum_{i=1}^N
\widehat r_{ia}(\tau)
\widetilde m_b(Z_i,\tau')
=0.
\end{equation}
The stratum-composition component satisfies
\begin{align}
\label{eq:s-bootstrap-stratum-cov}
&\frac{1}{N}
\sum_{i=1}^N
\widetilde m_a(Z_i,\tau)
\widetilde m_b(Z_i,\tau')\notag\\
&\qquad=
\sum_{s=1}^J
\widehat p_s
\widetilde m_a(s,\tau)
\widetilde m_b(s,\tau')
\xrightarrow{P}
\E\{m_a(Z_i,\tau)m_b(Z_i,\tau')\},
\end{align}
uniformly over $\tau,\tau'\in\calT$.

If $a\neq b$, then
$\widehat r_{ia}(\tau)\widehat r_{ib}(\tau')=0$ because each subject is
assigned to only one arm. For $a=b$,
\begin{align}
\label{eq:s-bootstrap-residual-cov}
&\frac{1}{N}
\sum_{i=1}^N
\widehat r_{ia}(\tau)
\widehat r_{ia}(\tau')\notag\\
&\quad=
\sum_{s=1}^J
\frac{\widehat p_s}{\pi_a}
\widehat c_a(s;\tau,\tau')\notag\\
&\quad\xrightarrow{P}
\frac{1}{\pi_a}
\E\{c_a(Z_i;\tau,\tau')\},
\end{align}
uniformly in $(\tau,\tau')$, because
$\widehat p_s\to p_s$ and
Lemma~\ref{lem:if-moments} applies.

Combining
\eqref{eq:s-bootstrap-cross-zero}--\eqref{eq:s-bootstrap-residual-cov}
gives
\begin{equation}
\label{eq:s-bootstrap-base-kernel}
\sup_{\tau,\tau'\in\calT}
\left|
\frac{1}{N}
\sum_{i=1}^N
\widehat u_{ia}(\tau)\widehat u_{ib}(\tau')
-
\Gamma_{\mu,ab}(\tau,\tau')
\right|
\xrightarrow{P}0.
\end{equation}

Conditional on the data, the summands
$\xi_i\widehat{\bm u}_i(\cdot)$ are independent and centered. Two
properties of the estimated summands transfer the limit theory from the
population influence functions to their estimates. First, on an event with
probability tending to one, the estimated nuisance quantities satisfy
$\inf_{t\leq\tau_U}\widehat y_a(t)\geq\epsilon/2$,
$\inf_{t\leq\tau_U}\{1-\Delta\widehat\Lambda_a(t)\}\geq\epsilon/2$, and
$\widehat\Lambda_a(\tau_U)\leq2/\epsilon$, by the uniform laws of large
numbers in Lemma~\ref{lem:if-moments} and Assumption~\ref{asu:censoring}. Repeating the proof
of Lemma~\ref{lem:if-bounds} with these estimated quantities shows that, on
this event, every $\widehat\phi_{ia}(\cdot)$ is bounded and Lipschitz on
$\calT$ with one nonrandom constant $\widehat B$ depending only on
$\epsilon$ and $\tau_U$. Moreover,
$\max_{a,s}\widehat\kappa_a(s)$ is bounded by a fixed constant with
probability tending to one because all allocation fractions converge to
their positive targets. Hence every $\widehat u_{ia}(\cdot)$ is uniformly
bounded and Lipschitz on that event. Second,
by Lemma~\ref{lem:if-moments}, the empirical covariance of the summands
converges uniformly, as recorded in
\eqref{eq:s-bootstrap-base-kernel}.

Given the data, the uniform bound on the summands and
$\E(\xi_i^2)=1$ make the conditional Lindeberg condition immediate, so the
finite-dimensional conditional distributions converge to the Gaussian limit
with covariance \eqref{eq:s-bootstrap-base-kernel}. Conditional asymptotic
equicontinuity follows from the Lipschitz-bracketing argument of
Lemma~\ref{lem:car-clt} applied with multipliers, since the processes
$\xi_i\widehat u_{ia}(\cdot)$ are rowwise independent given the data and
Lipschitz with square-integrable constants
$|\xi_i|2\widehat B\{1+\max_s\widehat\kappa_a(s)/\pi_a\}$.
The conditional multiplier central limit theorem in
\citet[Section~2.9]{vaart1996weak} and
\citet[Chapter~10]{kosorok2008introduction} gives the same conclusion. Together with
\eqref{eq:s-bootstrap-base-kernel}, this proves the asserted conditional
convergence.
\end{proof}

\begin{lemma}[Conditional limit of the design component]
\label{lem:design-bootstrap}
Under the conditions of Theorem~\ref{thm:multiplier-validity}, conditionally on the data and in
probability,
\[
\widehat{\bm Q}(\cdot)\bm D^*
\rights
\bm Q(\cdot)\bm D
\quad\text{in }\ell^\infty(\calT)^{K+1},
\]
where
\[
\bm D\sim\mathcal N_L(\bm0,\bm\Omega_D).
\]
\end{lemma}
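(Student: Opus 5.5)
The plan is to exploit the finite dimensionality of the Gaussian draw $\bm D^*$ together with the uniform consistency of $\widehat{\bm Q}$. First I would construct a coupling. Conditional on the data, write $\bm D^*=\widehat{\bm\Omega}_D^{1/2}\bm\varepsilon$ and $\bm D=\bm\Omega_D^{1/2}\bm\varepsilon$, using the same $\bm\varepsilon\sim\mathcal N_L(\bm 0,\bm I_L)$ and the symmetric positive semidefinite square roots. This reproduces the conditional law of $\bm D^*$ exactly, and no inverse of the possibly singular $\bm\Omega_D$ is needed. Because the matrix square root is continuous on the cone of positive semidefinite matrices, Lemma~\ref{lem:omega-consistency} gives $\|\widehat{\bm\Omega}_D^{1/2}-\bm\Omega_D^{1/2}\|\xrightarrow{P}0$.

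Next I would bound the coupled difference uniformly in $\tau$:
\[
\sup_{\tau\in\calT}\|\widehat{\bm Q}(\tau)\bm D^*-\bm Q(\tau)\bm D\|
\leq
\sup_{\tau}\|\widehat{\bm Q}(\tau)-\bm Q(\tau)\|\,\|\widehat{\bm\Omega}_D^{1/2}\|\,\|\bm\varepsilon\|
+\sup_{\tau}\|\bm Q(\tau)\|\,\|\widehat{\bm\Omega}_D^{1/2}-\bm\Omega_D^{1/2}\|\,\|\bm\varepsilon\|.
\]
By \eqref{eq:s-Q-consistency} and the boundedness of $m_a$ (Lemma~\ref{lem:if-bounds}), both data-dependent factors are $o_p(1)$. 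The factor $\E\|\bm\varepsilon\|\leq\sqrt L$ does not depend on the data.

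The conclusion then follows for every $f\in\mathrm{BL}_1\{\ell^\infty(\calT)^{K+1}\}$. We have
\[
\bigl|\E^*f\{\widehat{\bm Q}(\cdot)\bm D^*\}-\E f\{\bm Q(\cdot)\bm D\}\bigr|
\leq\E^*\min\Bigl(2,\ \sup_\tau\|\widehat{\bm Q}(\tau)\bm D^*-\bm Q(\tau)\bm D\|\Bigr).
\]
This is at most $\sqrt L$ times the data-dependent factors, so it is $o_p(1)$ uniformly over the class. That is exactly conditional weak convergence in probability in the bounded-Lipschitz sense.

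Tightness of the limit is immediate: $\bm Q(\cdot)\bm D$ is a finite linear combination of the continuous functions $m_a(s,\cdot)$ with Gaussian coefficients. Since everything is finite-dimensional, the step that needs most care is the continuity of the square root at a singular $\bm\Omega_D$. I would handle it with the standard bound $\|\bm A^{1/2}-\bm B^{1/2}\|\leq\|\bm A-\bm B\|^{1/2}$, valid for positive semidefinite matrices in operator norm, which avoids any eigenvalue gap condition.
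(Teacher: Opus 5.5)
Your proof is correct, but it takes a different route from the paper's.

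The paper works with covariance kernels. Conditional on the data, $\widehat{\bm Q}(\cdot)\bm D^*$ is a centered Gaussian process whose kernel $\widehat{\bm Q}(\tau)\widehat{\bm\Omega}_D\widehat{\bm Q}(\tau')^T$ converges uniformly to $\bm Q(\tau)\bm\Omega_D\bm Q(\tau')^T$ by Proposition~\ref{prop:covariance-consistency}. That gives finite-dimensional convergence. The paper then simply asserts tightness from the fixed dimension of $\bm D^*$ and the asymptotic equicontinuity of $\widehat{\bm Q}(\cdot)$.

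You instead couple $\bm D^*=\widehat{\bm\Omega}_D^{1/2}\bm\varepsilon$ and $\bm D=\bm\Omega_D^{1/2}\bm\varepsilon$ through a common standard normal vector. You then bound the bounded-Lipschitz distance directly by $\sqrt L$ times an $o_p(1)$ quantity. As a result, finite-dimensional convergence and conditional tightness never need to be checked separately.

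Your route has a cost. The factors $\widehat{\bm Q}$ and $\widehat{\bm\Omega}_D$ must converge individually; the paper supplies this anyway through \eqref{eq:s-Q-consistency} and Lemma~\ref{lem:omega-consistency}. You also need the H\"older bound $\|\bm A^{1/2}-\bm B^{1/2}\|\leq\|\bm A-\bm B\|^{1/2}$ to handle a possibly singular $\bm\Omega_D$, and you correctly identify and apply it.

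What your version buys is an explicit, quantitative bound that also fills in the equicontinuity step the paper only asserts. The paper's version stays closer to the covariance-matching style of its other multiplier lemmas, and its finite-dimensional step needs only convergence of the product kernel.
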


\begin{proof}
Conditional on the data,
$\widehat{\bm Q}(\cdot)\bm D^*$ is a mean-zero Gaussian process with
covariance kernel
\[
\widehat{\bm Q}(\tau)
\widehat{\bm\Omega}_D
\widehat{\bm Q}(\tau')^T.
\]
Proposition~\ref{prop:covariance-consistency} of the main paper gives uniform convergence of this kernel to
\[
\bm Q(\tau)\bm\Omega_D\bm Q(\tau')^T.
\]
Because the dimension of $\bm D^*$ is fixed and
$\widehat{\bm Q}(\cdot)$ is asymptotically equicontinuous, finite-dimensional
convergence and tightness follow. This proves the result.
\end{proof}

\begin{proof}[Proof of Theorem~\ref{thm:multiplier-validity} in the main paper]
The two bootstrap components
\[
\mathbb G_\mu^{*,0}(\cdot)
\quad\text{and}\quad
\widehat{\bm Q}(\cdot)\bm D^*
\]
are conditionally independent by construction. Lemmas
\ref{lem:base-multiplier} and \ref{lem:design-bootstrap} therefore imply
\[
\mathbb G_\mu^*(\cdot)
=
\mathbb G_\mu^{*,0}(\cdot)
+
\widehat{\bm Q}(\cdot)\bm D^*
\rights
\mathbb G_\mu(\cdot)+\bm Q(\cdot)\bm D.
\]
The covariance kernel of the limiting process is
\[
\bm\Gamma_\mu(\tau,\tau')
+
\bm Q(\tau)\bm\Omega_D\bm Q(\tau')^T
=
\bm\Gamma_\mu^D(\tau,\tau').
\]
Multiplication by the fixed contrast matrix $\bm C$ and the continuous
mapping theorem give
\[
\mathbb G_\theta^*(\cdot)
\rights
\mathbb G_\theta^D(\cdot)
\quad\text{conditionally in probability}.
\]
This establishes the bounded-Lipschitz formulation in the theorem.

Uniform covariance consistency gives
\[
\sup_{k,\tau}
|\widehat\sigma_k(\tau)-\sigma_k(\tau)|
\xrightarrow{P}0.
\]
The variance lower bound permits uniform studentization. The mapping
\[
(f_1,\ldots,f_K)
\longmapsto
\max_{1\leq k\leq K}
\sup_{\tau\in\calT}|f_k(\tau)|
\]
is Lipschitz on $\ell^\infty(\calT)^K$. Hence the conditional continuous
mapping theorem proves convergence of the bootstrap maximum. Continuity of
the limiting distribution at its $(1-\alpha)$ quantile then gives
consistency of $c_{1-\alpha}^*$.
\end{proof}

\begin{proof}[Proof of Corollary~\ref{cor:simultaneous-bands} in the main paper]
The event that all confidence bands cover their corresponding RMST
contrasts is equivalent to
\[
\max_{1\leq k\leq K}\sup_{\tau\in\calT}
\left|
\frac{\sqrt N\{\widehat\theta_k(\tau)-\theta_k(\tau)\}}
{\widehat\sigma_k(\tau)}
\right|
\leq c_{1-\alpha}^*.
\]
Theorems~\ref{thm:general-car-fclt} and~\ref{thm:multiplier-validity}, uniform variance consistency, and bootstrap quantile
consistency imply that the probability of this event converges to
$1-\alpha$.
\end{proof}

\section[\texorpdfstring{Proof of Corollary~\ref*{thm:strong-fwer}}{Proof of Corollary 2}]{Proof of \texorpdfstring{Corollary~\ref{thm:strong-fwer}}{Corollary 2}}
\label{sec:supp-fwer}

For $h=\mathrm{fix}$, define the centered statistic
\[
\widetilde V_{Nk}^{\mathrm{fix}}
=
\frac{\sqrt N
\{\widehat\theta_k(\tau^\star)-\theta_k(\tau^\star)\}}
{\widehat\sigma_k(\tau^\star)}.
\]
For $h=\mathrm{proc}$, define
\[
\widetilde V_{Nk}^{\mathrm{proc}}
=
\sup_{\tau\in\calT}
\frac{\sqrt N
\{\widehat\theta_k(\tau)-\theta_k(\tau)\}}
{\widehat\sigma_k(\tau)}.
\]

\begin{lemma}[Subset bootstrap approximation]
\label{lem:subset-bootstrap}
For every nonempty fixed subset
$I\subseteq\{1,\ldots,K\}$ and either
$h\in\{\mathrm{fix},\mathrm{proc}\}$,
\[
c_{1-\alpha}^*(I,h)
\xrightarrow{P}
c_{1-\alpha}(I,h),
\]
where $c_{1-\alpha}(I,h)$ is the $(1-\alpha)$ quantile of the weak limit of
\[
\max_{k\in I}\widetilde V_{Nk}^{h}.
\]
\end{lemma}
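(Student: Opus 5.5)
The plan is to obtain both the observed centered subset maximum and its bootstrap analogue as images of Gaussian limits under one Lipschitz functional, and then transfer quantiles.

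\emph{Step 1: the functional.} Fix $I$ and $h$. For $f=(f_1,\ldots,f_K)\in\ell^\infty(\calT)^K$ and a strictly positive variance profile $\sigma$, define
\[
\Psi_I^{\mathrm{fix}}(f,\sigma)=\max_{k\in I}\frac{f_k(\tau^\star)}{\sigma_k(\tau^\star)},
\qquad
\Psi_I^{\mathrm{proc}}(f,\sigma)=\max_{k\in I}\sup_{\tau\in\calT}\frac{f_k(\tau)}{\sigma_k(\tau)}.
\]
Under the lower bound $\inf_{k,\tau}\sigma_k(\tau)>0$ from Theorem~\ref{thm:multiplier-validity}, both maps are Lipschitz in $f$ for the supremum norm. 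They are also jointly continuous in $(f,\sigma)$ at points where $\sigma$ is bounded away from zero.

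\emph{Step 2: the limit of the observed statistic.} By Theorem~\ref{thm:general-car-fclt}, which holds under arbitrary treatment effects, $\sqrt N(\widehat\btheta-\btheta)\rights\mathbb G_\theta^D$. Proposition~\ref{prop:covariance-consistency} gives $\sup_{k,\tau}|\widehat\sigma_{k,\mathrm{CAR}}(\tau)-\sigma_k(\tau)|\to0$ in probability. Slutsky's lemma and the continuous mapping theorem then give
\[
\max_{k\in I}\widetilde V_{Nk}^h=\Psi_I^h\{\sqrt N(\widehat\btheta-\btheta),\widehat\sigma\}\rights\Psi_I^h(\mathbb G_\theta^D,\sigma).
\]
This identifies $c_{1-\alpha}(I,h)$ as the $(1-\alpha)$ quantile of $\Psi_I^h(\mathbb G_\theta^D,\sigma)$.

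\emph{Step 3: the limit of the bootstrap statistic.} By construction, $\max_{k\in I}V_k^{*,h}=\Psi_I^h(\mathbb G_\theta^*,\widehat\sigma)$. The bounded-Lipschitz statement of Theorem~\ref{thm:multiplier-validity} gives conditional weak convergence of $\mathbb G_\theta^*$ to $\mathbb G_\theta^D$ in probability. Composing with $\Psi_I^h(\cdot,\sigma)$ preserves bounded-Lipschitz convergence, because this map is Lipschitz with constant $1/\inf\sigma$. Replacing $\sigma$ by $\widehat\sigma$ changes the functional by at most $\|\mathbb G_\theta^*\|_\infty\sup|1/\widehat\sigma-1/\sigma|$. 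This bound is $o_{P^*}(1)$ in probability, since $\|\mathbb G_\theta^*\|_\infty$ is conditionally tight. Hence the conditional law of $\max_{k\in I}V_k^{*,h}$ converges in probability to the law of $\Psi_I^h(\mathbb G_\theta^D,\sigma)$.

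\emph{Step 4: quantile transfer.} Here I use the hypothesis of Corollary~\ref{thm:strong-fwer} that the limiting subset-max distribution is continuous at its $(1-\alpha)$ quantile. I also need that distribution function to be strictly increasing there. The standard argument of \citet[Lemma 21.2]{vaart1996weak}, applied along subsequences where the conditional convergence holds almost surely, then gives $c^*_{1-\alpha}(I,h)\to c_{1-\alpha}(I,h)$ in probability.

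The main obstacle is this quantile step. Continuity at the quantile alone does not guarantee convergence of quantiles if the limiting distribution function is flat at level $1-\alpha$. To close this gap I would argue that $\Psi_I^h(\mathbb G_\theta^D,\sigma)$ is a maximum or supremum of a nondegenerate Gaussian process. Its distribution function is continuous and strictly increasing on the interior of its support (Tsirelson's theorem), and that support contains every $x$ above the left endpoint. The argument therefore goes through for every $\alpha\in(0,1)$.
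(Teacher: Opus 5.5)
Your proof is correct and follows the paper's route, made explicit: conditional weak convergence of the studentized multiplier process from Theorem~\ref{thm:multiplier-validity}, the continuous mapping theorem for the subset max/sup functional, and quantile transfer. Your Step~4 is sharper than the paper's proof. The paper appeals only to continuity of the limit law at its quantile, but convergence of quantiles also needs that quantile to be unique. Your Gaussian argument supplies this: the law of a unit-variance Gaussian maximum or supremum is strictly increasing above the left end $r_0$ of its support, and the assumed continuity excludes $c_{1-\alpha}(I,h)=r_0$. (The quantile lemma you cite is Lemma~21.2 of van der Vaart's \emph{Asymptotic Statistics}, not of van der Vaart and Wellner.)
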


\begin{proof}
Theorem~\ref{thm:multiplier-validity} gives conditional joint convergence of the complete vector of
studentized processes. Coordinate projection onto a fixed subset and the
maximum or supremum maps are continuous. The result therefore follows from
the conditional continuous mapping theorem and the assumed continuity of
the subset-max distribution at its target quantile.
\end{proof}

\begin{proof}[Proof of Corollary~\ref{thm:strong-fwer} in the main paper]
Fix either $h=\mathrm{fix}$ or $h=\mathrm{proc}$ and let $I_0^h$ be the
indices of the true null hypotheses. If $I_0^h$ is empty, the family-wise
error rate is zero, so suppose it is nonempty.

For a true fixed-time null,
$\theta_k(\tau^\star)\leq0$, and hence
\[
V_k^{\mathrm{fix}}
\leq
\widetilde V_{Nk}^{\mathrm{fix}}.
\]
For a true process null,
$\theta_k(\tau)\leq0$ for every $\tau\in\calT$, and therefore
\[
V_k^{\mathrm{proc}}
\leq
\widetilde V_{Nk}^{\mathrm{proc}}.
\]
Thus, in either case,
\begin{equation}
\label{eq:s-null-dominance}
V_k^h\leq\widetilde V_{Nk}^h,
\qquad k\in I_0^h.
\end{equation}

For the single-step procedure, the bootstrap maximum over all hypotheses is
pointwise at least as large as the maximum over the true-null subset. Hence
\[
c_{1-\alpha}^*(\{1,\ldots,K\},h)
\geq
c_{1-\alpha}^*(I_0^h,h).
\]
Using \eqref{eq:s-null-dominance},
\begin{align*}
&\Pp\{\text{at least one true null is rejected}\}\\
&\quad\leq
\Pp\left\{
\max_{k\in I_0^h}\widetilde V_{Nk}^h
>
c_{1-\alpha}^*(I_0^h,h)
\right\}.
\end{align*}
Lemma~\ref{lem:subset-bootstrap} and continuity of the limiting distribution
imply that the limit superior of the right-hand side is at most $\alpha$.

For the step-down procedure, consider the first stage $j$ at which a true
null is rejected. Before that stage only false null hypotheses have been
removed, so
\[
I_0^h\subseteq I_j.
\]
Consequently,
\[
c_{1-\alpha}^*(I_j,h)
\geq
c_{1-\alpha}^*(I_0^h,h).
\]
If a true null is rejected at this stage, then by
\eqref{eq:s-null-dominance},
\[
\max_{k\in I_0^h}\widetilde V_{Nk}^h
>
c_{1-\alpha}^*(I_0^h,h).
\]
The probability of this event has limit superior at most $\alpha$ by
Lemma~\ref{lem:subset-bootstrap}. This proves asymptotic strong family-wise
error control for the step-down procedure as well.
\end{proof}

\section[\texorpdfstring{Proof of Proposition~\ref*{prop:standardized-inference} and Corollary~\ref*{cor:standardized-simultaneous-inference}}{Proof of Proposition 4 and Corollary 3}]{Proof of \texorpdfstring{Proposition~\ref{prop:standardized-inference} and Corollary~\ref{cor:standardized-simultaneous-inference}}{Proposition 4 and Corollary 3}}
\label{sec:supp-standardized-inference}

\begin{proof}[Proof of Proposition~\ref{prop:standardized-inference} in the main paper]
The cell sizes satisfy $n_a(s)/N\to\pi_ap_s>0$. Applying the uniform laws of
large numbers and product-limit differentiability within every cell gives
\begin{align*}
\sup_{\tau\in\calT}
|\widehat\mu_a(s,\tau)-\mu_a(s,\tau)|
&\xrightarrow{P}0,\\
\sup_{\tau,\tau'\in\calT}
|\widehat v_a(s;\tau,\tau')-v_a(s;\tau,\tau')|
&\xrightarrow{P}0.
\end{align*}
The use of $n_a(s)-1$ rather than $n_a(s)$ only multiplies the
uncorrected empirical second moment by
$n_a(s)/\{n_a(s)-1\}\to1$, so the same uniform limit holds.
Also,
\[
\widehat p_s\xrightarrow{P}p_s,
\qquad
\widehat\pi_a(s)\xrightarrow{P}\pi_a,
\]
and hence
\[
\sup_{\tau\in\calT}
|\widehat g_a(s,\tau)-g_a(s,\tau)|
\xrightarrow{P}0.
\]
Because the numbers of arms and strata are fixed, substitution into the
covariance formula gives, uniformly over $\tau,\tau'\in\calT$,
\[
\widehat{\bm\Gamma}_\mu^{\mathrm{st}}(\tau,\tau')
\xrightarrow{P}
\bm\Gamma_\mu^{\mathrm{st}}(\tau,\tau').
\]
Premultiplication and postmultiplication by $\bm C$ and $\bm C^T$ prove the
stated consistency for the treatment-control kernel.

It remains to verify the multiplier process. Let
$\widehat u_{ia}^{\mathrm{st},\mathrm{FS}}$ denote the summand in
\eqref{eq:standardized-multiplier-summand} with the centered residual multiplied by
\[
\widehat\kappa_a^{\mathrm{st}}(s)
=
\left\{\frac{n_a(s)}{n_a(s)-1}\right\}^{1/2}.
\]
Within every cell,
\[
\sum_{i=1}^N
A_{ia}\ind(Z_i=s)
\{\widehat\phi_{ia}^{s}(\tau)
-\overline\phi_a(s,\tau)\}
=0.
\]
Therefore, all empirical cross-products between the
$\widehat g_a(Z_i,\tau)$ term and a centered cell residual vanish exactly.
Products of residual terms from different arms also vanish because
$A_{ia}A_{ib}=0$. It follows directly that
\begin{align}
\label{eq:s-standardized-multiplier-covariance}
&\frac{1}{N}
\sum_{i=1}^N
\widehat u_{ia}^{\mathrm{st},\mathrm{FS}}(\tau)
\widehat u_{ib}^{\mathrm{st},\mathrm{FS}}(\tau')\notag\\
&\quad=
\sum_{s=1}^J
\widehat p_s
\widehat g_a(s,\tau)\widehat g_b(s,\tau')
+
\ind(a=b)
\sum_{s=1}^J
\frac{\widehat p_s}{\widehat\pi_a(s)}
\widehat v_a(s;\tau,\tau')\notag\\
&\quad=
\widehat\Gamma_{\mu,ab}^{\mathrm{st}}(\tau,\tau').
\end{align}
As in the proof of Lemma~\ref{lem:base-multiplier}, on an event with
probability tending to one the estimated cell-level nuisance quantities
satisfy the positivity bounds of Assumption~\ref{asu:conditional-censoring} with
$\epsilon/2$ in place of $\epsilon$, so by the argument of
Lemma~\ref{lem:if-bounds} the estimated summands
$\widehat u_{ia}^{\mathrm{st},\mathrm{FS}}(\cdot)$ are uniformly bounded
and Lipschitz
on $\calT$ with one nonrandom constant. The conditional Lindeberg
condition is then immediate, conditional asymptotic equicontinuity follows
from the Lipschitz-bracketing argument of Lemma~\ref{lem:car-clt}, and
\eqref{eq:s-standardized-multiplier-covariance} identifies the limiting
covariance. The conditional multiplier central limit theorem
\citep[Section~2.9]{vaart1996weak} therefore gives
\[
\mathbb G_\mu^{\mathrm{st},*}(\cdot)
\rights
\mathbb G_\mu^{\mathrm{st}}(\cdot)
\quad\text{conditionally in probability}.
\]
Since the maximum of
$|\widehat\kappa_a^{\mathrm{st}}(s)-1|$ over the finitely many cells
converges to zero in probability, the corrected and uncorrected multiplier
processes have the same conditional weak limit.
Applying the fixed linear map $\bm C$ proves the treatment-control result.
\end{proof}

\begin{proof}[Proof of Corollary~\ref{cor:standardized-simultaneous-inference} in the main paper]
Proposition~\ref{prop:standardized-inference} gives conditional joint convergence of the complete
studentized treatment-control process and uniform consistency of its
variance estimator. The continuous mapping and bootstrap-quantile arguments
used for Corollary~\ref{cor:simultaneous-bands} therefore give simultaneous confidence-band coverage.
For multiple testing, Theorem~\ref{thm:standardized-fclt} holds under arbitrary treatment effects, so
the bootstrap approximation remains valid for every subset of true null
hypotheses. Applying the null-dominance and first-erroneous-rejection
arguments in the proof of Corollary~\ref{thm:strong-fwer} gives asymptotic strong family-wise error
control for both the single-step and step-down procedures.
\end{proof}

\supppart{Additional numerical results}

\section{Simulation design details and additional results}
\label{sec:supp-simulation}

This section follows the order of Section~\ref{sec:simulation} of the main paper.
Section~\ref{sec:supp-sim-design} gives the design details,
Section~\ref{sec:supp-sim-gradient} gives the complete results of the
graded-interaction experiment, Section~\ref{sec:supp-sim-estimation} gives
the estimation results under
stratum-dependent censoring, Section~\ref{sec:supp-sim-ranges} gives the
complete operating characteristics of the RMST procedures,
Sections~\ref{sec:supp-logrank} and \ref{sec:supp-cox} describe the
log-rank and Cox benchmarks, and Section~\ref{sec:supp-equal-rmst} gives
the equal-RMST experiments.

\subsection{Design details}
\label{sec:supp-sim-design}

Let $C_i^{R}(a)$ denote random loss-to-follow-up time and let
$\tau_{\mathrm{admin}}=24$ denote administrative censoring. In the
marginally independent setting,
\[
C_i^{R}(a)
\sim
\operatorname{Exponential}(\lambda_C)
\]
independently of all baseline variables and event times. In the
stratum-dependent setting,
\begin{equation}
\label{eq:s-simulation-censoring}
C_i^{R}(a)\mid\bm Z_i
\sim
\operatorname{Exponential}
\left[
\lambda_C
\exp(\gamma_{a1}Z_{i1}+\gamma_{a2}Z_{i2})
\right],
\end{equation}
independently of $T_i(a)$ conditional on $\bm Z_i$. Thus conditional
independent censoring holds, while marginal independence generally fails
because event and censoring times share the prognostic factors. We use
\[
(\gamma_{a1},\gamma_{a2})
=
(-0.40,-0.40),\ (1.20,1.60),\ (0.60,0.80),\ (1.00,1.20)
\]
for arms $a=0,1,2,3$, respectively. These arm-specific coefficients
represent treatment-dependent discontinuation or loss-to-follow-up
patterns. They are also important methodologically: if every arm had
exactly the same censoring model, much of the marginal Kaplan--Meier bias
would cancel in a treatment--control contrast, yielding an unduly weak
assessment of the standardized estimator. In each censoring setting,
$\lambda_C$ is calibrated once under the global-null configuration to give
approximately 30\% censoring averaged across arms and is then held fixed
across treatment-effect scenarios. Under the global null, the resulting
arm-specific censoring fractions are approximately 25\%, 35\%, 28\%, and
32\%, respectively. The observed variables are
\[
X_i
=
\min\{T_i(A_i),C_i^{R}(A_i),24\},
\qquad
\Delta_i
=
\ind\{T_i(A_i)\leq C_i^{R}(A_i),\
T_i(A_i)\leq24\}.
\]

The confirmatory restriction time is prespecified as
$\tau^\star=12$, and process-wise inference is conducted on
\[
\calT=[4,12].
\]
The computational grid contains every quarter time unit in this interval.
This range was chosen to retain clinically meaningful follow-up while
materially reducing unstable tail estimation in small treatment--stratum
cells. In particular, direct standardization requires each cell-specific
risk set to remain nonempty through the upper restriction time, and extending
the primary grid to 18 produced frequent violations of this positivity
condition in preliminary stress simulations.
The true RMST values are calculated independently of the Monte Carlo
samples by numerical integration of the known stratum-specific survival
functions and subsequent averaging over the four stratum probabilities.

Treatment assignment under Pocock--Simon minimization is performed
sequentially using the two marginal baseline factors. For incoming
subject $i$ and candidate arm $a$, define
\begin{equation}
\label{eq:simulation-ps-score}
\Psi_i(a)
=
\sum_{r=1}^2
\sum_{b=0}^3
\left|
n_{b,r,Z_{ir}}^{(i-1)}
+\ind(a=b)
-
\frac{
n_{r,Z_{ir}}^{(i-1)}+1
}{4}
\right|,
\end{equation}
where $n_{b,r,l}^{(i-1)}$ and $n_{r,l}^{(i-1)}$ are the previous
arm-specific and total counts at level $l$ of factor $r$. If fewer
than four arms minimize $\Psi_i(a)$, total probability $0.80$ is
divided equally among the minimizing arms and probability $0.20$ is
divided equally among the remaining arms. A complete tie is resolved
by equal randomization.

The same minimization rule is used when estimating $\bm\Omega_D$ by design
simulation.

Treatment assignment under multi-arm Hu--Hu randomization uses the same
enrollment order and all three balance scales. Let
$n_b^{(i-1)}$, $n_b^{(i-1)}(s)$, and
$n_{b,r,l}^{(i-1)}$ denote the previous overall, joint-stratum, and
factor-level counts in arm $b$. For incoming subject $i$ and candidate arm
$a$, define
\begin{align}
\label{eq:simulation-hh-score}
\Psi_i^{\mathrm{HH}}(a)
={}&
0.30\sum_{b=0}^3
\left[
n_b^{(i-1)}+\ind(a=b)-\frac{i}{4}
\right]^2 \notag\\
&+
0.50\sum_{b=0}^3
\left[
n_b^{(i-1)}(Z_i)+\ind(a=b)
-\frac{n^{(i-1)}(Z_i)+1}{4}
\right]^2 \notag\\
&+
0.10\sum_{r=1}^2\sum_{b=0}^3
\left[
n_{b,r,Z_{ir}}^{(i-1)}+\ind(a=b)
-\frac{n_{r,Z_{ir}}^{(i-1)}+1}{4}
\right]^2 .
\end{align}
The weights are the representative two-binary-covariate specification of
\citet{hu2012asymptotic}, written in the order overall, joint stratum, and
the two margins, and satisfy $0.30+0.50+0.10+0.10=1$. They are fixed before
outcome generation. Candidate arms are ranked from the smallest to largest
value of $\Psi_i^{\mathrm{HH}}(a)$ and receive probabilities
\[
\left(0.80,\frac{1}{15},\frac{1}{15},\frac{1}{15}\right).
\]
Within a tied block, the probabilities of the occupied ranks are averaged,
which is equivalent to randomly ordering the tied arms before applying the
ranked probability vector. This is the treatment-label-symmetric multi-arm
construction of \citet{hu2023multi}. Since the within-stratum weight is
positive, the joint-stratum imbalances are bounded in probability and their
$N^{-1/2}$-scaled covariance limit is zero. Consequently, no design-only
simulation is required for Hu--Hu randomization.

Scenarios WAN and EBL keep arms 2 and 3 of P-PH and change only the shape
of the arm-1 effect. In WAN the arm-1 hazard multiplier is $0.551345$
through month 6 and 1 afterward, and in EBL it is $0.497230$ through month
6 and $1.20$ afterward. The early multipliers were obtained by root finding
so that the true 12-month RMST gain of arm 1 equals that of P-PH, which is
1.075 months, and a unit test checks this equality to four decimal places.
The three scenarios therefore share the confirmatory effect size, while the
true arm-1 gain at $\tau=6$ is 0.341, 0.444, and 0.502 months under P-PH,
WAN, and EBL. Among the 80,000 standardized fits in each of WAN and EBL,
162 (0.20\%) and 273 (0.34\%) had an empty treatment--stratum risk set at
$\tau_U$. The reported rates are computed among the remaining trials, and
counting the flagged trials as nonrejections would lower the M3 power by
at most 0.006.

\subsection{Graded-interaction experiment}
\label{sec:supp-sim-gradient}

The graded-interaction experiment uses the global null, marginally
independent censoring calibrated to about 30\% for each value of $c$, simple
randomization and Pocock--Simon minimization, and $N\in\{800,1200\}$. For
interaction weight $c$, the control log-hazard coefficients are
$(0.45+0.75c,\ 0.65+0.55c,\ -2.4c)$ for $Z_{i1}$, $Z_{i2}$, and
$Z_{i1}Z_{i2}$. The non-additive share of $m_a(\cdot,12)$ reported in the
main paper is $\{\sum_sp_s\eta_a(s,12)^2\}^{1/2}/
\{\sum_sp_sm_a(s,12)^2\}^{1/2}$, where $\eta_a$ is the remainder of the
$\bm p$-weighted least-squares projection of $m_a(\cdot,12)$ onto additive
functions. It was computed by numerical integration of the influence
function under the true stratum hazards and the calibrated censoring
distribution. Table~\ref{tab:supp-interaction-gradient} gives the
fixed-time and process-wise family-wise error rates and the standard-error
ratios of M1, M2, M5, and M3. The process-wise rates follow the fixed-time
pattern. Under minimization, the process-wise error rate of M5 rose from
0.024--0.028 at $c=0$ to 0.033--0.034 at $c=1$, while those of M2 and M3
stayed between 0.022 and 0.029.

\begin{table}[t!]
\centering
\scriptsize
\caption{Graded-interaction null experiment. The interaction weight $c$ interpolates between the additive model ($c=0$) and G0-I ($c=1$). Each cell gives the fixed-time/process-wise family-wise error rate, and the SE column gives the ratio of the average estimated standard error of the 12-month contrasts to their empirical standard deviation. The first and second lines of each cell correspond to $N=800$ and $N=1200$. Each configuration uses 10,000 trials and 5,000 Gaussian draws, and PS uses 20,000 design-only replicates.}
\label{tab:supp-interaction-gradient}
\begin{tabular}{llcccccccc}
\toprule
&&\multicolumn{2}{c}{M1}&\multicolumn{2}{c}{M2}&\multicolumn{2}{c}{M5}&\multicolumn{2}{c}{M3}\\
\cmidrule(lr){3-4}\cmidrule(lr){5-6}\cmidrule(lr){7-8}\cmidrule(lr){9-10}
Design & $c$ & FWER & SE & FWER & SE & FWER & SE & FWER & SE\\
\midrule
PS & 0.00 & \shortstack{0.021/0.025\\0.019/0.021} & \shortstack{1.031\\1.039} & \shortstack{0.025/0.028\\0.025/0.024} & \shortstack{1.000\\1.007} & \shortstack{0.026/0.028\\0.026/0.024} & \shortstack{0.997\\1.005} & \shortstack{0.025/0.029\\0.025/0.025} & \shortstack{0.998\\1.006}\\
 & 0.25 & \shortstack{0.023/0.025\\0.023/0.025} & \shortstack{1.018\\1.021} & \shortstack{0.025/0.025\\0.026/0.028} & \shortstack{1.003\\1.004} & \shortstack{0.027/0.026\\0.027/0.028} & \shortstack{0.996\\0.999} & \shortstack{0.025/0.026\\0.028/0.027} & \shortstack{1.001\\1.002}\\
 & 0.50 & \shortstack{0.027/0.028\\0.026/0.026} & \shortstack{1.011\\1.010} & \shortstack{0.026/0.028\\0.027/0.027} & \shortstack{1.006\\1.004} & \shortstack{0.031/0.031\\0.031/0.029} & \shortstack{0.984\\0.983} & \shortstack{0.027/0.028\\0.026/0.026} & \shortstack{1.002\\1.004}\\
 & 0.75 & \shortstack{0.027/0.028\\0.025/0.027} & \shortstack{0.998\\1.006} & \shortstack{0.027/0.028\\0.025/0.026} & \shortstack{1.000\\1.007} & \shortstack{0.036/0.034\\0.034/0.034} & \shortstack{0.951\\0.959} & \shortstack{0.026/0.028\\0.028/0.027} & \shortstack{0.995\\1.003}\\
 & 1.00 & \shortstack{0.022/0.025\\0.025/0.023} & \shortstack{1.007\\1.008} & \shortstack{0.021/0.023\\0.024/0.022} & \shortstack{1.012\\1.013} & \shortstack{0.036/0.033\\0.040/0.034} & \shortstack{0.927\\0.928} & \shortstack{0.021/0.022\\0.025/0.025} & \shortstack{1.013\\1.007}\\
\addlinespace
SR & 0.00 & \shortstack{0.026/0.027\\0.025/0.028} & \shortstack{0.995\\1.004} & \shortstack{0.025/0.026\\0.024/0.027} & \shortstack{1.001\\1.007} & \shortstack{0.034/0.032\\0.031/0.031} & \shortstack{0.962\\0.970} & \shortstack{0.027/0.028\\0.025/0.026} & \shortstack{0.996\\1.002}\\
 & 0.25 & \shortstack{0.024/0.026\\0.026/0.028} & \shortstack{0.999\\1.005} & \shortstack{0.023/0.025\\0.026/0.027} & \shortstack{1.005\\1.009} & \shortstack{0.027/0.028\\0.030/0.030} & \shortstack{0.976\\0.983} & \shortstack{0.023/0.025\\0.026/0.028} & \shortstack{1.001\\1.007}\\
 & 0.50 & \shortstack{0.024/0.024\\0.024/0.024} & \shortstack{1.008\\0.999} & \shortstack{0.022/0.023\\0.023/0.024} & \shortstack{1.014\\1.003} & \shortstack{0.027/0.027\\0.027/0.029} & \shortstack{0.980\\0.971} & \shortstack{0.026/0.026\\0.024/0.024} & \shortstack{1.009\\0.999}\\
 & 0.75 & \shortstack{0.025/0.025\\0.025/0.025} & \shortstack{1.002\\0.994} & \shortstack{0.025/0.025\\0.025/0.024} & \shortstack{1.008\\0.997} & \shortstack{0.034/0.031\\0.035/0.032} & \shortstack{0.954\\0.946} & \shortstack{0.025/0.025\\0.025/0.026} & \shortstack{1.004\\0.992}\\
 & 1.00 & \shortstack{0.028/0.028\\0.024/0.026} & \shortstack{0.997\\0.998} & \shortstack{0.027/0.026\\0.024/0.025} & \shortstack{1.002\\1.001} & \shortstack{0.043/0.040\\0.040/0.037} & \shortstack{0.916\\0.917} & \shortstack{0.027/0.026\\0.024/0.027} & \shortstack{0.998\\0.997}\\
\addlinespace
\bottomrule
\end{tabular}
\end{table}

\subsection{Estimation under stratum-dependent censoring}
\label{sec:supp-sim-estimation}

Table~\ref{tab:supp-conditional-bias} gives the estimation results under
G0-C, in which censoring is independent of the event times only within
strata. The unstratified Kaplan--Meier contrasts had a maximum absolute bias
of 0.075 to 0.094 that did not decrease from $N=800$ to $N=1200$, which
reflects a failure of identification rather than a small-sample effect. The
standardized contrasts had a maximum absolute bias of at most 0.009,
standard-error ratios of 0.995 to 1.009, and pointwise coverage of 0.948 to
0.953, and the proportion of trials with adequate treatment--stratum risk
sets through $\tau^\star$ was 0.976 to 0.998. The bias of the unstratified
contrasts also raised the error rates of the unstratified procedures above
the nominal level. Their fixed-time family-wise error rate was 0.029 to 0.034
for M1 and 0.033 to 0.038 for M2, compared with 0.024 to 0.028 for M3 in
Table~\ref{tab:simulation-error-by-method} of the main paper. M2 was affected more than M1 because its standard
error is correct, whereas the conservative standard error of M1 partly
masks the bias.

\begin{table}[p!]
\centering
\small
\caption{Estimation under stratum-dependent censoring (G0-C). Bias columns
give the maximum absolute bias across the three treatment-control RMST
contrasts at $\tau^\star=12$. The final three columns refer to the
standardized analysis, and ``stable'' is the proportion of trials with adequate
treatment-by-stratum risk sets through $\tau^\star$. Marginal-KM bias is
estimated from 5,000 trials per configuration for SR, SPB, and PS and from
20,000 trials for HH, and the FWER columns give the fixed-time step-down
family-wise error rates of M1 and M2 from the same runs. All standardized
results use the formal 20,000-trial runs. M1 and M2 have the same marginal
point estimator.}
\label{tab:supp-conditional-bias}
\resizebox{\textwidth}{!}{%
\begin{tabular}{cclcccccc}
\toprule
$N$ & Design & \multicolumn{2}{c}{Maximum absolute bias}
& \multicolumn{2}{c}{FWER} & SE/SD range & Coverage range & Stable\\
\cmidrule(lr){3-4}\cmidrule(lr){5-6}
&& Marginal KM & Standardized & M1 & M2 &&&\\
\midrule
800  & SR  & 0.082 & 0.002 & 0.034 & 0.036 & 1.001--1.007 & 0.949--0.951 & 0.976\\
     & SPB & 0.075 & 0.002 & 0.029 & 0.035 & 1.003--1.006 & 0.950--0.953 & 0.982\\
     & PS  & 0.083 & 0.001 & 0.032 & 0.036 & 1.001--1.006 & 0.949--0.950 & 0.980\\
     & HH  & 0.094 & 0.009 & 0.031 & 0.037 & 1.001--1.006 & 0.950--0.951 & 0.980\\
1200 & SR  & 0.079 & 0.002 & 0.034 & 0.034 & 1.001--1.005 & 0.950--0.951 & 0.997\\
     & SPB & 0.084 & 0.002 & 0.029 & 0.036 & 0.998--0.999 & 0.948--0.950 & 0.998\\
     & PS  & 0.084 & 0.003 & 0.029 & 0.033 & 0.995--1.002 & 0.949--0.951 & 0.997\\
     & HH  & 0.088 & 0.005 & 0.032 & 0.038 & 1.005--1.009 & 0.949--0.951 & 0.998\\
\bottomrule
\end{tabular}}
\end{table}

\subsection{Operating characteristics of the RMST procedures}
\label{sec:supp-sim-ranges}

Table~\ref{tab:supp-alternative-ranges} gives the complete range of
operating characteristics across simple randomization, stratified permuted
blocks, Pocock--Simon minimization, and multi-arm Hu--Hu randomization for
each alternative scenario. All entries use the finite-sample-corrected
implementation. The
unstratified methods are included as misspecified comparators under
stratum-dependent censoring. Their apparently higher power should not be
interpreted as greater efficiency because their RMST contrasts can be
asymptotically biased. The standardized analysis is the primary valid
analysis for these scenarios.

\begin{table}[p!]
\centering
\scriptsize
\caption{Ranges across the four randomization procedures under alternative
scenarios. Results use the step-down procedure. DP and CP denote
disjunctive and conjunctive power, and ``Band'' is simultaneous-band coverage.
There is no true null under A-PH. Each configuration uses 10,000 trials,
5,000 multiplier draws per trial, and, for Pocock--Simon minimization,
20,000 design-only randomization replicates. The unstratified analyses are
misspecified comparators under conditional censoring.}
\label{tab:supp-alternative-ranges}
\resizebox{\textwidth}{!}{
\begin{tabular}{lclrrrrrrr}
\toprule
&&&\multicolumn{3}{c}{Fixed time}&\multicolumn{3}{c}{RMST process}&\\
\cmidrule(lr){4-6}\cmidrule(lr){7-9}
Scenario & $N$ & Analysis & FWER & DP & CP & FWER & DP & CP & Band\\
\midrule
P-PH & 800 & Naive & 0.019--0.025 & 0.681--0.692 & 0.208--0.218 & 0.019--0.021 & 0.599--0.613 & 0.157--0.168 & 0.945--0.952 \\
P-PH & 800 & CAR-adjusted & 0.024--0.025 & 0.675--0.712 & 0.212--0.238 & 0.019--0.022 & 0.593--0.636 & 0.162--0.182 & 0.943--0.948 \\
P-PH & 800 & Standardized & 0.020--0.022 & 0.633--0.645 & 0.188--0.201 & 0.020--0.021 & 0.552--0.570 & 0.144--0.155 & 0.946--0.949 \\
P-PH & 1200 & Naive & 0.021--0.026 & 0.854--0.864 & 0.331--0.340 & 0.020--0.022 & 0.798--0.809 & 0.267--0.269 & 0.947--0.953 \\
P-PH & 1200 & CAR-adjusted & 0.025--0.027 & 0.855--0.877 & 0.336--0.361 & 0.023--0.024 & 0.796--0.825 & 0.265--0.291 & 0.942--0.948 \\
P-PH & 1200 & Standardized & 0.021--0.023 & 0.814--0.824 & 0.299--0.306 & 0.020--0.022 & 0.752--0.761 & 0.237--0.246 & 0.946--0.950 \\
\midrule
A-PH & 800 & Naive & -- & 0.708--0.713 & 0.126--0.135 & -- & 0.626--0.637 & 0.090--0.097 & 0.951--0.961 \\
A-PH & 800 & CAR-adjusted & -- & 0.700--0.736 & 0.132--0.149 & -- & 0.619--0.661 & 0.094--0.108 & 0.946--0.954 \\
A-PH & 800 & Standardized & -- & 0.664--0.675 & 0.117--0.125 & -- & 0.586--0.604 & 0.085--0.091 & 0.947--0.956 \\
A-PH & 1200 & Naive & -- & 0.876--0.886 & 0.221--0.230 & -- & 0.821--0.832 & 0.165--0.175 & 0.942--0.955 \\
A-PH & 1200 & CAR-adjusted & -- & 0.875--0.900 & 0.220--0.249 & -- & 0.822--0.850 & 0.164--0.191 & 0.944--0.949 \\
A-PH & 1200 & Standardized & -- & 0.845--0.855 & 0.196--0.208 & -- & 0.784--0.793 & 0.150--0.161 & 0.949--0.952 \\
\midrule
DEL & 800 & Naive & 0.015--0.017 & 0.225--0.239 & 0.047--0.053 & 0.011--0.016 & 0.171--0.186 & 0.028--0.032 & 0.945--0.954 \\
DEL & 800 & CAR-adjusted & 0.016--0.020 & 0.234--0.252 & 0.053--0.061 & 0.014--0.018 & 0.182--0.197 & 0.031--0.037 & 0.944--0.947 \\
DEL & 800 & Standardized & 0.014--0.017 & 0.204--0.209 & 0.037--0.043 & 0.012--0.015 & 0.158--0.169 & 0.022--0.024 & 0.945--0.950 \\
DEL & 1200 & Naive & 0.017--0.021 & 0.336--0.345 & 0.083--0.086 & 0.015--0.016 & 0.266--0.274 & 0.052--0.055 & 0.946--0.955 \\
DEL & 1200 & CAR-adjusted & 0.020--0.022 & 0.338--0.370 & 0.085--0.097 & 0.016--0.019 & 0.267--0.295 & 0.052--0.064 & 0.944--0.948 \\
DEL & 1200 & Standardized & 0.015--0.018 & 0.299--0.311 & 0.059--0.068 & 0.014--0.016 & 0.240--0.253 & 0.036--0.041 & 0.947--0.950 \\
\midrule
CRS & 800 & Naive & 0.011--0.017 & 0.204--0.221 & 0.028--0.032 & 0.012--0.016 & 0.160--0.176 & 0.015--0.019 & 0.945--0.956 \\
CRS & 800 & CAR-adjusted & 0.014--0.017 & 0.217--0.232 & 0.031--0.033 & 0.014--0.015 & 0.169--0.181 & 0.018--0.020 & 0.946--0.949 \\
CRS & 800 & Standardized & 0.011--0.015 & 0.192--0.197 & 0.022--0.023 & 0.012--0.013 & 0.152--0.154 & 0.012--0.013 & 0.948--0.952 \\
CRS & 1200 & Naive & 0.012--0.017 & 0.303--0.316 & 0.035--0.041 & 0.012--0.016 & 0.241--0.252 & 0.019--0.023 & 0.946--0.958 \\
CRS & 1200 & CAR-adjusted & 0.016--0.017 & 0.311--0.337 & 0.039--0.048 & 0.014--0.017 & 0.248--0.270 & 0.023--0.027 & 0.944--0.950 \\
CRS & 1200 & Standardized & 0.011--0.014 & 0.278--0.286 & 0.026--0.030 & 0.012--0.015 & 0.223--0.231 & 0.014--0.018 & 0.946--0.950 \\
\midrule
WAN & 800 & Naive & 0.017--0.025 & 0.704--0.716 & 0.212--0.222 & 0.016--0.021 & 0.696--0.710 & 0.168--0.177 & 0.947--0.956 \\
WAN & 800 & CAR-adjusted & 0.021--0.023 & 0.703--0.740 & 0.217--0.235 & 0.019--0.022 & 0.694--0.725 & 0.171--0.188 & 0.946--0.949 \\
WAN & 800 & Standardized & 0.018--0.020 & 0.651--0.660 & 0.190--0.194 & 0.018--0.020 & 0.661--0.676 & 0.157--0.161 & 0.949--0.953 \\
WAN & 1200 & Naive & 0.022--0.025 & 0.878--0.882 & 0.339--0.343 & 0.021--0.023 & 0.870--0.880 & 0.273--0.278 & 0.950--0.955 \\
WAN & 1200 & CAR-adjusted & 0.025--0.028 & 0.878--0.897 & 0.337--0.368 & 0.024--0.026 & 0.869--0.890 & 0.275--0.300 & 0.944--0.951 \\
WAN & 1200 & Standardized & 0.021--0.025 & 0.840--0.841 & 0.306--0.315 & 0.021--0.023 & 0.851--0.853 & 0.252--0.259 & 0.947--0.953 \\
\midrule
EBL & 800 & Naive & 0.018--0.024 & 0.721--0.733 & 0.214--0.226 & 0.018--0.022 & 0.785--0.794 & 0.176--0.185 & 0.942--0.956 \\
EBL & 800 & CAR-adjusted & 0.021--0.024 & 0.714--0.755 & 0.215--0.246 & 0.019--0.022 & 0.782--0.809 & 0.179--0.201 & 0.945--0.950 \\
EBL & 800 & Standardized & 0.019--0.021 & 0.660--0.678 & 0.193--0.205 & 0.017--0.020 & 0.762--0.770 & 0.166--0.178 & 0.946--0.953 \\
EBL & 1200 & Naive & 0.022--0.025 & 0.889--0.899 & 0.341--0.348 & 0.019--0.023 & 0.930--0.940 & 0.277--0.287 & 0.946--0.953 \\
EBL & 1200 & CAR-adjusted & 0.024--0.027 & 0.886--0.912 & 0.339--0.373 & 0.021--0.025 & 0.928--0.945 & 0.279--0.310 & 0.943--0.948 \\
EBL & 1200 & Standardized & 0.021--0.022 & 0.851--0.856 & 0.308--0.318 & 0.020--0.024 & 0.919--0.926 & 0.257--0.269 & 0.947--0.950 \\
\bottomrule
\end{tabular}}
\end{table}

\subsection{Stratified log-rank benchmark}
\label{sec:supp-logrank}

For M6, let $Y_{as}(t)$ and $d_{as}(t)$ denote the risk-set size and number
of failures at time $t$ in arm $a$ and stratum $s$. For treatment arm
$a\geq1$, the benefit-oriented score and its conventional null variance are
\begin{align}
U_a
&=
\sum_s\sum_{t\leq\tau^\star}
\left[
\frac{Y_{as}(t)}
{Y_{0s}(t)+Y_{as}(t)}
\{d_{0s}(t)+d_{as}(t)\}
-d_{as}(t)
\right],
\label{eq:logrank-score}\\
V_a
&=
\sum_s\sum_{t\leq\tau^\star}
\frac{
Y_{0s}(t)Y_{as}(t)d_s^{(a)}(t)
\{Y_s^{(a)}(t)-d_s^{(a)}(t)\}
}{
\{Y_s^{(a)}(t)\}^2\{Y_s^{(a)}(t)-1\}
},
\label{eq:logrank-variance}
\end{align}
where $Y_s^{(a)}=Y_{0s}+Y_{as}$ and
$d_s^{(a)}=d_{0s}+d_{as}$. Thus $Z_a^{\mathrm{LR}}=U_a/\sqrt{V_a}$ is
positive when treatment has fewer failures than expected under equality of
the two survival curves. At a general analysis horizon $h$, the sums run
over $t\leq h$ instead of $t\leq\tau^\star$.

The shared-control correlation is estimated without replacing the
conventional hypergeometric studentization. Specifically, define the
subject-level event-score contribution at analysis horizon $h$ by
\[
\widehat\psi_{ia}(h)
=
\sum_s\sum_{t\leq h}
\ind\{X_i=t,\Delta_i=1,\bm Z_i=s,\ A_i\in\{0,a\}\}
\left\{
\frac{Y_{as}(t)}{Y_{0s}(t)+Y_{as}(t)}
-\ind(A_i=a)
\right\}.
\]
Let $\widehat{\bm C}(h)$ be the empirical covariance matrix obtained from
the centered vectors
$\widehat{\bm\psi}_i(h)
=(\widehat\psi_{i1}(h),\widehat\psi_{i2}(h),
\widehat\psi_{i3}(h))^T$, and let
$\widehat{\bm R}_{\mathrm{LR}}(h)$ be its correlation matrix. The covariance
used for max-$T$ calibration is the hybrid matrix
\[
\widehat{\bm\Sigma}_{\mathrm{LR}}(h)
=
\operatorname{diag}\{\sqrt{V_1(h)},\ldots,\sqrt{V_K(h)}\}
\widehat{\bm R}_{\mathrm{LR}}(h)
\operatorname{diag}\{\sqrt{V_1(h)},\ldots,\sqrt{V_K(h)}\}.
\]
Thus each marginal statistic retains its ordinary null log-rank variance,
while the off-diagonal entries capture the fact that a control-arm failure
contributes simultaneously to every treatment--control comparison. The
single-step and step-down critical values are calculated from centered
Gaussian draws with correlation
$\widehat{\bm R}_{\mathrm{LR}}(h)$. Under a partial null, only the joint
limit of the true-null subset is required, and using the same Gaussian draw for
all subsets makes the step-down critical values subset monotone.

Under a true pairwise null, censoring that is independent conditionally on
the joint strata gives a mean-zero stratified score. The true-null subset
therefore remains valid across the four randomization procedures, and the
subset-monotone step-down critical values provide strong FWER control.

The main comparison censors M6 administratively at 12 months to match
$\tau^\star$. Table~\ref{tab:supp-logrank-sensitivity} compares it with an
analysis that uses all follow-up through 24 months. Both horizons were
evaluated on the same simulated trials as the RMST procedures, so their
differences contain no between-run Monte Carlo noise. Type I error remained
controlled at both horizons. Using all
follow-up materially increased power when the benefit began after month 6
in DEL and when the early and late effects opposed one another in CRS. It
reduced power in WAN and EBL, because events after month 12 carry no
benefit in WAN and an excess hazard in EBL, and both dilute the early
log-rank signal.

\begin{table}[p!]
\centering
\scriptsize
\caption{Sensitivity analysis for the stratified log-rank benchmark (M6) using follow-up through 12 or 24 months. Each entry is 12-month/24-month operating performance, and the first and second lines correspond to $N=800$ and $N=1200$.}
\label{tab:supp-logrank-sensitivity}
\begin{tabular}{llcccc}
\toprule
Scenario & Quantity & SR & SPB & PS & HH\\
\midrule
G0-M & FWER & \shortstack{0.025/0.025\\0.024/0.026} & \shortstack{0.027/0.028\\0.027/0.025} & \shortstack{0.025/0.026\\0.027/0.026} & \shortstack{0.025/0.025\\0.027/0.027}\\
G0-C & FWER & \shortstack{0.025/0.026\\0.026/0.024} & \shortstack{0.023/0.024\\0.028/0.027} & \shortstack{0.027/0.025\\0.023/0.025} & \shortstack{0.027/0.026\\0.026/0.025}\\
G0-I & FWER & \shortstack{0.028/0.028\\0.024/0.025} & \shortstack{0.027/0.029\\0.025/0.025} & \shortstack{0.026/0.025\\0.026/0.027} & \shortstack{0.026/0.028\\0.027/0.026}\\
P-PH & Strong FWER & \shortstack{0.022/0.024\\0.022/0.024} & \shortstack{0.022/0.023\\0.023/0.024} & \shortstack{0.023/0.023\\0.021/0.020} & \shortstack{0.021/0.022\\0.022/0.026}\\
 & Disjunctive power & \shortstack{0.703/0.855\\0.879/0.966} & \shortstack{0.714/0.866\\0.876/0.962} & \shortstack{0.712/0.861\\0.876/0.965} & \shortstack{0.699/0.855\\0.874/0.963}\\
A-PH & Disjunctive power & \shortstack{0.735/0.880\\0.900/0.976} & \shortstack{0.744/0.881\\0.904/0.976} & \shortstack{0.739/0.884\\0.899/0.973} & \shortstack{0.742/0.885\\0.897/0.975}\\
DEL & Strong FWER & \shortstack{0.017/0.024\\0.021/0.024} & \shortstack{0.019/0.023\\0.021/0.026} & \shortstack{0.019/0.023\\0.022/0.025} & \shortstack{0.016/0.020\\0.021/0.024}\\
 & Disjunctive power & \shortstack{0.312/0.661\\0.442/0.840} & \shortstack{0.309/0.650\\0.451/0.844} & \shortstack{0.309/0.660\\0.456/0.848} & \shortstack{0.307/0.655\\0.452/0.844}\\
CRS & Strong FWER & \shortstack{0.017/0.022\\0.020/0.026} & \shortstack{0.016/0.023\\0.018/0.023} & \shortstack{0.017/0.020\\0.017/0.024} & \shortstack{0.014/0.021\\0.018/0.024}\\
 & Disjunctive power & \shortstack{0.283/0.609\\0.403/0.798} & \shortstack{0.276/0.608\\0.401/0.795} & \shortstack{0.282/0.606\\0.407/0.796} & \shortstack{0.276/0.609\\0.402/0.791}\\
WAN & Strong FWER & \shortstack{0.020/0.021\\0.024/0.021} & \shortstack{0.023/0.023\\0.024/0.023} & \shortstack{0.021/0.022\\0.021/0.022} & \shortstack{0.019/0.020\\0.024/0.021}\\
 & Disjunctive power & \shortstack{0.503/0.444\\0.693/0.624} & \shortstack{0.512/0.458\\0.686/0.631} & \shortstack{0.494/0.446\\0.687/0.629} & \shortstack{0.502/0.446\\0.695/0.629}\\
EBL & Strong FWER & \shortstack{0.019/0.018\\0.021/0.019} & \shortstack{0.017/0.017\\0.020/0.021} & \shortstack{0.021/0.020\\0.023/0.019} & \shortstack{0.021/0.018\\0.021/0.020}\\
 & Disjunctive power & \shortstack{0.402/0.336\\0.573/0.497} & \shortstack{0.403/0.344\\0.582/0.498} & \shortstack{0.404/0.339\\0.573/0.491} & \shortstack{0.413/0.352\\0.577/0.494}\\
\bottomrule
\end{tabular}
\begin{minipage}{0.98\textwidth}
\footnotesize
The 12-month analysis is the horizon-matched main-text benchmark, and 24 months uses all available follow-up. M6 tests equality of survival curves over the selected interval, not an RMST contrast. Both horizons use the same trials. G0-M and G0-C use 20,000 trials and 10,000 Gaussian draws per trial, and all other rows use 10,000 trials and 5,000 draws. There were no fitting failures.
\end{minipage}
\end{table}

\subsection{Stratified Cox benchmark}
\label{sec:supp-cox}

Table~\ref{tab:supp-cox-ranges} summarizes the secondary stratified Cox
benchmark. To make its power comparison with fixed-time RMST meaningful,
event follow-up was administratively censored at $\tau^\star=12$ before
fitting the Cox model. Thus, the two analyses use the same generated trials
and the same time horizon, although they target different estimands. The
Cox model was stratified by the four joint randomization strata and included
the three treatment indicators. Its max-$T$ procedure used the complete
model-based covariance matrix of the negative fitted log hazard ratios.

\begin{table}[p!]
\centering
\small
\caption{Ranges of stratified Cox operating characteristics. The
global-null rows cover simple randomization, stratified permuted blocks,
and Pocock--Simon minimization, and the P-PH and A-PH rows additionally cover
multi-arm Hu--Hu randomization. All analyses are administratively censored
at 12 months. FWER is global-null FWER under G0-M and G0-C and strong FWER
under P-PH. DP and CP denote disjunctive and conjunctive power. Maximum
absolute bias and SE/SD are calculated across treatment arms and included
randomization procedures for the negative log hazard-ratio estimators. The
secondary global-null benchmark uses 5,000 trials and 2,000 multiplier
draws per configuration. The matched P-PH and A-PH benchmarks use 10,000
trials and 5,000 draws.}
\label{tab:supp-cox-ranges}
\begin{tabular}{lcrrrrr}
\toprule
Scenario & $N$ & FWER & DP & CP & Maximum $|\mathrm{bias}|$ & SE/SD\\
\midrule
G0-M & 800  & 0.021--0.027 & --          & --          & 0.001 & 0.987--1.016\\
     & 1200 & 0.021--0.024 & --          & --          & 0.002 & 0.993--1.014\\
G0-C & 800  & 0.022--0.024 & --          & --          & 0.004 & 0.982--1.007\\
     & 1200 & 0.022--0.024 & --          & --          & 0.004 & 0.983--1.014\\
\midrule
P-PH & 800  & 0.021--0.023 & 0.690--0.708 & 0.222--0.229 & 0.006 & 0.984--1.007\\
     & 1200 & 0.021--0.023 & 0.873--0.877 & 0.353--0.360 & 0.001 & 0.973--1.008\\
A-PH & 800  & --          & 0.728--0.734 & 0.142--0.149 & 0.002 & 0.991--1.004\\
     & 1200 & --          & 0.892--0.899 & 0.238--0.253 & 0.003 & 0.985--1.014\\
\bottomrule
\end{tabular}
\end{table}

\begin{table}[p!]
\centering
\small
\caption{Comparison of standardized fixed-time RMST and stratified Cox
step-down inference under Pocock--Simon minimization. Both analyses use the
same 5,000 global-null or 10,000 treatment-effect trials and the same
12-month analysis horizon. The corresponding multiplier counts are 2,000
and 5,000, respectively. FWER is
global-null FWER under G0-M and G0-C, strong FWER for the true null arm
under P-PH, and is not applicable under A-PH. DP and CP denote disjunctive
and conjunctive power.}
\label{tab:supp-cox-comparison}
\begin{tabular}{lcrrrrrr}
\toprule
&&\multicolumn{3}{c}{Standardized RMST}
&\multicolumn{3}{c}{Stratified Cox}\\
\cmidrule(lr){3-5}\cmidrule(lr){6-8}
Scenario & $N$ & FWER & DP & CP & FWER & DP & CP\\
\midrule
G0-M & 800  & 0.028 & --    & --    & 0.027 & --    & --\\
     & 1200 & 0.027 & --    & --    & 0.024 & --    & --\\
G0-C & 800  & 0.027 & --    & --    & 0.024 & --    & --\\
     & 1200 & 0.023 & --    & --    & 0.024 & --    & --\\
\midrule
P-PH & 800  & 0.022 & 0.643 & 0.201 & 0.023 & 0.704 & 0.227\\
     & 1200 & 0.022 & 0.824 & 0.305 & 0.023 & 0.873 & 0.357\\
A-PH & 800  & --    & 0.670 & 0.121 & --    & 0.730 & 0.145\\
     & 1200 & --    & 0.847 & 0.208 & --    & 0.894 & 0.253\\
\bottomrule
\end{tabular}
\end{table}

All 220,000 Cox fits represented in the two benchmark sets converged without
an analysis error. The estimated
standard errors tracked the empirical standard deviations closely in every
configuration. As expected under the correctly specified PH scenarios, Cox
inference was generally more powerful than fixed-time RMST inference.
Matching the analysis horizon ensures that this gain cannot be attributed
to Cox using additional follow-up beyond $\tau^\star$. A
constant-hazard-ratio benchmark is not reported for DEL or CRS because
those scenarios deliberately violate proportional hazards.

\subsection{Equal-RMST power-trend experiment}
\label{sec:supp-equal-rmst}

The primary alternatives in the main paper represent clinically familiar
hazard patterns, but apart from P-PH, WAN, and EBL their RMST gains are not
equal. We therefore added
parallel paired experiments under marginally and conditionally independent
censoring. The former compares both proposed RMST procedures, M2 and M3,
with M6, and the latter compares M3 with M6 because M2 does not identify the
target marginal RMST under that censoring condition. Arms 2 and 3 retain the
control hazard. For each target
\[
d\in\{0.50,0.75,1.00,1.25\},
\qquad
\theta_1(12)
=
\int_0^{12}\{S_1(t)-S_0(t)\}\,dt
=d,
\]
numerical integration over the four stratum-specific survival curves
calibrates four arm-1 profiles. The 0.50-month gain is the prespecified
moderate/local alternative, and the other targets quantify the power trend.
The delayed profile changes at month 3, which preserves a clinically
interpretable delay while leaving a feasible calibration range through a
1.25-month RMST gain. The treatment-limited-benefit (TLB) profile has a
reduced hazard through month 6 and the control hazard thereafter. The
immunotherapy-type crossing (IOX) profile has hazard multiplier 1.10 through
month 3 and a subsequently reduced hazard calibrated to the target RMST
gain.

\begin{table}[p!]
\centering
\small
\caption{\textbf{Marginally independent censoring.} Hazard-multiplier
calibration for the equal-RMST power-trend experiment. For PH, the
multiplier is constant. For DEL and IOX, $r_{\rm pre}$
applies through month 3, and for TLB it applies through month 6.
$U_{\rm LR}/N$ is the expected benefit-oriented pairwise stratified
log-rank score per randomized subject through month 12.}
\label{tab:supp-equal-rmst-marginal-calibration}
\begin{tabular}{cclrrr}
\toprule
$d$ & Profile & Change time & $r_{\rm pre}$ & $r_{\rm post}$ & $U_{\rm LR}/N$\\
\midrule
0.50 & PH  & -- & 0.8286 & 0.8286 & 0.01136\\
0.50 & DEL & 3  & 1.0000 & 0.6748 & 0.01404\\
0.50 & TLB & 6  & 0.7815 & 1.0000 & 0.00904\\
0.50 & IOX & 3  & 1.1000 & 0.5869 & 0.01558\\
\addlinespace
0.75 & PH  & -- & 0.7487 & 0.7487 & 0.01689\\
0.75 & DEL & 3  & 1.0000 & 0.5278 & 0.02080\\
0.75 & TLB & 6  & 0.6788 & 1.0000 & 0.01343\\
0.75 & IOX & 3  & 1.1000 & 0.4421 & 0.02231\\
\addlinespace
1.00 & PH  & -- & 0.6722 & 0.6722 & 0.02232\\
1.00 & DEL & 3  & 1.0000 & 0.3897 & 0.02738\\
1.00 & TLB & 6  & 0.5801 & 1.0000 & 0.01774\\
1.00 & IOX & 3  & 1.1000 & 0.3060 & 0.02886\\
\addlinespace
1.25 & PH  & -- & 0.5990 & 0.5990 & 0.02766\\
1.25 & DEL & 3  & 1.0000 & 0.2597 & 0.03376\\
1.25 & TLB & 6  & 0.4852 & 1.0000 & 0.02196\\
1.25 & IOX & 3  & 1.1000 & 0.1779 & 0.03522\\
\bottomrule
\end{tabular}
\end{table}

\begin{table}[p!]
\centering
\small
\caption{\textbf{Conditionally independent censoring given strata.}
Hazard-multiplier calibration for the equal-RMST power-trend experiment.
Notation follows Table~\ref{tab:supp-equal-rmst-marginal-calibration}.}
\label{tab:supp-equal-rmst-conditional-calibration}
\begin{tabular}{cclrrr}
\toprule
$d$ & Profile & Change time & $r_{\rm pre}$ & $r_{\rm post}$ & $U_{\rm LR}/N$\\
\midrule
0.50 & PH  & -- & 0.8286 & 0.8286 & 0.01115\\
0.50 & DEL & 3  & 1.0000 & 0.6748 & 0.01371\\
0.50 & TLB & 6  & 0.7815 & 1.0000 & 0.00892\\
0.50 & IOX & 3  & 1.1000 & 0.5869 & 0.01518\\
\addlinespace
0.75 & PH  & -- & 0.7487 & 0.7487 & 0.01660\\
0.75 & DEL & 3  & 1.0000 & 0.5278 & 0.02035\\
0.75 & TLB & 6  & 0.6788 & 1.0000 & 0.01326\\
0.75 & IOX & 3  & 1.1000 & 0.4421 & 0.02182\\
\addlinespace
1.00 & PH  & -- & 0.6722 & 0.6722 & 0.02196\\
1.00 & DEL & 3  & 1.0000 & 0.3897 & 0.02685\\
1.00 & TLB & 6  & 0.5801 & 1.0000 & 0.01752\\
1.00 & IOX & 3  & 1.1000 & 0.3060 & 0.02829\\
\addlinespace
1.25 & PH  & -- & 0.5990 & 0.5990 & 0.02724\\
1.25 & DEL & 3  & 1.0000 & 0.2597 & 0.03319\\
1.25 & TLB & 6  & 0.4852 & 1.0000 & 0.02170\\
1.25 & IOX & 3  & 1.1000 & 0.1779 & 0.03462\\
\bottomrule
\end{tabular}
\end{table}

The four practical hazard profiles are common across censoring conditions
because censoring does not change the RMST estimand. The IOX profile is a
crossing-survival-curve alternative in the clinically familiar direction of
a mild early disadvantage followed by sustained benefit, and it is not
calibrated to attenuate M6. The unrounded multipliers and numerical checks
are reproduced by
\texttt{calibrate\_equal\_rmst\_scenarios.py} with its
\texttt{--censoring} option.

For a separate methodological stress test, we retain the reverse-crossing
profiles used in the earlier numerical investigation. They have an early
hazard reduction followed by a hazard increase and satisfy two calibration
constraints: the specified positive RMST gain and zero expected
benefit-oriented M6 score. Unlike IOX, these profiles are not presented as a
representative clinical efficacy pattern. Because the second constraint
depends on censoring-weighted risk sets, their multipliers differ between
the two censoring experiments.

\begin{table}[p!]
\centering
\small
\caption{Calibration of the log-rank-orthogonal reverse-crossing stress
test. All profiles change at month 6 and satisfy
$E\{U_{\rm LR}\}/N=0$ through month 12.}
\label{tab:supp-equal-rmst-stress-calibration}
\begin{tabular}{clrrr}
\toprule
$d$ & Censoring & $r_{\rm pre}$ & $r_{\rm post}$ & $U_{\rm LR}/N$\\
\midrule
0.50 & Marginal    & 0.6035 & 1.6906 & 0.00000\\
0.50 & Conditional & 0.5979 & 1.7137 & 0.00000\\
0.75 & Marginal    & 0.4208 & 2.0201 & 0.00000\\
0.75 & Conditional & 0.4126 & 2.0550 & 0.00000\\
1.00 & Marginal    & 0.2473 & 2.3403 & 0.00000\\
1.00 & Conditional & 0.2368 & 2.3872 & 0.00000\\
1.25 & Marginal    & 0.0822 & 2.6519 & 0.00000\\
1.25 & Conditional & 0.0696 & 2.7109 & 0.00000\\
\bottomrule
\end{tabular}
\end{table}

Tables~\ref{tab:supp-equal-rmst-marginal} and
\ref{tab:supp-equal-rmst-conditional} report all four randomization
procedures under marginally and conditionally independent censoring,
respectively. All methods within a censoring experiment use identical trial
random-number streams. The 0.50-month target produces deliberately local
power, whereas the larger targets show how the procedures approach
conventional design power. Under marginally independent censoring, both M2
and M3 are valid estimators of the target RMST and are compared directly.
Under conditionally independent censoring, M2 is omitted because it targets a
censoring-distorted marginal KM quantity.

Under marginally independent censoring, M2 and M3 had very similar power
under the three balancing designs: their maximum absolute difference was
0.005 across SPB, PS, and HH, including the PS configurations displayed in
the main paper. Under simple randomization, M3 exceeded M2 by as much as
0.033 because standardization removes chance between-stratum imbalance.
Strong FWER over the practical marginal-censoring grid was 0.018--0.029 for
M2 and 0.019--0.030 for M3. Relative to M6, M3 gained 0.030--0.207 power
under treatment-limited benefit. M6 gained 0.020--0.089 under PH,
0.097--0.339 under delayed benefit, and 0.106--0.404 under
immunotherapy-type crossing, and M2 showed the same qualitative ordering.

Under conditionally independent censoring, M3 exceeded M6 by
0.035--0.207 under treatment-limited benefit. M6 exceeded M3 by
0.019--0.070 under PH, 0.097--0.324 under delayed benefit, and
0.111--0.393 under immunotherapy-type crossing. The IOX result is important:
crossing curves do not by themselves favor RMST. Here the mild early
disadvantage is followed by a sustained and sufficiently strong hazard
reduction, so the event-weighted log-rank score remains highly informative.
By contrast, after month 6 the TLB profile has equal treatment and control
hazards. The previously accumulated survival separation continues to enter
the RMST area, while no further directional hazard score accrues to M6.

For M6, strong FWER over the practical profiles was 0.020--0.031 under
marginal censoring and 0.020--0.028 under conditional censoring. The
marginal upper endpoint was
0.0312 (Monte Carlo standard error 0.0017) for delayed benefit with
$d=1.25$, $N=800$, and Pocock--Simon minimization. In that configuration,
the single-step and Holm error rates were 0.0230 and 0.0277, respectively,
and the step-down rate was 0.0238 at $N=1200$. The empirical standard
deviations of the two true-null log-rank statistics at $N=800$ were 1.020
and 1.018, indicating mild finite-sample under-studentization. Thus the
isolated step-down overrun does not contradict the asymptotic result, but it
does caution against interpreting the Gaussian calibration as exact in
finite samples.

\begin{table}[p!]
\centering
\scriptsize
\caption{\textbf{Marginally independent censoring.} Ranges of arm-1 power across the four randomization designs in the equal-RMST trend experiment.}
\label{tab:supp-equal-rmst-marginal}
\resizebox{\textwidth}{!}{%
\begin{tabular}{llrrrrrrrr}
\toprule
&& \multicolumn{4}{c}{$N=800$} & \multicolumn{4}{c}{$N=1200$}\\
\cmidrule(lr){3-6}\cmidrule(lr){7-10}
Effect shape & Method & $0.50^\dagger$ & $0.75$ & $1.00$ & $1.25$ & $0.50^\dagger$ & $0.75$ & $1.00$ & $1.25$\\
\midrule
PH & M2 & 0.118--0.131 & 0.282--0.310 & 0.509--0.546 & 0.750--0.773 & 0.179--0.194 & 0.425--0.463 & 0.710--0.750 & 0.911--0.928\\
 & M3 & 0.123--0.131 & 0.296--0.309 & 0.534--0.546 & 0.767--0.773 & 0.187--0.197 & 0.451--0.466 & 0.735--0.751 & 0.922--0.927\\
 & M6 & 0.143--0.153 & 0.349--0.356 & 0.610--0.634 & 0.841--0.844 & 0.223--0.229 & 0.523--0.538 & 0.809--0.830 & 0.961--0.963\\
\addlinespace
Delayed benefit & M2 & 0.113--0.128 & 0.258--0.292 & 0.475--0.513 & 0.702--0.731 & 0.172--0.192 & 0.410--0.433 & 0.680--0.708 & 0.875--0.901\\
 & M3 & 0.116--0.128 & 0.278--0.291 & 0.503--0.514 & 0.723--0.730 & 0.178--0.192 & 0.427--0.434 & 0.706--0.710 & 0.891--0.901\\
 & M6 & 0.226--0.247 & 0.532--0.563 & 0.835--0.845 & 0.972--0.975 & 0.353--0.373 & 0.749--0.758 & 0.958--0.962 & 0.997--0.998\\
\addlinespace
Treatment-limited benefit & M2 & 0.121--0.138 & 0.294--0.314 & 0.535--0.569 & 0.776--0.803 & 0.180--0.203 & 0.436--0.471 & 0.737--0.771 & 0.932--0.941\\
 & M3 & 0.129--0.137 & 0.302--0.316 & 0.559--0.569 & 0.800--0.804 & 0.192--0.204 & 0.461--0.474 & 0.764--0.771 & 0.940--0.944\\
 & M6 & 0.096--0.104 & 0.205--0.217 & 0.385--0.402 & 0.593--0.599 & 0.137--0.149 & 0.317--0.330 & 0.569--0.580 & 0.787--0.799\\
\addlinespace
Immunotherapy-type crossing & M2 & 0.112--0.124 & 0.259--0.280 & 0.468--0.502 & 0.688--0.718 & 0.167--0.190 & 0.397--0.426 & 0.665--0.704 & 0.871--0.892\\
 & M3 & 0.120--0.124 & 0.277--0.281 & 0.488--0.503 & 0.712--0.720 & 0.180--0.190 & 0.423--0.427 & 0.688--0.703 & 0.886--0.894\\
 & M6 & 0.297--0.305 & 0.630--0.631 & 0.888--0.893 & 0.981--0.985 & 0.448--0.461 & 0.825--0.830 & 0.979--0.980 & 0.999--1.000\\
\addlinespace
\bottomrule
\end{tabular}%
}
\begin{minipage}{0.98\textwidth}
\footnotesize
Each cell is the minimum--maximum rejection probability across the four designs. M2 and M3 target the 12-month RMST contrast; M6 tests equality of survival curves through month 12. $^\dagger$The 0.50-month gain is the prespecified moderate/local alternative. Results use 10,000 matched trials. M3 stability was at least 0.992; strong FWER ranged from 0.018 to 0.031.
\end{minipage}
\end{table}

\begin{table}[p!]
\centering
\scriptsize
\caption{\textbf{Conditionally independent censoring given strata.} Ranges of arm-1 power across the four randomization designs in the equal-RMST trend experiment.}
\label{tab:supp-equal-rmst-conditional}
\resizebox{\textwidth}{!}{%
\begin{tabular}{llrrrrrrrr}
\toprule
&& \multicolumn{4}{c}{$N=800$} & \multicolumn{4}{c}{$N=1200$}\\
\cmidrule(lr){3-6}\cmidrule(lr){7-10}
Effect shape & Method & $0.50^\dagger$ & $0.75$ & $1.00$ & $1.25$ & $0.50^\dagger$ & $0.75$ & $1.00$ & $1.25$\\
\midrule
PH & M3 & 0.121--0.128 & 0.291--0.303 & 0.529--0.541 & 0.759--0.772 & 0.190--0.196 & 0.446--0.455 & 0.731--0.744 & 0.919--0.925\\
 & M6 & 0.142--0.147 & 0.341--0.349 & 0.599--0.610 & 0.822--0.837 & 0.218--0.226 & 0.513--0.521 & 0.800--0.812 & 0.953--0.959\\
\addlinespace
Delayed benefit & M3 & 0.123--0.130 & 0.276--0.284 & 0.501--0.507 & 0.714--0.726 & 0.178--0.185 & 0.427--0.432 & 0.694--0.704 & 0.889--0.897\\
 & M6 & 0.220--0.233 & 0.523--0.534 & 0.821--0.826 & 0.964--0.969 & 0.339--0.353 & 0.728--0.735 & 0.949--0.954 & 0.997--0.998\\
\addlinespace
Treatment-limited benefit & M3 & 0.128--0.138 & 0.296--0.310 & 0.554--0.565 & 0.788--0.792 & 0.190--0.203 & 0.461--0.471 & 0.758--0.769 & 0.936--0.938\\
 & M6 & 0.093--0.101 & 0.205--0.215 & 0.378--0.392 & 0.580--0.596 & 0.131--0.143 & 0.321--0.331 & 0.564--0.576 & 0.788--0.791\\
\addlinespace
Immunotherapy-type crossing & M3 & 0.119--0.126 & 0.268--0.286 & 0.486--0.495 & 0.702--0.720 & 0.180--0.189 & 0.406--0.423 & 0.685--0.696 & 0.878--0.888\\
 & M6 & 0.283--0.289 & 0.600--0.615 & 0.869--0.874 & 0.978--0.980 & 0.434--0.441 & 0.799--0.804 & 0.969--0.976 & 0.998--0.999\\
\addlinespace
\bottomrule
\end{tabular}%
}
\begin{minipage}{0.98\textwidth}
\footnotesize
Each cell is the minimum--maximum rejection probability across the four designs. M3 targets the 12-month RMST contrast; M6 tests equality of survival curves through month 12. M2 is omitted because its marginal estimator does not identify that target under stratum-dependent censoring. $^\dagger$The 0.50-month gain is the prespecified moderate/local alternative. Results use 10,000 matched trials. M3 stability was at least 0.886; strong FWER ranged from 0.019 to 0.029.
\end{minipage}
\end{table}

Tables~\ref{tab:supp-equal-rmst-marginal-stress} and
\ref{tab:supp-equal-rmst-conditional-stress} report the separate
log-rank-orthogonal reverse-crossing stress test. Its near-zero M6 power is a
consequence of the calibration, not a numerical failure. In the
deterministic risk-set limit, the benefit-oriented score has the form
\[
u_{\rm LR}
=
\sum_s p_s\int_0^{12}
w_s(t)\{\lambda_{0s}(t)-\lambda_{1s}(t)\}\,dt,
\qquad w_s(t)>0.
\]
The early period has $\lambda_{1s}(t)<\lambda_{0s}(t)$ and contributes
positively, whereas the post-crossing period has
$\lambda_{1s}(t)>\lambda_{0s}(t)$ and contributes negatively. The stress
profiles set these contributions to sum to zero. Consequently,
$E\{U_{\rm LR}\}/N=u_{\rm LR}=0$, so the standardized log-rank statistic has
no $\sqrt N$ location drift and is not consistent against this particular
alternative. At each larger RMST target, both opposing hazard components are
recalibrated to preserve the same score cancellation. Multiplicity adjustment
then leaves an arm-1 rejection
probability close to that of a comparison with no directional score signal,
about 0.01 here because the one-sided family-wise level 0.025 is calibrated
for the maximum of three shared-control statistics. The 12-month RMST gain
nevertheless remains positive. This deliberately
constructed blind spot does not imply that log-rank tests lack power under
crossing hazards generally and is not used as the primary practical
comparison in the main paper.

\begin{table}[p!]
\centering
\scriptsize
\caption{\textbf{Marginally independent censoring.} Ranges of arm-1 power across the four randomization designs in the log-rank-orthogonal crossing stress test.}
\label{tab:supp-equal-rmst-marginal-stress}
\resizebox{\textwidth}{!}{%
\begin{tabular}{llrrrrrrrr}
\toprule
&& \multicolumn{4}{c}{$N=800$} & \multicolumn{4}{c}{$N=1200$}\\
\cmidrule(lr){3-6}\cmidrule(lr){7-10}
Effect shape & Method & $0.50$ & $0.75$ & $1.00$ & $1.25$ & $0.50$ & $0.75$ & $1.00$ & $1.25$\\
\midrule
Reverse-crossing stress test & M2 & 0.125--0.148 & 0.335--0.357 & 0.626--0.658 & 0.881--0.909 & 0.199--0.218 & 0.491--0.531 & 0.822--0.851 & 0.977--0.984\\
 & M3 & 0.135--0.148 & 0.353--0.360 & 0.647--0.660 & 0.897--0.905 & 0.207--0.218 & 0.522--0.535 & 0.851--0.852 & 0.982--0.984\\
 & M6 & 0.010--0.011 & 0.010--0.013 & 0.012--0.013 & 0.013--0.016 & 0.010--0.012 & 0.011--0.015 & 0.012--0.016 & 0.012--0.015\\
\addlinespace
\bottomrule
\end{tabular}%
}
\begin{minipage}{0.98\textwidth}
\footnotesize
Each cell is the minimum--maximum rejection probability across the four designs. M2 and M3 target the 12-month RMST contrast; M6 tests equality of survival curves through month 12. The crossing profile is deliberately calibrated to have zero expected M6 score and is retained as a methodological stress test rather than a representative clinical crossing pattern. Results use 10,000 matched trials. M3 stability was at least 0.992; strong FWER ranged from 0.016 to 0.028.
\end{minipage}
\end{table}

\begin{table}[p!]
\centering
\scriptsize
\caption{\textbf{Conditionally independent censoring given strata.} Ranges of arm-1 power across the four randomization designs in the log-rank-orthogonal crossing stress test.}
\label{tab:supp-equal-rmst-conditional-stress}
\resizebox{\textwidth}{!}{%
\begin{tabular}{llrrrrrrrr}
\toprule
&& \multicolumn{4}{c}{$N=800$} & \multicolumn{4}{c}{$N=1200$}\\
\cmidrule(lr){3-6}\cmidrule(lr){7-10}
Effect shape & Method & $0.50$ & $0.75$ & $1.00$ & $1.25$ & $0.50$ & $0.75$ & $1.00$ & $1.25$\\
\midrule
Reverse-crossing stress test & M3 & 0.137--0.143 & 0.330--0.348 & 0.612--0.624 & 0.803--0.824 & 0.210--0.221 & 0.525--0.529 & 0.828--0.837 & 0.947--0.957\\
 & M6 & 0.011--0.012 & 0.010--0.013 & 0.012--0.012 & 0.012--0.013 & 0.010--0.011 & 0.009--0.013 & 0.010--0.013 & 0.011--0.014\\
\addlinespace
\bottomrule
\end{tabular}%
}
\begin{minipage}{0.98\textwidth}
\footnotesize
Each cell is the minimum--maximum rejection probability across the four designs. M3 targets the 12-month RMST contrast; M6 tests equality of survival curves through month 12. M2 is omitted because its marginal estimator does not identify that target under stratum-dependent censoring. The crossing profile is deliberately calibrated to have zero expected M6 score and is retained as a methodological stress test rather than a representative clinical crossing pattern. Results use 10,000 matched trials. M3 stability was at least 0.886; strong FWER ranged from 0.015 to 0.027.
\end{minipage}
\end{table}

\section{Application details}
\label{sec:supp-application}

This section gives the details of the ICON6-based application in Section~\ref{sec:application}
of the main paper, in the order in which the main paper refers to them, and
then reanalyzes the ACTG 175 trial, whose public data allow the
standardized procedure to be applied directly to stratified data.

\subsection{Reconstruction of the ICON6 data}
\label{sec:supp-icon6-reconstruction}

Figure~2 of \citet{ledermann2016cediranib} plots the PFS curves of the three
ICON6 arms as vector paths. We read the line segments of each colored path
from the PDF, as in \citet{zhao2025synthipd}, and mapped page coordinates to months and survival
probabilities using the axis tick marks, which lie at 0 and 24 months on
the horizontal axis and at 0 and 100\% on the vertical axis. For each
distinct time we kept the lowest survival value, which gives the level of
the curve after each downward step. The numbers at risk at 0, 6, 12, 18,
and 24 months were then combined with these coordinates by the algorithm of
\citet{guyot2012enhanced}, which distributes censoring times uniformly
within each risk interval and adjusts their number until the reconstructed
risk set at the start of the next interval matches the published value.
Table~\ref{tab:supp-icon6-validation} compares the reconstruction with the
published summaries.

\begin{table}[t!]
\centering
\small
\caption{Validation of the reconstructed ICON6 data against
\citet{ledermann2016cediranib}. Numbers at risk are given at 0, 6, 12, 18,
and 24 months. Events are counted within 24 months.}
\label{tab:supp-icon6-validation}
\resizebox{\textwidth}{!}{%
\begin{tabular}{llccc}
\toprule
Quantity & Source & Arm A & Arm B & Arm C\\
\midrule
Median PFS (months) & Published & 8.7 & 9.9 & 11.0\\
 & Reconstructed & 8.7 & 9.9 & 11.0\\
\addlinespace
Numbers at risk & Published & 118, 90, 24, 7, 2 & 174, 151, 51, 18, 11 & 164, 146, 63, 22, 7\\
 & Reconstructed & 118, 90, 24, 7, 2 & 174, 151, 50, 17, 10 & 164, 146, 63, 22, 7\\
\addlinespace
Events within 24 months & Published & 113 & 152 & 140\\
 & Reconstructed & 113 & 152 & 138\\
\bottomrule
\end{tabular}}
\end{table}

\subsection{Generating model of the ICON6-based emulation}
\label{sec:supp-icon6-model}

Let $S_a^{\mathrm{rec}}(t)$ denote the PFS curve of arm $a$ extracted from
the published figure. The two factors $Z_{i1}$ (relapse-free interval of 6
to 12 months) and $Z_{i2}$ (planned regimen other than carboplatin plus
paclitaxel) are independent Bernoulli variables with probabilities 0.33 and
0.25, and they define four joint strata with probabilities $p_s$. With
$\eta_s=\log(1.6)Z_{1}+\log(1.15)Z_{2}$ evaluated at the profile of stratum
$s$, the survival function of arm $a$ in stratum $s$ is
\[
S_{as}(t)=\{B_a(t)\}^{\exp(\eta_s)},
\qquad
\sum_{s}p_s\{B_a(t)\}^{\exp(\eta_s)}=S_a^{\mathrm{rec}}(t),
\]
where $B_a(t)$ is obtained by solving the second equation for each $t$ on a
grid of 0.01 months. The marginal survival curve of every arm therefore
equals its ICON6 curve, while the strata differ in prognosis. Event times
are drawn by inverting $S_{as}$, so that the treatment effect keeps the
time-varying shape observed in ICON6. Censoring times are drawn from the
Kaplan--Meier estimate of the censoring distribution in the reconstructed
data, independently of the event times, and all follow-up ends at 24
months. Under the global null, every arm uses the curve of arm A. Simple
randomization assigns arms with probabilities $(2,3,3)/8$, stratified
permuted blocks use blocks of eight containing two, three, and three
assignments, and Pocock--Simon minimization assigns probability 0.80 to the
arms minimizing the summed absolute deviation of the marginal counts from
the 2:3:3 target, with the remaining probability shared in proportion to
the target allocation. Positivity was never violated in a way that left a
cell's Kaplan--Meier curve undefined at $\tau^\star$, because censoring
before 12 months was rare and cells without patients at risk at
$\tau^\star$ had already reached zero survival through an event.

\subsection{Sensitivity to the restriction time}
\label{sec:supp-icon6-tau}

Because the 12-month restriction time could be suspected of favoring the
RMST analysis, we repeated the emulation at the realized sample size with
$\tau^\star\in\{9,15,18,24\}$, using the process-wise range
$[4,\tau^\star]$ and the log-rank test both with follow-up censored at
$\tau^\star$ and with all follow-up. Table~\ref{tab:supp-icon6-tau} gives
the results. The design-adjusted marginal procedure M2 had power between
0.795 and 0.871 for the waning benefit of arm B at every restriction time,
compared with 0.679 to 0.832 for the log-rank test censored at the same
time and 0.791 to 0.811 for the log-rank test with all follow-up. Its
advantage was smallest at $\tau^\star=9$, where the hazards of arms A and
B had not yet converged, and it persisted at every longer restriction
time, although at $\tau^\star=18$ the fixed-time advantage over the
log-rank test with all follow-up was only 0.003 to 0.019 while the
process-wise advantage was 0.032 to 0.042. The power gain therefore does
not depend on the choice $\tau^\star=12$. The family-wise error rate of M2
was 0.017 to 0.029.

The standardized procedure M3 coincided with M2 at $\tau^\star=9$ and
$\tau^\star=12$. At longer restriction times it was limited by
positivity. Censoring in ICON6 was rare before 12 months but frequent
afterward, so in the smaller treatment--stratum cells the last
observation was often a censoring time before $\tau^\star$, which leaves
the cell's Kaplan--Meier curve undefined. This happened in about 12\% of
trials at $\tau^\star=15$, 34\% at $\tau^\star=18$, and 75\% at
$\tau^\star=24$. Counting these trials as nonrejections reduced the power
of M3 to 0.744 to 0.753 at $\tau^\star=15$, and we do not report M3 for
longer restriction times. In practice, the restriction time for the
standardized analysis should be chosen so that every treatment--stratum
cell is expected to retain patients at risk, and the marginal procedure M2
is the appropriate choice when a longer horizon is clinically required.

\begin{table}[t!]
\centering
\scriptsize
\caption{Sensitivity of the ICON6-based emulation to the restriction time at the realized sample size $N=456$. Entries are FWER under the global null and power for arm B (waning benefit), each given as SR/SPB/PS. The process-wise analysis uses $[4,\tau^\star]$. M3 is not reported for $\tau^\star\geq18$ because the standardized estimator was undefined in a large fraction of trials. The last column gives the fraction of trials with an undefined standardized estimate under PS when arms B and C are active.}
\label{tab:supp-icon6-tau}
\begin{tabular}{llccc}
\toprule
$\tau^\star$ & Method & FWER & Power, arm B & Undefined\\
\midrule
9 & M2, fixed time & 0.021/0.027/0.022 & 0.845/0.857/0.855 & 0.000\\
 & M2, process & 0.019/0.024/0.017 & 0.778/0.796/0.793 & \\
 & M3, fixed time & 0.022/0.027/0.022 & 0.843/0.858/0.854 & \\
 & M3, process & 0.020/0.024/0.019 & 0.783/0.797/0.791 & \\
 & M6, horizon $\tau^\star$ & 0.026/0.029/0.029 & 0.820/0.828/0.832 & \\
 & M6, 24 months & 0.027/0.030/0.028 & 0.797/0.809/0.801 & \\
\addlinespace
12 & M2, fixed time & 0.024/0.025/0.024 & 0.839/0.853/0.860 & 0.000\\
 & M2, process & 0.021/0.021/0.019 & 0.818/0.831/0.838 & \\
 & M3, fixed time & 0.025/0.025/0.024 & 0.848/0.854/0.860 & \\
 & M3, process & 0.021/0.021/0.019 & 0.822/0.832/0.837 & \\
 & M6, horizon $\tau^\star$ & 0.030/0.026/0.030 & 0.679/0.683/0.690 & \\
 & M6, 24 months & 0.029/0.027/0.026 & 0.791/0.802/0.811 & \\
\addlinespace
15 & M2, fixed time & 0.026/0.025/0.023 & 0.836/0.847/0.855 & 0.117\\
 & M2, process & 0.021/0.022/0.021 & 0.826/0.836/0.844 & \\
 & M3, fixed time & 0.022/0.021/0.020 & 0.744/0.748/0.753 & \\
 & M3, process & 0.019/0.018/0.017 & 0.736/0.742/0.744 & \\
 & M6, horizon $\tau^\star$ & 0.030/0.026/0.026 & 0.700/0.702/0.711 & \\
 & M6, 24 months & 0.031/0.027/0.028 & 0.798/0.799/0.808 & \\
\addlinespace
18 & M2, fixed time & 0.024/0.024/0.028 & 0.795/0.814/0.814 & 0.338\\
 & M2, process & 0.021/0.019/0.023 & 0.824/0.834/0.837 & \\
 & M6, horizon $\tau^\star$ & 0.027/0.025/0.030 & 0.697/0.712/0.704 & \\
 & M6, 24 months & 0.028/0.027/0.030 & 0.792/0.798/0.795 & \\
\addlinespace
24 & M2, fixed time & 0.027/0.027/0.029 & 0.846/0.871/0.860 & 0.755\\
 & M2, process & 0.023/0.025/0.024 & 0.848/0.870/0.858 & \\
 & M6, 24 months & 0.028/0.028/0.030 & 0.793/0.810/0.799 & \\
\addlinespace
\bottomrule
\end{tabular}
\end{table}

\subsection{Emulation with interacting factors}
\label{sec:supp-icon6-interaction}

To examine the design term in a realistic setting, we repeated the null
emulation with stratum hazard ratios of 1, 2.5, 2.5, and 1 for the four
joint strata, so that each factor is prognostic only in the absence of the
other. Table~\ref{tab:supp-icon6-interaction} reports the fixed-time
family-wise error rates. Under minimization, M5, which omits the design
term, had rates of 0.029 and 0.033, while M2 and M3 stayed at or below
0.025. Under stratified blocks, the conventional analysis M1 was
conservative, with rates of 0.018.

\begin{table}[t!]
\centering
\scriptsize
\caption{Fixed-time family-wise error rate in the ICON6-based emulation when the two randomization factors interact.}
\label{tab:supp-icon6-interaction}
\begin{tabular}{llccc}
\toprule
Quantity & Method & SR & SPB & PS\\
\midrule
FWER, interacting factors & M1 & \shortstack{0.027\\0.026} & \shortstack{0.018\\0.018} & \shortstack{0.021\\0.025}\\
 & M2 & \shortstack{0.025\\0.025} & \shortstack{0.024\\0.024} & \shortstack{0.021\\0.025}\\
 & M5 & \shortstack{0.036\\0.033} & \shortstack{0.024\\0.025} & \shortstack{0.029\\0.033}\\
 & M3 & \shortstack{0.023\\0.024} & \shortstack{0.024\\0.024} & \shortstack{0.021\\0.023}\\
 & M6, 12 months & \shortstack{0.027\\0.028} & \shortstack{0.028\\0.029} & \shortstack{0.029\\0.031}\\
\addlinespace
\bottomrule
\end{tabular}
\begin{minipage}{0.92\textwidth}
\footnotesize
Same layout as Table~\ref{tab:icon6-emulation} of the main paper. The stratum hazard ratios are 1, 2.5, 2.5, and 1 for the four joint strata.
\end{minipage}
\end{table}

\subsection{Emulation with stratum-dependent dropout}
\label{sec:supp-icon6-censoring}

In ICON6 almost no patient was censored before 12 months, so censoring
could not distort the 12-month RMST in the main emulation. To show what
happens when it does, we added a hypothetical dropout process to the null
emulation. Each patient is subject to exponential dropout with monthly
rate $0.008\exp(\gamma_a Z_{i1})$, where $Z_{i1}$ indicates a relapse-free
interval of 6 to 12 months and $(\gamma_0,\gamma_1,\gamma_2)=(0,2.0,2.5)$
for arms A, B, and C. Dropout before month 12 then affects about 7\%,
16\%, and 22\% of patients in the three arms and is concentrated among
patients with a short relapse-free interval in the cediranib arms, as may
happen when toxicity leads frailer patients to withdraw. Censoring is
independent of the event time within each joint stratum but not marginally.
We used $N=456$ and $N=912$, with 10,000 trials per configuration.

Table~\ref{tab:supp-icon6-censoring} gives the results. Because the
patients who dropped out had worse prognosis, the unstratified estimator
overstated the RMST of the cediranib arms, with a bias of about 0.1 months
for arm C that did not shrink with the sample size. The family-wise error
rate of M1 and M2 therefore rose from 0.034--0.040 at $N=456$ to
0.046--0.050 at $N=912$. The standardized estimator remained unbiased, and
M3 kept the error rate at or below 0.022, or 0.022 to 0.026 among trials in
which its estimate was defined. The stratified log-rank test was also
unaffected, because it too compares arms within strata. The same early
dropout that biases the unstratified estimator also thins the smaller
treatment--stratum cells, and at least one cell's Kaplan--Meier curve was
undefined at 12 months in about 24\% of trials at $N=456$ and 12\% at
$N=912$. When early dropout is expected to depend on the randomization
factors, the protocol should therefore prespecify how sparse strata are
pooled or choose a restriction time at which every cell is expected to
retain patients at risk.

\begin{table}[t!]
\centering
\scriptsize
\caption{ICON6-based emulation with dropout that depends on the relapse-free interval and on the arm. Each cell gives the value for $N=456$ (first line) and $N=912$ (second line), based on 10,000 trials under the global null. Bias is that of the 12-month RMST contrast in months. M1 and M2 share the unstratified point estimator. M3 counts trials with an undefined standardized estimate as nonrejections, and $^\dagger$ gives its rate among the trials in which the estimate is defined.}
\label{tab:supp-icon6-censoring}
\begin{tabular}{llccc}
\toprule
Quantity & Method & SR & SPB & PS\\
\midrule
Bias, arm C & M1 and M2 & \shortstack{0.104\\0.107} & \shortstack{0.105\\0.113} & \shortstack{0.107\\0.106}\\
 & M3 & \shortstack{0.002\\0.001} & \shortstack{0.002\\0.006} & \shortstack{0.001\\0.002}\\
\addlinespace
Undefined M3 estimate & Global null & \shortstack{0.246\\0.124} & \shortstack{0.239\\0.122} & \shortstack{0.239\\0.123}\\
\addlinespace
FWER, global null & M1 & \shortstack{0.036\\0.049} & \shortstack{0.037\\0.046} & \shortstack{0.038\\0.046}\\
 & M2 & \shortstack{0.034\\0.048} & \shortstack{0.038\\0.050} & \shortstack{0.040\\0.048}\\
 & M3 & \shortstack{0.017\\0.021} & \shortstack{0.018\\0.022} & \shortstack{0.019\\0.022}\\
 & M3, defined$^\dagger$ & \shortstack{0.022\\0.024} & \shortstack{0.024\\0.025} & \shortstack{0.025\\0.026}\\
 & M6, 12 months & \shortstack{0.024\\0.025} & \shortstack{0.025\\0.025} & \shortstack{0.026\\0.026}\\
\addlinespace
\bottomrule
\end{tabular}
\end{table}

\subsection{Reanalysis of ACTG 175}
\label{sec:actg175}

Unlike the reconstructed ICON6 data, the public ACTG 175 data contain a
stratification variable, so the standardized procedure can be applied
directly. ACTG 175 compared zidovudine (ZDV) monotherapy with ZDV plus didanosine
(ddI), ZDV plus zalcitabine, and ddI monotherapy in adults with HIV and
baseline CD4 counts of 200--500 cells per cubic millimeter
\citep{hammer1996actg175}. The primary endpoint was the first of a decline
of at least 50\% in CD4 count, an AIDS-defining event, or death, and
randomization was blocked within sites and stratified by duration of prior
antiretroviral therapy. We analyzed the public analytic dataset
\citep{uci2023actg175}, which contains 2,139 of the 2,467 randomized
participants and the three-level prior-therapy variable but not study site.
The analysis is therefore a reproducible methodological reanalysis rather
than a reconstruction of the original intention-to-treat analysis.

With ZDV as the shared control, we applied the stratum-standardized
procedure over the three prior-therapy strata, with $\tau^\star=730$ days and
process-wise inference over $\calT=[180,730]$ on a 14-day grid, using the
one-sided step-down max-$T$ procedure at level 0.025, two-sided 95\%
simultaneous intervals and bands, and 100,000 Gaussian multiplier draws. A
Cox model stratified by prior therapy, censored at day 730 and calibrated
with the same max-$T$ construction, served as a secondary benchmark. The 12
treatment--stratum cells contained 96--238 participants, with 64--183 still
at risk at day 730, so no pooling of strata was needed
(Table~\ref{tab:supp-actg-data}).

The standardized 730-day RMST was 645.5 days under ZDV, 696.2 days under ZDV
plus ddI, 694.0 days under ZDV plus zalcitabine, and 685.2 days under ddI,
corresponding to gains of 50.7, 48.5, and 39.7 event-free days
(Table~\ref{tab:actg175-rmst}). All three simultaneous intervals excluded
zero, and all fixed-time and process-wise step-down adjusted $p$-values were
below 0.0001.

\begin{table}[t!]
\centering
\small
\caption{Stratum-standardized ACTG 175 RMST results. RMST and differences are
in days at $\tau^\star=730$. Confidence intervals are two-sided and
simultaneous across the three fixed-time treatment-control contrasts.
Adjusted $p$-values are from the one-sided step-down max-$T$ procedures at
the fixed time and over $[180,730]$. The standardized control-arm RMST was
645.5 days.}
\label{tab:actg175-rmst}
\resizebox{\textwidth}{!}{%
\begin{tabular}{lrrrrrr}
\toprule
Treatment & RMST & Difference & SE & Simultaneous 95\% CI & Fixed $p_{\mathrm{adj}}$ & Process $p_{\mathrm{adj}}$\\
\midrule
ZDV + ddI & 696.2 & 50.7 & 8.8 & $(30.6,\ 70.8)$ & $<0.0001$ & $<0.0001$\\
ZDV + Zal & 694.0 & 48.5 & 8.9 & $(28.1,\ 69.0)$ & $<0.0001$ & $<0.0001$\\
ddI & 685.2 & 39.7 & 9.1 & $(18.7,\ 60.6)$ & $<0.0001$ & $<0.0001$\\
\bottomrule
\end{tabular}

}
\end{table}

The estimated correlations among the three fixed-time contrasts were
0.680--0.706, which is the shared-control dependence exploited by the
max-$T$ procedure. The simultaneous lower band
(Figure~\ref{fig:actg175-rmst-process}) remained above zero from day 180 for
ZDV plus ddI, from day 250 for ZDV plus zalcitabine, and from day 222 for
ddI. These times are descriptive summaries of the band, not estimated onset
parameters.

\begin{figure}[t!]
\centering
\includegraphics[width=0.86\textwidth]
{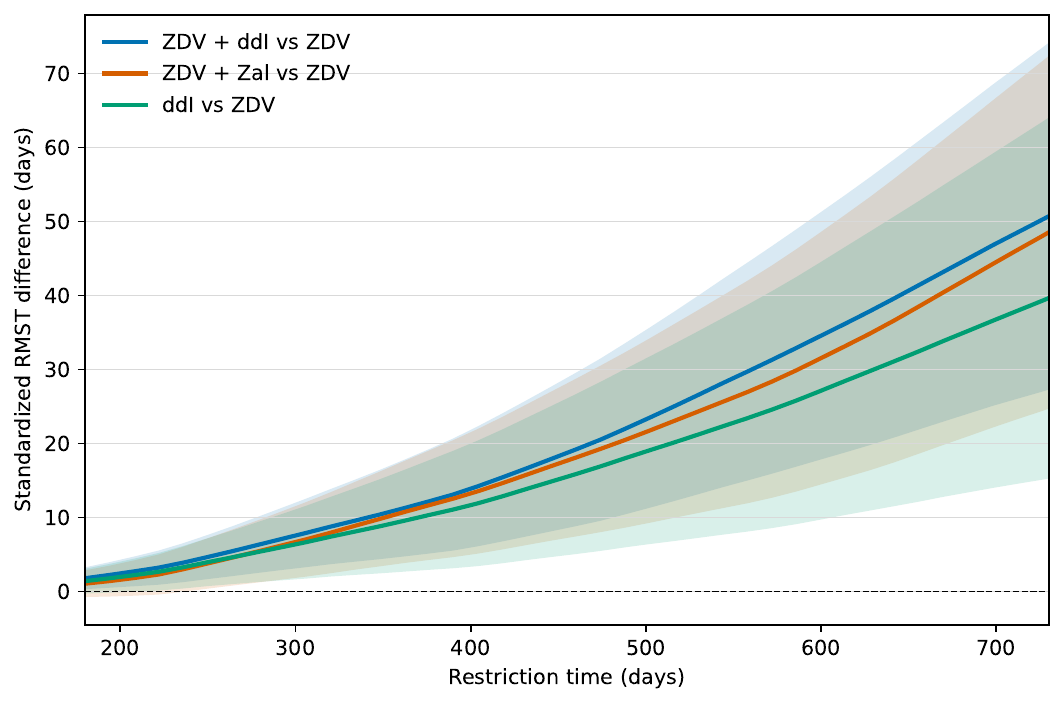}
\caption{Stratum-standardized RMST differences relative to ZDV in ACTG 175.
Shaded regions are two-sided 95\% confidence bands simultaneous over all
three comparisons and all restriction times on the grid from day 180 to day
730. Positive values favor the displayed treatment.}
\label{fig:actg175-rmst-process}
\end{figure}

The stratified Cox benchmark gave the same qualitative conclusion, with
hazard ratios relative to ZDV of 0.418--0.551, all simultaneous intervals
excluding one, and no evidence against proportional hazards through day 730
(Table~\ref{tab:supp-actg-cox}). The RMST analysis adds the absolute
interpretation in event-free days and the simultaneous restriction-time
profile. All results are reproduced by
\texttt{run\_actg175\_analysis.py}.

\begin{table}[t!]
\centering
\small
\caption{ACTG 175 public analytic dataset. ``Full events'' uses all recorded
follow-up in the public file, and the final two columns refer to the selected
730-day analysis horizon.}
\label{tab:supp-actg-data}
\resizebox{0.94\textwidth}{!}{%
\begin{tabular}{lrrrr}
\toprule
Study arm & $N$ & Full events & Events by day 730 & At risk at day 730\\
\midrule
ZDV & 532 & 181 & 134 & 342\\
ZDV + ddI & 522 & 103 & 67 & 411\\
ZDV + Zal & 524 & 109 & 62 & 417\\
ddI & 561 & 128 & 86 & 425\\
\bottomrule
\end{tabular}

}
\end{table}

\begin{table}[t!]
\centering
\small
\caption{Secondary ACTG 175 stratified Cox benchmark through day 730.
Confidence intervals are two-sided and simultaneous across the three hazard
ratios. Adjusted $p$-values test a treatment benefit using the one-sided
step-down max-$T$ procedure. PH-test $p$-values are treatment-specific
Schoenfeld-residual checks and are not multiplicity adjusted.}
\label{tab:supp-actg-cox}
\resizebox{0.86\textwidth}{!}{%
\begin{tabular}{lrrrr}
\toprule
Treatment & Hazard ratio & Simultaneous 95\% CI & $p_{\mathrm{adj}}$ & PH-test $p$\\
\midrule
ZDV + ddI & 0.448 & $(0.314,\ 0.640)$ & $<0.0001$ & 0.2884\\
ZDV + Zal & 0.418 & $(0.290,\ 0.603)$ & $<0.0001$ & 0.2958\\
ddI & 0.551 & $(0.397,\ 0.766)$ & $<0.0001$ & 0.7379\\
\bottomrule
\end{tabular}

}
\end{table}

\end{document}